\pgfplotsset{width=8cm,compat=1.9}
\let\emptyset\varnothing
\newtheorem*{theorem*}{Theorem}
\newtheorem{definition}{Definition}[section]
\newtheorem{proposition}[definition]{Proposition}
\newtheorem{lemma}[definition]{Lemma}
\newtheorem{theorem}[definition]{Theorem}
\newtheorem{corollary}[definition]{Corollary}
\newtheorem{remark}[definition]{Remark}
\newtheorem{krule}[definition]{Rule}
\newtheorem{problem}[definition]{Problem}
\newcommand{\A}{\mathcal{A}}
\newcommand{\G}{\mathcal{G}}
\newcommand{\C}{\mathcal{C}}
\newcommand{\Q}{\mathcal{Q}}
\newcommand{\F}{\mathcal{F}}
\newcommand{\R}{\mathcal{R}}
\newcommand{\D}{\mathcal{D}}
\title{Computing Clique Cover with Structural Parameterization}
\author{Ahammed Ullah\thanks{Purdue University}}
\date{}
\begin{document}

\maketitle

An abundance of real-world problems manifest as covering edges and/or vertices of a graph with cliques that are optimized for some objectives. We consider different structural parameters of graph, and design fixed-parameter tractable algorithms for a number of clique cover problems. Using a set representation of graph, we introduce a framework for computing clique cover with different objectives. We demonstrate use of the framework for a variety of clique cover problems. Our results include a number of new algorithms with exponential to double exponential improvements in the running time.

\section{Introduction}
\label{sec_intro}

A set of cliques is an edge (resp. vertex) clique cover of a graph\footnote{Throughout the paper we assume graphs are finite, undirected, and simple, i.e., contains no self-loop and parallel edges.} if every edge (resp. vertex) is contained in at least one of the cliques in the set. In a clique cover problem, we seek to compute a clique cover that is optimized for some objectives such as minimizing number of cliques in the cover. The clique cover problems we are pursuing in this study are \emph{NP-hard}, and inapproximable within any constant factor unless $P=NP$. For applications of clique cover problems, see \cite{agarwal1994can, blanchette2012clique, cooley2021parameterized, ennis2012assignment, feldmann2020fixed, gramm2007algorithms, helling2018constructing, kou1978covering, markham2020measurement, piepho2004algorithm, rajagopalan2000handling, roberts1985applications, strash2022effective}.

One of the ways of tackling intractability of \emph{NP-hard} problems is to study the problems through the lens of parameterized complexity \cite{cygan2015parameterized, downey2013fundamentals, flum2006parameterized}. In parameterized complexity\footnote{See Section \ref{prelim_complexity} for a brief introduction of relevant terms.}, a problem is called fixed-parameter tractable (FPT) with respect to a parameter $p$ if problem instance of size $n$ can be solved in $f(p)n^{O(1)}$ time, where $f(p)$ is computable and independent of $n$, but can grow arbitrarily with $p$. In terms of applicability, the hope is that for small values of parameter an FPT problem would be solvable in a reasonable amount of time.

\subsection{Motivations}
\label{intro_motivation}

A number of clique cover problems have been shown to be FPT with respect to the number of cliques in a solution \cite{cooley2021parameterized, feldmann2020fixed, gramm2009data}. These FPT running times have super exponential dependence on the number of cliques, making them unlikely to be useful for graphs which are not sufficiently dense. For example, deciding whether edges of a graph can be covered with at most $k$ cliques is solvable in $2^{2^{O(k)}}n^{O(1)}$ time \cite{gramm2009data}. The double exponential dependence on $k$ is necessary, assuming the exponential time hypothesis \cite{cygan2016known}. We ask to what extent super exponential dependence on $k$ arises, and whether structural parameters of graph can be used to tame dependence on $k$.

Known frameworks of computing clique covers are based on enumeration of binary matrices of dimension $O(k) \times O(k)$ \cite{cooley2021parameterized, feldmann2020fixed} ($k$ is the number of cliques in a solution), or enumeration of maximal cliques of different subgraphs \cite{ennis2012assignment, gramm2009data}. Clearly, enumeration of binary matrices of dimension $O(k) \times O(k)$ disregards any sparsity measures of graph. Also, time and space complexities of algorithm designed using maximal clique enumeration may have prohibitively large dependence on input size. For example, \cite{ennis2012assignment} have described an algorithm for the problem of covering the edges of a graph with cliques such that the number of individual assignments of vertices to the cliques is minimum. For a graph with $n$ vertices, the algorithm of \cite{ennis2012assignment} takes $2^{2^{O(n)}}$ time and $2^{O(n)}$ space. We investigate amenability of different frameworks for designing better algorithms that also take sparsity of graph into account.

Many clique cover problems are unlikely to be FPT with respect to certain choices of parameters. For example, covering vertices of a graph with at most $k$ cliques is not FPT with respect to $k$ unless $P=NP$. We investigate whether structural parameters of graph would be helpful for designing FPT algorithms for these problems (and corresponding graph coloring problems).

\subsection{Problems}
\label{intro_problems}

We describe the clique cover problems we have studied in this paper.

Covering the edges of a graph with at most $k$ cliques is an \emph{NP-complete} problem \cite{kou1978covering, orlin1977contentment}. The problem has many applications \cite{roberts1985applications}, and appears in different guises such as keyword conflict \cite{kou1978covering}, intersection graph basis \cite{garey1979computers}, indeterminate string \cite{helling2018constructing}. For ease of reference, the problem is outlined as follows.

\begin{center}
\fbox{
\parbox{6.0in}{
\emph{Edge Clique Cover} (\emph{ECC})

\textbf{Input:} A graph $G=(V,E)$, and a nonnegative integer $k$.

\textbf{Output:} If one exists, a set of at most $k$ cliques of $G$ such that every edge of $G$ is contained in at least one of the cliques in the set; otherwise \emph{NO}.
}}    
\end{center}

Motivated by applications from applied statistics \cite{gramm2007algorithms, piepho2004algorithm}, \cite{ennis2012assignment} have studied the \emph{assignment-minimum clique cover} problem that seeks to cover edges of a graph with cliques such that the number of individual assignments of vertices to cliques is minimum. \cite{ennis2012assignment} have demonstrated that minimizing number of cliques and number of individual assignments of vertices simultaneously is not always possible. Therefore, number of cliques is not an appropriate choice for \emph{natural parameterization} of the problem. We introduce the following parameterized problem for \emph{assignment-minimum clique cover}.

\begin{center}
\fbox{\parbox{6.0in}{
\emph{Assignment Clique Cover} (\emph{ACC})

\textbf{Input:} A graph $G=(V,E)$, and a nonnegative integer $t$.

\textbf{Output:} If one exists, a clique cover $\C$ of $G$ such that (1) every edge of $G$ is contained in at least one of the cliques in $\C$, and (2) $\sum_{C_l \in \C} |C_l| \leq t$; otherwise \emph{NO}.
 }}    
\end{center}

\cite{feldmann2020fixed} have introduced the following parameterized clique cover problem.

\begin{center}
\fbox{\parbox{6.0in}{
\emph{Weighted Edge Clique Partition} (\emph{WECP})

\textbf{Input:} A graph $G=(V,E)$, a weight function $w: E \rightarrow \mathbb{Z}_{>0}$, and a nonnegative integer $k$.

\textbf{Output:} If one exists, a clique cover $\C$ of $G$ such that (1) $|\C| \leq k$, and (2) each edge $e \in E$ appears in exactly $w(e)$ cliques of $\C$; otherwise \emph{NO}.
 }}    
\end{center}

Motivated by applications from identification of gene co-expression modules, \cite{cooley2021parameterized} have introduced a generalization of \emph{WECP} problem as follows.

\begin{center}
\fbox{\parbox{6.0in}{
\emph{Exact Weighted Clique Decomposition} (\emph{EWCD})

\textbf{Input:} A graph $G=(V,E)$, a weight function $w: E \rightarrow \mathbb{R}_{>0}$, and a nonnegative integer $k$.

\textbf{Output:} If one exists, a clique cover $\C$ of $G$ and $\gamma_i \in \mathbb{R}_{>0}$ for all $C_i \in \C$ such that (1) $|\C|\leq k$, (2) and for each edge $e= \{u,v\} \in E$, $\sum_{\{u,v\} \in C_i} \gamma_i = w(e)$; otherwise \emph{NO}.
 }}    
\end{center}

The \emph{vertex clique cover} problem (abbreviated \emph{VCC}) is the problem of deciding whether vertices of a graph can be covered (or partitioned) using at most $k$ cliques. \emph{VCC} is \emph{NP-complete} \cite{karp1972reducibility}. In parameterized complexity, many variants of the \emph{NP-complete} problems from \cite{karp1972reducibility} had been extensively studied such as \emph{vertex cover} \cite{guo2007parameterized}, \emph{graph coloring} \cite{fiala2011parameterized, jansen2019computing}. \emph{VCC} in this regard had been totally unexplored. We introduce following generalization of \emph{VCC}.

\begin{center}
\fbox{\parbox{6.0in}{
\emph{Link Respected Clique Cover} (\emph{LRCC})

\textbf{Input:} A graph $G=(V,E)$, a nonnegative integer $k$, and a set of edges $E^* \subseteq E$.

\textbf{Output:} If one exists, a set of cliques $\C$ of $G$ such that (1) $|\C| \leq k$, (2) every vertex of $G$ is contained in at least one of the cliques in $\C$, (3) and every edge $\{x,y\} \in E^*$ is contained in at least one of the cliques in $\C$; otherwise \emph{NO}.
 }}    
\end{center}

\emph{LRCC} can be seen as the problem of covering vertices of a network with $k$ communities such that certain set of links (edges) are preserved in the cover. Note that if $E^*=E$, then \emph{LRCC} is equivalent to \emph{ECC}. On the other hand, if $E^* = \emptyset$, then \emph{LRCC} reduces to \emph{VCC}.

\emph{Colorability} is the problem of coloring vertices of a graph such that each pair of adjacent vertices receive difference colors. \emph{VCC} is equivalent to \emph{Colorability} in the complement graph. The following problem is equivalent to \emph{LRCC} in the complement graph.

\begin{center}
\fbox{\parbox{6.0in}{
\emph{Pairwise Mutual Colorability} (\emph{PMC})

\textbf{Input:} A graph $G=(V,E)$, a nonnegative integer $k$, and a set of pairs of non-adjacent vertices $\F$.

\textbf{Output:} If exist, assignments of at most $k$ colors to the vertices of $G$ such that (1) each vertex receive at least one color, (2) no pair of adjacent vertices share any color, and (3) each pair of vertices $\{x,y\} \in \F$ share at least one color; otherwise \emph{NO}.
 }}    
\end{center}

\emph{PMC} captures formulation of a scheduling problem where each pair of adjacent vertices are in conflict, and we want a conflict free assignments of vertices into $k$ slots such that each pair of vertices in $\F$ has at least one common slot. If $\F = \emptyset$, then \emph{PMC} reduces to \emph{Colorability}.

From the structural parameters\footnote{See Section \ref{prelim_parameters} for a brief description of different parameters and their relationships.} that capture graph sparsity, we focus on \emph{degeneracy} and \emph{clique number} of graph. Many structural parameters (such as \emph{arboricity}, \emph{thickness}, \emph{treewidth}, \emph{vertex cover number}) are either within constant factors of or upper bounds on our chosen structural parameters. Therefore, our algorithms are also FPT with respect to many structural parameters that we do not explicitly focus on.

\subsection{Our Results}
\label{intro_contribution}

For a graph $G=(V,E)$, let $n=|V|$ denote number of vertices, $m=|E|$ denote number of edges, $d$ denote the \emph{degeneracy}, $\beta$ denote the \emph{clique number}, and $\alpha$ denote the \emph{independence number}.

Our results are obtained from two different algorithmic frameworks: one described in Section \ref{sec_ecrs} (denoted \emph{Framework-1}), and the other described in Section \ref{sec_bblock} and Section \ref{sec_algos} (denoted \emph{Framework-2}).

Using \emph{Framework-1}, we obtain the following result for \emph{ECC}.

\begin{theorem}
\label{thm_ecc}
\emph{ECC} parameterized by $d$ and $k$  has an FPT algorithm running in $1.4423^{dk}n^{O(1)}$ time.
\end{theorem}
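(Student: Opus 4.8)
The plan is to reduce \emph{ECC} to a bounded-depth, bounded-fan-out branching over a small family of candidate cliques extracted from a degeneracy ordering. First I would compute, in polynomial time, a degeneracy ordering $v_1,\dots,v_n$ of $G$, i.e.\ an ordering in which every vertex $v_i$ has at most $d$ neighbours of larger index; write $N^+(v_i)$ for this set of forward neighbours, so $|N^+(v_i)|\le d$. Since enlarging a clique never uncovers an edge, I may assume every clique in a solution is maximal, and for a maximal clique $C$ with minimum-index vertex $v$ one has $C\subseteq\{v\}\cup N^+(v)$; hence $|C|\le d+1$ and $C$ is a maximal clique of the at-most-$(d{+}1)$-vertex graph $G[\{v\}\cup N^+(v)]$ that contains $v$. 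By the Moon--Moser bound, the number of such candidate cliques at a fixed vertex $v$ is at most $3^{|N^+(v)|/3}\le 3^{d/3}=1.4423^{d}$, and they can all be listed within the same time bound.

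The algorithm maintains a set $\C$ of already-chosen cliques (initially empty) and recurses as follows. If every edge is covered it accepts; if $|\C|=k$ but some edge remains uncovered it rejects this branch; otherwise it selects the uncovered edge $e=\{v_i,v_j\}$, $i<j$, that is lexicographically smallest in its pair of indices, and branches over every maximal candidate clique of $G[\{v_i\}\cup N^+(v_i)]$ containing both $v_i$ and $v_j$, adding the chosen clique to $\C$ in the recursive call. Each branch increases $|\C|$ by one, so the recursion has depth at most $k$; by the previous paragraph each node has at most $1.4423^{d}$ children; and each node performs only polynomial work (locating $e$ and enumerating the relevant candidates). Multiplying the $1.4423^{d}$ fan-out over the $k$ levels yields at most $1.4423^{dk}$ leaves and an overall running time of $1.4423^{dk}n^{O(1)}$.

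Soundness is immediate; the crux --- and the step I expect to be the main obstacle --- is completeness: showing that restricting the branch at $e=\{v_i,v_j\}$ to cliques whose minimum-index vertex is exactly $v_i$ never discards all solutions. I would prove this by an exchange argument. Because $e$ is the lexicographically smallest uncovered edge, every edge having an endpoint of index $<i$ is already covered by $\C$. Given any solution $\C^\star\supseteq\C$ of size at most $k$, let $C_0\in\C^\star$ cover $e$, and set $C_0''=C_0\cap(\{v_i\}\cup N^+(v_i))$; this is a clique whose minimum-index vertex is $v_i$ and which still contains both $v_i$ and $v_j$. The only edges that $C_0$ covers but $C_0''$ may fail to cover are those with an endpoint among the back-neighbours of $v_i$ (index $<i$), and all such edges are already covered by $\C$; hence replacing $C_0$ by a maximal clique extending $C_0''$ produces another solution of size at most $k$ that the branching explicitly explores. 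This shows that some branch reaches a valid cover whenever one exists, completing the correctness argument and establishing the claimed running time.
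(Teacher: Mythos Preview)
Your proposal is correct and follows essentially the same approach as the paper: select the lexicographically smallest uncovered edge in a degeneracy ordering, branch over maximal cliques contained in the forward neighbourhood of its lower endpoint, bound the fan-out by the Moon--Moser estimate $3^{d/3}$, and bound the depth by $k$. Your exchange argument for completeness is exactly the paper's invariant (all edges with an endpoint of smaller index are already covered) phrased constructively; the paper states the same thing via an ``unrestricted'' family of algorithms and then argues the restriction to $N_d[x]\cap N[y]$ is safe, which amounts to the same trimming step you perform on $C_0$.
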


With kernelization, we achieve a factor of $2^{\Omega(\alpha^2)}$ times faster algorithm than the algorithm of \cite{gramm2009data}. For $\alpha = \Omega(n)$, the improvement in the running time amounts to a factor of $2^{2^{O(k)}}$. We point out that the factor of $2^{\Omega(\alpha^2)}$ improvement in the running time is irrespective of graph density, i.e., the algorithm of Theorem \ref{thm_ecc} is faster than the algorithm of \cite{gramm2009data} regardless of whether the \emph{degeneracy} of input graph is bounded or not.

Using \emph{Framework-2}, we obtain the following result for \emph{ECC}.

\begin{theorem}
\label{thm_ecc2}
\emph{ECC} parameterized by $\beta$ and $k$ has an FPT algorithm running in $2^{\beta k\log k}n^{O(1)}$ time.\footnote{Unless otherwise specified, base of logarithms in this paper is two.}
\end{theorem}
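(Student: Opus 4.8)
The plan is to prove Theorem \ref{thm_ecc2} by combining a one-rule kernelization driven by the clique-number bound with a direct enumeration of cliques as the building blocks of a cover, in the spirit of \emph{Framework-2}. I would represent a candidate solution as a tuple $(S_1,\dots,S_k)$ of vertex subsets of $G$, where validity is checkable in polynomial time: each $S_i$ must induce a clique, and every edge must lie inside some $S_i$. Throughout I treat the clique number $\beta$ as a given parameter (so the clique‑size bound below may be used freely).

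First I would preprocess by deleting all isolated vertices: they are incident to no edge and need not be covered. Call the remaining graph $G'$ and let $B$ be its number of vertices. The key structural observation is that in any edge clique cover of $G'$ using at most $k$ cliques, every clique has at most $\beta$ vertices by definition of the clique number, so the total number of vertex–clique incidences $\sum_i |S_i|$ is at most $\beta k$; since every vertex of $G'$ has positive degree it must lie in at least one clique, whence $B \le \sum_i |S_i| \le \beta k$. Consequently, if $B > \beta k$ I can immediately output \emph{NO}. This gives a kernel on at most $\beta k$ vertices with no twin-contraction machinery, in contrast to the $2^{k}$-vertex kernel underlying the algorithm of \cite{gramm2009data}.

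Second, on the reduced instance I would enumerate covers by choosing each of the $k$ cliques to be a complete subgraph of $G'$, i.e.\ a clique of size at most $\beta$ among the $B \le \beta k$ vertices. The number of choices for a single clique is at most $\binom{\beta k}{\le \beta}$, and the decisive estimate is $\binom{\beta k}{\beta} \le (ek)^{\beta}$: the factor $\beta k$ in the numerator and the $\beta!$ in the denominator essentially cancel, leaving $k^{\beta}$ up to a $2^{O(\beta)}$ factor. Taking the $k$-th power for the $k$ cliques and checking each candidate tuple in polynomial time yields running time $\binom{\beta k}{\le\beta}^{k}\, n^{O(1)} = 2^{\beta k \log k + O(\beta k)}\, n^{O(1)}$, whose leading exponential term is exactly $2^{\beta k \log k}$.

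The step I expect to require the most care is pinning down the exponent: obtaining $\log k$ rather than $\log(\beta k)$ is precisely what distinguishes choosing each clique as an unordered subset of size at most $\beta$ (count $\binom{\beta k}{\beta}\approx k^{\beta}$) from naively filling $\beta$ ordered slots out of $\beta k$ vertices (count $(\beta k)^{\beta}$, which would give only the weaker $2^{\beta k \log(\beta k)}$). I would also make explicit that the additive $O(\beta k)$ in the exponent, arising both from the $(ek)^{\beta}$ estimate and from summing the binomial coefficients over clique sizes $0,\dots,\beta$, is a lower-order term dominated by $\beta k \log k$. Finally, I should confirm that the preprocessing is sound — deleting isolated vertices preserves the answer, and no clique larger than $\beta$ ever needs to be considered since $G'$ contains none — and that restricting the building blocks to genuine complete subgraphs rather than arbitrary subsets both keeps the count at $\binom{\beta k}{\le\beta}$ and makes the validity check immediate.
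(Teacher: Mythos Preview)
Your argument is correct, and the kernel bound $|V(G')|\le \beta k$ after deleting isolated vertices is sound (it is the sharpened form of the paper's Rule~\ref{rule_ecc_3} with $\beta$ in place of $d+1$). However, your route is genuinely different from the paper's, and it lands on a slightly weaker bound than the one stated.

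The paper does not enumerate $k$-tuples of cliques on the kernel. It runs the bounded search tree \emph{ECCS2}: at each node an uncovered edge $\{x,y\}$ is selected, and the algorithm branches over the at most $k$ existing partial cliques that can absorb $\{x,y\}$ (found via the \emph{implicit set representation}), plus possibly one fresh clique. Since each of the at most $k$ cliques can be extended at most $\beta-1$ times before hitting the clique-number cap, the depth is $\beta k$ and the branching factor is at most $k$, giving exactly $k^{\beta k}=2^{\beta k\log k}$ nodes (Lemmas~\ref{lemma_eccs2_depth}--\ref{lemma_eccs2_nn}). Your enumeration yields $\binom{\beta k}{\le\beta}^k\le 2^{\beta k\log k+O(\beta k)}$; the additive $O(\beta k)$ in the exponent is not absorbed by the $n^{O(1)}$ factor, so strictly you prove $2^{O(\beta k\log k)}n^{O(1)}$ rather than the exact $2^{\beta k\log k}n^{O(1)}$ of Theorem~\ref{thm_ecc2}.

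What each approach buys: yours is far more elementary---no locally-minimal machinery, no representative sets---and already suffices for every comparison the paper draws with Theorem~\ref{thm_ecc} and with \cite{gramm2009data}. The paper's search tree nails the leading constant, simultaneously bounds the branching factor by $\min\{k,\binom{d+1}{2}\}$ (Lemma~\ref{lemma_eccs2_bf}, useful when degeneracy is small), and is the template reused almost verbatim for \emph{ACC}, \emph{WECP}, \emph{EWCD}, and \emph{LRCC}. One clarification: despite your phrasing, your method is not really ``in the spirit of Framework-2''; that framework is precisely the incremental search-tree-with-set-representation mechanism of Sections~\ref{sec_bblock}--\ref{sec_algos}, not kernelize-then-enumerate.
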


For $\beta = o(d / \log d)$, the algorithm of Theorem \ref{thm_ecc2} is asymptotically faster than the algorithm of Theorem \ref{thm_ecc} by a factor $2^{O(dk)}$. For many instances of \emph{ECC}, $\beta = o(d/\log d)$ is a very mild requirement: $d$ can grow linearly with the input size, while $\beta$ remains constant.

\cite{ennis2012assignment} have claimed that the \emph{assignment-minimum clique cover} is NP-hard; but we have not found any proof of this claim in the literature. We fill this gap by proving the following claim.

\begin{theorem}
\label{thm_acc}
\emph{ACC} is \emph{NP-complete}.
\end{theorem}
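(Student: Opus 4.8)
The plan is to first settle membership in NP and then prove NP-hardness by reduction from a triangle decomposition problem. Membership is routine: a candidate clique cover $\C$ is a certificate of size polynomial in $n$, and one checks in polynomial time that each $C_l \in \C$ induces a clique of $G$, that every edge of $G$ lies in some $C_l$, and that $\sum_{C_l \in \C}|C_l| \le t$. Hence \emph{ACC} $\in$ NP, and it remains to give a polynomial-time many-one reduction from a known \emph{NP-complete} problem.

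For hardness I would reduce from \emph{Triangle Edge Partition}: given a graph $H$ with $m$ edges, decide whether $E(H)$ can be partitioned into triangles. This problem is \emph{NP-complete}, and---crucially for the cost analysis below---it remains \emph{NP-complete} when $H$ is $K_4$-free (for example on tripartite graphs, where a clique has at most one vertex per part and hence size at most three). The reduction is essentially the identity on the graph together with the threshold $t := m$: from $H$ I output the \emph{ACC} instance $(H,m)$.

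The correctness rests on a cost-accounting lemma for $K_4$-free graphs. Since $H$ is $K_4$-free, every clique of $H$ has size at most three, so in any clique cover $\C$ I may discard size-$1$ cliques (they cover no edge) and let $x$ be the number of triangles and $y$ the number of $2$-cliques used. Then $\sum_{C_l \in \C}|C_l| = 3x + 2y$, whereas the total coverage multiplicity is $3x + y \ge m$ because each of the $m$ edges is covered at least once. Subtracting gives $\sum_{C_l}|C_l| = (3x+y) + y \ge m$, so the objective is at least $m$ for every feasible cover, with equality exactly when $y = 0$ and $3x = m$ with no edge covered twice---that is, exactly when the chosen triangles partition $E(H)$. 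Thus $(H,m)$ is a \emph{YES} instance of \emph{ACC} iff $E(H)$ admits a triangle partition, which together with membership completes the proof.

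The step I expect to be the main obstacle is securing $K_4$-freeness of the source instance, since this is precisely where a careless argument breaks: a clique of size $s \ge 4$ covers $\binom{s}{2}$ edges at cost $s$, giving coverage-per-cost $(s-1)/2 > 1$, so a single $K_4$ would let a cover undercut the bound $m$ even with no triangle partition, collapsing the equivalence. The clean fix is to start from a family that is $K_4$-free by construction, such as tripartite triangle-partition instances; alternatively one argues that Holyer's construction, or a gadgeted variant that subdivides or otherwise destroys every $4$-clique while preserving the triangle-partition answer, yields $K_4$-free instances. Verifying that such a $K_4$-free hard family is available, and that any gadget introduces no new large cliques and no spurious triangle partitions, is the technical heart of the reduction; the cost accounting and the NP-membership check are then straightforward.
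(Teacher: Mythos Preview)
Your proof is correct and takes a genuinely different route from the paper. The paper reduces from \emph{Vertex Clique Cover}: given $(G^{*},k)$ with $n$ vertices and $m$ edges, it adjoins a set $X$ of $q \ge 2m+1$ new vertices, each adjacent to all of $V(G^{*})$ and to no other new vertex, and sets $t = (n+k)q + 2m$; the forward direction replicates a $k$-clique vertex cover once per $x_i \in X$, and the backward direction uses an averaging argument over the $q$ pairwise-disjoint subfamilies $\{C_l : x_i \in C_l\}$ to extract one with at most $n+k$ total assignments, hence at most $k$ cliques. By contrast, your reduction is essentially the identity map $H \mapsto (H,\,|E(H)|)$ from triangle edge-partition on $K_4$-free graphs, with a two-line cost--coverage inequality ($3x+2y \ge (3x+y) \ge m$, tight iff $y=0$ and the triangles partition) doing all the work. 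The paper's approach draws on a canonical source problem with no structural restriction, at the price of an auxiliary-vertex gadget and a slightly delicate averaging step; your approach is shorter and more transparent but leans on the less-standard fact that triangle edge-partition stays hard on $K_4$-free graphs. You are right that tripartite instances suffice --- triangle decomposition of tripartite graphs is \emph{NP}-complete via the \emph{NP}-completeness of completing partial Latin squares (remove the filled-cell triangles from $K_{n,n,n}$ and ask for a decomposition of what remains) --- so the obstacle you flag is genuinely resolvable and your argument goes through.
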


We describe two FPT algorithms for \emph{ACC}, one is obtained using \emph{Framework-1}, and the other is obtained using \emph{Framework-2}. The following is our best FPT running time for \emph{ACC} that we obtain using \emph{Framework-2}.

\begin{theorem}
\label{thm_acc2}
\emph{ACC} has an FPT algorithm running in $4^{t \log t}n^{O(1)}$ time.
\end{theorem}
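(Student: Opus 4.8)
The plan is to pair a cheap kernelization with an exhaustive enumeration of candidate covers, exploiting that the target total size $t$ simultaneously bounds the number of usable vertices, the number of cliques, and the total number of vertex-to-clique assignments. First I would preprocess the instance. A vertex with no incident edge imposes no covering constraint, so it can be deleted; call the remaining graph the kernel. If $v$ is a non-isolated vertex of the kernel, then some incident edge must be covered by a clique of $\C$, and that clique contains $v$; hence every kernel vertex appears in at least one clique and contributes at least $1$ to $\sum_{C_l \in \C}|C_l|$. Consequently $\sum_{C_l \in \C}|C_l| \ge n$, so if $n > t$ we may immediately answer NO. After this step $n \le t$; moreover, since each clique covers at most $\binom{|C_l|}{2}$ edges and $\sum_{C_l\in\C}|C_l| \le t$ forces $\sum_{C_l\in\C}\binom{|C_l|}{2} \le \binom{t}{2}$ by convexity, the kernel has $m = O(t^2)$ edges. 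Thus the reduced instance has size polynomial in $t$, while the reduction itself costs only $n^{O(1)}$.

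Second, I would analyze the shape of a feasible cover. A clique of size $1$ covers no edge yet still contributes $1$ to the total, so we may discard it; hence every clique has size at least $2$ and the number of cliques satisfies $k \le t/2$. A candidate cover is then encoded by a $k \times n$ binary membership matrix with at most $t$ ones (entry $(i,v)$ indicating $v \in C_i$), and feasibility is the conjunction of two polynomially checkable conditions: every row induces a clique of $G$, and for every edge some row contains both endpoints. The crucial point is that although the matrix has up to $(t/2)\cdot t = \Theta(t^2)$ cells, a feasible cover turns on at most $t$ of them. Using \emph{Framework-2} to enumerate these sparse membership patterns by their support, the number of candidates is, up to a $\mathrm{poly}(t)$ factor from summing over $k \le t/2$ and over supports of size at most $t$, bounded by $\binom{t^2/2}{t} \le (et/2)^t = 2^{t\log t + O(t)}$, which is at most $4^{t\log t}$ for all sufficiently large $t$. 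For each enumerated candidate I verify the two conditions in $\mathrm{poly}(t)$ time, accept the first feasible one, and otherwise output NO, giving total running time $4^{t\log t}\,n^{O(1)}$.

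The main obstacle is keeping the enumeration at $2^{O(t\log t)}$ rather than the $2^{\Theta(t^2)}$ that results from treating each of the $\Theta(t^2)$ matrix cells as an independent binary choice; this is exactly what the bound ``at most $t$ ones'' buys, and it is why \emph{Framework-2} must enumerate assignments by their support (the at-most-$t$ chosen incidences) rather than by filling out full rows or columns. I expect the bookkeeping needed to show that this support-based enumeration is both exhaustive (no feasible cover is missed, even after the normalization that drops singleton cliques and isolated vertices) and sufficiently non-redundant to stay within the stated bound to be the most delicate part of the argument. A secondary point to state explicitly is that discarding singleton cliques and isolated vertices is without loss of generality, since it is what justifies both the kernel bound $n \le t$ and the clique-count bound $k \le t/2$ on which the running time rests.
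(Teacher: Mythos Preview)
Your argument is correct and achieves the stated bound, but it is \emph{not} the paper's proof, and your use of the name ``Framework-2'' is a misattribution.

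The paper proves Theorem~\ref{thm_acc2} via the bounded search tree algorithm \emph{ACCS2} (Figure~\ref{fig:acc_algo2}): at each node it picks the last uncovered edge $\{x,y\}$ in the degeneracy edge permutation and branches over the representatives in $R_x\cap R_y$ (plus possibly one new clique). The depth is at most $t$ (each step spends at least one unit of the assignment budget), and the branching factor is at most $\binom{d+1}{2}<t^2$ after the kernelization $n\le t$; this yields $t^{2t}=4^{t\log t}$ nodes. That is what ``Framework-2'' means in this paper: the locally-minimal / implicit-set-representation search tree of Section~\ref{sec_bblock}--\ref{sec_algos}, not an enumeration of sparse $0/1$ matrices by support.

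What you actually do is a direct brute-force enumeration: after the same kernel $n\le t$, you observe that any feasible cover (with singleton cliques discarded) has at most $t/2$ cliques and at most $t$ vertex-to-clique incidences, hence corresponds to a $\{0,1\}^{(t/2)\times t}$ matrix with at most $t$ ones, and there are only $\binom{t^2/2}{\le t}\le (et/2)^t=2^{t\log t+O(t)}\le 4^{t\log t}$ such matrices. This is a valid and in some sense more elementary route; it even yields the marginally sharper count $2^{t\log t+O(t)}$. The trade-off is that the paper's search tree is structure-aware (branching factor bounded by $\binom{d+1}{2}$ rather than $k$, compatible with the data-reduction rules, and reusable across the other problems in Section~\ref{sec_algos}), whereas your enumeration ignores graph structure entirely. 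So: drop the phrase ``Using Framework-2'' and present your method as what it is---a straightforward sparse-support enumeration after kernelization---and the proof stands.
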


In contrast to $2^{2^{O(n)}}$ running time and $2^{O(n)}$ space complexity of the algorithm described by \cite{ennis2012assignment}, we obtain the following result from Theorem \ref{thm_acc2}.

\begin{corollary}
\label{cor_acc}
An \emph{assignment-minimum clique cover} of a graph $G$ can be computed in $2^{O(m \log n)}$ time, using $O(mn)$ space.
\end{corollary}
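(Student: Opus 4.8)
The plan is to reduce the optimization task of computing an assignment-minimum clique cover to a short sequence of calls to the decision algorithm of Theorem~\ref{thm_acc2}, and then to choose the parameter range so that the $t\log t$ term collapses to $O(m\log n)$.

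First I would bound the optimal assignment cost $t^{\ast} = \min_{\C}\sum_{C_l \in \C}|C_l|$ over all edge clique covers $\C$ of $G$. Taking the cover in which every edge forms its own clique of size two yields $\sum_{C_l}|C_l| = 2m$, so $t^{\ast} \le 2m$; in particular a feasible cover always exists. Moreover, the predicate ``$G$ admits an edge clique cover of assignment cost at most $t$'' is monotone in $t$ (a witnessing cover for $t$ also witnesses $t+1$), so the minimum feasible value $t^{\ast}$ can be located by binary search over $\{0,1,\dots,2m\}$ using $O(\log m)$ invocations of the \emph{ACC} procedure, each of which also returns the witnessing cover $\C$ on a YES answer.

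Next I would substitute the bound $t = O(m)$ into the running time $4^{t\log t}n^{O(1)}$ of Theorem~\ref{thm_acc2}. Writing $4^{t\log t} = 2^{2t\log t}$ and using $t \le 2m$ gives an exponent $2t\log t \le 4m\log(2m) = O(m\log m)$. Since $G$ is simple we have $m < n^2$, hence $\log m = O(\log n)$ and therefore $O(m\log m) = O(m\log n)$. The polynomial factor $n^{O(1)} = 2^{O(\log n)}$ and the $O(\log m)$ search iterations are then absorbed into the exponent, yielding total time $2^{O(m\log n)}$.

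For the space bound I would trace the memory usage of the \emph{Framework-2} algorithm underlying Theorem~\ref{thm_acc2} and verify that a single decision call runs in $O(mn)$ space; since the binary search reuses the same working memory across its $O(\log m)$ calls, retaining only the current best cover (itself of size $O(mn)$), the total stays $O(mn)$. The only nontrivial step is this space accounting: the time argument is a routine substitution, but confirming the $O(mn)$ figure requires inspecting the data structures of \emph{Framework-2} rather than treating the algorithm of Theorem~\ref{thm_acc2} as a black box, so I expect that verification to be the main obstacle.
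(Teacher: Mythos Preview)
Your proposal is correct and follows essentially the same approach as the paper: bound $t^{\ast}\le 2m$ via the trivial edge-by-edge cover, search over $t\in[1,2m]$ using the decision procedure of Theorem~\ref{thm_acc2}, and collapse $4^{t\log t}$ to $2^{O(m\log n)}$ via $m<n^2$. For the space bound you correctly anticipate the obstacle; the paper resolves it by invoking Lemma~\ref{lemma_eccs2_space}, which gives $O(m+k\Delta)$ space for the Framework-2 search, and then observes that the number of cliques in any branch satisfies $k\le m$ (each clique contributes at least two to the assignment budget) while $\Delta\le n$, yielding $O(mn)$.
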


Rest of our results are obtained using \emph{Framework-2}. For \emph{WECP}, we have the following result.

\begin{theorem}
\label{thm_wecp}
\emph{WECP} parameterized by $\beta$ and $k$ has an FPT algorithm running in $2^{\beta k\log k}n^{O(1)}$ time.
\end{theorem}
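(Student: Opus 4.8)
The plan is to reuse \emph{Framework-2} almost verbatim from the proof of Theorem \ref{thm_ecc2}, since the two statements share the running time $2^{\beta k\log k}n^{O(1)}$ and differ only in the acceptance condition imposed on a candidate cover. The first step is a structural reduction that bounds the number of relevant vertices. Because the weight function satisfies $w(e) \ge 1$ for every edge, condition (2) of \emph{WECP} forces each edge to lie in at least one clique of $\C$, so both of its endpoints are covered; hence every non-isolated vertex belongs to some $C_i \in \C$. As each clique of $G$ has at most $\beta$ vertices and $|\C| \le k$, a valid solution covers at most $\beta k$ vertices. I would therefore discard isolated vertices (they play no role) and output \emph{NO} immediately whenever the number of non-isolated vertices exceeds $\beta k$. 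I would also observe that an edge can appear in at most $|\C| \le k$ cliques, so if $w(e) > k$ for some $e$ the answer is \emph{NO}; after this check every weight is bounded by $k$ and all counters stay polynomial in the parameters.

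After the reduction we have an instance whose non-isolated vertices number at most $\beta k$, to which I would apply the \emph{Framework-2} enumeration developed in Section \ref{sec_bblock} and Section \ref{sec_algos} exactly as in the proof of Theorem \ref{thm_ecc2}. Intuitively the cover is captured by the membership of the covered vertices among the $k$ cliques, and the bound $\sum_{C_i \in \C}|C_i| \le \beta k$ on the total number of vertex--clique incidences is what limits the number of candidate configurations; the framework organizes this search so that it is traversed in $2^{\beta k\log k}$ steps, each handled in time polynomial in $n$. Since the vertex bound $\beta k$ for \emph{WECP} is established above exactly as for \emph{ECC}, the search space and its size are inherited unchanged from Theorem \ref{thm_ecc2}.

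The only genuine change is the test applied to each candidate cover. For \emph{ECC} one checks mere coverage (each edge in at least one clique); for \emph{WECP} one must instead verify the exact-multiplicity condition that each edge $e$ lies in exactly $w(e)$ of the cliques, together with $|\C| \le k$ and the requirement that every $C_i$ is indeed a clique of $G$. Given a fully specified configuration this is a direct count performed in polynomial time, so it does not affect the asymptotic running time; I would also fold the partial counts into the consistency checks that \emph{Framework-2} performs while building a configuration, pruning any branch in which an edge already exceeds its target multiplicity.

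I expect the main obstacle to be exactly this passage from an inequality (coverage) to an equality (exact multiplicity): I must argue that replacing the coverage test by the exact-count test is compatible with whatever aggregation over vertex types or building blocks \emph{Framework-2} uses, i.e.\ that the framework tracks, for each edge, the number of chosen cliques containing both of its endpoints rather than only whether this number is positive. Once the framework is set up to maintain these edge counters---which the bound $w(e) \le k$ keeps small---the correctness follows as for \emph{ECC}, and the running time is inherited unchanged from Theorem \ref{thm_ecc2}.
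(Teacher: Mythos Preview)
Your proposal has a genuine gap in the search itself. Framework~2 (the algorithm \emph{ECCS2}) is not a generic enumeration of vertex--clique membership matrices; it is a bounded search tree that repeatedly selects the last \emph{uncovered} edge $\{x,y\}$ and either joins it to some clique $C_l$ with $l\in R_x\cap R_y$ or opens a new clique. Once an edge is covered it is never selected again. If you leave this mechanism unchanged and only replace the leaf test by an exact-multiplicity check, you simply do not generate the configurations you need: for any edge with $w(e)\ge 2$ there is no branch in the \emph{ECCS2} tree that deliberately places that edge into a second clique. (Concretely, a triangle with all edge weights equal to $2$ and $k=2$ requires two copies of the same $3$-clique; no leaf of the \emph{ECCS2} tree produces that.) Your proposed ``prune when an edge exceeds its target'' handles only the upper side of the equality; the lower side is the real difficulty.

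The paper fixes exactly this point. It treats $w^E(e)$ as a budget: at step~2 the algorithm \emph{AWECPS} selects the last edge $e$ with $w^E(e)>0$, so the same edge is chosen repeatedly until its budget is exhausted, and the subroutines decrement (and restore) $w^E$ and, for annotated vertices, $w^S$. The branching condition at step~3 is $l\in R_x\cap R_y$ with $\{x,y\}\not\subseteq C_l$, together with a feasibility check against the remaining budgets. With this modification the depth bound $\beta k$ and the branching bound $\min\{k,\binom{d+1}{2}w\}\le k$ go through verbatim (Lemma~\ref{lemma_awecp_nn}), yielding $2^{\beta k\log k}$ nodes. A secondary difference is that the paper works with the annotated problem \emph{AWECP}, because the data reduction rules of \cite{feldmann2020fixed} produce annotated instances; your simpler reduction to $\le\beta k$ non-isolated vertices is valid for \emph{WECP} but does not by itself give the search tree you need.
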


For \emph{WECP}, \cite{feldmann2020fixed} have described an FPT algorithm with $2^{O(k^{3/2}w^{1/2}\log (k/w))}n^{O(1)}$ running time, where $w$ is the maximum edge weight. For $\beta = o((\sqrt{kw}\log (k/w)) / \log k)$, the algorithm of Theorem \ref{thm_wecp} improves the running time by a factor of $2^{O(k^{3/2}w^{1/2}\log (k/w))}$. This is a significant improvement, considering the fact that $\beta$ is bounded by the graph size, while $k$ and $w$ could be arbitrarily large. For nontrivial instances of \emph{WECP}, $k$ could be up to (but not including) $mw$. On the other hand, if the maximum edge weight $w$ is bounded by some constant, then for $\beta = o(\sqrt{k})$, the algorithm of Theorem \ref{thm_wecp} is $2^{O(k^{3/2} \log k)}$ times faster than the algorithm of \cite{feldmann2020fixed}.

For \emph{EWCD}, we have the following result.

\begin{theorem}
\label{thm_ewcd}
\emph{EWCD} parameterized by $\beta$ and $k$ has an FPT algorithm running in $2^{O(\beta k\log k)}n^{O(1)}L$ time, where $L$ is the number of bits required for input representation\footnote{More precisely, $L$ is the number of bits required to encode the input in a linear program (see (\ref{aewcd_lp})).}.
\end{theorem}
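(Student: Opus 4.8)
The plan is to separate the combinatorial skeleton of an \emph{EWCD} solution from its continuous weights, enumerating the former with \emph{Framework-2} and recovering the latter by linear programming. An \emph{EWCD} solution is specified by (i) a set of at most $k$ cliques of $G$, and (ii) positive reals $\gamma_i$ assigned to these cliques satisfying $\sum_{\{u,v\} \subseteq C_i} \gamma_i = w(e)$ for every edge $e = \{u,v\}$. The set of cliques is a purely combinatorial object, identical in type to the one underlying \emph{WECP}; once it is fixed, the weights are constrained to an affine subspace, and their existence reduces to a feasibility question.

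First I would run the enumeration of \emph{Framework-2} exactly as in the proof of Theorem \ref{thm_wecp}. Because every clique has at most $\beta$ vertices and a solution uses at most $k$ cliques, the set-representation enumerates every combinatorially distinct family of at most $k$ cliques in $2^{O(\beta k \log k)} n^{O(1)}$ time. The only change from \emph{WECP} is that \emph{EWCD} seeks a \emph{set} of distinct cliques rather than a weighted multiset, since merging two identical cliques only adds their $\gamma_i$; this restriction does not increase the number of enumerated structures, so the bound on the enumeration carries over verbatim.

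Next, for each enumerated family $\{C_1, \dots, C_r\}$ with $r \leq k$, I would form the linear program (\ref{aewcd_lp}) in the variables $\gamma_1, \dots, \gamma_r$: one equality $\sum_{\{u,v\} \subseteq C_i} \gamma_i = w(e)$ per edge, together with strict positivity $\gamma_i > 0$. This program has at most $k$ variables and $O(m)$ constraints, and a family that fails to cover some edge is rejected automatically because the corresponding constraint degenerates to $0 = w(e) > 0$. Deciding feasibility and extracting a witness takes time polynomial in the number of variables, the number of constraints, and the input bit-length $L$ via any polynomial-time linear programming method; multiplying by the number of enumerated families gives the claimed $2^{O(\beta k \log k)} n^{O(1)} L$ running time.

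The step I expect to require the most care is the treatment of the strict inequalities $\gamma_i > 0$, which a standard linear program cannot impose directly. I would handle this by adjoining an auxiliary variable $\delta$, replacing each positivity constraint with $\gamma_i \geq \delta$, and maximizing $\delta$ over the equality system; the family admits a valid weighting if and only if this optimum is strictly positive (or unbounded above). A secondary point to verify is completeness of the enumeration: I must argue that the clique set of \emph{any} valid \emph{EWCD} solution, after merging coincident cliques, appears among the enumerated families, which follows from the same bounds on the number and size of cliques that justify Theorem \ref{thm_wecp}.
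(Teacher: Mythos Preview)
Your high-level plan—enumerate candidate clique families with Framework-2 and test each with a linear program—is exactly what the paper does, and your LP treatment (including the $\delta$ trick for strict positivity, which the paper sidesteps by relaxing to $\gamma_i\ge 0$ and discarding zero-weight cliques) is fine. The gap is in the enumeration step. You propose to ``run the enumeration of Framework-2 exactly as in the proof of Theorem~\ref{thm_wecp}'', but the \emph{AWECPS} search is driven by integer edge budgets $w^E(e)$: it repeatedly selects an edge with positive remaining budget and decrements that budget upon inclusion. For \emph{EWCD} the weights are arbitrary reals, so there is no such budget to tell the search how many cliques a given edge should appear in. Falling back to the plain \emph{ECCS2} enumeration does not help either: once every edge is covered by at least one clique, that search terminates, so it never produces a family like $\{C_1,C_2,C_3\}=\{\{a,b,c,d\},\{a,b,c\},\{a,b\}\}$ in which the already-covered edge $\{a,b\}$ must be placed into additional cliques to balance the real weights (the paper's $K_4$ example makes exactly this point).

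The paper's fix is a modified search (\emph{AEWCDS}) that, after including the selected edge $\{x,y\}$ in a clique, branches \emph{again} on whether to mark $\{x,y\}$ as covered or leave it eligible for further inclusion. This extra binary choice doubles the branching factor to at most $2k$ but leaves the depth bound $\beta k$ intact, giving $(2k)^{\beta k}=2^{\beta k(1+\log k)}$ nodes and hence the stated $2^{O(\beta k\log k)}$ bound. Your completeness argument (``follows from the same bounds on the number and size of cliques'') is therefore not enough on its own: you need either this additional branching or some equivalent mechanism that lets the search revisit already-covered edges.
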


For \emph{EWCD}, \cite{cooley2021parameterized} have described an FPT algorithm with running time $O(4^{k^2}k^2(32^k+k^3L))$, where $L$ is the number of bits required for input representation. For $\beta = o(k / \log k)$, the algorithm of Theorem \ref{thm_ewcd} improves the running time by a factor of $2^{O(k^2)}$. For the cases when all weights are restricted to integers, \cite{cooley2021parameterized} have described an algorithm with running time $O(4^{k^2}32^kw^kk)$, where $w$ is the maximum edge weight. For these cases, if $\beta = o(k / \log k)$, then an adaptation of the algorithm of Theorem \ref{thm_ewcd} also improves the running time by a factor of $2^{O(k^2)}$.

For \emph{LRCC}, we have the following result.

\begin{theorem}
\label{thm_lrcc}
\emph{LRCC} parameterized by $\beta$ and $k$ has an FPT algorithm running in $2^{\beta k\log k}n^{O(1)}$ time.
\end{theorem}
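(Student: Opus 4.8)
The plan is to run the same \emph{Framework-2} machinery used for \emph{ECC} in Theorem \ref{thm_ecc2}, after first observing that a yes-instance forces the input to be small. First I would establish a kernel-type bound on the number of vertices: in any feasible solution the at most $k$ cliques each have at most $\beta$ vertices (every clique of $G$ has size at most the clique number $\beta$), and by condition (2) their union is all of $V$; hence $n = |V| \le \sum_{C_l \in \C} |C_l| \le \beta k$. Consequently, if $n > \beta k$ I would immediately return \emph{NO}, and otherwise proceed on an instance with $n \le \beta k$. A minor preprocessing step handles isolated vertices, each of which can only be covered by a singleton clique: I would remove them and decrement $k$ accordingly, returning \emph{NO} if this makes $k$ negative.

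Next I would recast a solution as a membership assignment $v \mapsto S_v \subseteq [k]$, where $S_v$ records the indices of the cliques containing $v$ and $C_i = \{v : i \in S_v\}$. Feasibility is then exactly: (i) each $C_i$ is a clique of $G$ (equivalently, $i \in S_u \cap S_v$ implies $\{u,v\} \in E$); (ii) $S_v \neq \emptyset$ for every $v$; and (iii) $S_u \cap S_v \neq \emptyset$ for every $\{u,v\} \in E^*$. Condition (i) is identical to the clique-validity requirement underlying \emph{ECC}, while conditions (ii)--(iii) merely replace the \emph{ECC} requirement ``every edge of $G$ is covered'' by the target set consisting of all vertices together with the edges of $E^*$. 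I would therefore invoke the \emph{Framework-2} enumeration of candidate clique systems exactly as in the proof of Theorem \ref{thm_ecc2}: since $n \le \beta k$ and each clique has size at most $\beta$, this enumeration runs in $2^{\beta k \log k}n^{O(1)} = k^{\beta k}n^{O(1)}$ time. For each enumerated system I would swap the \emph{ECC} edge-coverage test for the \emph{LRCC} test verifying (ii) and (iii), a polynomial-time check over $V$ and $E^*$ that leaves the asymptotic running time unchanged.

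As a sanity check, setting $E^* = E$ reduces the coverage test to ``every edge is covered'' (which subsumes all non-isolated vertices after the preprocessing above), recovering the \emph{ECC} algorithm of Theorem \ref{thm_ecc2}, while $E^* = \emptyset$ leaves only vertex coverage, i.e.\ \emph{VCC}; in both extremes the running time matches. The main obstacle I anticipate lies in the coverage step rather than in the enumeration: I must verify that the more permissive vertex-coverage requirement integrates into \emph{Framework-2} without enlarging the search space. In particular, vertices incident to no edge of $E^*$ need only be placed in some clique, possibly as singletons, and I would need to confirm that the building blocks of \emph{Framework-2} already account for such free placements, so that the number of candidate systems remains $k^{\beta k}$ while the combined vertex-plus-$E^*$ coverage is still certifiable in polynomial time per candidate. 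Confirming this modularity of \emph{Framework-2} for the mixed vertex/edge target set of \emph{LRCC} is the crux; the kernel bound and the per-candidate checks are then routine.
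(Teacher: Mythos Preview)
Your kernel observation $n\le\beta k$ is correct and corresponds to the depth bound the paper uses, but the plan has a genuine gap in how it reaches the running time $k^{\beta k}$. You treat \emph{Framework-2} as a black-box ``enumerate candidate clique systems, then test each,'' but that is not what it is: \emph{ECCS2} is a bounded search tree that is \emph{driven by uncovered edges}. The $k^{\beta k}$ bound in Theorem~\ref{thm_ecc2} comes from depth $\le\beta k$ (each branch adds a vertex to some clique) and branching factor $\le k$ (choices in $R_x\cap R_y$), not from enumerating all systems and filtering. If you instead try to brute-force after your kernel, you get at best $\big(\sum_{i\le\beta}\binom{n}{i}\big)^k\le(\beta k)^{\beta k}=2^{\beta k\log(\beta k)}$, which misses the target by a factor $2^{\beta k\log\beta}$; and assigning each vertex a subset of $[k]$ gives $2^{\beta k^2}$, which is worse still. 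So the kernel alone cannot deliver the stated bound.

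What the paper actually does is adapt the search tree itself: algorithm \emph{LRCCS} interleaves the selection of an uncovered edge of $E^*$ and an uncovered vertex of $V$ (choosing whichever comes later in the degeneracy order), and then branches on $R_x\cap R_y$ (for an edge) or on $R_u$ (for a vertex). The key lemma (Lemma~\ref{lemma_lrcc_nn}) shows both cases keep the branching factor at $\min\{k,\binom{d+1}{2}\}\le k$, and the depth stays $\le\beta k$ since every branch still inserts at least one new vertex into some clique. This is precisely the ``integration'' you flag as the crux but do not carry out: you must show that processing bare vertices (including as singleton cliques) can be folded into the same search tree without inflating either depth or branching, and that is the content of the paper's proof rather than a modularity check one can leave unverified.
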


Through a \emph{parameterized reduction}, we obtain the following result from Theorem \ref{thm_lrcc}.

\begin{corollary}
\label{cor_pmc}
\emph{PMC} parameterized by $\alpha$ and $k$ has an FPT algorithm running in $2^{\alpha k\log k}n^{O(1)}$ time.
\end{corollary}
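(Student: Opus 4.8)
The plan is to establish \emph{PMC} and \emph{LRCC} as the same problem viewed through graph complementation, and then to invoke Theorem \ref{thm_lrcc} on the complemented instance while checking that the relevant parameter is exactly $\alpha$. Concretely, given a \emph{PMC} instance $(G, k, \F)$, I would form the complement graph $\overline{G}$ and output the \emph{LRCC} instance $(\overline{G}, k, E^*)$ with $E^* = \F$. This is well defined because every pair in $\F$ consists of non-adjacent vertices of $G$, hence of adjacent vertices of $\overline{G}$, so $\F$ is genuinely a set of edges of $\overline{G}$.

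Next I would verify the solution correspondence. A color used in \emph{PMC} determines its color class, the set of vertices receiving that color; since no two $G$-adjacent vertices share a color, each color class is an independent set of $G$, equivalently a clique of $\overline{G}$. Conversely each clique of $\overline{G}$ is an independent set of $G$ and can serve as a color class. Under this dictionary: ``every vertex receives at least one color'' matches ``every vertex lies in some clique''; ``at most $k$ colors'' matches ``$|\C|\le k$''; and ``each pair $\{x,y\}\in\F$ shares a color'' matches ``each edge $\{x,y\}\in E^*$ lies in some clique''. Because both problems are covers rather than partitions, a vertex carrying several colors corresponds cleanly to a vertex lying in several cliques, so the equivalence is exact in both directions.

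Then I would track the parameter. The running time in Theorem \ref{thm_lrcc} is governed by the clique number $\beta$ of the \emph{LRCC} input graph, which here is $\overline{G}$. Since $\omega(\overline{G}) = \alpha(G) = \alpha$, the bound $2^{\beta k\log k}n^{O(1)}$ becomes $2^{\alpha k\log k}n^{O(1)}$. Constructing $\overline{G}$, transporting $\F$, and translating the returned clique cover back into a color assignment are all polynomial in $n$, so these costs are absorbed into $n^{O(1)}$.

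The construction itself is routine; the point that requires care, and the reason this is phrased as a \emph{parameterized reduction} rather than merely a polynomial-time one, is the parameter bookkeeping: one must confirm that complementation sends the independence number of the \emph{PMC} instance to exactly the clique number of the \emph{LRCC} instance, with no additive or multiplicative slack and with $k$ preserved. Once $\beta = \alpha$ and $k \mapsto k$ are pinned down, fixed-parameter tractability transfers verbatim from Theorem \ref{thm_lrcc}, yielding the claimed $2^{\alpha k\log k}n^{O(1)}$ bound.
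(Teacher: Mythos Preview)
Your proposal is correct and follows essentially the same approach as the paper: form the complement $\overline{G}$, observe that $(G,k,\F)$ is a \emph{YES} instance of \emph{PMC} iff $(\overline{G},k,\F)$ is a \emph{YES} instance of \emph{LRCC}, invoke Theorem~\ref{thm_lrcc}, and use $\omega(\overline{G})=\alpha(G)$ to rewrite the bound. Your write-up is in fact more explicit than the paper's in spelling out the color-class/clique dictionary and why the cover (rather than partition) formulation makes the correspondence exact.
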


\subsection{Overview of Our Techniques}
\label{intro_overview}

Our algorithms are based on combination of data reduction rules and bounded search tree algorithm. We use two different frameworks for designing bounded search tree algorithms. We make sure that all of our search tree algorithms are compatible with the existing data reduction rules (and conceivably with many rules yet to be found). For an instance of \emph{WECP} (or \emph{EWCD}), \emph{kernelization} produces an instance of a more general problem. For these cases, to make use of the \emph{kernelization}, we describe search tree algorithms for the general problems.

Our first framework (Section \ref{sec_ecrs}) for search tree algorithms is based on enumeration of cliques of subgraphs, introduced by \cite{gramm2009data}. With an edge selection rule, \cite{gramm2009data} have attempted to bound the branching factors of a search tree. But the rule itself does not lead to any provable bound on the branching factors. As a result, the branching factors of a search tree of \cite{gramm2009data} is only bounded by a function of $k$ that is double exponential in $k$ (obtained through kernelization). We enumerate cliques of subgraphs such that the size of each subgraph is bounded by a polynomial function of \emph{degeneracy}. This allows us to bound the branching factors of a search tree with a single exponential function of \emph{degeneracy} (or other parameters).

We introduce a general way of computing clique cover based on two concepts that we call \emph{locally minimal clique cover} and \emph{implicit set representation}. The concepts may be of independent interest, and we devote a fair amount of exposition highlighting their characterizations and relationships (Section \ref{sec_bblock}). We use the concepts to demonstrate a general framework for designing search tree algorithms for clique cover problems (Section \ref{sec_algos}).

The concept of \emph{locally minimal clique cover} is based on a relaxation of a global minimality of clique cover. With this relaxation, systematic exploration of clique covers of a graph becomes straightforward. Every graph corresponds to some family of sets called \emph{set representation} of the graph \cite{mckee1999topics}. The union of the sets in a \emph{set representation} of a graph corresponds to an \emph{edge clique cover} of the graph; but this correspondence does not immediately lead to efficient construction of \emph{edge clique cover}. \emph{Implicit set representation} of a graph is a relaxation of \emph{set representation} of the graph that helps efficient construction of \emph{edge clique cover} of a graph. The two concepts, \emph{locally minimal clique cover} and \emph{implicit set representation}, nicely fit together, and give us a general way of computing clique cover with different objectives.

\subsection{Related Work}
\label{intro_related}

\cite{gramm2009data} have described the first FPT algorithm for \emph{ECC}. The algorithm of \cite{gramm2009data} is based on a kernel with $2^k$ vertices and a search tree algorithm that takes $2^{2^{O(k)}}$ time. Unless the polynomial hierarchy collapses, no polynomial kernel exists for \emph{ECC} \cite{cygan2014clique}. The FPT running time of \cite{gramm2009data} is essentially optimal with respect to $k$ \cite{cygan2016known}. \cite{blanchette2012clique} have described an FPT algorithm for \emph{ECC} parameterized by \emph{treewidth}, and for planar graphs an FPT algorithm for \emph{ECC} parameterized by \emph{branchwidth}.

The algorithm of \cite{ennis2012assignment} is based on the observation that an \emph{edge clique cover} of a graph consisting of all maximal cliques of the graph must contain an \emph{assignment-minimum clique cover} of the graph. The algorithm starts with all maximal cliques of a graph, and enumerates all possible individual assignments of vertices that can be removed, maintaining an \emph{edge clique cover}. The requirement of having (simultaneously) all maximal cliques of a graph with $n$ vertices results in $O(n3^{n/3}) = 2^{O(n)}$ space requirement, and enumeration of the choices of removing individual assignments of vertices results in $2^{2^{O(n)}}$ running time.

\cite{feldmann2020fixed} have described an FPT algorithm with $2^{O(k^{3/2}w^{1/2}\log (k/w))} n^{O(1)}$ running time, where $w$ is the maximum edge weight. The FPT algorithm of \cite{feldmann2020fixed} is based on a \emph{bi-kernel} with $4^k$ vertices, and a search tree algorithm that enumerates matrices of $\{0,1\}^{ k \times k}$ such that the dot product of any pair of rows in a matrix is bounded by $w$.

Using the \emph{bi-kernel} of \cite{feldmann2020fixed} and a search tree algorithm that enumerates matrices of $\{0,1\}^{2k \times k}$, \cite{cooley2021parameterized} have described two FPT algorithms for \emph{EWCD}. For arbitrary positive weights, \cite{cooley2021parameterized} have described a linear program based search tree algorithm with running time $O(4^{k^2}k^2(32^k+k^3L))$, where $L$ is the number of bits required for input representation. For the cases when all weights are restricted to integers, \cite{cooley2021parameterized} have described an integer partitioning based algorithm with running time $O(4^{k^2}32^kw^kk)$, where $w$ is the maximum edge weight.

In the literature, we have not found any study on parameterizations of \emph{VCC}. On the other hand, \emph{Colorability} parameterized by \emph{treewidth} is known to be FPT \cite{cygan2015parameterized, flum2006parameterized}. Preceding result is not useful for obtaining FPT algorithm for \emph{VCC}, since no \emph{parameterized reduction} is known from any parameterization of \emph{VCC} to \emph{Colorability} parameterized by \emph{treewidth}. Similarly, no \emph{parameterized reduction} is known from any parameterization of \emph{VCC} to other parameterizations of \emph{Colorability} that are known to be FPT \cite{jansen2019computing}.

\subsection{Paper Organization}

In Section \ref{sec_prelim}, we provide a brief description of relevant structural parameters and parameterized complexity terms. In Section \ref{sec_ecrs}, we describe FPT algorithms for \emph{ECC}, \emph{ACC}, and a proof of \emph{NP-completeness} of \emph{ACC}. The building blocks of our new framework are described in Section \ref{sec_bblock}. In Section \ref{sec_algos}, using our new framework, we describe a new set of algorithms for the problems listed in Section \ref{intro_problems}. We conclude with a discussion on implementations and open problems in Section \ref{sec_open}.

\section{Preliminaries}
\label{sec_prelim}

In this section, we describe relevant structural parameters and their relationships. We also provide a brief description of the relevant terms from parameterized complexity.

\subsection{Notations}
\label{prelim_notations}

We describe commonly used notations used in the paper. Additional notations are introduced throughout the paper where they are appropriate for introduction.

For a vertex $x \in V$, let $N(x) = \{y | \{x,y\} \in E\}$ denote the neighbourhood of $x$ in $G$, and let $N[x] = N(x) \cup \{x\}$ denote the closed neighbourhood of $x$ in $G$. A vertex $x$ is called an isolated vertex if $N(x) = \emptyset$. Let $\Delta$ denote the maximum degree of $G$, i.e., $\Delta = \max_{x \in V} |N(x)|$. If the end points of an edge $e \in E$ are $x$ and $y$, then we denote the edge $e$ by $\{x,y\}$: it would be clear from the context whether $\{x,y\}$ denote an edge or an arbitrary pair of vertices.

We call $H=(V_H,E_H)$ a subgraph of $G$ if $V_H \subseteq V$ and $E_H$ is a subset of edges of $E$ that have both end points in $V_H$. We call $H$ an induced subgraph of $G$ if $E_H$ is the set all edges of $E$ that have both end points in $V_H$. If $H$ is an induced subgraph of $G$, then we say $V_H$ induces the subgraph $H$ in $G$ and denote the subgraph by $G[V_H]$.

\subsection{Structural Parameters}
\label{prelim_parameters}

\begin{definition}[Degeneracy \cite{lick1970k}]
\label{def1}
The \emph{degeneracy} of a graph $G$ is the smallest value $d$ such that every nonempty subgraph of $G$ contains a vertex that has at most $d$ adjacent vertices in the subgraph.
\end{definition}

Following Definition \ref{def1}, we can find a vertex with at most $d$ neighbours and delete the vertex and its incident edges. Repeating this process on the resulting subgraphs would lead to the empty graph. Therefore, in linear time \cite{matula1983smallest}, we can compute a permutation of vertices of $V$ defined as follows.

\begin{definition}[Degeneracy Ordering]
\label{def2}
The \emph{degeneracy ordering} of a graph $G=(V,E)$ is an ordering of vertices in $V$ such that each vertex has at most $d$ neighbours that come later in the ordering.
\end{definition}

The following two propositions ensue from the preceding definitions.

\begin{proposition}
\label{prop1}
$d+1 \leq n$
\end{proposition}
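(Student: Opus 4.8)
The plan is to prove the trivial-looking bound $d+1 \leq n$ directly from Definition \ref{def1}. The cleanest route is to apply the degeneracy condition to the whole graph $G$ itself, which is a nonempty subgraph of $G$ (assuming $n \geq 1$; I would handle the empty graph as a vacuous or degenerate boundary case separately, since if $V = \emptyset$ there is no parameter to bound). By Definition \ref{def1}, $G$ contains a vertex $x$ that has at most $d$ neighbours \emph{in $G$}, so $|N(x)| \leq d$.

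First I would observe that any vertex $x$ has at most $n-1$ neighbours, since its neighbours form a subset of $V \setminus \{x\}$, which has exactly $n-1$ elements. But I actually want the bound to go the other way, so the better observation is this: the degeneracy $d$ is the \emph{smallest} value for which every nonempty subgraph has a vertex of degree at most $d$ within that subgraph. In particular, taking the subgraph to be $G$, there exists a vertex of $G$-degree at most $d$. Combining with the elementary fact that every vertex has degree at most $n-1$ is not quite what closes the argument; rather, I would argue that $d \leq n-1$ because the constant $n-1$ already satisfies the defining property: every nonempty subgraph $H$ of $G$ on some vertex set $V_H$ has every vertex of $H$-degree at most $|V_H|-1 \leq n-1$, so in particular it has \emph{a} vertex of degree at most $n-1$. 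Since $d$ is the smallest value with this property and $n-1$ is \emph{a} value with this property, we get $d \leq n-1$, which rearranges to $d+1 \leq n$.

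The main obstacle, such as it is, is purely a matter of being careful about the direction of the minimality in Definition \ref{def1}: the definition says $d$ is the smallest threshold that works for \emph{all} nonempty subgraphs simultaneously, so the correct move is to exhibit $n-1$ as \emph{some} threshold that works and then invoke minimality, rather than trying to read the bound off a single vertex of $G$. I would also note that the bound is tight, achieved by the complete graph $K_n$, where every subgraph on $j$ vertices is itself complete and has minimum degree $j-1$, forcing $d = n-1$; this confirms the inequality cannot be improved and serves as a sanity check that the direction of the argument is correct.
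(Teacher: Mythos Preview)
Your argument is correct: exhibiting $n-1$ as a value satisfying the defining property of degeneracy and then invoking the minimality of $d$ yields $d \leq n-1$ cleanly. The paper itself does not give a proof of this proposition; it simply states that it ``ensues from the preceding definitions,'' so your write-up is a correct and appropriately detailed expansion of what the paper leaves implicit.
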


\begin{proposition}
\label{prop2}
$m < nd$
\end{proposition}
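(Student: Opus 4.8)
Let me sketch the proof.

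The plan is to use the degeneracy ordering from Definition 2 to charge each edge to exactly one of its endpoints, then count. First I would fix a degeneracy ordering $v_1, v_2, \ldots, v_n$ of the vertices in $V$, which exists by Definition 2 and can be computed in linear time. The defining property is that each vertex $v_i$ has at most $d$ neighbours that appear later in the ordering; I will call these the \emph{forward neighbours} of $v_i$.

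The key step is a charging argument. Every edge $\{v_i, v_j\}$ with $i < j$ is a forward edge from $v_i$'s perspective, so I would charge each edge to its endpoint that comes earlier in the ordering. Since each vertex $v_i$ has at most $d$ forward neighbours, at most $d$ edges are charged to $v_i$. Summing over all $n$ vertices gives $m \le nd$. To get the strict inequality $m < nd$, I would observe that the last vertex $v_n$ in the ordering has no forward neighbours at all, so zero edges are charged to it; hence the sum is at most $(n-1)d < nd$ (and if $d = 0$ the graph is edgeless so $m = 0 < nd$ fails only when $n=0$, which the convention on graphs rules out, or one simply notes the statement is about genuine parameters where $nd \ge 1$). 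Thus $m \le (n-1)d < nd$, provided $d \ge 1$; the case $d = 0$ corresponds to a graph with no edges, where $m = 0$, and the strictness can be handled by noting $n \ge 1$ so $nd = 0$ forces the degenerate empty-edge-set case to be treated separately or excluded.

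The main obstacle is purely bookkeeping around the strictness and the degenerate case $d = 0$: the clean bound $m \le nd$ is immediate, but squeezing it to a strict inequality requires noticing that the maximal vertex in the ordering contributes no forward edges, which gives the extra slack $m \le (n-1)d$. I expect the only subtlety is confirming this does not break when $d=0$ (empty graph) or when isolated vertices are present; in all such cases $m$ is small enough that the inequality either holds trivially or the boundary convention on nonempty graphs resolves it. Everything else is a one-line double-counting over the degeneracy ordering.
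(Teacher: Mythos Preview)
Your argument is correct and is precisely the intended one: the paper gives no explicit proof, merely remarking that the proposition ``ensues from the preceding definitions,'' and the degeneracy-ordering charging argument you wrote out is the natural way to read that. Your observation that the last vertex $v_n$ has no forward neighbours, yielding $m \le (n-1)d < nd$ for $d \ge 1$, is exactly the right source of strictness; the $d=0$ boundary (where both sides vanish) is a genuine edge case the paper tacitly excludes, so you need not labour over it.
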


From Proposition \ref{prop2} and the fact that $\sum_{x \in V} |N(x)| = 2m$, we get the following result.

\begin{corollary}
Average degree of $G$ is at most $2d$.    
\end{corollary}

\emph{Degeneracy} is within a constant factors of many sparsity measures of graph such as \emph{arboricity} \cite{chiba1985arboricity} and \emph{thickness} \cite{dean1991thickness}. \emph{Degeneracy} is also known as the \emph{d-core number} \cite{seidman1983network}, \emph{width} \cite{freuder1982sufficient}, \emph{linkage} \cite{kirousis1996linkage}, and is equivalent to the \emph{coloring number} \cite{erdHos1966chromatic}.

\begin{definition}[Vertex cover number]
A set of vertices $S \subset V$ is called a \emph{vertex cover} of $G$ if for every edge $\{x,y\} \in E$, $S$ contains at least one end point of $\{x, y\}$. The \emph{vertex cover number} $\tau$ is the size of a smallest \emph{vertex cover} of $G$.    
\end{definition}

\begin{definition}[Independence number]
A set of vertices $S \subset V$ is called an \emph{independent set} of $G$ if for every pair of distinct vertices $\{x,y\} \in S$, $\{x, y\} \not\in E$. The \emph{independence number} $\alpha$ is the size of a largest \emph{independent set} of $G$.    
\end{definition}

\begin{definition}[Treewidth \cite{robertson1984graph}]
A tree decomposition of a graph $G=(V,E)$ is a pair $(T, \{X_t\}_{t \in V_T})$ where $T$ is a tree whose node set is $V_T$ and each $X_t \subseteq V$ such that following conditions hold: (1) $\cup_{t \in V_T} X_t = V$, (2) for every edge $\{x,y\} \in E$ there exist $t \in V_T$ such that $\{x,y\} \in X_t$, (3) for each $x \in V$ the set of nodes $\{t \in V_T | x \in X_t\}$ induces a connected subtree of $T$.

The width of a tree decomposition $(T, \{X_t\}_{t \in V_T})$ is $\max_{t \in V_T} |X_t| -1$. The treewidth of a graph $G$ is the smallest width of a tree decomposition over all tree decompositions of $G$.
\end{definition}

A graph $G$ with \emph{treewidth} $k$ is also a subgraph of a \emph{$k$-tree} (see Definition 10.2.1 of \cite{downey2013fundamentals}). \emph{$k$-trees} are the chordal graphs all of whose maximal cliques are the same size $k+1$. Since any chordal graph admits an ordering of vertices such that a vertex $x$ and neighbours of $x$ that come later in the ordering form a clique (also known as the \emph{perfect elimination ordering}), \emph{degeneracy} of $G$ is at most the \emph{treewidth} of the graph. Also, \emph{treewidth} of a graph is at most the \emph{vertex cover number} of the graph \cite{fiala2011parameterized}. Therefore, $d \leq \tau$. From the fact that if $S$ is a \emph{vertex cover} of a graph, then $V \backslash S$ is an \emph{independent set} of the graph, we have $\tau + \alpha = n$. Thus we have the following.

\begin{proposition}
\label{prop_dalpha}
$d+\alpha \leq n$
\end{proposition}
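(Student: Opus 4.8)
The plan is to combine two facts that the preceding paragraph has already assembled. The first is the inequality $d \leq \tau$, which follows from the chain: the \emph{degeneracy} of $G$ is at most its \emph{treewidth} (a graph of treewidth $k$ is a subgraph of a $k$-tree, whose perfect elimination ordering is a degeneracy ordering with parameter $k$), and the \emph{treewidth} is in turn at most the \emph{vertex cover number}. The second is the Gallai-type identity $\tau + \alpha = n$: if $S$ is a \emph{vertex cover} then $V \setminus S$ is an \emph{independent set} and, symmetrically, if $I$ is an \emph{independent set} then $V \setminus I$ is a \emph{vertex cover}; hence a minimum vertex cover is exactly the complement of a maximum independent set, giving $\tau + \alpha = n$.

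With both ingredients in hand, I would finish in a single line:
\[
d + \alpha \leq \tau + \alpha = n,
\]
where the inequality invokes $d \leq \tau$ and the equality invokes the Gallai identity.

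There is no substantive obstacle here, since everything needed was derived immediately before the statement; the proof is a one-step chain of an inequality and an identity. The only point worth stating carefully is the complementation argument for $\tau + \alpha = n$: one should record both directions of the vertex-cover/independent-set correspondence, so that the \emph{equality} (and not merely one of the two one-sided bounds) is licensed before it is used. As a remark, if one preferred a self-contained derivation of $d \leq \tau$ that avoids routing through \emph{treewidth}, one could place a minimum vertex cover $S$ last in a vertex ordering: each vertex of $V \setminus S$ then has all of its neighbours later and they lie in $S$, so at most $\tau$ of them, while each vertex of $S$ has at most $\tau - 1$ later neighbours, exhibiting a degeneracy ordering with parameter $\tau$.
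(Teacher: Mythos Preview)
Your proof is correct and follows exactly the paper's own argument: combine $d \leq \tau$ (obtained via the chain degeneracy $\leq$ treewidth $\leq$ vertex cover number) with the Gallai identity $\tau + \alpha = n$ to conclude $d + \alpha \leq n$. Your additional remark giving a direct degeneracy-ordering proof of $d \leq \tau$ is a nice self-contained alternative that the paper does not include.
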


\begin{remark}
Degeneracy is a better measure of graph sparsity than many other parameters. To see this, consider the bipartite graph $K_{1,n-1}$ whose \emph{degeneracy} is $1$, but maximum degree is $n-1$. A popular measure of graph sparsity is \emph{treewidth}. A graph may have bounded \emph{degeneracy}, but arbitrarily large \emph{treewidth}: from any graph $G$ with \emph{treewidth} $tw$, we can subdivide the edges (replace edge $\{x,y\}$ of $G$ with two edges $\{x,z\}$ and $\{z,y\}$) and obtain a graph whose \emph{degeneracy} is at most two, but whose \emph{treewidth} remains $tw$. Even extremely sparse graphs can have large \emph{treewidth}: the \emph{degeneracy} of an $n \times n$ gird graph is at most three, whereas its \emph{treewidth} is $\sqrt{n}$.
\end{remark}

\begin{definition}[Clique number]
We call a set of vertices $C$ a clique of $G$ if $C$ induces a complete subgraph in $G$. The \emph{clique number} $\beta$ of $G$ is the size of a largest clique of $G$.
\end{definition}

If degree of every vertex in a subgraph of $G$ is more than $d$ then it would contradict Definition \ref{def1}. Thus we have the following.

\begin{proposition}
\label{prop4}
$ \beta \leq d+1$.
\end{proposition}

\begin{remark}
\label{rem_beta}
Unlike \emph{degeneracy}, computing the \emph{clique number} of a graph is an \emph{NP-hard} problem. But, \emph{clique number} may be a better parameter for bounding running time of algorithm, if possible. To see this, consider complete bipartite graph $K_{p,q}$, where $p+q = n$. \emph{Degeneracy} of $K_{p,q}$ is $\min\{p,q\} \leq n/2$, but \emph{clique number} is only $2$.
\end{remark}

The following bound on \emph{degeneracy} is easy to show (for a proof see Appendix \ref{sec:app_prelim}).

\begin{restatable}{lemma}{LemmaDB}
\label{lemma_db}
If $G$ contains no isolated vertices, then $d +1 \leq 2 \sqrt{m}$.
\end{restatable}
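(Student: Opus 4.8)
The plan is to square the inequality and prove the equivalent bound $(d+1)^2 \le 4m$. The engine of the argument is a dense ``witness'' subgraph extracted from Definition~\ref{def1}. First I would record that $d \ge 1$: since $G$ has no isolated vertices it contains at least one edge, and the subgraph consisting of a single edge has minimum degree $1$, so $d=0$ is impossible. Now I exploit the \emph{minimality} clause of Definition~\ref{def1}: because $d$ is the smallest value for which every nonempty subgraph has a vertex of degree at most $d$, the smaller value $d-1$ fails this property, so there exists a nonempty subgraph $H=(V_H,E_H)$ in which no vertex has degree at most $d-1$, i.e.\ every vertex of $H$ has degree at least $d$.

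From such an $H$ I would read off two facts by elementary degree counting. First, a vertex of degree at least $d$ has at least $d$ neighbours in $H$, so $|V_H| \ge d+1$. Second, summing degrees gives $2|E_H| = \sum_{x \in V_H}\deg_H(x) \ge d\,|V_H| \ge d(d+1)$, hence $|E_H| \ge d(d+1)/2$. Since $H$ is a subgraph of $G$, its edges lie in $E$, so $m \ge |E_H|$, and therefore $2m \ge d(d+1)$.

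To finish I would use $d \ge 1$ a second time: it yields $2d \ge d+1$, hence $2d(d+1) \ge (d+1)^2$. Combining with the previous step, $4m \ge 2d(d+1) \ge (d+1)^2$, and taking square roots gives $d+1 \le 2\sqrt{m}$. The step I regard as the crux — and the only one needing care — is the extraction of the witness subgraph $H$, since it is the point where the minimality phrasing of Definition~\ref{def1} must be converted into a genuine lower bound (a subgraph of minimum degree at least $d$); everything after it is routine counting. It is also worth noting where the hypothesis is spent: the no-isolated-vertices assumption serves only to force $d \ge 1$, and this is exactly the ingredient that upgrades the easy estimate $2m \ge d(d+1)$ to the claimed $(d+1)^2 \le 4m$.
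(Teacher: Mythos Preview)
Your proof is correct and is genuinely different from the paper's. The paper reaches the bound through arboricity: it first establishes $d \le 2a-1$ where $a$ is the arboricity of $G$, then invokes the Chiba--Nishizeki bound $a \le \lceil(\sqrt{2m+n})/2\rceil$, and finally uses the no-isolated-vertices hypothesis to argue $n \le 2m$ (via a component count), which lets $n$ be absorbed into $m$ under the square root. Your argument is more direct and entirely self-contained: you extract from the minimality clause of Definition~\ref{def1} a witness subgraph of minimum degree at least $d$, count edges there to get $2m \ge d(d+1)$, and spend the hypothesis only to secure $d \ge 1$ so that $d(d+1) \ge (d+1)^2/2$. Your route avoids any external citation and makes transparent exactly where the no-isolated-vertices assumption enters; the paper's route, by contrast, situates the bound within the degeneracy--arboricity relationship, which may be of independent interest but is heavier machinery than the statement requires.
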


\begin{corollary}
\label{cor_bb}
If $G$ contains no isolated vertices, then $\beta \leq 2 \sqrt{m}$.
\end{corollary}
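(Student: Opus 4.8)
The plan is to obtain the bound by simply chaining two inequalities already available in the excerpt, since the corollary is stated precisely as a consequence of Lemma~\ref{lemma_db}. First I would invoke Proposition~\ref{prop4}, which asserts $\beta \leq d+1$ for every graph with no additional hypotheses. Next I would invoke Lemma~\ref{lemma_db}, whose conclusion is $d+1 \leq 2\sqrt{m}$ and whose sole hypothesis --- that $G$ contains no isolated vertices --- is exactly the hypothesis of the corollary. Combining these gives
\[
\beta \leq d+1 \leq 2\sqrt{m},
\]
which is the desired bound.

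The only point requiring any care is making sure the hypotheses line up. Proposition~\ref{prop4} is unconditional, so it contributes no constraint. Lemma~\ref{lemma_db}, however, genuinely needs the absence of isolated vertices (an isolated vertex inflates $n$ without adding edges, which would break a bound of the form $d+1 \leq 2\sqrt{m}$). Because the corollary explicitly assumes $G$ has no isolated vertices, the lemma applies verbatim, and no further argument is needed.

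There is no substantive obstacle here: the result is a one-line deduction, and the entire content of the corollary resides in Lemma~\ref{lemma_db}, whose proof is deferred to Appendix~\ref{sec:app_prelim} and which I am entitled to assume. The corollary's role is merely to specialize that bound from \emph{degeneracy} to \emph{clique number} via Proposition~\ref{prop4}, reflecting the relationship $\beta \leq d+1$ noted earlier in the section.
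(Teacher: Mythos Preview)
Your proposal is correct and matches the paper's intended derivation: the corollary is placed immediately after Lemma~\ref{lemma_db} with no separate proof, indicating that it follows by combining $\beta \leq d+1$ (Proposition~\ref{prop4}) with $d+1 \leq 2\sqrt{m}$ (Lemma~\ref{lemma_db}), exactly as you argue.
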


If the number of cliques in a vertex clique cover of a graph $G$ is $k$, then the \emph{independence number} of $G$ is always a lower bound on $k$. Assume $G$ contains no isolated vertices. Then, any edge clique cover of $G$ is also a vertex clique cover of $G$. Thus we have the following.

\begin{proposition}
\label{prop5}
Let $G$ be a graph with no isolated vertices. If $\C= \{C_1, C_2, \ldots, C_k\}$ is an \emph{edge clique cover} of $G$, then $\alpha \leq k$.
\end{proposition}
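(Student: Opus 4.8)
The plan is to combine the two elementary facts that the paragraph preceding the statement already hints at: under the no-isolated-vertices hypothesis an edge clique cover is automatically a vertex clique cover, and no single clique can contain two vertices of an independent set. First I would verify that $\C$ covers every vertex of $G$. Since $G$ has no isolated vertices, every vertex $x \in V$ is an endpoint of some edge $\{x,y\} \in E$. Because $\C$ is an edge clique cover, this edge is contained in some clique $C_i \in \C$, and in particular $x \in C_i$. Hence each vertex of $G$ belongs to at least one clique of $\C$, i.e.\ $\C$ is a vertex clique cover of $G$.

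Next I would bound $k$ from below by $\alpha$ through an injective map. Let $S \subseteq V$ be a largest independent set, so $|S| = \alpha$. The key point is that any clique $C_i \in \C$ meets $S$ in at most one vertex: if $C_i$ contained two distinct $x,y \in S$, then $\{x,y\} \in E$ (as $C_i$ is a clique), contradicting independence of $S$. Using the previous paragraph, assign to each $x \in S$ some clique of $\C$ containing it; by the at-most-one-vertex property this assignment is injective from $S$ into $\C$. Therefore $\alpha = |S| \leq |\C| = k$, which is the claim.

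There is no genuine obstacle here; the argument is a short two-step pigeonhole. The only point that warrants care is the logical role of the hypothesis: the assumption that $G$ has no isolated vertices is precisely what upgrades an \emph{edge} clique cover to a \emph{vertex} clique cover, and without that upgrade an isolated vertex could fail to be covered, breaking the injection. Everything else follows directly from the defining contrast between cliques (pairwise adjacent) and independent sets (pairwise non-adjacent).
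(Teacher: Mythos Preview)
Your proof is correct and follows exactly the two-step argument the paper sketches in the paragraph preceding the proposition: first that an edge clique cover of a graph without isolated vertices is also a vertex clique cover, and second that the independence number lower-bounds the size of any vertex clique cover. There is nothing to add.
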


Throughout the paper we use the following notations. Let $\langle u_1, u_2, \ldots, u_n \rangle$ be a degeneracy ordering of $V$. For each $u_i \in V$, let $N_d(u_i)$ denote the neighbours of $u_i$ in $G$ that comes later in the degeneracy ordering, i.e., $N_d(u_i) = N(u_i) \cap \{u_{i+1}, \ldots, u_n\}$. Let $N_d[u_i] = N_d(u_i) \cup \{u_i\}$. We define an ordering of edges (abbreviated \emph{DEP} for \emph{degeneracy edge permutation}) as follows.

\begin{definition}[DEP]
\label{def_DEP}
Let $E_d(x) = \{\{x,y\}| y \in N_d(x)\}$, i.e., the set of edges incident on $x$ from vertices that come after $x$ in the degeneracy ordering. Let $\pi^{E^{*}} = \langle e_1, e_2, \ldots, e_q \rangle$ be a permutation of a set of edges $E^{*}$. We say $\pi^E$ a \emph{degeneracy edge permutation} of $G$ if $\pi^E$ is a concatenation of permutations $\pi^{E_d(u_i)}$ in increasing values of $i$, i.e., $\pi^E = \langle \pi^{E_d(u_1)}, \pi^{E_d(u_2)} , \ldots, \pi^{E_d(u_{n-1})} \rangle$.
\end{definition}

\subsection{Parameterized Complexity}
\label{prelim_complexity}

We provide a brief overview of fixed-parameter tractability, kernelization, bounded search tree, and other relevant terms. For a comprehensive description, see \cite{cygan2015parameterized, downey2013fundamentals, flum2006parameterized}.

\begin{definition}[Parameterized problem]
A parameterized problem is a language $L \subseteq \Sigma^* \times \mathbb{N}$, where $\Sigma$ is a fixed and finite alphabet. For an instance $(x,p) \in \Sigma^* \times \mathbb{N}$, $p$ is called the parameter.
\end{definition}

\begin{definition}[Fixed-parameter tractability]
A parameterized problem $L \subseteq \sum^{*} \times \mathbb{N}$ is called \emph{fixed-parameter tractable} (FPT), if for any $(x,p) \in \Sigma^* \times \mathbb{N}$, we can decide whether $(x,p) \in L$ in time $f(p)|(x,p)|^{O(1)}$, where $f(p)$ is computable and is independent of $x$, but can grow arbitrarily with $p$. An algorithm is called a \emph{fixed-parameter tractable} algorithm, if it decides whether $(x,p) \in L$ in time bounded by $f(p)|(x,p)|^{O(1)}$.
\end{definition}

In a discrete optimization problem, we are interested in finding optimal value of a function $f(x)$ given that $x$ is an instance of a language $X$. Each such optimization problem corresponds to a decision problem $X_D$ that asks for existence of $x \in X$ such that $f(x) \leq p$ or $f(x) \geq p$, for a nonnegative integer $p$. A parameterized decision problem considers a decision problem $X_D$ with respect to a parameter $p^*$ such that $\mathbb{N}$ is the domain of $p^*$. If $p^*$ and $p$ are the same, i.e., the parameter is the threshold of $f(x)$ in a decision problem $X_D$ that corresponds to an optimization problem, then we call $p^*$ the \emph{natural parameterization} of $X_D$.

For convenience, the definitions of parameterized problems and fixed-parameter tractability are often described in terms of a parameter whose domain is the set of natural numbers. Without loss of generality, the definitions allow much more broader notions of parameters such as graphs, algebraic structures, vectors in $\mathbb{N}^c$ for some fixed constant $c$. An equivalent notion of \emph{FPT} is as follows (for a proof, see Appendix \ref{sec:app_prelim}).

\begin{restatable}{proposition}{PropEqvOne}
\label{prop_eqv1}
An instance $(x,p)$ of a parameterized problem is decidable in $f(p)|(x,p)|^{O(1)}$ time if and only if $(x,p)$ is decidable in $f^*(p)+|(x,p)|^{O(1)}$ time.
\end{restatable}

\begin{definition}[Equivalent instance]
Two instances $(x,p)$ and $(x^*,p^*)$ of a parameterized problem $L$ are called \emph{equivalent} if $(x,p) \in L$ if and only if $(x^*,p^*) \in L$.
\end{definition}

A natural consequence of Proposition \ref{prop_eqv1} is the notion of \emph{kernelization} also known as \emph{data reduction} or \emph{prepossessing}.

\begin{definition}[Kernelization]
\label{def_kernel}
Given an instance $(x,p)$ of a parameterized problem $L$,
\emph{kernelization} is an algorithm that in $|(x,p)|^{O(1)}$ time reduces the instance $(x,p)$  to an \emph{equivalent instance} $(x^*, p^*)$ of $L$ such that $|x^*| \leq f(p)$ and $p^* \leq g(p)$.
\end{definition}

We call the reduced instance $(x^*,p^*)$ a \emph{kernel}. If $f(p)$ is a polynomial (resp. exponential) function of $p$ then we say $L$ admits a polynomial (resp. exponential) kernel. If the reduction is to an instance of a different parameterized problem then we call the reduced instance a \emph{compression} or \emph{bi-kernel}. Kernelization often consists of a set of reduction rules such that each rule produces an \emph{equivalent instance}. We call an instance $(x,p)$ \emph{reduced} with respect to a set of rules if the rules are not applicable to $(x,p)$. 

The following equivalence is straightforward to show (see Proposition 4.8.1 of \cite{downey2013fundamentals}), which is the basis of many FPT algorithms.

\begin{proposition}
A \emph{parameterized problem} $L$ is \emph{FPT} if and only if $L$ admits a \emph{kernelization}.
\end{proposition}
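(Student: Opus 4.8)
The plan is to prove both implications, leaning on Proposition \ref{prop_eqv1} for the forward direction and on decidability of $L$ for the reverse.

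For the reverse direction (kernelization implies FPT), suppose $L$ admits a kernelization. Given an instance $(x,p)$, I would first run the kernelization, which by Definition \ref{def_kernel} takes $|(x,p)|^{O(1)}$ time and produces an \emph{equivalent instance} $(x^*,p^*)$ with $|x^*| \leq f(p)$ and $p^* \leq g(p)$. Since $L$ is decidable, membership of $(x^*,p^*)$ can be settled by exhaustive search in time bounded by some computable function of its size; as that size is at most $f(p)+g(p)$, this brute-force step costs $h(p)$ for a computable $h$ depending only on $p$. Because $(x^*,p^*)$ is equivalent to $(x,p)$, the total running time $|(x,p)|^{O(1)} + h(p)$ decides $(x,p)$, which is exactly the additive FPT form of Proposition \ref{prop_eqv1}.

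For the forward direction (FPT implies kernelization), suppose $L$ is FPT. By Proposition \ref{prop_eqv1} I may assume a decision algorithm running in $f^*(p) + |(x,p)|^c$ time for some constant $c$. The idea is to run this algorithm on $(x,p)$ for a budget of $|(x,p)|^{c+1}$ steps. If the algorithm halts within the budget, I know whether $(x,p) \in L$, and I output a fixed constant-size yes-instance or no-instance accordingly; such instances exist as long as $L$ is neither empty nor all of $\Sigma^* \times \mathbb{N}$, and the two degenerate cases are handled trivially. Otherwise the algorithm fails to halt within $|(x,p)|^{c+1}$ steps, which forces $|(x,p)|^{c+1} < f^*(p) + |(x,p)|^c$ and hence $|(x,p)|$ is bounded by a function of $p$ (concretely below $f^*(p)$); in this case $(x,p)$ is already small enough, so I output $(x,p)$ unchanged. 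In either case the output is an equivalent instance of size bounded by a function of $p$, produced in $|(x,p)|^{O(1)}$ time, which is precisely a \emph{kernelization}.

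The main obstacle is not any single hard step but rather the bookkeeping of the two standing conventions that make the statement true: the reverse direction silently uses that $L$ is decidable (so that a small kernel can be solved by brute force), and the forward direction uses that fixed constant-size positive and negative instances are available (so that a decided instance can be mapped to a trivial kernel). Both are routine in the parameterized-complexity literature, but they deserve an explicit mention so that the size bound $|x^*| \leq f(p)$ is honestly met by the trivial instances output in the forward direction.
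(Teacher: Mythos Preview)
Your argument is correct and is the standard textbook proof. The paper does not actually supply its own proof of this proposition; it simply cites Proposition~4.8.1 of \cite{downey2013fundamentals}, whose argument is precisely the one you have reproduced (run the FPT algorithm with a polynomial time budget and either output a trivial instance or return the input unchanged, and conversely brute-force the kernel).
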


Similar to polynomial time many-one reductions (also known as \emph{Karp reductions}), we have the following type of reductions for parameterized problems.

\begin{definition}[Parameterized reduction \cite{cygan2015parameterized}]
\label{def_pr}
Let $L, L^* \subseteq \Sigma^* \times \mathbb{N}$ be two parameterized problems. A parameterized reduction from $L$ to $L^*$ is an algorithm that, given an instance $(x,p)$ of $A$, returns an instance $(x^*, p^*)$ of $L^*$ such that (1) $(x,p)$ is a \emph{YES} instance of $L$ if and only if $(x^*,p^*)$ is a \emph{YES} instance of $L^*$, (2) $p^* \leq g(p)$ for some computable function $g$, and (3) runs in $f(p)|x|^{O(1)}$ time for some computable function $f$.
\end{definition}

From Definition \ref{def_pr}, if there exists a \emph{parameterized reduction} from $L$ to $L^*$ and $L^*$ has an FPT algorithm with respect to $p^*$, then $L$ also has an FPT algorithm with respect to $p$.

\textbf{Bounded search tree.} \emph{Bounded search tree} is one of the most commonly used techniques for designing FPT algorithm that often produces provably better FPT algorithms than many other techniques. The idea is to treat the sequence of decisions made by an algorithm as a tree $T$. Each node $u$ of $T$ corresponds to a choice made by the algorithm such as deciding to include an element in a partial solution. Children of a node $u$ correspond to subsequent choices that arise as ramifications of the choice made at node $u$. Each node $u$ corresponds to a branch of the tree rooted at node $u$, and the number of children of $u$ is called the \emph{branching factor} of node $u$. An algorithm systematically explores the ramifications of choices made at node $u$ by recursively visiting nodes of all branches rooted at children of $u$ until a solution is found. While visiting a child node of $u$, the algorithm makes substantial progress such as increasing the size of a partial solution. By a recurrence relation, time required in a subtree $T_u$ rooted at node $u$ can be bounded in terms of time required in the subtrees rooted at the children of $u$ and time required at node $u$ alone. The recurrence relation can be solved to provide an upper bound on the time requirement of the algorithm. A simple strategy often suffices for bounding time requirement of a search tree algorithm. Let the depth or the height of a tree $T$ be bounded by $h(p)$, and let the \emph{branching factor} of any node of $T$ be bounded by $b(p)$. Then the number of nodes in $T$ is bounded by $\{b(p)\}^{h(p)}$. If time spent at each node is bounded by a polynomial in $|(x,p)|$, then total time required by the algorithm is bounded by $\{b(p)\}^{h(p}|(x,p)|^{O(1)}$, which is an FPT running time.

\section{Enumerating Cliques of Restricted Subgraph}
\label{sec_ecrs}

In this section, we describe FPT algorithms for \emph{ECC} and \emph{ACC} (which can be extended to design algorithms for other clique cover problems such as the problems described in Section \ref{intro_problems}). Our focus is on bounded search tree algorithms for the problems, but we touch upon a number of relevant data reduction rules. The general theme of these search tree algorithms is as follows. In each node of a search tree, the corresponding algorithm branches on a set of cliques constructed from a restricted subgraph. The restriction is justified with a goal to bound the branching factors in terms of degeneracy or other parameters. A proof of \emph{NP-completeness} of \emph{ACC} is included at the end of this section.

\subsection{Edge Clique Cover}
\label{ecrs_ecc}

The exponential kernel of \cite{gramm2009data} is based on a result of \cite{gyarfas1990simple}. We use the result of \cite{gyarfas1990simple} and include an exposition of a proof of the result adapted from \cite{gramm2009data}.

\begin{definition}[Equivalent Vertices]
Let $\{x,y\}$ be an edge of $G$. $x$ and $y$ are equivalent vertices if the vertices have the same closed neighbourhood, i.e., $N[x] = N[y]$.
\end{definition}

\begin{lemma}[\cite{gyarfas1990simple}]
\label{lemma_gyarfas}
Let $G$ be a graph with $n$ vertices such that $G$ contains neither isolated vertices nor equivalent vertices. If $\C = \{C_1, C_2, \ldots, C_k\}$ is an edge clique cover of $G$, then $n < 2^k$.
\end{lemma}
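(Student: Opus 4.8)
The plan is to encode each vertex by the set of cliques that contain it, and then show this encoding is injective and avoids the all-zero code. Concretely, for each $x \in V$ define a membership vector $f(x) \in \{0,1\}^k$ whose $i$-th coordinate is $1$ precisely when $x \in C_i$. If I can establish that $f$ is injective and that no vertex maps to the zero vector $\mathbf{0}$, then the image of $f$ lies inside $\{0,1\}^k \setminus \{\mathbf{0}\}$, a set of size $2^k - 1$, which immediately gives $n \leq 2^k - 1 < 2^k$.

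The easy half is ruling out the zero code. Since $G$ has no isolated vertices, every vertex $x$ is incident to at least one edge; because $\C$ is an edge clique cover, that edge is contained in some clique $C_i$, and therefore $x \in C_i$ and $f(x) \neq \mathbf{0}$. This uses the no-isolated-vertices hypothesis in its most direct form.

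The main work, and the step I expect to be the crux, is injectivity: I need to show that two distinct vertices $x \neq y$ with $f(x) = f(y)$ must be \emph{equivalent vertices}, i.e.\ $N[x] = N[y]$, which contradicts the hypothesis that $G$ has no equivalent vertices. First I would argue that $x$ and $y$ are adjacent: by the previous paragraph $x$ lies in some clique $C_i$, so $f(x) = f(y)$ forces $y \in C_i$ as well, and since $C_i$ is a clique containing the two distinct vertices $x$ and $y$, they are adjacent. Next I would prove $N[x] \subseteq N[y]$ by taking any $z \in N(x)$, covering the edge $\{x,z\}$ by some clique $C_j$, and using $f(x) = f(y)$ to conclude $y \in C_j$; then $z$ and $y$ share the clique $C_j$, so $z \in N[y]$. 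Combined with the adjacency $x \in N[y]$, this yields $N[x] \subseteq N[y]$, and by the symmetric argument $N[y] \subseteq N[x]$, hence $N[x] = N[y]$.

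The hard part is mainly keeping the case analysis honest and tracking where each hypothesis is spent: the no-isolated-vertices assumption is used twice (once to exclude the zero code, once to secure that $x$ lies in some clique and is thus adjacent to $y$), while the no-equivalent-vertices assumption supplies the final contradiction that forces $f$ to be injective. Once both properties of $f$ are in hand, the counting bound $n < 2^k$ follows with no further computation.
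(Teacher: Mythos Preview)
Your proof is correct and follows essentially the same approach as the paper: encode each vertex by its clique-membership vector in $\{0,1\}^k$, rule out the zero vector using the no-isolated-vertices hypothesis, and rule out collisions using the no-equivalent-vertices hypothesis. The paper frames this as a contradiction from $n \geq 2^k$ via pigeonhole on the rows of a $\{0,1\}^{n\times k}$ matrix, while you phrase it directly as an injection into $\{0,1\}^k \setminus \{\mathbf{0}\}$; you also spell out in full why identical membership vectors force $N[x]=N[y]$, a step the paper asserts without detail.
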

\begin{proof}
For the sake of contradiction suppose $n \geq 2^k$. 
Let $\boldsymbol{B} \in {\{0,1\}}^{n \times k}$ be a matrix such that $\boldsymbol{B}[x,l] = 1$ if vertex $x$ is contained in clique $C_l$, otherwise, $\boldsymbol{B}[x, l] = 0$. The number of distinct rows in $\boldsymbol{B}$ is at most $2^k$. If it is $2^k$, then included among these rows is a row $x$ such that $\boldsymbol{B}[x,l] = 0$ for all $l \in [k]$. If the number of distinct rows is less than $2^k$, then, since $n \geq 2^k$, there exist two distinct rows $x$ and $y$ such that $\boldsymbol{B}[x,l] = \boldsymbol{B}[y,l]$ for all $l \in k$. The first condition leads to an isolated vertex, and the second condition leads to a pair of equivalent vertices. Both conditions lead to contradiction, and the claim follows.
\end{proof}

An immediate consequence of Proposition \ref{prop_dalpha} and Lemma \ref{lemma_gyarfas} is as follows.

\begin{corollary}
\label{cor_gyarfas}
Let $G$ be a graph with \emph{degeneracy} $d$ and \emph{independence number} $\alpha$ such that $G$ contains neither isolated vertices nor equivalent vertices. If $\C = \{C_1, C_2, \ldots, C_k\}$ is an edge clique cover of $G$, then $d + \alpha < 2^k$.
\end{corollary}

Let $(G,k)$ be an instance of \emph{ECC}. The following kernelization rules ensue from Lemma \ref{lemma_gyarfas}.

\begin{krule}
\label{rule_ecc_1}
If $x \in V$ is an isolated vertex, then $(G-\{x\}, k)$ and $(G,k)$ are equivalent instances. For a solution $\C$ of $(G-\{x\}, k)$, report $\C$ as the solution of $(G,k)$.
\end{krule}

\begin{krule}
\label{rule_ecc_2}
If $\{x,y\} \in E$ and $N[x] = N[y]$, then $(G-\{x\}, k)$ and $(G,k)$ are equivalent instances. For a solution $\C$ of $(G-\{x\}, k)$, report $(\C \backslash \{C_i| y \in C_i\}) \cup \{C_i \cup \{x\}| y \in C_i\}$ as the solution of $(G,k)$.
\end{krule}

Let $(G,k)$ be a reduced instance with respect to Rules \ref{rule_ecc_1} and \ref{rule_ecc_2}. We apply the following rule on $(G,k)$.

\begin{krule}
\label{rule_ecc_3}
If $n > (d+1)k$, then report $(G,k)$ is a \emph{NO} instance of \emph{ECC}.
\end{krule}

The correctness of Rule \ref{rule_ecc_3} follows from the fact that with $k$ cliques and \emph{clique number} bounded by $d+1$, we can cover at most $(d+1)k$ vertices. Since $(G,k)$ is reduced with respect to Rule \ref{rule_ecc_1}, an edge clique cover must include all the vertices of $G$.

Let $(G,k)$ be a reduced instance with respect to Rules \ref{rule_ecc_1}, \ref{rule_ecc_2}, and \ref{rule_ecc_3}. We have a \emph{kernel} with $(d+1)k$ vertices. Note that only Rules \ref{rule_ecc_1} and \ref{rule_ecc_3} are sufficient to get the \emph{kernel}. We need Rule \ref{rule_ecc_2} to show a better search tree algorithm.

The search tree algorithm of \cite{gramm2009data} (henceforth refereed as \emph{ECCG}) works by enumerating all maximal cliques that contain an uncovered edge selected at every node of a search tree. For analysis of FPT algorithm, it is sufficient to consider number of maximal cliques in a subgraph, since each such clique can be listed spending polynomial time per clique \cite{manoussakis2019new}.

\begin{lemma}[\cite{wood2011number}]
\label{lemma_mmb}
The number of maximal cliques in a graph with $n$ vertices is at most $3^{n/3}$.
\end{lemma}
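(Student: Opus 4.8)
The statement is the classical Moon--Moser bound, and the plan is to prove it by strong induction on $n$. Write $M(G)$ for the number of maximal cliques of a graph $G$, and let $f(n)$ denote the maximum of $M(G)$ over all graphs on $n$ vertices; the goal is $f(n) \le 3^{n/3}$. The base cases (say $n \le 2$) are checked directly, and along the way I would record the elementary fact that $f$ is non-decreasing: adjoining an isolated vertex $z$ to an extremal graph only adds the maximal clique $\{z\}$, so $f(j) \le f(j')$ whenever $j \le j'$. The engine of the induction is a simple bijection: for a fixed vertex $w$, the maximal cliques of $G$ containing $w$ correspond exactly to the maximal cliques of the induced subgraph $G[N(w)]$ via $C \mapsto C \setminus \{w\}$. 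Hence the number of maximal cliques through $w$ is at most $f(\deg(w))$.

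For the inductive step I would branch on a carefully chosen set that meets every maximal clique. Pick a vertex $v$ of maximum degree and set $W = \{v\} \cup (V \setminus N[v])$, that is, $v$ together with its non-neighbours, so that $|W| = n - \deg(v)$. Every maximal clique meets $W$: a maximal clique disjoint from $W$ would avoid $v$ and every non-neighbour of $v$, hence lie inside $N(v)$ and be extendable by $v$, contradicting maximality. Summing the per-vertex bound over $W$ (an over-count, which is harmless for an upper bound) gives $M(G) \le \sum_{w \in W} f(\deg(w))$. Because $v$ has maximum degree, every $w \in W$ satisfies $\deg(w) \le \deg(v) = n - |W|$, so by monotonicity of $f$ and the induction hypothesis each term is at most $3^{(n-|W|)/3}$, yielding $M(G) \le |W| \cdot 3^{(n-|W|)/3}$.

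It then remains to absorb the branching factor, i.e.\ to establish $|W| \cdot 3^{(n-|W|)/3} \le 3^{n/3}$, which reduces to the scalar inequality $k \le 3^{k/3}$ with $k = |W| \ge 1$. I expect this last arithmetic step to be the main point to get exactly right, because the continuous function $x \mapsto 3^{x/3} - x$ dips slightly below zero near $x \approx 2.7$. The resolution is that $k$ ranges only over positive integers, where the inequality does hold: equality occurs at $k = 3$ (matching the extremal graph $K_{3,3,\ldots,3}$), with comfortable slack otherwise, so it suffices to verify the few small integer values and note that $3^{k/3}$ outgrows $k$ for $k \ge 3$. The other delicate point is merely ensuring the recursion applies, which is precisely why $v$ is taken of maximum degree: this makes the size of the branching set $W$ and the residual vertex budget $n - |W|$ exactly complementary, so the two exponents recombine into $3^{n/3}$ rather than leaving a stray factor of $3^{1/3}$.
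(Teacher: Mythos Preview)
The paper does not prove this lemma; it simply quotes it as a known result from \cite{wood2011number} and uses it as a black box. Your argument is correct and is essentially the classical Moon--Moser proof: branch on a vertex $v$ of maximum degree together with its non-neighbours, use the bijection between maximal cliques through a vertex $w$ and maximal cliques of $G[N(w)]$ to bound each summand by $f(\deg(w))$, and close the induction via the integer inequality $k \le 3^{k/3}$ with $k = |W|$. Since the paper offers no proof of its own, there is nothing to compare your approach against.
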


Using Lemma \ref{lemma_mmb}, we can obtain a bounded search tree algorithm with depth at most $k$ and branching factor at most $3^{(d+1)k/3}$, i.e., an algorithm $A$ that takes $3^{(d+1)k^2/3}n^{O(1)}$ time. By Lemma \ref{lemma_mmb}, \emph{ECCG} takes $3^{k2^k/3}n^{O(1)}$ time. The running time of $A$ would be better (or competitive) than the running time of \emph{ECCG} if $d=O(2^k/k)$. Next, we describe a search tree algorithm with (unconditionally) better running time than \emph{ECCG}.

\begin{figure}
    \centering
    \fbox{\parbox{6.0in}{
\underline{Edge-Clique-Cover-Search$(G, k, \C)$:}

\textcolor{teal}{\\ // Abbreviated \textbf{ECCS}$(G,k,\C)$}

    \begin{enumerate}
        \item if $\C$ covers edges of $G$, then return $\C$
        \item if $k < 0$, then return $\emptyset$
        \item select the \emph{first} uncovered edge $\{x,y\}$ from the \emph{DEP} of $G$ where $y \in N_d(x)$ \textcolor{teal}{// Definition \ref{def_DEP}}
        \item for each maximal clique $Z$ containing $\{x,y\}$ in the subgraph $G[N_d[x] \cap N[y]]$ do 
    \begin{enumerate}
        \item $\Q \leftarrow \textbf{ECCS}(G, k-1, \mathcal{C} \cup \{Z\})$
        \item if $\Q \ne \emptyset$, then return $\Q$
    \end{enumerate}
    \item return $\emptyset$
    \end{enumerate} 
}}   
    \caption{A bounded search tree algorithm for \emph{ECC}, denoted \emph{ECCS}.}
    \label{fig:ecc_algo1}
\end{figure}

Figure \ref{fig:ecc_algo1} shows a bounded search tree algorithm for \emph{ECC}, henceforth refereed as \emph{ECCS}. We assume \emph{ECCS} has access to a \emph{degeneracy edge permutation} of $G$. At step 3, we select the first uncovered edge $\{x,y\}$ from the \emph{degeneracy edge permutation} such that $y \in N_d(x)$. At step 4, we restrict the search for maximal cliques in a subgraph induced by vertices of $N_d[x] \cap N[y]$ in $G$. In each iteration of step 4, we select a maximal clique from the restricted search space, and cover the edge $\{x,y\}$ with the clique and recurse. This process is repeated until a solution is found in a branch (step 4b), or all branches are exhausted without any solution (step 5).

We point out the crucial differences between \emph{ECCG} and \emph{ECCS}. At step 3, \emph{ECCG} uses a heuristic to select an edge, for which \emph{ECCG} has to enumerate maximal cliques of the subgraph $G[N[x] \cap N[y]]$, in contrast to enumerating maximal cliques of a restricted subgraph such as $G[N_d[x] \cap N[y]]$. Consequently, \emph{ECCG} can only provide an upper bound of $3^{2^k/3}$ on the number of maximal cliques enumerated at step 4: \emph{kernelization}, based on Lemma \ref{lemma_gyarfas}, can only ensure that the number of vertices are bounded by $2^k$.

Next, we describe correctness and running time of \emph{ECCS}. From the choices made at step 3, we observe following invariant maintained by \emph{ECCS}.

\begin{proposition}
\label{inv_eccs}
Let $\langle u_1, u_2, \ldots, u_n \rangle$ be a degeneracy ordering of $V$. At any node of a search tree of \emph{ECCS}, if $x=u_i$, then all edges incident on vertices of $\{u_1, u_2, \ldots, u_{i-1}\}$ are covered.
\end{proposition}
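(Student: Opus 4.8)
The plan is to prove the invariant directly from the edge-selection rule at step~3 of \emph{ECCS} together with the block structure of the \emph{degeneracy edge permutation} (Definition~\ref{def_DEP}); no induction over the search tree is needed, since the statement concerns only the state of the cover $\C$ at the moment step~3 fires in a given node, and that state is determined by which edges of the \emph{DEP} precede the selected one.

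First I would reformulate the target set of edges in terms of the degeneracy ordering. Fix $\langle u_1, \ldots, u_n\rangle$ and consider any edge $\{u_j, u_l\}$ with $j < l$. This edge is incident on some vertex of $\{u_1, \ldots, u_{i-1}\}$ if and only if $j \le i-1$: incidence forces $\min(j,l) = j \le i-1$, and conversely $j \le i-1$ already places the earlier endpoint in the prefix. Hence the edges incident on $\{u_1, \ldots, u_{i-1}\}$ are exactly the edges whose earlier endpoint (in the degeneracy ordering) lies in $\{u_1, \ldots, u_{i-1}\}$, and by the definition of $E_d(\cdot)$ this set is precisely $\bigcup_{j=1}^{i-1} E_d(u_j)$.

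Next I would locate these edges inside the \emph{DEP}. By Definition~\ref{def_DEP}, the \emph{DEP} is the concatenation $\langle \pi^{E_d(u_1)}, \ldots, \pi^{E_d(u_{n-1})}\rangle$, so every edge of $\bigcup_{j=1}^{i-1} E_d(u_j)$ appears strictly before every edge of the block $E_d(u_i)$. Since step~3 names the selected edge $\{x,y\}$ so that $y \in N_d(x)$, its earlier endpoint is $x = u_i$, whence $\{x,y\} \in E_d(u_i)$ and therefore every edge of $\bigcup_{j=1}^{i-1} E_d(u_j)$ precedes $\{x,y\}$ in the \emph{DEP}. Because $\{x,y\}$ is the \emph{first uncovered} edge of the \emph{DEP} with respect to the current cover $\C$, all edges preceding it are covered; in particular every edge of $\bigcup_{j=1}^{i-1} E_d(u_j)$ is covered, which by the first step is exactly the set of edges incident on $\{u_1, \ldots, u_{i-1}\}$.

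The argument is essentially bookkeeping, and I do not anticipate a genuine obstacle; the one place that warrants care is the reformulation in the second paragraph, where one must verify that incidence of an edge on the prefix $\{u_1, \ldots, u_{i-1}\}$ is controlled solely by its earlier endpoint, so that no edge incident only on a late-indexed vertex is inadvertently claimed to be covered, and dually that no edge incident on the prefix escapes the initial blocks of the \emph{DEP}. Once that equivalence is pinned down, the block structure of the \emph{DEP} and the ``first uncovered edge'' selection close the argument immediately.
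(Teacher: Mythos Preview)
Your argument is correct and complete. The paper does not give a proof of this proposition at all; it is stated as an immediate observation with the one-line justification ``From the choices made at step~3, we observe the following invariant maintained by \emph{ECCS}.'' Your write-up simply makes explicit the two ingredients the paper leaves implicit: (i) the edges incident on $\{u_1,\ldots,u_{i-1}\}$ coincide with $\bigcup_{j<i} E_d(u_j)$, and (ii) the block structure of the \emph{DEP} together with the ``first uncovered'' rule forces all of those blocks to be covered whenever the selected edge lies in $E_d(u_i)$. There is no alternative route here to compare against --- you have written out the intended reasoning.
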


We also note a property that holds for induced subgraphs, but does not necessarily hold for arbitrary subgraphs.

\begin{proposition}
\label{inv_ecc}
For a graph $G=(V,E)$, let $X \subseteq V$ be a set of vertices.
If $(G,k)$ is a \emph{YES} instance of \emph{ECC}, then $(G[X],k)$ is a \emph{YES} instance of \emph{ECC}. 
\end{proposition}

\begin{lemma}
\label{lemma_ecc_proof}
Algorithm \emph{ECCS} correctly solves the parameterized problem \emph{ECC}.
\end{lemma}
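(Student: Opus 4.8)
The plan is to prove both directions of correctness: \emph{soundness} (every set the algorithm returns is a genuine edge clique cover of $G$ using at most $k$ cliques) and \emph{completeness} (whenever $(G,k)$ is a \emph{YES} instance, some branch of the search returns a cover). Soundness is the routine direction. A non-empty value is produced only at step 1, whose guard forces the returned $\C$ to cover every edge of $G$; and since each recursive call at step 4a augments $\C$ by exactly one clique while decrementing the budget, and branches are cut off once the budget becomes negative (step 2), the cover returned at step 1 uses at most $k$ cliques. One also observes that each $Z$ enumerated at step 4 is a clique of $G$: this is immediate because $G[N_d[x]\cap N[y]]$ is an \emph{induced} subgraph, so a clique there is a clique of $G$.

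The substance is completeness, which I would phrase as an invariant carried down the search tree and proved by induction on the remaining budget $k$. The invariant states that whenever $\textbf{ECCS}(G,k,\C)$ is invoked, (i) the \emph{DEP} invariant of Proposition \ref{inv_eccs} holds, and (ii) either $\C$ already covers $G$, or there is an edge clique cover $\C^{*}$ of $G$ with $\C\subseteq\C^{*}$ and $|\C^{*}\setminus\C|\le k$. The base case $k=0$ is clean: (ii) forces $\C^{*}=\C$, so $\C$ covers $G$ and step 1 returns it. In the inductive step, if $\C$ does not cover $G$ then there is an uncovered edge $\{x,y\}$ with $x=u_i$ and $y\in N_d(x)$, and $\C^{*}\setminus\C\neq\emptyset$ forces $k\ge 1$, so the branching at step 4 is reached.

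The crux is an \emph{exchange argument} showing that restricting the enumeration to maximal cliques of $G[N_d[x]\cap N[y]]$ loses no solution. Let $C\in\C^{*}$ be a clique covering $\{x,y\}$; as $C$ is a clique through both $x$ and $y$ we have $C\subseteq N[x]\cap N[y]$, and the vertices of $C$ lying outside $N_d[x]$ are exactly neighbours of $x$ preceding it in the degeneracy ordering, hence lie in $\{u_1,\dots,u_{i-1}\}$. Set $Z_0:=C\cap N_d[x]\subseteq N_d[x]\cap N[y]$; this is a clique containing $x$ and $y$, so it extends to some maximal clique $Z$ of $G[N_d[x]\cap N[y]]$, which is precisely one of the cliques enumerated at step 4. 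I would then verify that $\C^{**}:=(\C^{*}\setminus\{C\})\cup\{Z\}$ is again an edge clique cover: each edge of $C$ either lies inside $Z_0\subseteq Z$, or is incident to a vertex of $\{u_1,\dots,u_{i-1}\}$ and is therefore already covered by $\C$ thanks to the \emph{DEP} invariant, hence by $\C\subseteq\C^{**}$. Since $\C\cup\{Z\}\subseteq\C^{**}$ and $|\C^{**}\setminus(\C\cup\{Z\})|\le k-1$, the recursive call $\textbf{ECCS}(G,k-1,\C\cup\{Z\})$ satisfies the invariant — its part (i) following from processing edges in \emph{DEP} order together with Proposition \ref{inv_eccs} — so by induction it returns a cover, which propagates back out through step 4b.

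I expect the exchange argument to be the main obstacle, and specifically the verification that $\C^{**}$ remains a cover after replacing $C$ by the restricted maximal clique $Z$. The delicate point is that projecting $C$ onto $N_d[x]$ can drop vertices of $C$ preceding $u_i$ and so threaten coverage of edges of $C$ incident to those vertices; the whole argument hinges on Proposition \ref{inv_eccs} guaranteeing that every such edge was already covered by the partial solution $\C$ before $\{x,y\}$ was ever selected. A secondary technical check is that the selection rule at step 3 always advances along the \emph{DEP}, so the invariant is re-established for the recursive call; the monotonicity recorded in Proposition \ref{inv_ecc} is a convenient sanity check that restricting attention to the induced subgraph $G[N_d[x]\cap N[y]]$ cannot turn a solvable situation unsolvable.
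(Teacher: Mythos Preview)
Your proof is correct and rests on the same key insight as the paper, namely Proposition~\ref{inv_eccs}: by the time the edge $\{x,y\}$ with $x=u_i$ is selected, every edge incident to $\{u_1,\dots,u_{i-1}\}$ is already covered, so dropping those vertices from the common neighbourhood is harmless. The presentations differ, however. The paper argues by comparison to a reference family $\A$ of algorithms that branch on maximal cliques of the \emph{full} common neighbourhood $G[N[x]\cap N[y]]$; it asserts this family is correct (``considers all possible ways to cover every uncovered edge'') and then argues that \emph{ECCS}'s restriction to $N_d[x]\cap N[y]$ is safe because the omitted vertices lie in $\{u_1,\dots,u_{i-1}\}$, invoking Propositions~\ref{inv_eccs} and~\ref{inv_ecc}. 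You instead carry an explicit invariant ($\C\subseteq\C^{*}$ with $|\C^{*}\setminus\C|\le k$) and perform a concrete exchange: project the witnessing clique $C$ to $Z_0=C\cap N_d[x]$, extend to a maximal $Z$ in the restricted subgraph, and verify that $\C^{**}=(\C^{*}\setminus\{C\})\cup\{Z\}$ is still a cover. Your route is more self-contained and makes the role of the DEP invariant completely transparent; the paper's route is shorter but leans on the correctness of the reference family and uses Proposition~\ref{inv_ecc} somewhat loosely as a proxy for the edge-by-edge coverage check that you carry out directly.
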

\begin{proof}
Let $\A$ be a family of algorithms that consider uncovered edges $\{x,y\}$ in arbitrary order in step 3 of \emph{ECCS}, and in step 4 branch into each of the choices of maximal cliques containing \{x,y\} in the subgraph induced by $N[x] \cap N[y]$. 

Consider any algorithm $A_i \in A$. For a search tree $T$ of $A_i$, let us define a node $u^T$ to be a \emph{YES} node if $A_i$ returns at step 1 of $u^T$. We claim that $(G,k)$ is a \emph{YES} instance of \emph{ECC} if and only if $A_i$ returns from a \emph{YES} node. If $A_i$ returns from a \emph{YES} node, then clearly $(G,k)$ is a \emph{YES} instance of \emph{ECC}. It remains to show that if $(G,k)$ is a \emph{YES} instance of \emph{ECC}, then $A_i$ would return from a \emph{YES} node. This can be seen as follows. $A_i$ considers all possible ways to cover every uncovered edge: $\{x,y\}$ can be covered with one of the maximal cliques containing $\{x,y\}$ in the subgraph induced by $N[x] \cap N[y]$. $A_i$ considers all such possible maximal cliques, and allows a search tree containing at most $k$ cliques in any leaves of the search tree.

\emph{ECCS} deviates from the family of algorithms $\A$. Next, we show that the deviations are safe. Clearly, if \emph{ECCS} returns from a node at step 1, then $(G,k)$ is a \emph{YES} instance of \emph{ECC}. Therefore, we only need to show that if any algorithm $A_i \in \A$ returns from a \emph{YES} node, then \emph{ECCS} would also return from a node at step 1.

At each node of the search tree of \emph{ECCS}, $y\in N_d[x]$, and  $N_d[x] \subseteq \{u_i, u_{i+1}, \ldots, u_n\}$. Therefore, $N_d[x] \cap N[y] \subseteq \{u_i, u_{i+1}, \ldots, u_n\}$. In the enumeration of maximal cliques, vertices that are not considered by \emph{ECCS} are $(N[x] \cap N[y]) \backslash (N_d[x] \cap N[y]) = (N(x) \backslash N_d(x)) \cap N(y) \subseteq N(x) \backslash N_d(x) \subseteq \{u_1, u_2, \ldots, u_{i-1}\}$.

By Proposition \ref{inv_eccs}, all edges incident on vertices of $\{u_1, u_2, \ldots, u_{i-1}\}$ are covered when we are considering maximal cliques for an edge $\{x,y\}$. By Proposition \ref{inv_ecc}, if $(G[N[x] \cap N[y]],k^{*})$ is a \emph{YES} instance of \emph{ECC}, then $(G[N_d[x] \cap N[y]],k^{*})$ is a \emph{YES} instance of \emph{ECC}. Therefore, it is safe to consider maximal cliques for an edge $\{x,y\}$ only in the subgraph $G[N_d[x] \cap N[y]]$. It follows that if any algorithm $A_i \in \A$ returns from a \emph{YES} node, then \emph{ECCS} would also return from a node at step 1.
\end{proof}

\begin{lemma}
\label{lemma_eccnn}
The number of nodes in a search tree of \emph{ECCS} is at most $1.4423^{dk}$.
\end{lemma}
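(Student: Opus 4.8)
The plan is to bound the two quantities that determine the size of any bounded search tree and then combine them in the standard way recorded in Section \ref{prelim_complexity}, namely that a tree of height $h$ with branching factor at most $b$ has at most $b^{h}$ nodes. So I would separately establish a bound on the height and a bound on the per-node branching factor of \emph{ECCS}.

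First I would bound the height. Each recursive call at step 4a is made with the budget decreased by one and $\C$ augmented by exactly one clique, starting from $k$ at the root; the branch is abandoned at step 2 once the budget drops below zero. Hence along any root-to-leaf path at most $k$ cliques are ever added, so the search tree has height at most $k$.

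Next, and this is the substantive step, I would bound the branching factor at an arbitrary node. Here step 3 selects an uncovered edge $\{x,y\}$ with $y\in N_d(x)$, and step 4 branches over the maximal cliques of $G[N_d[x]\cap N[y]]$ that contain $\{x,y\}$. The degeneracy ordering of Definition \ref{def2} gives $|N_d(x)|\le d$, hence $|N_d[x]|\le d+1$ and $|N_d[x]\cap N[y]|\le d+1$. The key observation is that each maximal clique containing the edge $\{x,y\}$ consists of $\{x,y\}$ together with a maximal clique among the common neighbours of $x$ and $y$ inside this subgraph, so the number of such cliques is at most the number of maximal cliques of the subgraph induced on those common neighbours. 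Because $x\notin N(x)$ and $y\notin N(y)$, that common-neighbour set excludes both $x$ and $y$ and therefore has at most $(d+1)-2=d-1$ vertices; by Lemma \ref{lemma_mmb} it contains at most $3^{(d-1)/3}$ maximal cliques. Thus the branching factor is at most $3^{(d-1)/3}<3^{d/3}$. Combining with the height bound, the number of nodes is at most $(3^{d/3})^{k}=3^{dk/3}$, and since $3^{1/3}=1.44224\ldots\le 1.4423$ this is at most $1.4423^{dk}$; note that the slack between $3^{(d-1)/3}$ and $3^{d/3}$, together with the rounding of $3^{1/3}$, comfortably absorbs any lower-order contributions.

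The hard part will be the branching-factor argument, specifically justifying that restricting the clique search to $N_d[x]$ rather than the full $N[x]$ is what caps the relevant common neighbourhood at $d-1$ vertices: this is exactly where the degeneracy ordering enters, and it is what makes the branching base $3^{1/3}$ independent of $k$, in contrast to \emph{ECCG}. I would rely on the fact that soundness of this restriction is already settled by Proposition \ref{inv_eccs} and Lemma \ref{lemma_ecc_proof}, so that for the present counting lemma I only need the structural facts about what step 4 enumerates, not a fresh correctness argument. I would also remark that each such $(d-1)$-vertex subgraph can be processed with polynomial delay per maximal clique, so the work spent at a node is polynomial and does not affect the node-count bound.
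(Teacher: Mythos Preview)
Your proposal is correct and follows essentially the same route as the paper: bound the depth by $k$, observe that the maximal cliques through $\{x,y\}$ in $G[N_d[x]\cap N[y]]$ correspond to maximal cliques of the $(\le d-1)$-vertex subgraph $G[N_d(x)\cap N(y)]$, apply Lemma~\ref{lemma_mmb} to get branching factor at most $3^{(d-1)/3}<3^{d/3}$, and multiply to obtain $3^{dk/3}\le 1.4423^{dk}$. Your write-up is slightly more explicit about why exactly two vertices drop out and about the height bound, but the argument is the same.
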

\begin{proof}
The number of vertices in a subgraph induced by vertices of $N_d[x] \cap N[y]$ is at most $d+1$ i.e., $|N_d[x] \cap N[y]| \leq d+1$. Since the vertices $x$ and $y$ are fixed, we only need to enumerate maximal cliques of the subgraph $G[N_d(x) \cap N(y)]$, containing at most $d-1$ vertices.

By Lemma \ref{lemma_mmb}, at each node of the search tree, the number of maximal cliques containing $x$ and $y$ in the subgraph induced by vertices of $N_d[x] \cap N[y]$ is at most $3^{(d-1)/3} < 3^{d/3}$. Therefore, the total number of nodes in the search tree is at most $3^{dk/3} \leq 1.4423^{dk}$
\end{proof}

From Figure \ref{fig:ecc_algo1}, it is evident that the time spent at every node of a search tree of \emph{ECCS} is bounded by a polynomial of input size. Considering the time needed for data reduction, we obtain Theorem \ref{thm_ecc} from Lemma \ref{lemma_eccnn} and Proposition \ref{prop_eqv1}.

Next, we discuss a number of consequences of \emph{ECCS}.

Using Lemma \ref{lemma_mmb}, the number of nodes is a search tree of \emph{ECCG} is bounded by $3^{k2^k/3}$. By Corollary \ref{cor_gyarfas}, $2^k > d+\alpha$. Therefore, we have a factor of $\frac{3^{k2^k/3}}{3^{dk/3}} > \frac{3^{k(d+\alpha)/3}}{3^{dk/3}} = 3^{k\alpha/3} = 2^{\Omega(k\alpha)} = 2^{\Omega(\alpha^2)}$ improvement over the bound on search tree size. The last equality follows from Lemma \ref{prop5}. The improvement becomes significant for sparse graphs, such as when $d = o(2^k)$ or $\alpha = \Omega(n)$, and we may obtain a factor of $2^{2^{O(k)}}$ reduction on the number of nodes in the search tree of \emph{ECCS} compared to \emph{ECCG}. Note that with kernelization we have $n < 2^k$. Therefore, preceding improvement carry over to the overall running time.

We point out that single exponential dependence on $k$, with low degree polynomial of $k$ in the exponent, suffices for sparse graphs.

\begin{corollary}
For $d = O(k)$, \emph{ECC} has an FPT algorithm running in $2^{O(k^2)}n^{O(1)}$ time.
\end{corollary}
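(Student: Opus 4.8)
The plan is to obtain this corollary as an immediate specialization of Theorem \ref{thm_ecc}, which already gives an FPT algorithm for \emph{ECC} running in $1.4423^{dk}n^{O(1)}$ time. Since no new algorithmic idea is required, the entire argument is a substitution of the hypothesis $d = O(k)$ into the exponent of that running time, together with a rewriting of the base of the exponential in terms of base $2$.

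First I would rewrite the base. The constant $1.4423$ is $3^{1/3}$ (indeed $3^{1/3} = 1.44224\ldots$), so $1.4423^{dk} = 3^{dk/3} = 2^{c\cdot dk}$ where $c = \tfrac{1}{3}\log_2 3 \approx 0.528$ is a fixed constant strictly less than $1$. Thus the running time of the algorithm of Theorem \ref{thm_ecc} is $2^{O(dk)}n^{O(1)}$, with the implied constant being absorbed into the big-$O$ in the exponent.

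Next I would apply the hypothesis. By assumption $d = O(k)$, meaning there is a constant $c'$ with $d \le c' k$ for all sufficiently large $k$; hence $dk \le c' k^2 = O(k^2)$. Substituting this bound into the exponent gives $2^{O(dk)} = 2^{O(k^2)}$, so the overall running time becomes $2^{O(k^2)}n^{O(1)}$, as claimed. I would note that the degeneracy $d$ and a degeneracy ordering are computable in linear time (as recorded after Definition \ref{def2}), so invoking the algorithm of Theorem \ref{thm_ecc} introduces no additional overhead beyond the $n^{O(1)}$ factor, and the bound $d = O(k)$ needs only to hold on the instances under consideration.

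There is no genuine obstacle here: the only point requiring any care is confirming that the constant hidden in $d = O(k)$ folds cleanly into the big-$O$ of the exponent rather than appearing as a multiplicative factor in the running time, which it does because $dk$ sits inside the exponent of a single exponential. I would therefore present this corollary as a one-line consequence of Theorem \ref{thm_ecc}, emphasizing that the low-degree polynomial $k^2$ in the exponent is exactly what makes the algorithm attractive for sparse instances where the degeneracy grows only linearly in the number of cliques.
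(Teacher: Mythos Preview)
Your proposal is correct and matches the paper's approach: the corollary is stated in the paper without proof, as an immediate consequence of Theorem \ref{thm_ecc} obtained by substituting $d = O(k)$ into the $1.4423^{dk}n^{O(1)} = 2^{O(dk)}n^{O(1)}$ bound. Your additional remarks on the base conversion and the linear-time computability of degeneracy are fine but more detail than the paper itself provides.
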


\begin{corollary}
\label{cor_eccs}
For $k = \Omega(\sqrt{m})$, \emph{ECC} has an FPT algorithm running in $2^{O(k^2)}n^{O(1)}$ time.
\end{corollary}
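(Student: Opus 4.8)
The plan is to reduce Corollary \ref{cor_eccs} to the degeneracy-based running time of Theorem \ref{thm_ecc}, by showing that the hypothesis $k = \Omega(\sqrt{m})$ forces the \emph{degeneracy} $d$ to be $O(k)$. Once that is established, the bound $1.4423^{dk}n^{O(1)}$ of Theorem \ref{thm_ecc} collapses to $2^{O(k^2)}n^{O(1)}$, which is exactly the conclusion of the preceding (unlabeled) corollary, now with the bound $d = O(k)$ supplied by the edge count rather than assumed.

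First I would dispose of isolated vertices. An \emph{edge clique cover} only needs to cover edges, so isolated vertices are irrelevant, and Rule \ref{rule_ecc_1} lets me pass to an equivalent instance by deleting them. The essential observation is that deleting isolated vertices leaves the edge set — and hence $m$ — unchanged, so the hypothesis $k = \Omega(\sqrt{m})$ is preserved. After this preprocessing I may assume $G$ has no isolated vertices, which is precisely the precondition needed to apply the degeneracy bound. Next I would invoke Lemma \ref{lemma_db}, which gives $d + 1 \leq 2\sqrt{m}$ in the absence of isolated vertices, so in particular $d < 2\sqrt{m}$. Combining this with the hypothesis $k = \Omega(\sqrt{m})$ — equivalently $\sqrt{m} = O(k)$ — yields $d = O(k)$. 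Substituting into the running time of Theorem \ref{thm_ecc} then gives $1.4423^{dk}n^{O(1)} = 2^{O(dk)}n^{O(1)} = 2^{O(k^2)}n^{O(1)}$, where the last step uses $dk = O(k^2)$.

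I do not expect a genuine obstacle here: the argument is a short chain of inequalities. The one point that must not be overlooked is that Lemma \ref{lemma_db} requires the absence of isolated vertices, so the preprocessing via Rule \ref{rule_ecc_1} is not cosmetic but genuinely necessary to license the degeneracy bound; everything else is a direct substitution into Theorem \ref{thm_ecc}.
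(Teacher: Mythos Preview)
Your proposal is correct and matches the paper's approach exactly: the paper simply states that Corollary~\ref{cor_eccs} follows from Lemma~\ref{lemma_db}, which is precisely the chain $d < 2\sqrt{m} = O(k)$ you spell out before substituting into Theorem~\ref{thm_ecc}. Your remark about first removing isolated vertices via Rule~\ref{rule_ecc_1} to license Lemma~\ref{lemma_db} is a detail the paper leaves implicit, and it is good that you made it explicit.
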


Corollary \ref{cor_eccs} follows from Lemma \ref{lemma_db}.

From the preceding two observations, it is obvious that the inherent difficulty of solving \emph{ECC} on sparse graphs is poorly captured by parameter $k$ alone. Next, we demonstrate that this difficulty is indeed better captured by degeneracy.

\begin{corollary}
\label{cor_eccs2}
If $m \geq \lfloor \frac{n^2}{4} \rfloor$, then running time of \emph{ECCS} is $2^{O(d^3)}n^{O(1)}$.    
\end{corollary}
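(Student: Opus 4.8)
The plan is to bound the parameter $k$ in terms of the degeneracy $d$ and then substitute into the running time of \emph{ECCS}. Recall from Theorem~\ref{thm_ecc} (via Lemma~\ref{lemma_eccnn}) that \emph{ECCS} runs in $1.4423^{dk}n^{O(1)} = 2^{O(dk)}n^{O(1)}$ time. Hence, to establish the claimed bound of $2^{O(d^3)}n^{O(1)}$, it suffices to show that the density hypothesis $m \geq \lfloor n^2/4 \rfloor$ forces $k = O(d^2)$, so that $dk = O(d^3)$.

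First I would dispose of large values of $k$. Since the $m$ cliques $\{\{u,v\} : \{u,v\} \in E\}$ (one $K_2$ per edge) form an \emph{edge clique cover}, any instance with $k \geq m$ is a trivial \emph{YES} instance and can be answered in polynomial time. Thus we may assume $k \leq m$, and it remains to bound $m$ by $O(d^2)$.

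Next I would exploit the density hypothesis to show $d = \Omega(n)$. Combining $m \geq \lfloor n^2/4 \rfloor$ with Proposition~\ref{prop2} ($m < nd$) gives $nd > \lfloor n^2/4 \rfloor \geq (n^2-1)/4$, whence $d > n/4 - 1/(4n)$ and therefore $n = O(d)$. Since $G$ is simple we have $m \leq \binom{n}{2} < n^2/2 = O(d^2)$, and combined with $k \leq m$ this yields $k = O(d^2)$. Substituting into $2^{O(dk)}n^{O(1)}$ gives $2^{O(d^3)}n^{O(1)}$, as required.

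The argument is a chain of elementary inequalities, so there is no deep obstacle; the only point requiring care is the bookkeeping around the parameter $k$. Specifically, I must be explicit that the running-time statement tacitly assumes $k$ has been capped at $m$, since otherwise an adversarially large $k$ would inflate $2^{O(dk)}$ beyond $2^{O(d^3)}$ even though the instance is trivially solvable. I would also remark that degeneracy is monotone under taking subgraphs, so applying the kernelization rules of Section~\ref{ecrs_ecc} beforehand can only decrease $d$ and does not invalidate the substitution.
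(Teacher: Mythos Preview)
Your proof is correct and follows essentially the same route as the paper: use the density hypothesis together with Proposition~\ref{prop2} to deduce $n=O(d)$, then conclude $k=O(d^2)$ and substitute into the $2^{O(dk)}n^{O(1)}$ bound. The only minor difference is that you cap $k$ by the trivial bound $k\leq m$, whereas the paper invokes the Erd\H{o}s--Goodman--P\'osa bound $k\leq\lfloor n^2/4\rfloor$; both yield $k=O(n^2)=O(d^2)$, so your more elementary capping works just as well.
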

\begin{proof}
For \emph{ECC}, we can assume $k \leq \lfloor \frac{n^2}{4} \rfloor$ \cite{erdos1966representation}. Also, by Proposition \ref{prop2}, we can assume $k \leq m < nd$. Since $\lfloor \frac{n^2}{4} \rfloor \leq m$, combining preceding inequalities, we have $k < 4 d^2$. Therefore, \emph{ECCS} takes $2^{O(d^3)}n^{O(1)}$ time.
\end{proof}

By algebraic manipulation similar to the preceding proof, we can show the following.

\begin{corollary}
\label{cor_eccs3}
For $c >0, \epsilon > 0$, if $m \geq cn ^{1+\epsilon}$, then running time of \emph{ECCS} is $2^{O(d^2(d/c)^{1/\epsilon})}n^{O(1)}$.
\end{corollary}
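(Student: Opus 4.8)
The plan is to mirror the algebraic manipulation used in the proof of Corollary~\ref{cor_eccs2}, replacing the specific density threshold $\lfloor n^2/4 \rfloor$ there with the general threshold $cn^{1+\epsilon}$. The two ingredients I would combine are the sparsity bound $m < nd$ from Proposition~\ref{prop2}, and the observation already used in the previous proof that an optimal edge clique cover needs at most one clique per edge, so we may assume $k \le m$.

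First I would use the hypothesis $m \ge cn^{1+\epsilon}$ together with $m < nd$ to eliminate $m$ and bound $n$: from $cn^{1+\epsilon} \le m < nd$ we get $cn^{\epsilon} < d$, hence $n < (d/c)^{1/\epsilon}$. Next I would substitute this into the chain $k \le m < nd$, which gives $k < nd < d\,(d/c)^{1/\epsilon}$, and therefore $dk < d^2 (d/c)^{1/\epsilon}$. Finally, feeding this bound on $k$ into the running time $1.4423^{dk}n^{O(1)}$ of \emph{ECCS} (Theorem~\ref{thm_ecc}, via Lemma~\ref{lemma_eccnn}) and writing $1.4423^{dk} = 2^{O(dk)}$ yields the claimed running time $2^{O(d^2(d/c)^{1/\epsilon})}n^{O(1)}$.

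There is no genuine obstacle here; the argument is a one-line calculation once the two inequalities are in hand. The only points to watch are (i) that the step $cn^{\epsilon} < d \Rightarrow n < (d/c)^{1/\epsilon}$ is legitimate, which holds because $\epsilon > 0$ makes $x \mapsto x^{1/\epsilon}$ monotone on the positive reals, and (ii) that the constant and floor issues appearing in Corollary~\ref{cor_eccs2} are harmless, being absorbed into the $O(\cdot)$ in the exponent. As a sanity check, setting $\epsilon = 1$ and $c = 1/4$ recovers $n < 4d$ and $k < 4d^2$, reproducing the $2^{O(d^3)}$ bound of Corollary~\ref{cor_eccs2}.
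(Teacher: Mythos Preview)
Your proposal is correct and follows exactly the route the paper intends: it says the corollary follows ``by algebraic manipulation similar to the preceding proof,'' and your derivation---combining $cn^{1+\epsilon}\le m<nd$ to get $n<(d/c)^{1/\epsilon}$, then $k\le m<nd<d(d/c)^{1/\epsilon}$, and plugging $dk<d^2(d/c)^{1/\epsilon}$ into the $1.4423^{dk}n^{O(1)}$ bound---is precisely that manipulation. Your sanity check against Corollary~\ref{cor_eccs2} is also on point.
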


\begin{corollary}
\label{cor_eccs4}
For $c >0, \epsilon > 0$, if $m \geq cn(\log n)^{\epsilon}$, then running time of \emph{ECCS} is $2^{O( d^22^{(d/c)^{1 / \epsilon}})}n^{O(1)}$.
\end{corollary}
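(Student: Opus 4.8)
The plan is to bound the parameter $k$ in terms of the degeneracy $d$ by exploiting the density hypothesis, and then substitute into the $2^{O(dk)}n^{O(1)}$ running time guaranteed by Lemma \ref{lemma_eccnn} (equivalently Theorem \ref{thm_ecc}). This mirrors the algebraic manipulation already carried out in the proofs of Corollary \ref{cor_eccs2} and Corollary \ref{cor_eccs3}, only with the density threshold $cn(\log n)^{\epsilon}$ in place of $\lfloor n^2/4 \rfloor$ or $cn^{1+\epsilon}$.

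First I would dispose of the trivial regime: if $k \geq m$, then covering each edge by the clique consisting of its two endpoints yields an edge clique cover with at most $m \leq k$ cliques, so $(G,k)$ is a \emph{YES} instance and we are done. Hence I may assume $k < m$, and by Proposition \ref{prop2} this gives $k < m < nd$.

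The key step is to turn the density hypothesis into an upper bound on $n$. Combining $m \geq cn(\log n)^{\epsilon}$ with $m < nd$ yields $cn(\log n)^{\epsilon} < nd$, and after cancelling the factor $n$ we obtain $c(\log n)^{\epsilon} < d$, i.e. $(\log n)^{\epsilon} < d/c$. Raising both sides to the power $1/\epsilon$ gives $\log n < (d/c)^{1/\epsilon}$, so that $n < 2^{(d/c)^{1/\epsilon}}$. Feeding this back into $k < nd$ produces $k < d\cdot 2^{(d/c)^{1/\epsilon}}$. Substituting this bound into the running time $2^{O(dk)}n^{O(1)}$ of \emph{ECCS} then gives
\[
2^{O\!\left(d\cdot d\cdot 2^{(d/c)^{1/\epsilon}}\right)}n^{O(1)} = 2^{O\!\left(d^{2}2^{(d/c)^{1/\epsilon}}\right)}n^{O(1)},
\]
as claimed.

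I do not anticipate any genuine obstacle here: the argument is a short chain of inequalities once the trivial case $k \geq m$ is removed, exactly parallel to Corollary \ref{cor_eccs3}. The only points that require a little care are the edge cases (ensuring the manipulation remains valid when $n$ is small, where the stated bound is loose but still correct) and keeping track of the constants absorbed into the $O(\cdot)$ so that the factor $c$ lands inside $(d/c)^{1/\epsilon}$ precisely as in the preceding corollaries.
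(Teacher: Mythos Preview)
Your proposal is correct and follows precisely the algebraic manipulation the paper indicates (it gives no explicit proof, only the remark that the argument is analogous to Corollary~\ref{cor_eccs2}). The chain $cn(\log n)^{\epsilon} \leq m < nd \Rightarrow n < 2^{(d/c)^{1/\epsilon}} \Rightarrow k < nd < d\,2^{(d/c)^{1/\epsilon}}$ is exactly what is intended, and your disposal of the trivial case $k \geq m$ is a clean substitute for invoking the Erd\H{o}s--Goodman--P\'osa bound $k \leq \lfloor n^2/4\rfloor$ used in the proof of Corollary~\ref{cor_eccs2}.
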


The preceding observations show that the dependence of the running time of \emph{ECCS} on \emph{degeneracy} shifts from single exponential to double exponential as graph becomes sparser. Unlike double exponential dependence on $k$, this shift captures what we expect for sparse graphs.

\emph{ECC} is \emph{NP-complete} on planar graphs \cite{chang2001tree}. Degeneracy of planar graph is at most 5. Thus we have the following.

\begin{corollary}
\label{cor_eccs5}
For planar graphs, \emph{ECC} has an FPT algorithm running in $2^{O(k)}n^{O(1)}$ time.
\end{corollary}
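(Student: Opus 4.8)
The plan is to derive the corollary as a direct specialization of Theorem~\ref{thm_ecc} (equivalently, of Lemma~\ref{lemma_eccnn}), exploiting the fact that planarity forces a constant degeneracy. The only nontrivial ingredient is the bound $d \leq 5$ for planar graphs, after which the claimed running time is immediate by substitution.

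First I would recall the standard consequence of Euler's formula: any simple planar graph on $n \geq 3$ vertices has at most $3n-6$ edges, so its average degree is strictly below $6$, and hence it contains a vertex of degree at most $5$. Since every subgraph of a planar graph is again planar, this argument applies to every nonempty subgraph, so by Definition~\ref{def1} the degeneracy of a planar graph satisfies $d \leq 5$. This is the step I would state explicitly, because Definition~\ref{def1} requires a low-degree vertex in \emph{every} subgraph, not just in $G$ itself, and it is planarity being hereditary under taking subgraphs that supplies exactly this.

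Then I would substitute $d \leq 5$ into the running time of Theorem~\ref{thm_ecc}, obtaining $1.4423^{dk} n^{O(1)} \leq 1.4423^{5k} n^{O(1)} = 2^{O(k)} n^{O(1)}$, since $1.4423^{5}$ is a fixed constant. Equivalently, substituting into Lemma~\ref{lemma_eccnn} bounds the search-tree size of \emph{ECCS} by $1.4423^{5k} = 2^{O(k)}$; combined with the polynomial time spent per node (visible from Figure~\ref{fig:ecc_algo1}) and Proposition~\ref{prop_eqv1} for the data-reduction cost, this gives the same conclusion.

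There is essentially no obstacle here: the statement is a corollary precisely because the degeneracy of planar graphs is a fixed constant while the $d$-dependence of Theorem~\ref{thm_ecc} is only single-exponential. The sole point deserving a sentence of care is the hereditary argument for the degeneracy bound noted above; everything else is a one-line substitution.
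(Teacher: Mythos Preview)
Your proposal is correct and matches the paper's own argument exactly: the paper simply notes that the degeneracy of a planar graph is at most $5$ and invokes Theorem~\ref{thm_ecc}. Your added justification via Euler's formula and the hereditary nature of planarity is a welcome elaboration, but the approach is identical.
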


Note that for planar graphs we can easily obtain a \emph{kernel} with $4k$ vertices, since \emph{clique number} of planar graph is at most $4$. Then, instead of running time with exponent linear in $k$ (as in Corollary \ref{cor_eccs2}), we would get running time with exponent quadratic in $k$.

\subsection{Assignment Clique Cover}
\label{ecrs_acc}

We describe an FPT algorithm for \emph{ACC}. First, we describe a set of simple data reduction rules that reduce a given instance $(G,t)$ of \emph{ACC}. We note that many data reduction rules for \emph{ECC} may not be directly applicable for \emph{ACC}, for example Rule \ref{rule_ecc_2} (we expand on this later).

\begin{krule}
\label{rule_acc_1}
If $x \in V$ is an isolated vertex or a vertex with no uncovered incident edges, then $(G-\{x\}, t)$ and $(G,t)$ are equivalent instances. For a solution $\C$ of $(G-\{x\}, t)$, report $\C$ as the solution of $(G,t)$.
\end{krule}

\begin{krule}
\label{rule_acc_2}
Let $x$ be vertex with all its incident edges uncovered, and $N[x]$ induces a clique $C$ in $G$. Mark all edges of $C$ as covered. Then $(G-\{x\}, t- |N[x]|)$ and $(G,t)$ are equivalent instances. For a solution $\C^{*}$ of $(G-\{x\}, t- |N[x]|)$, report $\C^{*} \cup \{C\}$ as the solution of $(G,t)$.
\end{krule}

Rule \ref{rule_acc_2} is correct. Every edge $\{x,y\} \in E$ has to be covered by a clique. Therefore, covering the edge $\{x,y\}$ with a clique other than the clique induced by $N[x]$ would increase number of assignments of $x$.

\begin{krule}
\label{rule_acc_3}
Let $e=\{x,y\}$ be an uncovered edge, and $N(x) \cap N(y) = \emptyset$. Mark the edge $e$ as covered. Then $(G-\{e\}, t-2)$ and $(G,t)$ are equivalent instances. For a solution $\C^{*}$ of $(G-\{e\}, t-2)$, report $\C^{*} \cup \{\{x,y\}\}$ as the solution of $(G,t)$.
\end{krule}

Rule \ref{rule_acc_3} is correct: edge $\{x,y\}$ can only be covered by a clique $C=\{x,y\}$. Let $(G,t)$ be a reduced instance with respect to Rules \ref{rule_acc_1}, \ref{rule_acc_2}, and \ref{rule_acc_3}. We apply following rule on $(G,t)$.

\begin{krule}
\label{rule_acc_4}
If $n > t$, then report $(G,t)$ is a \emph{NO} instance of \emph{ACC}.
\end{krule}

Since $(G,t)$ is reduced with respect to Rule \ref{rule_acc_1}, an \emph{assignment clique cover} must include all the vertices of $G$. Therefore, Rule \ref{rule_acc_3} is correct. Next, we describe a bounded search tree algorithm for \emph{ACC}.

\begin{figure}
    \centering
    \fbox{\parbox{6.0in}{
\underline{Assignment-Clique-Cover-Search$(G, t, \C)$:}

\textcolor{teal}{\\ // Abbreviated \textbf{ACCS}$(G,t,\C)$}

    \begin{enumerate}
        \item if $\C$ covers edges of $G$, then return $\C$
        \item if $t < 2$, then return $\emptyset$
        \item select the \emph{first} uncovered edge $\{x,y\}$ from the \emph{DEP} of $G$ where $y \in N_d(x)$ \textcolor{teal}{// Definition \ref{def_DEP}}
        \item for each \emph{clique} $Z$ containing $\{x,y\}$ in the subgraph $G[N_d[x] \cap N[y]]$ do
    \begin{enumerate}
        \item $\Q \leftarrow \emptyset$
        \item if $t - |Z| \geq 0$, then $\Q \leftarrow \textbf{ACCS}(G, t-|Z|, \mathcal{C} \cup \{Z\})$
        \item if $\Q \ne \emptyset$, then return $\Q$
    \end{enumerate}
    \item return $\emptyset$
    \end{enumerate} 
}}   
    \caption{A bounded search tree algorithm for \emph{ACC}, denoted \emph{ACCS}.}
    \label{fig:acc_algo1}
\end{figure}

Figure \ref{fig:acc_algo1} shows a bounded search tree algorithm for \emph{ACC}, henceforth referred as \emph{ACCS}. \emph{ACCS} uses the same strategy as \emph{ECCS} to select uncovered edge at step 3. Unlike \emph{ECCS}, \emph{ACCS} searches for \emph{cliques} rather than \emph{maximal cliques} at step 4. But, the restriction of search space for cliques is same as \emph{ECCS}: subgraph induced by vertices of $N_d[x] \cap N[y]$ in $G$.

The need for enumerating all cliques in step 4, instead of only maximal cliques, follows from the fact that a maximal clique may include additional vertices not required for an \emph{edge clique cover}, and the corresponding branch of a search tree may miss a solution due to excess decrement of the value of parameter $t$. The correctness of subgraph restriction is similar to what has been described in the proof of Lemma \ref{lemma_ecc_proof}. Therefore, we conclude the following.

\begin{lemma}
Algorithm \emph{ACCS} correctly solves the parameterized problem \emph{ACC}.
\end{lemma}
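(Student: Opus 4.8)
The plan is to mirror the two-stage argument used in the proof of Lemma~\ref{lemma_ecc_proof}, adapting it to the size-weighted objective of \emph{ACC}. First I would introduce a reference family $\A'$ of algorithms that agree with \emph{ACCS} except that, at step~3, they may select the uncovered edge $\{x,y\}$ in an arbitrary order, and at step~4 they branch on \emph{every} clique $Z$ containing $\{x,y\}$ in the \emph{unrestricted} subgraph $G[N[x] \cap N[y]]$, recursing with budget $t - |Z|$ whenever $t - |Z| \geq 0$. In contrast to the family $\A$ used for \emph{ECCS}, here the branching must range over all cliques rather than only the maximal ones: a maximal clique may contain vertices irrelevant to covering $\{x,y\}$ and thereby overspend the budget $t$, as already noted preceding this lemma.

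Next I would establish correctness of the reference family. Soundness is immediate: whenever such an algorithm returns a cover at step~1, every edge of $G$ is covered, and because each recursive call is guarded by $t - |Z| \geq 0$, the sum of the sizes of the selected cliques along the root-to-leaf path is at most the initial budget, so conditions~(1) and~(2) of \emph{ACC} both hold. For completeness, fix any valid solution $\C^*$. For the currently selected uncovered edge $\{x,y\}$ there is some clique $Z^* \in \C^*$ containing both $x$ and $y$; since $Z^*$ is a clique containing $\{x,y\}$, every vertex of $Z^*$ lies in $N[x] \cap N[y]$, i.e.\ $Z^* \subseteq N[x] \cap N[y]$, so $Z^*$ is among the cliques branched on. Taking this branch decrements the budget by exactly $|Z^*|$, matching the contribution of $Z^*$ to $\sum_{C_l \in \C^*} |C_l|$, and the remaining cliques of $\C^*$ cover the residual instance within the reduced budget; induction on the number of uncovered edges then yields a \emph{YES} node.

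The third step is to show that the two deviations of \emph{ACCS} from the reference family---fixing the edge order to the \emph{DEP} and restricting the clique search to $G[N_d[x] \cap N[y]]$---are safe. Because \emph{ACCS} uses exactly the step~3 selection rule of \emph{ECCS}, the invariant of Proposition~\ref{inv_eccs} holds verbatim: when $\{x,y\}$ with $x = u_i$ is processed, all edges incident on $\{u_1,\ldots,u_{i-1}\}$ are already covered. I would then argue a \emph{shrinking} lemma: given any valid solution with a clique $Z^*$ covering $\{x,y\}$, replacing $Z^*$ by $Z^*_{\mathrm{late}} := Z^* \cap (N_d[x] \cap N[y])$ still yields a valid solution. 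Indeed, as in the proof of Lemma~\ref{lemma_ecc_proof} the excluded vertices lie in $N(x) \setminus N_d(x) \subseteq \{u_1,\ldots,u_{i-1}\}$, so every edge of $Z^*$ meeting an excluded vertex is already covered by the invariant; hence $Z^*_{\mathrm{late}}$---still a clique containing $\{x,y\}$ and lying in $N_d[x] \cap N[y]$---covers the only edges for which $Z^*$ was responsible, and since $|Z^*_{\mathrm{late}}| \leq |Z^*|$ the objective does not increase. Thus enumerating cliques only in $G[N_d[x] \cap N[y]]$ loses no solution, and \emph{ACCS} reaches a step-1 return whenever \emph{ACC} is a \emph{YES} instance.

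The main obstacle I anticipate is the \emph{shrinking} argument of the last paragraph: unlike the \emph{ECC} case, where it sufficed to observe via Proposition~\ref{inv_ecc} that a smaller induced instance remains a \emph{YES} instance, here I must additionally verify that restricting a clique to $N_d[x] \cap N[y]$ is monotone in the objective $\sum_{C_l} |C_l|$ and does not silently leave an edge uncovered. The crux is to pin down exactly which edges $Z^*$ is solely responsible for---precisely those entirely within $Z^*_{\mathrm{late}}$---and to lean on the \emph{DEP}-driven invariant to discharge all the others; once that is made precise, monotonicity of the size objective under vertex removal is routine.
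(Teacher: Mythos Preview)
Your proposal is correct and follows essentially the same approach as the paper, which simply asserts that correctness ``is similar to what has been described in the proof of Lemma~\ref{lemma_ecc_proof}'' together with the observation that maximal cliques must be replaced by all cliques to respect the budget~$t$. Your plan spells this out in full, and in fact handles one point the paper glosses over: the paper's \emph{ECC} argument invokes Proposition~\ref{inv_ecc} (a YES instance restricted to an induced subgraph stays YES), which does not transfer directly to \emph{ACC} because the parameter is a vertex-assignment budget rather than a clique count; your explicit shrinking step $Z^* \mapsto Z^* \cap (N_d[x]\cap N[y])$, combined with the DEP invariant and the monotonicity $|Z^*_{\mathrm{late}}|\le |Z^*|$, is exactly the right replacement.
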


\begin{theorem}
\label{thm_acc3}
\emph{ACC} has an FPT algorithm running in $1.4143^{t^2}n^{O(1)}$ time.
\end{theorem}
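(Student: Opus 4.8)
The plan is to bound the size of the \emph{ACCS} search tree and then express everything in terms of $t$ by invoking the kernel produced by Rules \ref{rule_acc_1}--\ref{rule_acc_4}. As recalled in the preliminaries, a search tree of branching factor at most $b$ and height at most $h$ has at most $b^h$ nodes, and the work at each node of \emph{ACCS} (selecting the first uncovered edge from the \emph{DEP}, listing the relevant cliques, and recursing) is polynomial in the input size. So it suffices to control $b$ and $h$ separately and then fold the polynomial per-node cost in via Proposition \ref{prop_eqv1}.

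For the height, observe that at step 4b each recursive call replaces $t$ by $t-|Z|$, and since the clique $Z$ contains the edge $\{x,y\}$ we always have $|Z|\ge 2$. Hence $t$ strictly drops by at least $2$ along every root-to-leaf path, so the height is at most $t/2$. For the branching factor, at each node we enumerate the cliques containing $\{x,y\}$ inside $G[N_d[x]\cap N[y]]$. As already observed in the proof of Lemma \ref{lemma_eccnn}, $N_d[x]\cap N[y]=(N_d(x)\cap N(y))\cup\{x,y\}$ with $|N_d(x)\cap N(y)|\le d-1$. Since $x$ and $y$ are fixed, each such clique has the form $\{x,y\}\cup S$ for a subset $S\subseteq N_d(x)\cap N(y)$ inducing a complete subgraph, so the number of them is bounded by the total number of subsets of $N_d(x)\cap N(y)$, namely $2^{d-1}$. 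Thus $b\le 2^{d-1}$. This is precisely where \emph{ACCS} differs from \emph{ECCS}: because a maximal clique can over-count assignments, we must branch on \emph{every} clique rather than only the maximal ones, and so we use the crude subset bound $2^{d-1}$ in place of the Moon--Moser bound $3^{(d-1)/3}$ of Lemma \ref{lemma_mmb}.

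Finally I would tie $d$ to $t$ through the kernel. Rule \ref{rule_acc_4} guarantees that a reduced instance has $n\le t$, and Proposition \ref{prop1} gives $d+1\le n\le t$, hence $d-1\le t-2$. Combining the three bounds, the number of nodes in the search tree is at most
\[
\left(2^{d-1}\right)^{t/2}\le 2^{(t-2)t/2}=2^{\,t^2/2-t}\le 2^{\,t^2/2}=(\sqrt{2})^{\,t^2}<1.4143^{\,t^2},
\]
where the last inequality uses $\sqrt{2}=1.41421\ldots<1.4143$. Multiplying by the polynomial work per node and by the polynomial kernelization time — both absorbed by the slack between $\sqrt 2$ and $1.4143$ together with Proposition \ref{prop_eqv1} — yields the claimed $1.4143^{t^2}n^{O(1)}$ running time.

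The main obstacle is the branching analysis rather than the arithmetic. One must justify that enumerating all cliques (not merely maximal ones) is genuinely required for \emph{ACC} while still keeping the branching factor single-exponential in $d$, and then notice that kernelization forces $d<t$, so that the product of a $2^{d-1}$ branching factor with a depth of $t/2$ collapses into the base-$\sqrt 2$ quantity $2^{t^2/2}$. The correctness of the subgraph restriction at step 4 itself is inherited from the argument of Lemma \ref{lemma_ecc_proof} (via Propositions \ref{inv_eccs} and \ref{inv_ecc}), so nothing new is needed there.
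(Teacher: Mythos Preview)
Your proof is correct and follows essentially the same route as the paper: bound the branching factor by the number of subsets of $N_d(x)\cap N(y)$, i.e.\ at most $2^{d-1}$, bound the depth by $t/2$, and use the kernel (Rule~\ref{rule_acc_4} together with Proposition~\ref{prop1}) to replace $d$ by $t$. Your arithmetic is in fact slightly tighter than the paper's (you keep the exponent as $(d-1)t/2\le (t-2)t/2$ rather than relaxing to $dt/2$), but the argument is the same.
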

\begin{proof}
As we have shown in the proof of Lemma \ref{lemma_eccnn}, at each node of a search tree of \emph{ACCS}, we only need to enumerate cliques of the subgraph $G[N_d(x) \cap N(y)]$ with at most $d-1$ vertices. At each node of a search tree, the number of cliques containing the edge $\{x,y\}$ in the subgraph $G[N_d[x] \cap N[y]]$ is at most $\sum_{s=1}^{d-1} \binom{d-1}{s} = 2^{d-1}$. The depth of a search tree is bounded by $t/2$. Therefore, the total number of nodes in a search tree is bounded by $2^{dt/2} \leq 1.4143^{t^2}$, since by Rule \ref{rule_acc_4}, $d < n \leq t$. The claim follows from Proposition \ref{prop_eqv1}, considering time needed for data reduction.
\end{proof}

\begin{remark}
For a reduced instance of \emph{ACC} with respect to Rule \ref{rule_acc_1}, $n \leq t$. Therefore, to achieve a running time with respect to parameter $t$, it would suffice to select edges arbitrarily at step 3 of \emph{ACCS}, and enumerate cliques of subgraph $G[N[x] \cap N[y]]$ at step 4. The restriction of search for cliques in a subgraph $G[N_d[x]\cap N[y]]$ is what makes the algorithm attuned to sparse graphs.
\end{remark}

\begin{corollary}
\label{cor_acc2}
An \emph{assignment-minimum clique cover} of a graph $G$ can be found in $2^{O(mn)}$ time using $O(m^{3/2})$ space.
\end{corollary}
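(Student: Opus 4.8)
The plan is to reduce the optimization problem to a short sequence of \emph{ACC} decision instances and then re-examine the cost of running \emph{ACCS} with the coarse parameter bounds $t \leq 2m$ and $d < n$ in place of the bound $d < n \leq t$ that Theorem \ref{thm_acc3} uses. First I would observe that any \emph{assignment-minimum clique cover} has objective value between $n$ and $2m$: the trivial cover that takes each edge as its own two-vertex clique witnesses the upper bound $\sum_{C_l} |C_l| = 2m$, while (after deleting isolated vertices via Rule \ref{rule_acc_1}) every vertex lies on a covered edge and hence in some clique, giving $\sum_{C_l} |C_l| \geq n$. Since feasibility of a budget $t$ is monotone in $t$, I would binary search for the smallest feasible $t^{*}$ over the range $[n, 2m]$, calling \emph{ACCS}$(G, t, \emptyset)$ at each of the $O(\log m)$ steps; because \emph{ACCS} returns an actual cover of cost at most $t$, the cover it returns at $t^{*}$ must have cost exactly $t^{*}$ and is therefore assignment-minimum.

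For the running time I would not use the stated $1.4143^{t^2}n^{O(1)}$ bound, which yields only $2^{O(m^2)}$ after substituting $t \leq 2m$, but instead the finer node count $2^{dt/2}$ established inside the proof of Theorem \ref{thm_acc3}. Substituting $d < n$ (Proposition \ref{prop1}) and $t \leq 2m$ gives at most $2^{dt/2} < 2^{mn}$ nodes, and since the work per node is polynomial, one call costs $2^{O(mn)}$; the $O(\log m)$ binary-search factor is absorbed, for a total of $2^{O(mn)}$ time.

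For the space bound I would argue that \emph{ACCS} is a depth-first search whose recursion depth is at most $t/2 \leq m$, since each recursive call decreases $t$ by $|Z| \geq 2$. The key quantitative input is Lemma \ref{lemma_db}: after Rule \ref{rule_acc_1} the graph has no isolated vertices, so $d+1 \leq 2\sqrt{m}$ and every candidate clique $Z$, being a clique of $G[N_d[x]\cap N[y]]$, satisfies $|Z| \leq d+1 \leq 2\sqrt{m}$. Thus each stack frame needs only $O(\sqrt{m})$ words to record the current clique together with enough information to resume the clique enumeration after backtracking, and the whole stack uses $O(m)\cdot O(\sqrt{m}) = O(m^{3/2})$ space; storing $G$ and recomputing each restricted subgraph on demand contribute only a transient $O(m)$ term.

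The main obstacle I anticipate is the space accounting, not the time bound or the reduction. One must be careful \emph{not} to materialize the restricted subgraph $G[N_d[x]\cap N[y]]$ at every recursion level simultaneously, as each such subgraph can have up to $\binom{d+1}{2} = O(m)$ edges, so storing one per frame would give $\Theta(m^2)$ space. The clean way around this is to keep only the $O(\sqrt{m})$-vertex clique per frame and recompute $G[N_d[x]\cap N[y]]$ from the stored endpoints $x,y$ and the degeneracy ordering whenever the enumeration must advance, so that the $O(m)$ subgraph workspace is transient and reused across frames rather than accumulated along the stack.
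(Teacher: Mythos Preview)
Your proposal is correct and follows essentially the same approach as the paper: iterate (or binary search) over $t\in[1,2m]$, use the finer $2^{O(dt)}$ node bound from the proof of Theorem~\ref{thm_acc3} together with $d<n$ and $t\le 2m$ to get $2^{O(mn)}$ time, and bound space by (depth~$\le m$)~$\times$~($O(d)=O(\sqrt{m})$ per frame via Lemma~\ref{lemma_db}) to obtain $O(m^{3/2})$. Your extra care about not materializing $G[N_d[x]\cap N[y]]$ at every level is exactly what the paper handles with its remark that the clique enumeration at step~4 can be driven by an $O(d)$-size binary vector.
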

\begin{proof}
We can invoke the FPT algorithm for \emph{ACC} that we just have described with values of $t$ in the range $[1, 2m]$ (by linearly increasing values of $t$ or doing a binary search in the range) until a solution is found. Total running time would be $\sum_{t \in [2m]} 2^{O(dt)}n^{O(1)} =2^{O(dm)} = 2^{O(mn)}$.

The depth a search tree of $\emph{ACCS}$ is at most $m$, and each such clique would take $O(d)$ space. Also, enumeration of cliques in step 4 can be done using a binary vector of size $O(d)$. Therefore, the total space usage of \emph{ACCS} is $O(dm) = O(m^{3/2})$. The last equality follows from Lemma \ref{lemma_db}.
\end{proof}

We have noted that Rule \ref{rule_ecc_2} is not directly applicable for \emph{ACC}, but we can use rules like that in a broader context. The following two data reduction rules are applicable for computing an \emph{assignment-minimum clique cover} of a graph $G$.

\begin{krule}
\label{rule_amcc1}
Let $\{x,y\}$ be an edge of $G$ and $N[x]=N[y]$. If $\C$ is an \emph{assignment-minimum clique cover} of $G \backslash \{x\}$, then $(\C \backslash \{C_l | y \in C_l\}) \cup \{C_l \cup \{x\} | y \in C_l\}$ is an \emph{assignment-minimum clique cover} of $G$.
\end{krule}

\begin{krule}
\label{rule_amcc2}
Let $x$ and $y$ be distinct vertices of $G$ such that $\{x,y\} \not\in E$ and $N(x)=N(y)$. If $\C$ is an \emph{assignment-minimum clique cover} of $G \backslash \{x\}$, then $\C \cup \{C_l\backslash\{y\} \cup \{x\} | y \in C_l, C_l \in \C\}$ is an \emph{assignment-minimum clique cover} of $G$.
\end{krule}

Since a corresponding correct parameter $t$ for $G\backslash \{x\}$ cannot be known beforehand, there is no equivalent of Rule \ref{rule_amcc1} or Rule \ref{rule_amcc2} for \emph{ACC}. It is straightforward to see that the preceding two rules are correct.

\subsection{Other Clique Covers}

What we have shown in the preceding discussion for \emph{ECC} and \emph{ACC} can be extended to design algorithms for other clique cover problems, namely \emph{WECP}, \emph{EWCD}, and \emph{LRVCC}. We refrain from going into the details of these extensions. Instead, we describe algorithms for these problems using a different framework in Section \ref{sec_algos}: one may utilize the descriptions in Section \ref{sec_algos} to design full-fledged algorithms based on enumeration of cliques of restricted subgraph.

\subsection{\emph{NP-completeness} of \emph{ACC}}

We conclude this section with a proof of Theorem \ref{thm_acc}. We use a construction used by \cite{kou1978covering} (Proposition 2).

\begin{proof}[Proof of Theorem \ref{thm_acc}]
\emph{ACC} is in \emph{NP}: given an instance $(G,t)$ and a certificate $\C$, in polynomial time we can verify that $\C$ covers edges of $G=(V,E)$, and $\sum_{C_l \in \C} |C_l| \leq t$.

Let $(G^{*},k)$ be an instance of \emph{VCC}. We construct an instance $(G,t)$ of $ACC$ from $(G^{*},k)$, and show that $(G^{*},k)$ is a \emph{YES} instance of \emph{VCC} if and only if $(G, t)$ is a \emph{YES} instance of \emph{ACC}.

Let $G^{*} = (V^{*},E^{*})$ has $n$ vertices and $m$ edges. To construct $G$, we start with an empty graph and include all the vertices and edges of $G^{*}$. Then, we include a set of additional vertices $X= \{x_1, x_2, \ldots, x_q\}$ with $q \geq 2m+1$. Then, we connect each pair of vertices $x_i \in X$ and $v \in V^{*}$ with an edge. Resulting graph $G=(V,E)$ has vertices $V= V^{*} \cup X$ and edges $E= E^{*} \cup \{\{x_i,v\} | x_i \in X, v \in V^{*}\}$. We complete the construction by setting $t = (n+k)q+2m$.

\textbf{Only if.} Let $(G^{*},k)$ be a \emph{YES} instance of \emph{VCC}, and $\C^{*} = \{C_1^{*}, C_2^{*}, \ldots, C_k^{*}\}$ be a corresponding \emph{vertex clique cover} of $G^{*}$. Since each vertex $x_i$ is connected to all the vertices of $V^{*}$ and no pair of vertices in $X$ are connected by edge, we can cover all the edges incident on $x_i$ with a set of cliques $\C({x_i}) = \{C_l^{*} \cup \{x_i\} | l \in [k]\}$. Since the number of individual assignments of vertices to cliques in $\C^{*}$ is $n$, the number of individual assignments of vertices to cliques in $\C(x_i)$ is $n+k$. Therefore, to cover the edges incident on a vertex $x_i$, we would need at most $\sum_{C_l \in \C({x_i})} |C_l| \leq n+k$ individual assignments of vertices to cliques. To cover all the edges incident on the vertices of $X$, we would need at most $\sum_{x_i \in X} \sum_{C_l \in \C({x_i})} |C_l| \leq (n+k)q$ individual assignments of vertices to cliques. Finally, to cover $m$ edges of $G^{*}$, we would need at most $2m$ individual assignments of vertices to cliques. Therefore, in polynomial time, we can construct an \emph{edge clique cover} $\C$ of $G$ such that $\sum_{C_l \in \C} |C_l| \leq  (n+k)q + 2m = t$. It follows that $(G,t)$ is a \emph{YES} instance of \emph{ACC}.

\textbf{If.} Let $(G,t)$ be a \emph{YES} instance of \emph{ACC}, and $\C$ be a corresponding  \emph{assignment clique cover} of $G$. Let $\C(x_i) = \{C_l | x_i \in C_l, C_l \in \C\}$. For any $i \in [q]$, the set of cliques $\C(x_i)$ with $x_i$ removed is a \emph{vertex clique cover} of $G^{*}$. We need to choose $i \in [q]$ such that $|\C(x_i)| \leq k$. Note that for any pair of distinct vertices $x_i$ and $x_j$ of $X$, $\C(x_i)$ and $\C(x_j)$ are disjoint, since $\{x_i$, $x_j\} \not\in E$. To show $(G^{*},|\C(x_i)|)$ is a \emph{YES} instance of \emph{VCC}, choosing $\C(x_i)$ such that $\sum_{C_l \in \C(x_i)} |C_l|$ is minimum suffices. We have

$$ \sum_{C_l \in \C(x_i)} |C_l| = \min_{j \in [q]} \sum_{C_l \in \C(x_j)} |C_l| \leq \frac{\sum_{j \in [q]}\sum_{C_l \in \C(x_j)} |C_l|}{q} \leq \frac{\sum_{C_l \in \C}|C_l|}{q} \leq \frac{t}{q}$$ $$  = \frac{(n+k)q+2m}{q} =n+k +\frac{2m}{q} \leq n+k .$$
 
The last inequality follows from the choice of $q \geq 2m+1$. For the sake of contradiction, assume $|\C(x_i)| = k^{*} > k$. Then, by the definition of $\C(x_i)$, $x_i$ appears in $k^{*}$ times in $\C(x_i)$. Since $x_i$ is connected to all $n$ vertices of $V^{*}$, $\sum_{C_l \in \C(x_i)} |C_l| \geq n+k^{*} > n+k$, which contradicts the fact that $\sum_{C_l \in \C(x_i)} |C_l| \leq n+k$. Therefore, $|\C(x_i)| \leq k$, i.e., $(G^{*}, |\C(x_i)|)$ is a \emph{YES} instance of \emph{VCC}.
\end{proof}

\section{New Framework: Building Blocks}
\label{sec_bblock}

We describe underlying concepts of a new framework that we use in Section \ref{sec_algos} for algorithm design. Our framework is based on a relaxation of a global minimality of clique cover and a set representation of graph. We include a number of characterizations which may be of independent interest (could be used in other contexts such as designing polynomial time algorithm for specific class of graphs). We start with a discussion of a number of pitfalls that we want to deal with the new framework.

\subsection{Pitfalls Addressed}

In terms of exact algorithm design, we have found two major ways that have been adopted to systematically explore search space of clique covers of a graph. One way is to explore matrices $\mathbf{B} \in \{0,1\}^{n \times m}$ such that $\mathbf{B}\mathbf{B^T} = \mathbf{A}$ where $\mathbf{A}$ is the adjacency matrix of graph. The matrix multiplication is defined with addition rule $1+1=1$, and diagonal entries of $\mathbf{A}$ are all assumed to be $1$. Several FPT algorithms \cite{feldmann2020fixed, cooley2021parameterized} are based on exploration of this type of search space, where binary matrices of dimension $O(k) \times O(k)$ are enumerated from a search space of size $2^{O(k^2 \log k)}$. Consequently, this type of algorithms does not have any regard for sparsity of graph, making them unsuitable even for moderately small values of parameters.

The other major way is to explore search space consisting of maximal cliques of (sub)graph. Depending on objective of clique cover, exploring search space of maximal cliques could become infeasible, even for moderately small input size. An example is the search tree algorithm of \cite{ennis2012assignment} for computing \emph{assignment-minimum clique cover} of a graph. Search tree size of \cite{ennis2012assignment} is double exponential in input size in the worst case, and the algorithm also requires space exponential in input size in the worst case.

Algorithms described in Section \ref{sec_ecrs} explore search spaces that are structurally dependent on the \emph{degeneracy} of graph. We ask whether a more robust parameter than \emph{degeneracy} can be used to restrict search space further. More specifically, we ask whether it is possible to restrict search space structurally with respect to the \emph{clique number} of graph. Recall \emph{clique number} is always bounded by \emph{degeneracy} (Proposition \ref{prop4}). Moreover, \emph{degeneracy} can grow linearly with the input size, while \emph{clique number} remains constant (Remark \ref{rem_beta}).

In this section, our discussion would be focused on \emph{edge clique cover} and number of cliques in the cover. We use following definitions in the subsequent discussion.

\begin{definition}[Minimum Clique Cover]
A \emph{minimum clique cover} of a graph $G$ is an \emph{edge clique cover} of $G$ with smallest number of cliques.
\end{definition}

\begin{definition}[Minimal Clique Cover]
A \emph{minimal clique cover} $\C$ of a graph $G$ is an \emph{edge clique cover} of $G$ such that no other \emph{edge clique cover} of $G$ is contained in $\C$, i.e., $\C$ an inclusion-wise minimal set of cliques that covers all the edges of $G$.
\end{definition}

It is evident from the preceding two definitions that a \emph{minimum clique cover} is also a \emph{minimal clique cover}. Therefore, it is natural to ask whether we can find a \emph{minimum clique cover} by systematically exploring a search space of \emph{minimal clique covers}. We face with a number of difficulties.

First, note that the complete graph $K_n$ has a \emph{minimal clique cover} of size $\binom{n}{2}$, clearly indicating a difficult search space to explore. Second, it is not clear how to systematically explore the search space of \emph{minimal clique covers}. Third, any conceivable way to systematically explore the set of \emph{minimal clique covers} of a graph would require a massive amount of tests for minimality and provision for removing and restoring cliques, i.e., enumerating \emph{minimal clique covers} of a graph would be prohibitive in terms of time and space. Next, we introduce a relaxation on global minimality to avoid these pitfalls.

\subsection{Locally Minimal Clique Cover}

\begin{definition}[Locally Minimal Clique Cover]
\label{def_lmcc}
Let $H=(V_H, E_H)$ be a proper subgraph of a graph $G=(V,E)$, and $\C= \{C_1, C_2, \ldots, C_k\}$ be an \emph{edge clique cover} of $H$. Let $\{x,y\}$ be an uncovered edge of $G$, and $H^{*}$ be the subgraph induced by vertices of $V_H \cup \{x,y\}$ in $G$. Let $\C^{*}$ be an \emph{edge clique cover} of a subgraph of $H^{*}$ constructed as follows.

(i) If the edge $\{x,y\}$ can be covered in a clique $C_l \in \C$, then we cover $\{x,y\}$ with exactly one such clique of $\C$ by letting $C_l^{*} = C_l \cup \{x,y\}$, and we set $\C^{*}$ to be $(\C \backslash \{C_l\}) \cup \{C_l^{*}\}$.

(ii) Otherwise, we create a new clique $C_{k+1} = \{x,y\}$, and set $\C^{*}$ to be $\C \cup \{C_{k+1}\}$.

We call an \emph{edge clique cover} $\C$ \emph{locally minimal} if $\C$ is obtained from an empty clique cover using aforesaid construction; i.e., for every expansion of the cliques contained in $\C$, we either have used (i) whenever applicable, or (ii) otherwise.
\end{definition}

The bottom-up constructive clique cover in Definition \ref{def_lmcc} is simple but has far-reaching consequences. First, search space of \emph{locally minimal clique cover} is easy to explore systematically. Second, the search space is much more compact as it does not try to construct maximal cliques, let alone all maximal cliques of graph. Third, we will be able to make the search space exploration efficient, with the help of a set representation introduced later. Forth, it admits a number of desirable characterizations. We elaborate on these in subsequent discussion.

\begin{definition}[$E_{\pi(i)}$]
\label{def_epii}
Let $\C=\{C_1, C_2, \ldots, C_k\}$ be an \emph{edge clique cover} of a graph $G=(V,E)$. Let $\pi$ be a permutation of $[k]$. $E_{\pi(i)} = \{\{x,y\}\in C_{\pi(i)} | \{x,y\} \not\in C_{\pi(j)}, j <i\}$, i.e., the set of edges exclusive to $C_{\pi(i)}$ with respect to the cliques $\{C_{\pi(1)}, C_{\pi(2)},\ldots,C_{\pi(i-1)}\}$.
\end{definition}

An immediate observation for \emph{locally minimal clique cover} is as follows.

\begin{proposition}
\label{prop_lmcc}
If $\C=\{C_1, C_2, \ldots, C_k\}$ is a \emph{locally minimal clique cover} of $G$, then there exists a permutation $\pi$ such that $E_{\pi(i)} \ne \emptyset$, for all $i \in [k]$.
\end{proposition}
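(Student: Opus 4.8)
The plan is to let the witnessing permutation $\pi$ be the order in which the $k$ cliques of $\C$ are first created during the bottom-up process of Definition \ref{def_lmcc}. Since each clique is introduced exactly once, via case (ii), when some uncovered edge $\{x,y\}$ of $G$ cannot be absorbed into any existing clique, this construction order is a genuine permutation $\sigma$ of $[k]$; write $e_j = \{x_j, y_j\}$ for the edge that triggered the creation of $C_{\sigma(j)}$, so that $C_{\sigma(j)}$ equals $\{x_j, y_j\}$ at the moment of its birth. I will argue that $e_j \in E_{\sigma(j)}$ for every $j$, which immediately yields the claim with $\pi = \sigma$.

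Two structural facts drive the argument. First, monotonicity: throughout the construction a clique only ever gains vertices (cases (i) and (ii) only enlarge cliques), so $\{x_j, y_j\} \subseteq C_{\sigma(j)}$ in the final cover, and any vertex present in $C_{\sigma(i)}$ at some intermediate step remains in $C_{\sigma(i)}$ forever. Second, local minimality: at the instant $C_{\sigma(j)}$ is created, case (ii) was used precisely because case (i) failed for every clique already present; in particular, for each earlier-created clique $C_{\sigma(i)}$ (i.e.\ $i<j$), the set $C_{\sigma(i)} \cup \{x_j, y_j\}$ is not a clique at that time.

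The crux is to upgrade this momentary failure into a permanent one. Fix $i<j$. Because $C_{\sigma(i)} \cup \{x_j,y_j\}$ is not a clique at the birth of $C_{\sigma(j)}$ while $\{x_j,y_j\}$ itself is an edge, there is a vertex $w$ in $C_{\sigma(i)}$ (as it stands at that time) that is non-adjacent to $x_j$ or to $y_j$. By monotonicity $w$ stays in $C_{\sigma(i)}$ for the remainder of the construction, so the final clique $C_{\sigma(i)}$ still contains a vertex non-adjacent to one of $x_j, y_j$ and therefore cannot contain both $x_j$ and $y_j$; equivalently, $e_j \notin C_{\sigma(i)}$. Ranging over all $i<j$ shows that $e_j$ belongs to $C_{\sigma(j)}$ but to none of $C_{\sigma(1)}, \ldots, C_{\sigma(j-1)}$, which is exactly the statement $e_j \in E_{\sigma(j)}$ from Definition \ref{def_epii}. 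Hence $E_{\sigma(j)} \neq \emptyset$ for all $j \in [k]$.

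The only genuinely delicate point is this persistence step, turning the one-shot failure of case (i) into a witness that survives all subsequent clique growth; I expect everything else (that $\sigma$ is well defined, and that each creating edge lies in its clique in the final cover) to be immediate from the definitions. A secondary point worth stating carefully is why the witness $w$ must exist: since $x_j \sim y_j$, non-cliqueness of $C_{\sigma(i)} \cup \{x_j,y_j\}$ can only arise from a missing edge between some $w \in C_{\sigma(i)}$ and one of $x_j, y_j$, rather than from $x_j \not\sim y_j$.
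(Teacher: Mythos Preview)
Your proof is correct and is exactly the natural argument: take $\pi$ to be the creation order, and use the edge that triggered each clique's creation as the witness that $E_{\pi(j)}\neq\emptyset$; the persistence step you highlight (the obstructing vertex $w$ remains in $C_{\sigma(i)}$ by monotonicity, so the final $C_{\sigma(i)}$ cannot contain both endpoints of $e_j$) is precisely what makes this work. The paper does not give a proof at all, labeling the proposition an ``immediate observation,'' so your argument is the intended unpacking rather than an alternative route.
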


The following is a characterization that \emph{minimal clique cover} does not admit, but \emph{locally minimal clique cover} does. Recall the \emph{minimal clique cover} of $K_n$ with $\binom{n}{2}$ cliques, which arises precisely due to the fact that \emph{minimal clique cover} lacks the following characterization.

\begin{proposition}
\label{prop_lmcc2}
Let $\C=\{C_1, C_2, \ldots, C_k\}$ be an \emph{edge clique cover} of $G=(V,E)$. If $\C$ is \emph{locally minimal}, then for each pair of distinct cliques $\{C_a, C_b\}$ in $\C$, there exist $x \in C_a$, $y \in C_b$ such that $x \ne y$ and $\{x,y\} \not\in E$.
\end{proposition}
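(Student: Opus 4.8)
The plan is to argue by contradiction, showing that a ``violating'' pair of cliques could never have been produced by the bottom-up construction of Definition \ref{def_lmcc}. Suppose, toward a contradiction, that there exist distinct cliques $C_a, C_b \in \C$ such that every pair $u \in C_a$, $v \in C_b$ with $u \neq v$ satisfies $\{u,v\} \in E$. Since $C_a$ and $C_b$ are themselves cliques and all cross pairs are edges, the union $C_a \cup C_b$ is a clique; informally, the construction would then have had no reason to keep $C_a$ and $C_b$ as two separate cliques, and I want to turn this informal statement into a genuine contradiction.

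To make this precise, I would exploit the fact that a \emph{locally minimal} cover is built by processing uncovered edges one at a time, and that each application of rule (ii) in Definition \ref{def_lmcc} creates exactly one new clique. Hence the cliques of $\C$ acquire a well-defined creation order, and without loss of generality I assume $C_b$ is created no earlier than $C_a$. Let $\{x,y\}$ be the edge whose processing created $C_b$ via rule (ii), so that $x, y \in C_b$ (cliques only grow during the construction, so these two vertices persist into the final cover). At that instant $C_a$ already exists as some clique $C_a' \subseteq C_a$, and since rule (ii) fired rather than rule (i), the edge $\{x,y\}$ could not be covered in any existing clique; in particular $C_a' \cup \{x,y\}$ is not a clique.

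The remaining step is to convert the failure of $C_a' \cup \{x,y\}$ to be a clique into a witnessing non-adjacent pair across $C_a$ and $C_b$. Because $C_a'$ is a clique and $\{x,y\} \in E$, the only way $C_a' \cup \{x,y\}$ can fail to be a clique is that some $z \in C_a'$, distinct from both $x$ and $y$, is non-adjacent to $x$ or to $y$. Taking such a $z$ yields $z \in C_a$ together with one of $x, y \in C_b$, with the two endpoints distinct and non-adjacent, which directly contradicts the assumption that all cross pairs between $C_a$ and $C_b$ are edges. This contradiction establishes the proposition.

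The delicate point, and the step I expect to demand the most care, is the bookkeeping around the creation order and the intermediate form $C_a'$: one must verify that $C_a$ really does already exist (in a possibly smaller form) at the moment $C_b$ is born, that $x$ and $y$ genuinely survive into the final $C_b$, and that the witnessing vertex $z$ can be chosen distinct from both $x$ and $y$ so that the recorded non-adjacency is between two genuinely distinct vertices. Each of these follows from the monotone, growth-only nature of the construction in Definition \ref{def_lmcc}, but they are precisely the places where a careless argument could slip, so I would state them explicitly before invoking them.
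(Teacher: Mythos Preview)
Your proof is correct and follows essentially the same contrapositive strategy as the paper: assume $C_a\cup C_b$ is a clique, and derive that rule~(ii) could not legitimately have fired to create the later of the two cliques because rule~(i) was applicable via the earlier one. The only difference is presentational: the paper routes the argument through Proposition~\ref{prop_lmcc} and the exclusive-edge sets $E_{\pi(i)}$, whereas you work directly with the creation order and the founding edge of $C_b$, which is arguably cleaner and avoids the auxiliary machinery.
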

\begin{proof}
We prove the contrapositive, i.e., if for some pairs of distinct cliques $\{C_a, C_b\}$ in $\C$ it holds that for all $x \in C_a$ and for all $y \in C_b$, $x\ne y$, $\{x,y\} \in E$, then $\C$ is not a \emph{locally minimal clique cover} of $G$.

Let $\pi$ be a permutation of $[k]$ and let $\pi(i) = a$ and $\pi(j) = b$. WLOG assume $i < j$, and consider an edge $\{y,z\} \in E_{\pi(j)}$. If no such edge $\{y, z\}$ exists for all $\pi$, then, by Proposition \ref{prop_lmcc}, $\C$ is not \emph{locally minimal}. By our assumption, for all $x\in C_a\backslash \{y,z\}$, $\{x,z\} \in E$ and $\{x,y\} \in E$. Clearly, rule (i) of the construction of \emph{locally minimal clique cover} is applicable to the edge $\{y,z\}$, but rule (ii) of the construction is being applied. Since preceding argument holds every permutation $\pi$, it follows that $\C$ is not \emph{locally minimal}.
\end{proof}

The converse of Proposition \ref{prop_lmcc2} does not hold. But, we can establish a close equivalent of the converse. This will be crucial for showing sufficiency of search for different clique covers in a search space of \emph{locally minimal clique covers}.

\begin{proposition}
\label{prop_lmcc3}
Let $\C=\{C_1, C_2, \ldots, C_k\}$ be an \emph{edge clique cover} of $G=(V,E)$. If for each pair of distinct cliques $\{C_a, C_b\}$ in $\C$, there exist $x \in C_a$, $y \in C_b$ such that $x \ne y$ and $\{x,y\} \not\in E$, then there exists a permutation $\pi$ of $\{1,\ldots,k\}$ such that $E_{\pi(i)} \ne \emptyset$, for all $i \in [k]$.
\end{proposition}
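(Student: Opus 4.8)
The plan is to build the permutation from the last position toward the first, and to reduce the whole statement to a purely local covering condition. Concretely, I would first observe the equivalence: a permutation $\pi$ with $E_{\pi(i)} \neq \emptyset$ for all $i \in [k]$ exists if and only if every nonempty subfamily $\mathcal{S} \subseteq \C$ contains a clique $C$ possessing an edge that lies in no other member of $\mathcal{S}$. The forward direction is immediate, taking $C$ to be the member of $\mathcal{S}$ that occurs last in $\pi$: its edge counted by $E$ is in particular absent from the other members of $\mathcal{S}$. For the reverse direction I would construct $\pi$ backwards—set $\mathcal{S}_k = \C$, pick a clique with an edge private within $\mathcal{S}_k$, place it in position $k$, delete it to obtain $\mathcal{S}_{k-1}$, and repeat. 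At stage $i$ the cliques already placed are exactly the other members of $\mathcal{S}_i$, so the private edge witnesses $E_{\pi(i)} \neq \emptyset$. This conclusion is the same one guaranteed by Proposition \ref{prop_lmcc}, while the present hypothesis is precisely the necessary condition isolated in Proposition \ref{prop_lmcc2}, so the reduction is the natural bridge between the two.

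Next I would note that the hypothesis is inherited by every subfamily: the pairwise non-edge condition for a pair $\{C_a, C_b\} \subseteq \mathcal{S}$ is literally the condition for that same pair in $\C$. Since within each clique all pairs are edges, a non-edge $\{x,y\}$ with $x \in C_a$, $y \in C_b$ must have $x \in C_a \setminus C_b$ and $y \in C_b \setminus C_a$; thus the hypothesis is equivalent to saying that $C_a \cup C_b$ is never a clique. Consequently the entire proposition collapses to a single key claim, to be applied to each subfamily in turn: any family of cliques in which no pairwise union is a clique contains some clique with a private edge, i.e.\ an edge contained in no other clique of the family.

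To establish this key claim I would attempt an extremal descent argument driven by the geometry of the non-edges. A clean base observation is that a clique of size two must have a private edge: if its single edge lay in another clique $C'$, then $C' \supseteq C_m$, whence $C_m \cup C' = C'$ is a clique, contradicting the hypothesis. For larger cliques I would take a size-extremal clique $C_m$, assume every edge of $C_m$ is shared, and track the forced structure: by the hypothesis no covering clique contains or is contained in $C_m$, so each covering clique overlaps $C_m$ properly and drags in vertices outside $C_m$; the mandatory non-edge between $C_a \setminus C_b$ and $C_b \setminus C_a$ then constrains how those outside vertices attach. The aim is to turn this ``sharing forces new, partially non-adjacent vertices'' phenomenon into a well-founded descent that cannot terminate.

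The hard part will be exactly this key claim, and I would flag it as the genuine obstacle. The set-theoretic analogue—ordering arbitrary finite sets so that each brings a new element—is false even under a system of distinct representatives, so any correct proof must use clique geometry rather than mere incidence counts. Moreover the naive size-extremal choices do not visibly close: a maximum clique can have all its edges covered by strictly smaller sub-cliques, and a minimum clique can be overlapped by larger ones, so in neither case does a private edge drop out from size alone. The crux is therefore to extract from the global collection of forced non-edges a contradiction with the assumption that every clique's every edge is shared, and I expect essentially all of the work to sit in controlling how those non-edges among mutually overlapping cliques interact.
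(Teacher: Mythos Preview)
Your reduction is clean and correct: the permutation exists iff every nonempty subfamily of $\C$ contains a clique with an edge private to that subfamily, and the pairwise non-edge hypothesis is inherited by subfamilies. You are also right to flag the remaining ``key claim'' as the crux. Unfortunately, that key claim is \emph{false}, so the gap you identified cannot be closed.

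A concrete counterexample is the octahedron: take $V=\{1,\dots,6\}$ with non-edges exactly $\{1,4\},\{2,5\},\{3,6\}$, and let $\C$ be the family of all eight triangles. This $\C$ is an edge clique cover (each of the $12$ edges lies in exactly two triangles). The hypothesis holds: any two distinct triangles either share an edge, say $\{a,b,c\}$ and $\{a,b,d\}$, in which case $\{c,d\}$ is a non-edge (by maximality of the triangles), or their union has $\ge 5$ vertices and hence contains one of the three non-edges. Yet no triangle has a private edge, since every edge sits in two triangles; in particular, for \emph{every} permutation $\pi$ the last triangle $C_{\pi(8)}$ has all three of its edges already present among $C_{\pi(1)},\dots,C_{\pi(7)}$, so $E_{\pi(8)}=\emptyset$. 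Thus your key claim fails on $\C$ itself, and in fact the proposition as stated is false.

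For comparison, the paper's proof proceeds by contraposition and arrives at a residual subfamily $\{C_1,\dots,C_l\}$ in which no clique can be placed last; it then asserts that in this situation some clique $C_g$ is non-trivially covered by others and that ``any pair of cliques in the non-trivial covering makes $P_1$ false.'' The octahedron shows this last assertion is unjustified: $\{1,2,3\}$ is covered by $\{1,2,6\},\{1,5,3\},\{4,2,3\}$, yet no pairwise union among these three is a clique. So the paper's argument breaks at exactly the same point your descent stalls, and your instinct that ``essentially all of the work'' sits there was accurate.
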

\begin{proof}
Let $P_1$ denote the property that for each pair of distinct cliques $\{C_a , C_b\}$ in $\C$, there exist $x\in C_a$, $y\in C_b$ such that $x\ne y$ and $\{x,y\} \not\in E$. Let $P_2$ denote the property that for all permutation $\pi$ of $\{1, 2, \ldots, k\}$ there exists $i \in [k]$ such that $E_{\pi(i)} = \emptyset$. We want to show $P_1 \implies \neg P_2$, equivalently $P_2 \implies \neg P_1$.

Consider any $C_j$ such that $\pi(k) = j$ and $E_{\pi(k)} \ne \emptyset$. We remove $C_j$ from further consideration by considering a permutation $\pi$ such that $\pi(k) = j$. We restrict our permutation space to $\{1,\ldots, k-1\}$ from $\{1,\ldots, k\}$ as depicted below.

\begin{table}[!hbtp]
    \centering
    \begin{tabular}{|c|c|c|c||c|}
    \hline
         1 & 2 & \ldots & $ k-1$ & $k$ \\
         \hline
         $C_{\pi(1)}$ & $C_{\pi(2)}$ &\ldots  & $C_{\pi(k-1)}$ & $C_j$\\
         \hline
    \end{tabular}
\end{table}

Repeatedly removing all such $C_j$ depicted above from further consideration, we would be able to restrict our permutation space to the permutations of $\{1,\ldots,l\}$ such that $E_{\pi(l)} = \emptyset$, for any permutation $\pi$ of $\{1,\ldots,l\}$. Note that $l > 1$ since $E_{\pi(1)} \ne \emptyset$ for all $\pi$. Following two cases completely characterize the restricted permutation space.

\begin{enumerate}
    \item For $i , j \in [l]$ there exists a pair of distinct cliques $\{C_i, C_j\}$ such that $C_i = C_j$.
    \item A least one non-trivial covering of some clique $C_g$ of $G$ exists in $\{C_1,\ldots, C_l\}$ (a covering of a clique is trivial if it consists of a single clique).
\end{enumerate}

It is straightforward to see that if (1) holds, then $P_1$ is false. If (2) holds, then any pairs of cliques in the non-trivial covering makes $P_1$ false.
\end{proof}

Next characterization shows a close equivalent of the converse of Proposition \ref{prop_lmcc2}.

\begin{proposition}
\label{prop_lmcc4}
Let $\C=\{C_1, C_2, \ldots, C_k\}$ be an \emph{edge clique cover} of $G=(V,E)$. If for each pair of distinct cliques $\{C_a, C_b\}$ in $\C$, there exist $x \in C_a$, $y \in C_b$ such that $x \ne y$ and $\{x,y\} \not\in E$, then there exists a \emph{locally minimal clique cover} $\C^{*}$ of $G$ such that $|\C|= |\C^{*}|$ and $C^{*}_l \subseteq C_l$, for all $l \in [k]$.
\end{proposition}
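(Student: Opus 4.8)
The plan is to use the hypothesis to extract a good ordering of the cliques, read off a canonical sub-cover from the exclusive edges, and then certify that this sub-cover is locally minimal by exhibiting a bottom-up construction that realizes it. Since the hypothesis is exactly property $P_1$ of Proposition~\ref{prop_lmcc3}, I first invoke that proposition to obtain a permutation $\pi$ of $[k]$ with $E_{\pi(i)}\neq\emptyset$ for every $i$; after relabelling I may assume $\pi$ is the identity, so each clique $C_i$ owns a nonempty set $E_i$ of edges first covered by it. For each $i$ I let $C_i^{*}$ be the set of endpoints of the edges in $E_i$. Because every edge of $E_i$ lies inside the clique $C_i$, its endpoints lie in $C_i$, so $C_i^{*}\subseteq C_i$ and $C_i^{*}$ is itself a clique; moreover $C_i^{*}\neq\emptyset$ since $E_i\neq\emptyset$.

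Next I would do the easy bookkeeping. The sets $\{E_i\}$ partition $E$ (each edge is first covered at a unique position), and $C_i^{*}$ covers every edge of $E_i$, so $\C^{*}=\{C_1^{*},\dots,C_k^{*}\}$ is an edge clique cover of $G$ with $C_i^{*}\subseteq C_i$. To get $|\C^{*}|=k$ I argue the cliques are distinct: if $C_i^{*}=C_j^{*}$ with $i<j$, then every edge of $E_j$ has both endpoints in $C_i^{*}\subseteq C_i$, hence lies in the clique $C_i$ and was already covered at position $i<j$, contradicting $E_j\neq\emptyset$. Thus $\C^{*}$ has the required size and sub-cover structure, and only its local minimality remains.

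To establish local minimality I would run the construction of Definition~\ref{def_lmcc}, processing the edges group by group in the order $E_1,E_2,\dots,E_k$ (and arbitrarily within a group). The intended invariant is that after the $i$-th group the current cover is exactly $\{C_1^{*},\dots,C_i^{*}\}$: the edges still uncovered inside $C_i^{*}$ at the start of round $i$ are precisely those of $E_i$, and every subsequent uncovered edge of $E_i$ can be absorbed by extending the partial $C_i^{*}$, since both of its endpoints lie in $C_i^{*}\subseteq C_i$ and so rule~(i) keeps the clique inside $C_i$. The delicate point is the \emph{seed}: the first edge of $E_i$ must trigger rule~(ii), that is, it must fail to extend each already-built clique $C_l^{*}$ with $l<i$. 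Note that any such extension would be \emph{bad}, because an edge of $E_i$ does not lie in $C_l$, so extending $C_l^{*}$ by it would push $C_l^{*}$ outside $C_l$.

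The main obstacle is therefore the seeding claim: under $P_1$, for each $i$ there is an edge $e\in E_i$ extending no earlier $C_l^{*}$. I would prove it by contradiction, mirroring the dichotomy at the end of the proof of Proposition~\ref{prop_lmcc3}. Writing $W_l$ for the set of vertices adjacent to all of $C_l^{*}$, an edge of $E_i$ extends $C_l^{*}$ exactly when both its endpoints lie in $W_l$. Fix $l<i$ and the witness $x\in C_i$, $y\in C_l$ with $\{x,y\}\notin E$ guaranteed by $P_1$: if $y\in C_l^{*}$, then $x\notin W_l$, so no edge of $E_i$ through $x$ extends $C_l^{*}$, killing $C_l^{*}$ as a blocker; the only escape is that every such witness is a \emph{redundant} vertex of $C_l\setminus C_l^{*}$, in which case $C_l^{*}$ is completely joined to $C_i$ and $C_l^{*}\cup C_i$ is a clique. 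Assuming the claim fails for \emph{all} edges of $E_i$ simultaneously forces, for each $l$, one of these two configurations, and I expect that collecting them — covering the exclusive-edge graph on the vertex set $C_i^{*}$ by the common-neighbourhood cliques $W_l\cap C_i^{*}$ — yields either two coinciding cliques or a non-trivial covering of a clique among $C_1,\dots,C_k$, which is precisely what $P_1$ forbids. Discharging this simultaneity is the step I expect to be hardest; once it is in hand, rule~(ii) seeds each $C_i^{*}$, the invariant is maintained throughout, and $\C^{*}$ is produced by a legal bottom-up construction, hence is locally minimal.
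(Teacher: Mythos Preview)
Your construction of $\C^{*}$ via Proposition~\ref{prop_lmcc3} and the endpoint sets $C_i^{*}=\{x:\{x,y\}\in E_i\}$ is exactly what the paper does, and your verification that $\C^{*}$ is an edge clique cover of size $k$ with $C_i^{*}\subseteq C_i$ is more careful than the paper's one-line assertion. In fact the paper's proof stops there: it never explicitly checks that $\C^{*}$ is locally minimal in the sense of Definition~\ref{def_lmcc}. So the part of your proposal that goes beyond the paper --- the bottom-up reconstruction of $\C^{*}$ --- is addressing a point the paper simply elides.

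That said, your reconstruction has a genuine gap at precisely the place you flag. The ``seeding claim'' (that some edge of $E_i$ fails to extend every earlier $C_l^{*}$) is not proved, and the outline you give for it does not close. Your dichotomy says that for each fixed $l<i$ either some vertex of $C_l^{*}$ witnesses a non-edge into $C_i$, or else $C_l^{*}\cup C_i$ is a clique; but you need a \emph{single} edge of $E_i$ that simultaneously fails against \emph{all} $l<i$, and the pairwise witnesses you obtain from $P_1$ need not be compatible (different $l$'s may block different vertices of $C_i^{*}$, and the edges of $E_i$ may be covered by the common-neighbourhood sets $W_l\cap C_i^{*}$ without any two $C_l$'s forming a clique with each other). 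The appeal to ``two coinciding cliques or a non-trivial covering of a clique among $C_1,\dots,C_k$'' conflates a covering of the exclusive-edge graph on $C_i^{*}$ by the $W_l$'s with a covering of the clique $C_i$ by the $C_l$'s themselves, and these are not the same thing. As written, the argument does not go through, and it is not clear that this particular $\C^{*}$ is always locally minimal; you may need to allow a different sub-cover or a more adaptive edge ordering to realise Definition~\ref{def_lmcc}.
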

\begin{proof}
From Proposition \ref{prop_lmcc3}, we have a permutation $\pi$ of $\{1,\ldots,k\}$ such that $E_{\pi(i)} \ne \emptyset$ for all $i \in [k]$. Consider any such $\pi$, and let $C^{*}_{\pi(i)} = \{x | \{x,y\} \in E_{\pi(i)}\}$. Clearly, $C^{*}_{\pi(i)} \subseteq C_{\pi(i)}$ for all $i \in [k]$, and by definition of $E_{\pi(i)}$, $\C^{*}$ is an \emph{edge clique cover} of $G$ with size $|\C|$.
\end{proof}

Following characterization shows that it is sufficient to search for \emph{minimum clique cover} in a search space of \emph{locally minimal clique cover}.

\begin{proposition}
\label{prop_lmcc5}
Let $\C= \{C_1, C_2, \ldots, C_k\}$ be an \emph{edge clique cover} of a graph $G=(V,E)$. If $\C$ is a \emph{minimum clique cover} of $G$, then there exists a \emph{locally minimal clique cover} $\C^{*}$ of $G$ such that $|\C^{*}| = |\C|$ and $C^{*}_l \subseteq C_l$ for all $l \in [k]$.
\end{proposition}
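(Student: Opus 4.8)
The plan is to reduce the statement to Proposition \ref{prop_lmcc4} by showing that any \emph{minimum clique cover} automatically satisfies the pairwise non-adjacency hypothesis appearing in that proposition. Concretely, I would first establish the following claim: if $\C$ is a \emph{minimum clique cover}, then for every pair of distinct cliques $\{C_a, C_b\}$ in $\C$ there exist $x \in C_a$ and $y \in C_b$ with $x \ne y$ and $\{x,y\} \not\in E$. Granting this claim, the hypothesis of Proposition \ref{prop_lmcc4} is met, and that proposition directly yields a \emph{locally minimal clique cover} $\C^{*}$ with $|\C^{*}| = |\C|$ and $C^{*}_l \subseteq C_l$ for all $l \in [k]$, which is exactly the desired conclusion.

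The heart of the argument is the claim, which I would prove by contradiction. Suppose some pair of distinct cliques $C_a, C_b \in \C$ violates it, i.e., for all $x \in C_a$ and all $y \in C_b$ with $x \ne y$ we have $\{x,y\} \in E$. I would then argue that $C_a \cup C_b$ is itself a clique: every pair of distinct vertices both lying in $C_a$ (resp. both in $C_b$) is an edge because $C_a$ (resp. $C_b$) is a clique, while every pair with one endpoint in $C_a$ and the other in $C_b$ is an edge by the assumed violation. Replacing $C_a$ and $C_b$ by the single clique $C_a \cup C_b$ produces the set $(\C \setminus \{C_a, C_b\}) \cup \{C_a \cup C_b\}$, which still covers every edge previously covered by $C_a$ or $C_b$ (both are subsets of the union, and the union is a clique), hence remains an \emph{edge clique cover}, but uses strictly fewer cliques. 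This contradicts the minimality of $\C$, establishing the claim.

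Finally I would assemble the pieces: the claim shows $\C$ satisfies the hypothesis of Proposition \ref{prop_lmcc4}, and invoking that proposition gives the required $\C^{*}$. I expect the main (really the only) obstacle to be the claim itself, specifically verifying that the violation of the non-adjacency condition is exactly the statement that $C_a \cup C_b$ is a clique, so that the merge is legitimate and strictly decreases the number of cliques. Two minor points I would check in passing are that $\C$, being a set of cliques, contains no repeated cliques (otherwise a duplicate could simply be dropped, again contradicting minimality), and that the merged cover indeed loses exactly one clique whether or not $C_a$ and $C_b$ overlap; both follow immediately from the set-based formulation.
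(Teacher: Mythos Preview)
Your proposal is correct and follows exactly the same route as the paper: verify that a \emph{minimum clique cover} satisfies the pairwise non-adjacency hypothesis (since otherwise two cliques could be merged, contradicting minimality), and then invoke Proposition~\ref{prop_lmcc4}. The paper simply asserts this hypothesis without spelling out the merge argument, so your version is a more detailed rendering of the same proof.
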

\begin{proof}
Any \emph{minimum clique cover} satisfies the property that for each pair of distinct cliques $\{C_i, C_j\} \in \C$, there exits $x \in C_i$, $y \in C_j$ such that $x\ne y$ and $\{x,y\} \not\in E$. Therefore, the claim follows from Proposition \ref{prop_lmcc4}.
\end{proof}

In fact, a much stronger characterization exists for \emph{assignment-minimum clique cover}.

\begin{proposition}
\label{prop_lmcc6}
Let $\C= \{C_1, C_2, \ldots, C_k\}$ be an \emph{edge clique cover} of a graph $G=(V,E)$. If $\C$ is an \emph{assignment-minimum clique cover} of $G$, then there exists a \emph{locally minimal clique cover} $\C^{*}$ of $G$ such that $|\C^{*}| = |\C|$ and $C^{*}_l = C_l$ for all $l \in [k]$.
\end{proposition}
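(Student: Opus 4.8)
The plan is to mirror the argument behind Proposition~\ref{prop_lmcc5}, but to upgrade the containment $C_l^* \subseteq C_l$ to an equality by exploiting that assignment-minimality forbids shrinking any clique. Concretely, I would first try to verify that an assignment-minimum cover $\C$ satisfies the pairwise non-adjacency property used in Proposition~\ref{prop_lmcc4} (call it $P_1$): for every pair of distinct cliques $\{C_a,C_b\}$ in $\C$ there exist $x\in C_a$, $y\in C_b$ with $x\ne y$ and $\{x,y\}\notin E$. Granting $P_1$, Proposition~\ref{prop_lmcc4} hands me a locally minimal clique cover $\C^*$ with $|\C^*|=|\C|$ and $C_l^*\subseteq C_l$ for all $l\in[k]$.

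The equality $C_l^*=C_l$ would then be routine. Since $\C^*$ is itself an edge clique cover of $G$ and $\C$ is assignment-minimum, we have $\sum_l |C_l^*|\ge \sum_l |C_l|$; on the other hand $C_l^*\subseteq C_l$ gives $\sum_l |C_l^*|\le \sum_l |C_l|$. Hence the two sums agree, and combined with $|C_l^*|\le |C_l|$ this forces $|C_l^*|=|C_l|$, so that $C_l^*\subseteq C_l$ becomes $C_l^*=C_l$ for every $l$. So once $P_1$ is in hand, the conclusion is immediate.

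The entire difficulty is therefore concentrated in establishing $P_1$ for assignment-minimum covers. The natural attempt is the merging move used for ordinary minimum clique covers: if $\{C_a,C_b\}$ violates $P_1$, then $C_a\cup C_b$ is a clique, and replacing $C_a,C_b$ by $C_a\cup C_b$ yields a valid cover whose assignment count drops by exactly $|C_a\cap C_b|$. When $C_a\cap C_b\ne\emptyset$ this strictly decreases $\sum_l |C_l|$, contradicting minimality, so that sub-case is clean.

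The main obstacle is the disjoint case $C_a\cap C_b=\emptyset$. Here merging leaves the assignment count unchanged (it only lowers the number of cliques), so -- unlike the minimum-clique-cover setting of Proposition~\ref{prop_lmcc5}, where a drop in the clique count is already a contradiction -- no contradiction is immediate. To rescue the argument I would try to use assignment-minimality more forcefully: after merging $C_a,C_b$ into $C_a\cup C_b$, every cross-edge $\{v,u\}$ with $v\in C_a$, $u\in C_b$ is covered by the merged clique, and I would hope that some clique previously responsible for such a cross-edge now contains a vertex whose only exclusive edge was that cross-edge, hence a removable vertex and a strictly cheaper cover. I regard this as the crux and the least certain step: an assignment-minimum cover might contain a disjoint mergeable pair all of whose cross-edges simultaneously carry other exclusive edges, in which case no vertex becomes removable and $P_1$ genuinely fails. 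Determining whether this situation can arise -- and, if it can, whether the statement must be weakened to $C_l^*\subseteq C_l$ as in Proposition~\ref{prop_lmcc5} -- is exactly where I would concentrate the hardest effort.
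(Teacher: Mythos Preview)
Your plan is exactly the paper's proof. The paper's argument reads, almost verbatim: assignment-minimum covers satisfy the pairwise non-adjacency property $P_1$; apply Proposition~\ref{prop_lmcc4} to obtain $\C^*$ with $|\C^*|=|\C|$ and $C_l^*\subseteq C_l$; then $C_l^*=C_l$ for all $l$, else the smaller total $\sum_l|C_l^*|$ would contradict assignment-minimality. So on the level of strategy there is nothing to add.

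Where you go beyond the paper is in scrutinising $P_1$. The paper simply \emph{asserts} that every assignment-minimum cover satisfies $P_1$ and moves on; it does not treat (or even mention) the disjoint case $C_a\cap C_b=\emptyset$ with $C_a\cup C_b$ a clique that you isolate. Your worry is well-placed: merging such a pair preserves the assignment count, so assignment-minimality alone does not immediately rule it out, and if such a pair existed then by Proposition~\ref{prop_lmcc2} no locally minimal cover could coincide with $\C$, so the proposition as stated would fail for that particular $\C$. In other words, the step you flag as ``the crux and the least certain'' is precisely a point the paper glosses over. Your proposal is therefore at least as complete as the paper's own proof; the residual difficulty you identify is not a defect of your argument relative to the paper but a genuine lacuna shared by both.
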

\begin{proof}
Any \emph{assignment-minimum clique cover} satisfies the property that for each pair of distinct cliques $\{C_i, C_j\} \in \C$, there exits $x \in C_i$, $y \in C_j$ such that $x\ne y$ and $\{x,y\} \not\in E$.  Therefore, we can use the construction described in Proposition \ref{prop_lmcc4} to get a clique cover $\C^*$ such that $|\C^*| = |\C|$. For \emph{assignment-minimum clique cover}, it must be the case that $C^{*}_l = C_l$ for all $l \in [k]$; otherwise that would contradict the minimality of number of individual assignments of vertices to cliques in $\C$.
\end{proof}

A characterization that holds for both \emph{minimal clique cover} and \emph{locally minimal clique cover} is as follows.

\begin{proposition}
\label{prop_lmcc7}
Let $\C= \{C_1, C_2, \ldots, C_k\}$ be an \emph{edge clique cover} of a graph $G=(V,E)$. If $\C$ is \emph{locally minimal}, then for each pair of distinct cliques $\{C_i, C_j\} \in \C$, $C_i \not\subseteq C_j$.
\end{proposition}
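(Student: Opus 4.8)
The plan is to argue by contradiction, directly from the bottom-up construction in Definition \ref{def_lmcc} that certifies local minimality. Suppose, for distinct cliques $C_i, C_j \in \C$, that $C_i \subseteq C_j$. One might first hope to use the permutation $\pi$ furnished by Proposition \ref{prop_lmcc}, for which every $E_{\pi(l)} \ne \emptyset$: if $C_i$ happened to follow $C_j$ in $\pi$, then every edge of $C_i$ would already lie in the earlier clique $C_j$, forcing the exclusive edge set of $C_i$ to be empty and giving the contradiction immediately. The trouble is that Proposition \ref{prop_lmcc} only guarantees \emph{some} good ordering, which may place $C_i$ before $C_j$, and reordering to swap them need not preserve the nonemptiness of every exclusive edge set. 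I would therefore abandon the permutation route and reason instead about the history of the construction.

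Since each step of the construction processes a single uncovered edge and creates at most one new clique, $C_i$ and $C_j$ are created at two distinct times; let $C_p$ be whichever of them is created later and $C_q$ the other, and let $\{a,b\}$ be the founding edge of $C_p$. Because $C_p$ is created by rule (ii), at its creation time $t_p$ the edge $\{a,b\}$ could not be absorbed by any clique then present; in particular the current state $C_q^{(t_p)}$ of $C_q$ satisfies that $C_q^{(t_p)} \cup \{a,b\}$ is \emph{not} a clique. The key step is to show this is impossible. Cliques only grow during the construction, so $C_q^{(t_p)} \subseteq C_q$, and the founding vertices of a clique never leave it, so $\{a,b\} \subseteq C_p$. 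Using $C_i \subseteq C_j$, the founding edge satisfies $\{a,b\} \subseteq C_p \subseteq C_j$ (as $C_p$ is either $C_i$ or $C_j$), and likewise $C_q^{(t_p)} \subseteq C_q \subseteq C_j$. Hence $C_q^{(t_p)} \cup \{a,b\} \subseteq C_j$, and since any subset of the clique $C_j$ is itself a clique, $C_q^{(t_p)} \cup \{a,b\}$ is a clique, contradicting the precondition of rule (ii).

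The main obstacle, and the part requiring care, is the temporal bookkeeping: the argument hinges on reasoning about the \emph{state} of a clique at a particular moment and on the monotonicity facts that cliques never shrink and founding vertices are never removed. One subtlety to dispatch is that the founding edge $\{a,b\}$ of $C_p$ is genuinely uncovered at time $t_p$, so that rule (ii) truly applies; this holds because if both $a$ and $b$ had already belonged to a common clique, the edge would already have been covered. Once the monotonicity is in place, both creation orders of $C_i$ and $C_j$ collapse into the single observation that the relevant vertex sets sit inside the clique $C_j$, which yields the contradiction uniformly and completes the proof.
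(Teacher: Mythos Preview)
Your proof is correct, but it takes a different route from the paper's. The paper simply invokes Proposition~\ref{prop_lmcc2}: if $C_i \subseteq C_j$, then every $x \in C_i$ and $y \in C_j$ with $x \ne y$ are adjacent (both lie in the clique $C_j$), directly contradicting the conclusion of Proposition~\ref{prop_lmcc2}. So the paper's argument is a two-line reduction to an earlier lemma.

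You instead unpack the construction history: identify the later-created of the two cliques, look at its founding edge, and use monotonicity of the cliques together with $C_i \subseteq C_j$ to show that the earlier clique could already have absorbed that edge, violating the precondition for rule~(ii). This is essentially a specialization of the proof of Proposition~\ref{prop_lmcc2} to the containment situation, carried out in full rather than cited. Your approach is more self-contained and makes the temporal mechanics explicit, which is pedagogically useful; the paper's approach is more economical because the heavy lifting was already done in Proposition~\ref{prop_lmcc2}. Your initial observation that Proposition~\ref{prop_lmcc} alone (the mere existence of a good permutation) does not suffice is also accurate, and it is precisely why the paper reaches for the stronger Proposition~\ref{prop_lmcc2} instead.
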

\begin{proof}
For the sake of contradiction assume $\C$ is \emph{locally minimal} and there exists a pair of distinct cliques $\{C_i, C_j\} \in \C$ such that $C_i \subseteq C_j$. This implies there exists a pair of distinct cliques $\{C_i, C_j\} \in \C$ such that for all $x \in C_i$, and for all $y \in C_j$, $x\ne y$, $\{x,y\} \in E$. By Proposition \ref{prop_lmcc2}, we have a contradiction to the assumption that $\C$ is \emph{locally minimal}.
\end{proof}

Note that Proposition \ref{prop_lmcc2} implies Proposition \ref{prop_lmcc7}, but the converse does not hold. Our next characterization shows that a tight upper bound on the size of \emph{minimum clique cover} (Theorem 2 \cite{erdos1966representation}) also holds for \emph{locally minimal clique cover}.

\begin{proposition}
\label{prop_lmcc8}
Let $\C = \{C_1, C_2, \ldots, C_k\}$ be an \emph{edge clique cover} of a graph $G=(V,E)$ with $n$ vertices. If $\C$ is locally minimal, then $k \leq \lfloor \frac{n^2}{4} \rfloor$.
\end{proposition}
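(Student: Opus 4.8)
The plan is to prove the bound by induction on $n$, mirroring the classical argument of \cite{erdos1966representation} that gives the same bound for a \emph{minimum clique cover}, while using local minimality only through one structural consequence. First I would invoke Proposition \ref{prop_lmcc2}: since $\C$ is locally minimal, for every pair of distinct cliques $\{C_a,C_b\}$ there is a non-adjacent pair of representatives, equivalently $C_a\cup C_b$ is never itself a clique, i.e.\ no two cliques of $\C$ can be merged. This \emph{merge-free} property is the only feature of $\C$ I intend to use, so it suffices to prove that any edge clique cover with this property has at most $\lfloor n^2/4\rfloor$ cliques. Note the exclusive-edge partition from Proposition \ref{prop_lmcc} only yields the weak bound $k\le m$, which exceeds $\lfloor n^2/4\rfloor$ for dense graphs; the point of the induction is to replace $m$ by the Mantel-type quantity $\lfloor n^2/4\rfloor$.

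The easy base case is when $G$ is triangle-free. Then every clique of $G$ is a single edge, so a merge-free cover must consist of distinct single-edge cliques and, being a cover, must contain all $m$ edges; by Mantel's theorem $m\le \lfloor n^2/4\rfloor$, which settles this case immediately.

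For the inductive step I would delete a vertex $v$ of minimum degree and compare $\C$ with a merge-free cover of $G-v$. The arithmetic that drives the induction is the identity $\lfloor n^2/4\rfloor-\lfloor (n-1)^2/4\rfloor=\lfloor n/2\rfloor$, which supplies a budget of $\lfloor n/2\rfloor$ extra cliques for reinstating $v$. Removing $v$ from every clique of $\C$ that contains it yields a cover $\C''$ of $G-v$ of size exactly $|\C|$ minus the number of two-element cliques $\{v,w\}\in\C$; since those two-cliques number at most $\deg(v)$, I get $|\C|\le |\C''|+\deg(v)$. When the minimum degree is at most $\lfloor n/2\rfloor$, this gives $|\C|\le |\C''|+\lfloor n/2\rfloor$, and applying the induction hypothesis to $\C''$ closes the bound.

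The hard part will be twofold, and is where genuine care is required. First, the residual cover $\C''$ need not be merge-free: deleting $v$ can create a mergeable pair consisting of a clique that contained $v$ and one that did not, because removing $v$ can eliminate the unique non-adjacency that previously separated them (a pair of cliques \emph{both} containing $v$ stays non-mergeable, so only cross pairs are at issue). One cannot simply re-merge $\C''$ down to a merge-free cover, since merging lowers the count and would bound the wrong quantity; the delicate step is to show that each forced merge in $G-v$ is witnessed by a clique of $\C$ that becomes redundant or absorbed, so that the upper bound on $|\C''|$, and hence on $|\C|$, is preserved. Second, a minimum-degree deletion only provides the budget $\lfloor n/2\rfloor$, which is insufficient in the dense regime where every vertex has degree exceeding $n/2$; this case must be handled separately, using that a merge-free cover of a dense graph is forced to use large cliques, so that few of them suffice, paralleling the corresponding case split in \cite{erdos1966representation}. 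Reconciling these two regimes while maintaining the merge-free invariant is the crux of the argument.
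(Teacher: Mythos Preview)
Your plan is explicitly incomplete---you flag two unresolved issues (merge-freeness of the residual cover, and the dense regime where $\delta(G)>\lfloor n/2\rfloor$)---and those are real obstacles in the single-vertex-deletion scheme. The paper's argument sidesteps both by making two different choices.

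First, the paper removes an \emph{edge} $\{x,y\}$ (two adjacent vertices) rather than a single vertex of minimum degree. The relevant arithmetic is $\lfloor (n+2)^2/4\rfloor-\lfloor n^2/4\rfloor=n+1$, so the budget for reinstating the pair is $n+1$, and this budget is available unconditionally: no case split on minimum degree is needed. The edges to be reinserted are those of $E_{xy}=\{\{w,z\}\in E: z\in\{x,y\}\}$, at most $2n+1$ many. The key counting observation is that for each $z\in V\setminus\{x,y\}$ the two edges $\{x,z\}$ and $\{y,z\}$ together cost at most one new clique in the locally minimal construction: if $\{x,z\}$ cannot be absorbed into an existing clique and spawns a new clique $\{x,z\}$, then $\{y,z\}$ can be absorbed into that very clique because $\{x,y\}\in E$. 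With one more clique for $\{x,y\}$ itself, at most $n+1$ new cliques are created. This is exactly where the adjacency of the deleted pair pays off and why your single-vertex scheme has to fight the dense case.

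Second, the paper does not pass through the merge-free abstraction at all. It stays with the \emph{process} defining a locally minimal cover: the induction hypothesis is applied to a locally minimal cover of $H=G[V\setminus\{x,y\}]$, and the extension by the edges of $E_{xy}$ is analysed as a continuation of the construction. This avoids your first difficulty (that deleting a vertex from a merge-free cover need not leave a merge-free cover), because the paper never asserts that the restriction of $\C$ is merge-free; it only needs that \emph{some} locally minimal cover of $H$ exists to feed the induction, and then bounds how many cliques the locally minimal rule can create while processing $E_{xy}$ in any order.

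So the structural suggestion is: abandon the single-vertex deletion and the merge-free reduction, and instead delete an adjacent pair and argue directly about the incremental construction. That collapses both of your hard parts.
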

\begin{proof}
We induct on $n$. It is straightforward to see that the claim holds for $n=2$ and $n=3$. Assume that the claim holds for all $n_0 \leq n$.

Let $\C = \{C_1, C_2, \ldots, C_k\}$ be a \emph{locally minimal clique cover} of a graph $G=(V,E)$ with $n+2$ vertices. Let $\{x,y\}$ be an edge of $G$, and $E_{xy}$ be the set of edges incident to $x$ or $y$, including the edge $\{x,y\}$, i.e., $E_{xy} = \{\{w,z\} \in E | z \in \{x,y\}\}$. Let $H=(V_H,E_H)$ be the subgraph induces by vertices of $V\backslash \{x,y\}$ in $G$. Note that $E_H = E \backslash E_{xy}$.

Let $\C^{*} = \{C_1, C_2, \ldots, C_{k^{*}}\}$ be a \emph{locally minimal clique cover} of $H$. Since $|V_H|= n$, by inductive hypothesis, $k^{*} \leq \lfloor \frac{n^2}{4} \rfloor$.

Number of edges in $E_{xy}$ is at most $2n+1$. By the construction of \emph{locally minimal clique cover}, number of additional cliques needed to cover edges of $E_{xy}$ is at most $n+1$. This can be seen as follows.

Let $z$ be any vertex of $V_H$. Assume $\{x,z\}$ and $\{y,z\}$ exist in $G$. WLOG, assume we cover the edge $\{x,z\}$ first, either with a clique of $\C^{*}$ or a new clique $C_{k^{*}+1}= \{x,z\}$. Next, we may be able to cover the edge $\{y,z\}$ with a clique of $\C^{*}$; if not, then the clique $C_{k^{*}+1}= \{x,z\}$, we just have created, is sufficient to cover the edge $\{y,z\}$, since the edge $\{x,y\}$ exists in $G$. In any of the cases, to cover edges incident from $x$ and $y$ to $z$, at most one additional clique would be needed (this also holds if any of the edges $\{x,z\}$ and $\{y,z\}$ does not exist in $G$). Therefore, at most $n+1$ additional cliques would be needed, considering all $n$ vertices of $V_H$ and the edge $\{x,y\}$.

Note that the ordering of covering edges of $E_{xy}$ does not affect the preceding argument. To see this, let $z_1$ and $z_2$ be two distinct vertices of $V_H$. If the edges $\{x,z_1\}$, $\{x,z_2\}$, and $\{z_1,z_2\}$ exist in $G$, then we may cover these edges with a clique $\{x,z_1,z_2\}$. If the edge $\{y,z_1\}$ and $\{y,z_2\}$ both exist in $G$, then we would not need any additional cliques, since the clique $\{x,y,z_1, z_2\}$ would cover all the edges. Now, if only one of $\{y,z_1\}$ and $\{y, z_2\}$ exits in $G$, then we may need one additional clique. Therefore, to cover edges incident on $z_1$ and $z_2$ from $x$ and $y$, at most two additional cliques would be needed. Generalizing this argument, we see that any $\{z_1, z_2,\ldots, z_q\} \subseteq V_H$, we would need at most $q$ additional cliques to cover all edges incident from $x$ and $y$ to vertices of $\{z_1, z_2,\ldots, z_q\}$.

We conclude that $k \leq k^{*} + n + 1 \leq \lfloor \frac{n^2}{4} \rfloor + n +1 = \lfloor \frac{(n+2)^2}{4} \rfloor$.
\end{proof}

Proposition \ref{prop_lmcc8} gives us a characterization of compactness of \emph{locally minimal clique cover} in terms of number of cliques. In subsequent characterizations, we show compactness of \emph{locally minimal clique cover} in terms of vertex assignments and edge assignments.

\begin{proposition}
\label{prop_lmcc9}
Let $\C = \{C_1, C_2, \ldots, C_k\}$ be an \emph{edge clique cover} of a graph $G=(V,E)$, and $N(x)$ denote the neighbours of a vertex $x$. If $\C$ is \emph{locally minimal}, then each vertex $x$ appears in at most $|N(x)|$ distinct cliques of $\C$.
\end{proposition}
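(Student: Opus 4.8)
The plan is to argue directly from the bottom-up construction of Definition \ref{def_lmcc}, building an injection from the set of cliques of $\C$ that contain a fixed vertex $x$ into the edges incident to $x$; since the graph is simple, the latter set has size exactly $|N(x)|$, which gives the bound.

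First I would fix the sequence of steps by which $\C$ is assembled. Because $\C$ is \emph{locally minimal}, it arises from the empty cover by repeatedly selecting an uncovered edge and either expanding one existing clique via rule (i) or creating a singleton-edge clique via rule (ii). I would record two structural facts about this process. (a) Cliques never lose vertices: each step only replaces some $C_l$ by $C_l \cup \{x,y\}$ or appends a new clique, so once an edge is covered it stays covered; consequently each edge of $G$ is \emph{selected} at most once, since an already-covered edge is never again an uncovered edge available for selection. (b) A vertex is added to a clique only at a step whose selected edge is incident to that vertex, because both rules add only the two endpoints of the selected edge.

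Next, for each clique $C \in \C$ with $x \in C$, I would let $s_C$ be the first step at which $x$ lies in the clique whose final form is $C$. At step $s_C$ the vertex $x$ is genuinely added (it was absent immediately before), so by (b) the edge selected at that step has the form $\{x, w_C\}$ with $w_C \in N(x)$. This assigns to every clique containing $x$ a neighbour $w_C$ of $x$.

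The crux is to show that $C \mapsto w_C$ is injective; this is the step I expect to require the most care, since distinct cliques could a priori both acquire $x$ along a shared neighbour. Suppose $w_C = w_{C'} = w$ for distinct cliques $C, C'$ both containing $x$. Then both $s_C$ and $s_{C'}$ are steps selecting the edge $\{x,w\}$, but by (a) that edge is selected at most once, so $s_C = s_{C'}$. A single step, however, modifies or creates exactly one clique and hence adds $x$ to at most one clique; therefore $C = C'$, a contradiction. Injectivity shows that the number of cliques containing $x$ is at most the number of available images, namely $|N(x)|$, which completes the proof. (Alternatively one could route the same correspondence through the exclusive-edge permutation guaranteed by Proposition \ref{prop_lmcc}, but tracking first-entry steps in the construction keeps the injection most transparent.)
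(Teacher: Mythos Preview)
Your proof is correct and follows essentially the same approach as the paper: both arguments exploit that an edge is selected at most once in the bottom-up construction and that each selection touches at most one clique, bounding the cliques containing $x$ by the edges incident to $x$. Your version is more explicit in casting this as an injection $C \mapsto w_C$ and verifying injectivity, whereas the paper's proof states the same counting argument more tersely.
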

\begin{proof}
Let $\C^{*}$ be a \emph{locally minimal clique cover} of a subgraph of $G$, and let $\{x,y\}$ be an edge not included in the cliques of $\C^{*}$. Now, consider the expansion of $\C^{*}$ for $\{x,y\}$. Before the expansion, none of the cliques in $\C^{*}$ contained both $x$ and $y$. After the expansion, $\{x,y\}$ would not get selected for expansion in subsequent choices. The number of uncovered edges $\{x,y\}$ incident on a vertex $x$ is at most $|N(x)|$. Therefore, the number of such expansions of clique covers for a vertex $x$ is at most $|N(x)|$, i.e., $x$ appears at most $|N(x)|$ times in $\C$.
\end{proof}

\begin{proposition}
\label{prop_lmcc10}
Let $\C = \{C_1, C_2, \ldots, C_k\}$ be an \emph{edge clique cover} of a graph $G=(V,E)$ with $m$ edges. If $\C$ is \emph{locally minimal}, then the number of individual assignments of vertices to cliques of $\C$ is at most $2m$.
\end{proposition}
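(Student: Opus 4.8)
The plan is to recognize that the quantity in question, the number of individual assignments of vertices to cliques, is exactly $\sum_{C_l \in \C} |C_l|$, and to evaluate it by counting incidences between vertices and cliques in the other order. First I would observe that if we let $a(x)$ denote the number of cliques of $\C$ that contain a given vertex $x$, then double counting the vertex--clique incidences gives
\[
\sum_{C_l \in \C} |C_l| = \sum_{x \in V} a(x).
\]
This identity is just a reindexing of the same set of incidence pairs $(x, C_l)$ with $x \in C_l$, so it holds for any edge clique cover.

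The heart of the argument is then to bound each $a(x)$. Here I would invoke Proposition \ref{prop_lmcc9}, which already establishes that for a \emph{locally minimal} clique cover $\C$, every vertex $x$ appears in at most $|N(x)|$ distinct cliques of $\C$; that is, $a(x) \le |N(x)|$. Substituting this bound termwise yields
\[
\sum_{x \in V} a(x) \le \sum_{x \in V} |N(x)|.
\]
Finally I would apply the handshake identity $\sum_{x \in V} |N(x)| = 2m$ to conclude that $\sum_{C_l \in \C} |C_l| \le 2m$, as claimed.

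There is no real obstacle here: the statement is essentially a corollary of Proposition \ref{prop_lmcc9} combined with the handshake lemma, and the only thing to be careful about is the double-counting step, namely making explicit that the number of individual assignments equals the sum over vertices of their clique-multiplicities rather than some count that could separate repeated memberships. Since local minimality forces each $a(x)$ to count only \emph{distinct} cliques containing $x$ (and a vertex is never assigned to the same clique twice), this matches exactly the per-vertex bound supplied by Proposition \ref{prop_lmcc9}, so the chain of inequalities goes through cleanly.
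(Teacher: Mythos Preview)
Your proposal is correct and follows essentially the same approach as the paper: invoke Proposition~\ref{prop_lmcc9} to bound the number of cliques containing each vertex by its degree, then sum over vertices and apply the handshake identity. The paper's proof is just a terser version of what you wrote.
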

\begin{proof}
Since $\C$ is \emph{locally minimal}, by Proposition \ref{prop_lmcc9}, each vertex $x$ appears in at most $|N(x)|$ distinct cliques of $\C$. Therefore, $\sum_{l \in [k]} |C_l| \leq \sum_{x\in V} |N(x)| = 2m$.
\end{proof}

Next characterization follows from Proposition \ref{prop_lmcc10} and Proposition \ref{prop2}.

\begin{proposition}
Let $\C = \{C_1, C_2, \ldots, C_k\}$ be an \emph{edge clique cover} of a graph $G=(V,E)$ with \emph{degeneracy} $d$. If $\C$ is \emph{locally minimal}, then on average a vertex is included in at most $2d$ cliques of $\C$, i.e., $ \frac{\sum_{C_l \in \C}|C_l|}{n} < 2d$.
\end{proposition}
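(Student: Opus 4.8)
The plan is to chain the two cited facts directly, since the quantity $\frac{\sum_{C_l \in \C}|C_l|}{n}$ is exactly the total number of individual vertex-to-clique assignments divided by the number of vertices---that is, the average number of cliques in which a vertex appears. So the statement is really just the assertion that this average is strictly below $2d$, and the proof should amount to one short computation.

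First I would invoke Proposition \ref{prop_lmcc10}, which asserts that local minimality forces $\sum_{C_l \in \C}|C_l| \leq 2m$. Dividing both sides by $n$ gives $\frac{\sum_{C_l \in \C}|C_l|}{n} \leq \frac{2m}{n}$. Then I would substitute the degeneracy bound $m < nd$ supplied by Proposition \ref{prop2}, obtaining $\frac{2m}{n} < \frac{2nd}{n} = 2d$. Chaining these two inequalities yields $\frac{\sum_{C_l \in \C}|C_l|}{n} < 2d$, exactly as claimed.

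There is no real obstacle here: all the substantive content already sits in Proposition \ref{prop_lmcc10} (the $2m$ bound on total assignments, which in turn rests on the per-vertex bound $|N(x)|$ from Proposition \ref{prop_lmcc9}), while the final step is merely the elementary fact that a graph of degeneracy $d$ has average degree strictly below $2d$. The only point worth spelling out is the identification of $\frac{\sum_{C_l \in \C}|C_l|}{n}$ with the average number of clique memberships per vertex, so that the resulting inequality carries the intended combinatorial reading rather than being a bare arithmetic bound.
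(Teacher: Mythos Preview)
Your proposal is correct and matches the paper's own reasoning exactly: the paper simply states that the result follows from Proposition~\ref{prop_lmcc10} and Proposition~\ref{prop2}, which is precisely the chain $\frac{\sum_{C_l \in \C}|C_l|}{n} \leq \frac{2m}{n} < 2d$ that you spell out.
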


\begin{proposition}
\label{prop_lmcc11}
Let $\C = \{C_1, C_2, \ldots, C_k\}$ be an \emph{edge clique cover} of a graph $G=(V,E)$ with $m$ edges and $\Delta$ maximum degree. If $\C$ is \emph{locally minimal}, then the number of individual assignments of edges to cliques of $\C$ is at most $m\Delta$.
\end{proposition}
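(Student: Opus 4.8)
The plan is to translate the phrase ``number of individual assignments of edges to cliques'' into the closed form $\sum_{C_l \in \C} \binom{|C_l|}{2}$, since an edge is assigned to $C_l$ precisely when both of its endpoints lie in $C_l$, and a clique $C_l$ contains exactly $\binom{|C_l|}{2}$ edges. With this reformulation, the statement reduces to bounding $\sum_{C_l \in \C} \binom{|C_l|}{2}$ by $m\Delta$, and the engine for the bound will be the vertex-assignment estimate already established in Proposition \ref{prop_lmcc10}.

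The key steps, in order, are as follows. First I would bound the size of each clique: since $C_l$ is a clique of $G$, every vertex of $C_l$ is adjacent to the other $|C_l|-1$ vertices, so $|C_l| - 1 \leq \Delta$, i.e., $|C_l| \leq \Delta + 1$. Second, I would use this to estimate each binomial term, writing $\binom{|C_l|}{2} = \frac{|C_l|(|C_l|-1)}{2} \leq \frac{|C_l|\,\Delta}{2}$. Third, I would sum over all cliques, pulling out the factor $\Delta/2$ to obtain $\sum_{C_l \in \C} \binom{|C_l|}{2} \leq \frac{\Delta}{2} \sum_{C_l \in \C} |C_l|$. Finally, I would invoke Proposition \ref{prop_lmcc10}, which gives $\sum_{C_l \in \C} |C_l| \leq 2m$ for a \emph{locally minimal} cover, yielding $\sum_{C_l \in \C} \binom{|C_l|}{2} \leq \frac{\Delta}{2} \cdot 2m = m\Delta$.

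I do not expect a genuine obstacle here: local minimality enters only through Proposition \ref{prop_lmcc10}, and everything else is an elementary estimate valid for any clique cover of a graph with maximum degree $\Delta$. The one point that warrants care is the bookkeeping in the first step, namely confirming that the relevant notion of edge assignment is indeed $\binom{|C_l|}{2}$ per clique (mirroring the vertex-assignment count $\sum_l |C_l|$), so that the reduction to Proposition \ref{prop_lmcc10} is the correct one rather than some per-edge multiplicity count. Once that identification is fixed, the chain of inequalities is routine.
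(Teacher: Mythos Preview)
Your argument is correct, but it takes a different route from the paper. The paper argues edge-centrically: for each edge $\{x,y\}\in E$, Proposition~\ref{prop_lmcc9} gives that $x$ lies in at most $|N(x)|$ cliques and $y$ in at most $|N(y)|$ cliques, so the number of cliques containing both is at most $\min\{|N(x)|,|N(y)|\}\le\Delta$; summing over the $m$ edges gives $m\Delta$ directly. You instead argue clique-centrically, rewriting the total as $\sum_l \binom{|C_l|}{2}$, bounding each term via $|C_l|-1\le\Delta$, and then invoking Proposition~\ref{prop_lmcc10} on $\sum_l |C_l|$. Both are short and valid; the paper's version is slightly more informative in that it bounds the multiplicity of each individual edge (not just the aggregate), while your version has the virtue of making the dependence on the vertex-assignment bound $\sum_l|C_l|\le 2m$ completely explicit. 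Your identification of the edge-assignment count with $\sum_l \binom{|C_l|}{2}$ is the correct double-counting, so there is no gap.
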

\begin{proof}
Consider any edge $\{x,y \} \in E$. Since $\C$ is \emph{locally minimal}, by Proposition \ref{prop_lmcc9}, any vertex $x\in V$ appears in at most $|N(x)|$ distinct cliques of $\C$. Both $x$ and $y$ can appear in at most $\min \{|N(x)|, |N(y)|\} \leq \Delta$ distinct cliques of $\C$. Therefore, $\sum_{\{x,y\} \in E} |\{C_l | \{x,y\} \in C_l, C_l \in \C\}| \leq m \Delta$.
\end{proof}

\subsection{Implicit Set Representation}

From the definition of \emph{locally minimal clique cover}, a natural question emerges: from an \emph{edge clique cover} of a subgraph how would one quickly find a clique that can cover an edge $\{x,y\}$ or report that none exists. Obviously, scanning the entire set of cliques of the clique cover is an inefficient method. We address this problem by introducing a set representation of graph connected to intersection graph theory \cite{mckee1999topics}.

\begin{definition}[Intersection Graph]
\label{def_ig}
Let $\F = \{F_1, F_2, \ldots, F_n\}$ be a family of sets. The \emph{intersection graph} of $\F$ is a graph that has $\F$ as the vertex set, and an edge for each pair of distinct sets $F_x$ and $F_y$ if and only if $F_x \cap F_y \ne \emptyset$.
\end{definition}

On the other hand, every graph is an intersection graph of some family of sets \cite{erdos1966representation}, which leads us to the following definition.

\begin{definition}[Set Representation]
Let $G=(V,E)$ be a graph with $n$ vertices. A family of sets $\F = \{F_1, F_2, \ldots, F_n\}$ is called a \emph{set representation} of $G$ if $\{x,y\} \in E$ if and only if $F_x \cap F_y \ne \emptyset$.
\end{definition}

Note that \emph{intersection graph} of a family of sets is unique, but a graph can have many set representations.

Let $U_{\F} = \cup_{x\in [n]} F_x$. Every element $l \in U_{\F}$ corresponds to a clique of $G$ such that the set $\{F_x | l \in F_x\}$ corresponds to the vertex set of the clique. Each set $F_x$ corresponds to a set of cliques of $G$ that contain the vertex $x$. By definition of \emph{set representation}, for every edge $\{x,y\} \in E$, there exists $l \in U_{\F}$ such that $l \in F_x \cap F_y$. Therefore, the set $U_{\F}$ corresponds to an \emph{edge clique cover} of $G$.

\begin{definition}[Intersection Graph Basis \cite{garey1979computers}]
\label{def_igb}
Let $G=(V,E)$ be a graph with $n$ vertices and $\F=\{F_1, F_2, \ldots, F_n\}$ be a \emph{set representation} of $G$. Let $U_{\F}=\cup_{x\in [n]} F_x$. If $|U_{\F}|$ is minimum over all \emph{set representations} of $G$, then $U_{\F}$ is called an \emph{intersection graph basis} of $G$.
\end{definition}

As a consequence, computing a \emph{minimum clique cover} of $G$ is equivalent to computing an \emph{intersection graph basis} of $G$ (see Theorem 1.6 of \cite{mckee1999topics} for a proof). Although every \emph{set representation} of a graph is in one-to-one correspondence with an \emph{edge clique cover} of the graph, that does not give us an efficient way to construct either of them from scratch. Next, we introduce a constructive way of representing graph that implicitly contains a \emph{set representation}.

\begin{definition}[Representative Set]
\label{def_rs}
Let $\C = \{C_1, C_2, \ldots, C_k\}$ be an \emph{edge clique cover} of a subgraph of $G=(V,E)$. We call a clique $C_l \in \C$ a \emph{representative} of a vertex $x \in V$ if either (1) $x \in C_l$ or (2) $x \not\in C_l$ and $C_l \subseteq N(x)$. A \emph{representative set} $R_x$ of a vertex $x$ is the set of (indices of) representatives of $x$ in $\C$, i.e., $R_x= \{l \in [k] | x \in C_l \text{ or } (x \not\in C_l \text{ and } C_l \subseteq N(x))\}$.
\end{definition}

\begin{definition}[Implicit Set Representation]
\label{def_isr}
Let $\C$ be an \emph{edge clique cover} of graph $G=(V,E)$ and $R_x$ be the corresponding \emph{representative set} for a vertex $x$. We call the family of sets $\R = \{R_1, R_2, \ldots , R_n\}$ an \emph{implicit set representation} of $G$.
\end{definition}

\begin{figure}
    \centering
    \resizebox{7.5cm}{5.0cm}{%
         \tikzset{main node/.style={text=blue, circle,fill=black!20,draw,minimum size=0.8cm,inner sep=0pt},
            }
 \begin{tikzpicture}
    \node[main node] (1) {$x$};
    \node[main node] (2) [above right = 1.0cm and 1.0cm of 1] {$w$};
    \node[main node] (3) [below = 2.0cm of 2] {$z$};
    \node[main node] (4) [right = 2.5cm of 1] {$y$};
    
    \node[main node] (5) [left = 2.0cm of 3] {$a$};
    
    \node[main node] (6) [left = 2.0cm of 2] {$b$};
    
    \node[main node] (7) [right = 2.5cm of 2] {$c$};
    
    \node[main node] (8) [right = 2.5cm of 3] {$d$};

    \path[draw,thick]
    (1) edge node {} (2)
    (1) edge node {} (3)
    (2) edge node {} (3)
    (2) edge node {} (4)
    (3) edge node {} (4)
    
    (5) edge node {} (1)
    (5) edge node {} (3)
    
    (6) edge node {} (1)
    (6) edge node {} (2)
    
    (7) edge node {} (4)
    (7) edge node {} (2)
    
    (8) edge node {} (4)
    (8) edge node {} (3)
    
    ;
\end{tikzpicture} 
    }
    \caption{An example graph $G$ showing $R_x \cap R_y \ne \emptyset$ does not imply $\{x,y\} \in E$.
    A \emph{minimum clique cover} $\C$ of $G$ consists of following cliques:
    \textcolor{blue}{$C_1 = \{x,a,z\}$},
    \textcolor{blue}{$C_2 = \{x,b,w\}$},
    \textcolor{blue}{$C_3 = \{y,c,w\}$},
    \textcolor{blue}{$C_4 = \{y,d,z\}$},
    \textcolor{blue}{$C_5 = \{w,z\}$}.
    For all $v \in \{a,b,c,d,z,w\}$, \emph{set representation} and \emph{implicit set representation} of $G$ have the same sets for $F_v$ and $R_v$ respectively.
    But, \emph{set representation} has 
    \textcolor{blue}{$F_x = \{1,2\}$}, 
    \textcolor{blue}{$F_y = \{3,4\}$}, 
    whereas \emph{implicit set representation} has 
    \textcolor{blue}{$R_x = \{1,2,5\}$}, 
    \textcolor{blue}{$R_y = \{3,4,5\}$}. }
    \label{fig:isr}
\end{figure}
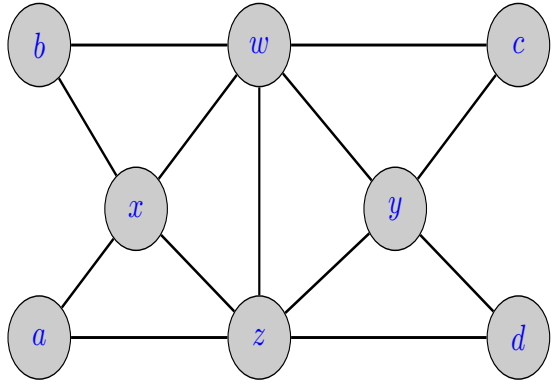

Note that \emph{set representation} requires biconditional for every edge: $\{x,y\} \in E \Longleftrightarrow F_x \cap F_y \ne \emptyset$. \emph{Implicit set representation} can only ensure one way implication: $\{x,y\} \in E \implies R_x \cap R_y \ne \emptyset$. Figure \ref{fig:isr} shows an example where $R_x \cap R_y \ne \emptyset$ does not imply $\{x,y\} \in E$. It would be evident from following characterization that an \emph{implicit set representation} of a graph always contains a \emph{set representation} of the graph. 

\begin{proposition}
\label{prop_isr}
Let $\C=\{C_1, C_2,\ldots, C_k\}$ be an \emph{edge clique cover} of a graph $G$ with $n$ vertices, and $\R$ be the corresponding \emph{implicit set representation} of $G$. Let $U_{\R} = \cup_{x \in V} R_x$. If $|U_{\R}|$ is minimum over all \emph{implicit set representations} of $G$, then $\C$ is a \emph{minimum clique cover} of $G$.
\end{proposition}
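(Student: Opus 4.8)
The plan is to reduce the whole statement to a single counting identity: for any edge clique cover $\C$ of $G$, the corresponding \emph{implicit set representation} $\R$ satisfies $|U_{\R}| = |\C|$. Once this is in hand, the proposition is immediate, because the \emph{implicit set representations} of $G$ are exactly those induced by the edge clique covers of $G$ (Definition \ref{def_isr}), so minimizing $|U_{\R}|$ is the same optimization as minimizing the number of cliques.

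First I would establish $U_{\R} = [k]$. The inclusion $U_{\R} \subseteq [k]$ is immediate, since by Definition \ref{def_rs} every \emph{representative set} $R_x$ is a subset of $[k]$. For the reverse inclusion, fix any index $l \in [k]$. The clique $C_l$ is nonempty (it covers at least one edge), so choose any vertex $x \in C_l$. Condition (1) of Definition \ref{def_rs} then forces $l \in R_x$, hence $l \in U_{\R}$. As $l$ was arbitrary, $[k] \subseteq U_{\R}$, so $U_{\R} = [k]$ and $|U_{\R}| = k = |\C|$. I would remark that the additional representatives contributed by condition (2) cannot change this count: they may enlarge individual sets $R_x$, but only with indices already lying in $[k]$.

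Next I would use this identity to finish. By Definition \ref{def_isr}, each $\R$ under consideration is induced by some edge clique cover of $G$, and conversely every edge clique cover induces one; by the previous step the quantity being minimized, $|U_{\R}|$, equals the number of cliques in the underlying cover. Therefore minimizing $|U_{\R}|$ over all \emph{implicit set representations} of $G$ is precisely minimizing $|\C|$ over all edge clique covers of $G$. If the given $\R$ attains this minimum, then the cover $\C$ that induces it uses the fewest cliques of any edge clique cover, i.e., $\C$ is a \emph{minimum clique cover}.

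The argument is short, and the only place demanding care is the identity $|U_{\R}| = |\C|$, which is the crux because it converts the minimization over \emph{implicit set representations} into the very minimization that defines a \emph{minimum clique cover}. I would take care to verify both directions of $U_{\R} = [k]$: that every clique contributes its index (through any of its own vertices, via condition (1)), and that condition (2) never pushes the union outside $[k]$. No genuine obstacle arises beyond confirming these two points.
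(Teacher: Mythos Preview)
Your argument is correct. The core observation --- that $|U_{\R}| = |\C|$ for every edge clique cover --- is exactly what makes the proposition work, and you isolate it cleanly. One small quibble: nothing in the definition of edge clique cover forces each $C_l$ to actually cover an edge, so your parenthetical ``it covers at least one edge'' is not automatic. The conclusion $[k] \subseteq U_{\R}$ still holds, though: if $C_l = \emptyset$ then condition (2) of Definition~\ref{def_rs} puts $l$ into $R_x$ for every $x$, and if $C_l$ is a singleton then condition (1) handles its lone vertex.

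The paper reaches the same endpoint by a slightly longer route. It strips $\R$ down to the genuine \emph{set representation} $\F$ with $F_x = \{l : x \in C_l\}$, notes that $U_{\F} = U_{\R}$, and then appeals to the classical equivalence between the intersection graph basis (Definition~\ref{def_igb}) and the minimum clique cover. Your proof bypasses this detour through $\F$ and the intersection-graph-basis vocabulary, establishing $U_{\R} = [k]$ directly. What the paper's route buys is an explicit link back to the known set-representation theory it just introduced; what yours buys is brevity and a self-contained argument that does not lean on that equivalence.
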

\begin{proof}
Let $\F=\{F_1, F_2,\ldots, F_n\}$ be a family of sets where $F_x = R_x \backslash \{l \in [k] | x \not\in C_l\}$. Let $U_{\F}= \cup_{x\in V} F_x$. Note that $\F$ is a \emph{set representation} of $G$, since $F_x = \{l | x \in C_l\}$.
Moreover, $U_{\F} = U_{\R}$. $|U_{\F}|$ is minimum over all \emph{set representation} of $G$. To see this, suppose $|U_{\F}|$ is not minimum, i.e., there exists $l \in U_{\F}$ such that $C_l$ can be removed from $\C$, which contradicts the assumption that $|U_{\R}|$ is minimum. Therefore, $U_{\F}$ is an \emph{intersection graph basis} of $G$. Since $U_{\F}$ is an \emph{intersection graph basis} of $G$, $\C$ is a \emph{minimum clique cover} of $G$.
\end{proof}

As in Proposition \ref{prop_isr}, we can obtain similar characterizations for \emph{assignment-minimum clique cover}.

\begin{proposition}
Let $\C=\{C_1, C_2,\ldots, C_k\}$ be an \emph{edge clique cover} of a graph $G$ with $n$ vertices, and $\R$ be the corresponding \emph{implicit set representation} of $G$. If $\sum_{x \in V} |R_x \backslash \{l \in [k] | x \not\in C_l\} |$ is minimum over all \emph{implicit set representations} of $G$, then $\C$ is an \emph{assignment-minimum clique cover} of $G$.
\end{proposition}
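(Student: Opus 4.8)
The plan is to follow the same template as the proof of Proposition \ref{prop_isr}, replacing the cardinality objective $|U_{\R}|$ with the assignment objective. The first step is to simplify the quantity being minimized. By Definition \ref{def_rs}, $R_x$ collects exactly those indices $l$ for which either $x \in C_l$, or $x \notin C_l$ but $C_l \subseteq N(x)$. Hence subtracting the set $\{l \in [k] \mid x \notin C_l\}$ removes precisely the ``phantom'' representatives and leaves $R_x \backslash \{l \in [k] \mid x \notin C_l\} = \{l \in [k] \mid x \in C_l\}$. Summing over $V$ therefore gives $\sum_{x \in V} |R_x \backslash \{l \in [k] \mid x \notin C_l\}| = \sum_{x \in V} |\{l \mid x \in C_l\}| = \sum_{l \in [k]} |C_l|$, which is exactly the number of individual assignments of vertices to cliques, i.e.\ the \emph{ACC} objective.

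Next I would set up the correspondence between implicit and explicit representations, just as in Proposition \ref{prop_isr}. Define $F_x = R_x \backslash \{l \in [k] \mid x \notin C_l\} = \{l \mid x \in C_l\}$; then $\F = \{F_1,\ldots,F_n\}$ is a \emph{set representation} of $G$, since $l \in F_x \cap F_y$ holds if and only if $C_l$ contains both $x$ and $y$, which happens for some $l$ exactly when $\{x,y\} \in E$ because $\C$ is an \emph{edge clique cover}. Under this map, the implicit-side objective equals $\sum_{x \in V} |F_x| = \sum_{l \in [k]} |C_l|$, the total vertex--clique assignment count of $\C$. The key observation is that, by Definition \ref{def_isr}, every \emph{implicit set representation} of $G$ is induced by some \emph{edge clique cover} $\C$, and conversely each $\C$ induces one; so ranging over all implicit set representations of $G$ is the same as ranging over all edge clique covers of $G$, with the objective reading $\sum_{l} |C_l|$ in every case.

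The conclusion then follows immediately: the hypothesis says $\sum_{l} |C_l|$ is minimized over all edge clique covers of $G$, which is precisely the definition of an \emph{assignment-minimum clique cover}. I expect the main (though mild) obstacle to be making the correspondence between implicit set representations and edge clique covers airtight, specifically checking that the phantom representatives genuinely drop out of the objective and do not secretly enlarge the search space in a way that changes the minimizer. This is handled by the set-difference computation in the first step, which shows the objective depends only on the incidence pairs $x \in C_l$ and is blind to the representatives arising from $C_l \subseteq N(x)$; hence minimizing the implicit objective is identical to minimizing $\sum_l |C_l|$ over covers, and no spurious optima are introduced.
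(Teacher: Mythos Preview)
Your proposal is correct and follows exactly the template the paper itself indicates: the paper does not give a standalone proof of this proposition but simply remarks that it is obtained ``as in Proposition \ref{prop_isr},'' and your argument carries out precisely that adaptation. The key identification $R_x \setminus \{l : x \notin C_l\} = \{l : x \in C_l\}$ and the double-counting $\sum_x |\{l : x \in C_l\}| = \sum_l |C_l|$ are the right ingredients, and your observation that implicit set representations are, by Definition \ref{def_isr}, in bijection with edge clique covers closes the argument cleanly.
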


Definition \ref{def_rs} captures the idea that a \emph{representative} of a vertex $x$ corresponds to a clique that may be used to cover an edge incident on $x$, regardless of whether $x$ is contained in the clique. So far it may not be evident how Definition \ref{def_rs} resolves the problem of efficiently constructing a \emph{set representation} from scratch. In following characterization we provide a clarification.

\begin{proposition}
\label{prop_isr2}
Let $\C= \{C_1, C_2, \ldots, C_k\}$ be an \emph{edge clique cover} of a subgraph of $G=(V,E)$, and $R_x$ be the corresponding \emph{representative set} of a vertex $x$. Let $\{x,y\} \in E$ be an uncovered edge. There exists representative set $l \in R_x \cap R_y$ if and only if $\{x,y\}$ can be covered by a clique $C_l \in \C$.
\end{proposition}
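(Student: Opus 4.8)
The plan is to prove the biconditional directly by unfolding the two relevant definitions and performing a short case analysis. Recall from Definition~\ref{def_lmcc} (construction rule (i)) that saying the edge $\{x,y\}$ \emph{can be covered by a clique} $C_l \in \C$ means precisely that $C_l \cup \{x,y\}$ is a clique of $G$. From Definition~\ref{def_rs}, the condition $l \in R_x$ unfolds into the disjunction ``$x \in C_l$'' or ``$x \notin C_l$ and $C_l \subseteq N(x)$'', and analogously for $l \in R_y$. Throughout I would use the hypothesis that $\{x,y\}$ is uncovered, which guarantees that no clique of $\C$ contains both $x$ and $y$; in particular, for the fixed index $l$, at most one of $x,y$ lies in $C_l$.

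For the direction $(\Leftarrow)$, I would assume $C_l \cup \{x,y\}$ is a clique and split on which (if any) of $x,y$ belongs to $C_l$. If $x \in C_l$ (so $y \notin C_l$), then every vertex of $C_l$ is adjacent to $y$ because $C_l \cup \{x,y\}$ is complete, giving $C_l \subseteq N(y)$ and hence $l \in R_y$, while $x \in C_l$ gives $l \in R_x$. The case $y \in C_l$ is symmetric. If neither $x$ nor $y$ is in $C_l$, then the completeness of $C_l \cup \{x,y\}$ forces $C_l \subseteq N(x)$ and $C_l \subseteq N(y)$, so $l \in R_x$ and $l \in R_y$. In every case $l \in R_x \cap R_y$.

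For the direction $(\Rightarrow)$, I would assume $l \in R_x \cap R_y$ and show $C_l \cup \{x,y\}$ is a clique. Using the uncovered hypothesis to rule out $x,y \in C_l$ simultaneously, the two disjunctions leave three cases. If $x \in C_l$ and $C_l \subseteq N(y)$, then $C_l$ is already complete, $y$ is adjacent to all of $C_l$, and $x,y$ are adjacent since $\{x,y\} \in E$; as $x \in C_l$ we get $C_l \cup \{x,y\} = C_l \cup \{y\}$, which is complete. The case $y \in C_l$ and $C_l \subseteq N(x)$ is symmetric. If $x,y \notin C_l$ with $C_l \subseteq N(x)$ and $C_l \subseteq N(y)$, then $C_l$ is complete, both $x$ and $y$ are adjacent to every vertex of $C_l$, and the single remaining pair $\{x,y\}$ is an edge of $G$ by hypothesis, so $C_l \cup \{x,y\}$ is complete. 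Hence $\{x,y\}$ can be covered by $C_l$.

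The only delicate point---the ``main obstacle''---is the bookkeeping across the disjunction in the definition of the representative set: a representative of $x$ need not contain $x$, so I must verify in each mixed case that the endpoint not lying in $C_l$ is nonetheless adjacent to all of $C_l$. The hypothesis $\{x,y\} \in E$ is exactly what supplies the one adjacency (between $x$ and $y$) that the representative-set conditions do not directly provide, and it is indispensable in the case where neither endpoint lies in $C_l$. The uncovered hypothesis merely streamlines the analysis by discarding the degenerate case $x,y \in C_l$, for which the statement would hold vacuously anyway.
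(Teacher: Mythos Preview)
Your proposal is correct and follows essentially the same approach as the paper: both directions are handled by the same three-case split on whether zero, or which one, of $x,y$ lies in $C_l$, using the uncovered hypothesis to rule out both being in $C_l$. Your write-up is in fact slightly more explicit than the paper's in tracking where the hypothesis $\{x,y\}\in E$ is actually needed (namely, to supply the $x$--$y$ adjacency in the case $x,y\notin C_l$).
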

\begin{proof}
\textbf{Only if.} Since $\{x,y\}$ is an uncovered edge, both $x$ and $y$ are not contained in any of the cliques of $\C$. By the definition of \emph{representative set}, one of the following holds for $C_l$: (i) $x \in C_l$, $y \not \in C_l, C_l \subseteq N(y)$, (ii) $y \in C_l$, $x \not \in C_l, C_l \subseteq N(x)$, (iii) $x \not\in C_l$, $y \not \in C_l, C_l \subseteq N(x), C_l \subseteq N(y)$. Each of the three cases, both $x$ and $y$ can be included in $C_l$ and thus $\{x,y\}$ can be covered by $C_l$.

\textbf{If.} Since the edge $\{x,y\}$ is uncovered, if $x$ is contained in $C_l$, then it must be the case $y \not\in C_l$ and $C_l \subseteq N(y)$. It follows that if $x$ is contained in $C_l$, then $l \in R_x$ and $l \in R_y$. Similarly, $l \in R_x$ and $l \in R_y$ if $y$ is contained in $C_l$ or both $x$ and $y$ are not contained in $C_l$. Therefore, $l \in R_x \cap R_y$.
\end{proof}

Proposition \ref{prop_isr2} immediately connects us to the construction of \emph{locally minimal clique cover}. To cover an edge $\{x,y\}$, we can simply select a clique $C_l \in \C$ if $l$ is contained in $R_x \cap R_y$. Each such choice to cover an edge may require update of some sets in $\R$ to conform to Definition \ref{def_rs}. We describe the updates in the context of algorithm design (Section \ref{sec_algos}). Our next characterization provides a bound on the space requirement of \emph{implicit set representation} for constructing a \emph{locally minimal clique cover}.

\begin{proposition}
\label{prop_isr3}
Let $\C= \{C_1, C_2, \ldots, C_k\}$ be an \emph{edge clique cover} of a graph $G=(V,E)$ with $m$ edges, and $\R$ be the corresponding \emph{implicit set representation} of $G$. If $\C$ is \emph{locally minimal}, then $\sum_{x\in V} |R_x| < 2m+k\Delta$.
\end{proposition}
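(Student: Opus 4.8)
The plan is to split each representative set $R_x$ according to the two disjoint conditions in Definition \ref{def_rs} and bound the two resulting sums separately. Writing $R_x = A_x \cup B_x$ where $A_x = \{l \in [k] \mid x \in C_l\}$ and $B_x = \{l \in [k] \mid x \notin C_l,\ C_l \subseteq N(x)\}$, these sets are disjoint, so $|R_x| = |A_x| + |B_x|$ and hence $\sum_{x \in V} |R_x| = \sum_{x\in V} |A_x| + \sum_{x \in V} |B_x|$.

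For the first sum, interchanging the order of counting gives $\sum_{x \in V} |A_x| = \sum_{l \in [k]} |C_l|$, which is exactly the number of individual assignments of vertices to cliques of $\C$. Since $\C$ is \emph{locally minimal}, Proposition \ref{prop_lmcc10} bounds this quantity by $2m$.

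The main work is the second sum, and I expect this to be the only nontrivial step. First I would swap the order of summation to obtain $\sum_{x \in V} |B_x| = \sum_{l \in [k]} |\{x \in V \mid x \notin C_l,\ C_l \subseteq N(x)\}|$, so that it suffices to bound, for each clique $C_l$, the number of vertices outside $C_l$ adjacent to every vertex of $C_l$. The key observation is that any such $x$ is in particular a neighbour of an arbitrarily fixed vertex $v \in C_l$; moreover, because $\C$ is obtained from the construction of Definition \ref{def_lmcc}, every clique is created from an edge and only grows, so $|C_l| \geq 2$ and the remaining $|C_l| - 1$ vertices of $C_l$ already lie in $N(v)$. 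Hence the candidate vertices $x$ lie in $N(v) \setminus C_l$, whose size is at most $|N(v)| - (|C_l| - 1) \leq \Delta - 1$. Summing over the $k$ cliques gives $\sum_{x \in V} |B_x| \leq k(\Delta - 1) < k\Delta$.

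Combining the two bounds yields $\sum_{x \in V} |R_x| \leq 2m + k(\Delta - 1) < 2m + k\Delta$, as claimed. The subtlety to get right is the strict inequality: it is recovered precisely from the $|C_l| \geq 2$ observation, which costs each clique one unit against the crude $\Delta$ bound; the Type-1 contribution is immediate from Proposition \ref{prop_lmcc10}, so no further effort is needed there.
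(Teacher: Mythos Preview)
Your proof is correct and follows the same overall decomposition as the paper: split $R_x$ into the ``membership'' part $A_x$ and the ``external representative'' part $B_x$, bound $\sum_x |A_x| \le 2m$ via Proposition~\ref{prop_lmcc10}, and bound $\sum_x |B_x|$ by swapping the order of summation and estimating the contribution of each clique.

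The only difference is in how the per-clique bound for the second sum is obtained. The paper traces each clique $C_l$ back to the edge $\{x,y\}$ that triggered its creation in the locally minimal construction and argues that any external representative $z$ must lie in $N(x)\cap N(y)$, hence there are fewer than $\Delta$ of them. Your argument is more elementary: you fix an arbitrary vertex $v\in C_l$, observe that external representatives lie in $N(v)\setminus C_l$, and use $|C_l|\ge 2$ (an immediate consequence of the construction) to get at most $\Delta-1$ candidates. Both yield the same strict inequality, but your version avoids appealing to the construction history beyond the single fact that every clique has at least two vertices, which makes it slightly cleaner.
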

\begin{proof}
From the definition of \emph{representative set}, we have
$$\sum_{x\in V} |R_x|= \sum_{x\in V} |\{l | x \in C_l\}| + \sum_{x \in V} |\{l | x\not\in C_l, C_l \subseteq N(x) \}|.$$

For the first sum on the right hand side, by Proposition \ref{prop_lmcc10}, we have
$$\sum_{x\in V} |\{l | x \in C_l\}| = \sum_{C_l \in \C} |C_l| \leq 2m.$$

It remains to show a bound for the second sum of the right hand side.

Since $\C$ is \emph{locally minimal}, there exists a permutation $\pi$ of $\{1,\ldots,k\}$ such that $C_{\pi(i)}$ contains an edge $\{x,y\}$ not contained in any $C_{\pi(j)}, j <i$. For the permutation $\pi$, existence of $C_{\pi(i)}$ is necessary for covering $\{x,y\}$. WLOG we can assume that in the \emph{locally minimal} construction of $\C$, the edge $\{x,y\}$ triggered the creation of $i^{th}$ clique in $\C$. Immediately after including $\{x,y\}$ in the $i^{th}$ clique of $\C$, we would have $\pi(i) \in R_z$ for all $z \in N[x] \cap N[y]$ (from the definition of \emph{representative set}).

For any subsequent edge $\{x,y\}$ added to the $i^{th}$ clique of $\C$ (for locally minimal construction), $\pi(i) \in R_x \cap R_y$ holds by Proposition \ref{prop_isr2}. Therefore, all \emph{representative sets} that include $\pi(i)$ have already been accounted for, when we created (by \emph{locally minimal} construction) the $i^{th}$ clique of $\C$.

Any vertex $z$ such that $z \not\in C_{\pi(i)}$, but $C_{\pi(i)} \subseteq N(z)$, must be a common neighbour of $x$ and $y$ such that the edge $\{x,y\}$ required the creation of $C_{\pi(i)}$. And $|\{z | z \not\in C_{\pi(i)}, C_{\pi(i)} \subseteq N(z)\}| \leq |N(x) \cap N(y)| < \Delta$. Summing over all cliques we get, $$\sum_{C_l \in \C} |\{x | x \not\in C_l, C_l \subseteq N(x)\}| < k\Delta.$$

Consider a bipartite graph $B=(V, \C, E_B)$ where $E_B$ is the relation of vertices and cliques from the second condition of \emph{representative}: $E_B = \{\{x, C_l\}| x\in V, C_l \in \C, x \not\in C_l, C_l \subseteq N(x) \}$. Sum of the degrees of the vertices on the $V$ side of $B$ is equal to the sum of the degrees of the vertices on the $\C$ side of $B$. Therefore, $$\sum_{x \in V} |\{l | x\not\in C_l, C_l \subseteq N(x) \}| = \sum_{C_l \in \C} |\{x | x \not\in C_l, C_l \subseteq N(x)\}| < k\Delta.$$
\end{proof}

\begin{remark}
A tighter bound than Proposition \ref{prop_isr3} is possible if we construct \emph{locally minimal clique cover} by choosing the edges in the order of \emph{DEP} of $G$, and keep maintaining only necessary \emph{representative sets}. At any stage of the construction of \emph{locally minimal clique cover}, let $\{x,y\}$ be the first uncovered edge of \emph{DEP} where $y \in N_d(x)$. Note that if $x=u_i$, then all edges incident on $u_j$ such that $j < i$ are covered. Therefore, if we choose to add a new clique for $\{x,y\}$, then we only need to keep this information for the \emph{representative sets} of $N_d[x] \cap N[y]$. As a consequence, the $k\Delta$ bound in the proof of Proposition \ref{prop_isr3} can be replaced with $dk$, since $|N_d(x) \cap N(y)| < d$. Overall, we would have $O(m+dk)$ space bound. Furthermore, since we can assume $d= O(\sqrt{m})$, for $k=O(\sqrt{m})$ this bound becomes $O(m)$.
\end{remark}

\section{New Framework: Algorithms}
\label{sec_algos}

In this section, we describe a new set of bounded search tree algorithms for the problems listed in Section \ref{intro_problems}. The algorithms are based on the concepts introduced in Section \ref{sec_bblock}. The general theme of these search tree algorithms is as follows. At each node of a search tree, the corresponding algorithm branches on a set of cliques already constructed for a partial solution (clique cover of a subgraph). Consequently, branching factors of a search tree can be bounded in terms of number of cliques in a clique cover or other parameters. Most notably, we would be able to show that the size of a search tree is structurally dependent on the \emph{clique number} of graph, a feature lacking in the algorithms we have presented in Section \ref{sec_ecrs}.

\subsection{Edge Clique Cover}
\label{algos_ecc}

We will continue to assume that data reduction rules are applied on an instance $(G,k)$ of \emph{ECC}, including the rules described in Section \ref{ecrs_ecc}. Figure \ref{fig:ecc_algo2} shows our new bounded search tree algorithm for \emph{ECC}, henceforth referred as \emph{ECCS2}. For better readability, we describe the details of steps 3a, 3d, 4a, and 4d separately as subroutines in Figure \ref{fig:ecc_algo2_subs}.

\emph{ECCS2} works as follows. At every node of the search tree, it selects the \emph{last} uncovered edge $\{x,y\}$ from the \emph{DEP} of $G$ such that $y \in N_d(x)$. Then, it branches on each of the choices of \emph{representatives} $l \in R_x \cap R_y$, by covering the edge $\{x,y\}$ with the clique $C_l$ (steps 3a-3d). If all of these branches fail and the parameter $k$ permits, then it creates an additional branch (steps 4a-4d), by covering the edge $\{x,y\}$ with a new clique.

Algorithm \emph{ECCS2} works by modifying a single \emph{edge clique cover} $\C$ and a single \emph{implicit set representation} $\R$ at every node of a search tree. This makes the algorithm space efficient. Now, to conform to the definition of \emph{representative set}, we need to update $\R$ in response to changes in the cliques of $\C$ (steps 3a, 3d), or inclusion/exclusion of clique in $\C$ (steps 4a, 4d). Next, we expand on these updates, included in the subroutines of \emph{ECCS2} in Figure \ref{fig:ecc_algo2_subs}.

We use a data structure $\D$ to help the updates of sets in $\R$. A set $D_l \in \D$ contains a set of vertices that have clique $C_l$ in their \emph{representative sets}, i.e., $D_l = \{x \in V | l \in R_x, C_l \in \C\}$. The sets in $\R$ contain a mapping of vertices to the representatives (cliques), whereas the sets in $\D$ contain the corresponding inverse mapping of the representatives (cliques) to the vertices.

In the subroutine for step 4a, a new clique, with index $q$, is included in the solution. By the definition of \emph{representative set}, this prompts $q$ to be included in the \emph{representative sets} of all vertices in $N[x] \cap N[y]$. Similarly, in the subroutine for step 4d, the clique $q$ is excluded from the clique cover, which prompts $q$ to be excluded from the representative sets of all vertices in $N[x] \cap N[y]$.

In the subroutine for step 3a, since the edge $\{x,y\}$ to be included in $C_l$, any vertex $z$ such that $l \in R_z$ must have both $x$ and $y$ in the neighbourhood of $z$. The set of vertices that violates this requirement is collected in a set $U$ (a local data structure allocated for every node in a search tree), and corresponding \emph{representative sets} of these vertices are updated. Indicator variable $x_l$ (resp. $y_l$) is set that would denote whether the vertex $x$ (resp. $y$) is already contained in $C_l$. The subroutine for step 3d simply undoes the updates of $\R$ and $\D$ using the set $U$. The indicator variables are used to restore the clique $C_l$ to the state prior to including the edge $\{x,y\}$ in $C_l$. Each of the subroutines of Figure \ref{fig:ecc_algo2_subs} takes $O(\Delta)$ time.

\begin{figure}
    \centering
    \fbox{\parbox{5.5in}{
\underline{Edge-Clique-Cover-Search2$(G, k, \C, \R)$:}

\textcolor{teal}{\\ // Abbreviated \textbf{ECCS2}$(G,k,\C,\R)$}

    \begin{enumerate}
        \item if $\C$ covers edges of $G$, then return $\C$
        \item select the \emph{last} uncovered edge $\{x,y\}$ from the \emph{DEP} of $G$ where $y \in N_d(x)$
        
        \textcolor{teal}{// cf. step 3 of \emph{ECCS} in Figure \ref{fig:ecc_algo1}}
        
        \item for each $l \in R_x \cap R_y$ do 
    \begin{enumerate}
        \item \textcolor{blue}{cover the edge $\{x,y\}$ with the clique $C_l$ and update $\R$}  \textcolor{teal}{// Figure \ref{fig:ecc_algo2_subs} (a)}
        \item $\Q \leftarrow \textbf{ECCS2}(G, k, \C, \R)$
        \item if $\Q \ne \emptyset$, then return $\Q$
        \item \textcolor{blue}{undo changes done to $\C$ and $\R$ at step 3a} \textcolor{teal}{// Figure \ref{fig:ecc_algo2_subs} (b)}
    \end{enumerate}
    \item if $k > 0$, then
    \begin{enumerate}
        \item \textcolor{blue}{set $\C$ to $\C \cup \{\{x,y\}\}$ and update $\R$} \textcolor{teal}{// Figure \ref{fig:ecc_algo2_subs} (c)}
        \item $\Q \leftarrow \textbf{ECCS2}(G, k-1, \C, \R)$
        \item if $\Q \ne \emptyset$, then return $\Q$
        \item \textcolor{blue}{undo changes done to $\C$ and $\R$ at step 4a} \textcolor{teal}{// Figure \ref{fig:ecc_algo2_subs} (d)}
    \end{enumerate}
    \item return $\emptyset$
    \end{enumerate} 
}}   
    \caption{A bounded search tree algorithm for \emph{ECC}, denoted \emph{ECCS2}.}
    \label{fig:ecc_algo2}
\end{figure}

\begin{figure}
    \centering
    \begin{minipage}{0.45\textwidth}
    \begin{subfigure}{\linewidth}
        \fbox{\parbox{\linewidth}{
        \textcolor{blue}{\underline{Step 3a:}}
        
        \begin{enumerate}
            \item $U \leftarrow \emptyset$
            \item for each $z \in D_l \backslash \{x,y\}$ such that $\{x,z\} \not\in E$ or $\{y,z\} \not\in E$ do
            
            \begin{enumerate}
                \item $U \leftarrow U \cup \{z\}$
                \item $R_z \leftarrow R_z \backslash \{l\}$
            \end{enumerate}
        
            \item $D_l \leftarrow D_l \backslash U$
            \item $x_l \leftarrow \mathbbm{1}_{[x \in C_l]}$
            \item $y_l \leftarrow \mathbbm{1}_{[y \in C_l]}$
            \item $C_l \leftarrow C_l \cup \{x,y\}$
        \end{enumerate}
        }}  
    \caption{}
    \end{subfigure}
    
    \begin{subfigure}{\linewidth}
        \fbox{\parbox{\linewidth}{
        \textcolor{blue}{\underline{Step 3d:}}
        
        \begin{enumerate}
            \item for each $z \in U$ do
            \begin{enumerate}
                \item $R_z \leftarrow R_z \cup \{l\}$
                \item $D_l \leftarrow D_l \cup \{z\}$
            \end{enumerate}
            \item if $x_l =0 $, then set $C_l$ to $C_l \backslash \{x\}$
            \item if $y_l =0 $, then set $C_l$ to $C_l \backslash \{y\}$
        \end{enumerate}
        }}  
    \caption{}
    \end{subfigure}
    \end{minipage}
    \hspace{\fill}
    \begin{minipage}{0.45\textwidth}
         \begin{subfigure}{\linewidth}
        \fbox{\parbox{\linewidth}{
        \textcolor{blue}{\underline{Step 4a:}}
        
        \begin{enumerate}
            \item $q \leftarrow |\C| + 1$
            \item $C_q \leftarrow \{x,y\}$
            \item $D_q \leftarrow N[x] \cap N[y]$
            \item for each $z \in D_q$ do
            \begin{enumerate}
                \item $R_z \leftarrow R_z \cup \{q\}$
            \end{enumerate}
        \end{enumerate}
        }}  
    \caption{}
    \end{subfigure}
    
     \begin{subfigure}{\linewidth}
        \fbox{\parbox{\linewidth}{
        \textcolor{blue}{\underline{Step 4d:}}
        
        \begin{enumerate}
            \item $q \leftarrow |\C|$
            \item $\C \leftarrow \C \backslash \{C_q\}$
            \item for each $z \in D_q$ do
            \begin{enumerate}
                \item $R_z \leftarrow R_z \backslash \{q\}$
            \end{enumerate}
        \end{enumerate}
        }}  
    \caption{}
    \end{subfigure}
    \end{minipage}
    \caption{Subroutines of \emph{ECCS2} shown in Figure \ref{fig:ecc_algo2}.}
    \label{fig:ecc_algo2_subs}
\end{figure}

The definition of \emph{locally minimal} construction is oblivious of edge permutation. An exact algorithm that strictly adheres to the definition of \emph{locally minimal} construction would potentially need to try all permutations of edges (for \emph{ECCS}, this would be equivalent to trying all possible choices of uncovered edge at step 2). Consequently, we would have algorithms with large dependency on parameters (for example, we would need $2^{O(\beta k^2)}n^{O(1)}$ or $2^{\beta k \log (\beta^2 k^3)}n^{O(1)}$ running time for \emph{ECC}). Furthermore, we want to impose permutations of edges (such as step 2 of \emph{ECCS2}) so that we can bound the branching factors of a search tree with certain choices of parameters (such as degeneracy). We address these issues by judicious relaxation of \emph{locally minimal} construction.

If the edge $\{x,y\}$ (selected at step 2) can be covered by a clique $C_l \in \C$, then \emph{ECCS2} selects exactly one such clique from the set $R_x \cap R_y$. Therefore, every branch of step 3 conforms to the \emph{locally minimal} construction. If all branches of step 3 fail, only then the edge $\{x,y\}$ may need to be covered with a new clique in step 4a. If $R_x \cap R_y = \emptyset$, then step 4 still conforms to the \emph{locally minimal} construction. If $R_x \cap R_y \ne \emptyset$ and all branches of step 3 fail, then \emph{ECCS2} relaxes the requirement of \emph{locally minimal} construction. In this case, \emph{ECCS2} considers the possibility of a branch with at most $k$ cliques where the edge $\{x,y\}$ is covered with a clique, not contained in the current partial solution $\C$. Therefore, if $k$ permits, then \emph{ECCS2} includes a new clique in the solution. Incidentally, this type of adaptation would be required for other objective of clique covers, where we would need to include an edge to a specified number of cliques.

Preceding adaptation may seem puzzling, considering the fact that with a simple relaxation of \emph{locally minimal} construction we are able to avoid enumerating all possible choices of an uncovered edge at step 2 of \emph{ECCS2}. We shed some light on this. 

Let $(G,k)$ be a \emph{YES} instance of $\emph{ECC}$, and $\C^{*}$ be a corresponding clique cover. Let us call a clique cover $\C$ of a subgraph of $G$ a \emph{subcovering} of $\C^{*}$ if every clique $C_i \in \C$ is contained in a clique $C^{*}_l \in \C^{*}$, and no pair of distinct cliques $C_i, C_j \in \C$ are contained in the same clique $C^{*}_l \in \C^{*} $. In the search tree $T$ of $\emph{ECCS2}$, for $(G,k)$, consider a clique cover $\C$ of a subgraph of $G$, constructed at a node $u^T$. Observe that if the clique cover $\C$ at node $u^T$ is a \emph{subcovering} of $\C^{*}$, then \emph{ECCS2} would not backtrack from node $u^T$, i.e., a clique cover of $G$ would be returned from node $u^T$. To get a \emph{subcovering} $\C$ of $\C^{*}$ at node $u^T$, it is irrelevant in what order we constructed the cliques in $\C$, and in what order edges of individual clique $C_l \in \C$ are included in $C_l$. All that matters is being able to construct \emph{subcoverings} of $\C^{*}$ with more and more cliques. The relaxation at step 4 of \emph{ECCS2} is precisely for this purpose: even if $R_x \cap R_y$ is nonempty, we allow the edge $\{x,y\}$ to be included in a new clique, as a potential extension for a larger subcovering of some $\C^{*}$.

A correctness proof for \emph{ECCS2} can be obtained using induction (see Appendix \ref{sec:app_algos}). Thus we conclude the following.

\begin{restatable}{lemma}{LemmaECCSTwoOK}
\label{lemma_eccs2_ok}
Algorithm \emph{ECCS2} correctly solves the parameterized problem \emph{ECC}.
\end{restatable}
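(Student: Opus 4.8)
The plan is to establish two directions—soundness and completeness—together with the correctness of the data-structure bookkeeping, and to argue termination separately. For soundness I would observe that \emph{ECCS2} returns a clique cover only at step 1, whose guard guarantees that $\C$ covers every edge of $G$; and since the parameter $k$ is decremented only when a new clique is opened (step 4), an easy induction shows that along any root-to-leaf path the number of cliques ever created is at most the initial value of $k$, so any returned $\C$ satisfies $|\C| \le k$. Termination follows because the edge $\{x,y\}$ chosen at step 2 is uncovered, and both step 3 and step 4 cover it without uncovering any previously covered edge; hence the number of uncovered edges strictly decreases down each branch, bounding the recursion depth by $m$.

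The second ingredient is a data-structure invariant: at every node, $\R$ is a correct \emph{implicit set representation} of the current partial cover $\C$ in the sense of Definition \ref{def_rs}, with $\D$ holding the inverse mapping $D_l = \{z \mid l \in R_z\}$. I would verify this by checking that each subroutine of Figure \ref{fig:ecc_algo2_subs} restores the invariant after the corresponding modification of $\C$: steps 4a and 4d add or remove the index $q$ exactly for the vertices of $N[x]\cap N[y]$, while steps 3a and 3d insert $\{x,y\}$ into $C_l$ and remove from $R_z$ precisely those $z \in D_l$ that lose a neighbour (those with $\{x,z\}\notin E$ or $\{y,z\}\notin E$), using $x_l,y_l$ to undo the change. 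With this invariant in hand, Proposition \ref{prop_isr2} applies verbatim, so the set $R_x \cap R_y$ iterated over at step 3 is exactly the set of existing cliques that can absorb $\{x,y\}$.

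For completeness I would use the \emph{subcovering} notion. Fix a witness cover $\C^{*}$ for the \emph{YES} instance $(G,k)$ and maintain, along a chosen root-to-leaf path, the invariant that the current $\C$ is a subcovering of $\C^{*}$: each $C_i \in \C$ lies inside a clique of $\C^{*}$, and distinct cliques of $\C$ lie inside distinct cliques of $\C^{*}$. The empty cover at the root is trivially a subcovering. At a node with selected edge $\{x,y\}$, let $C^{*}_l \in \C^{*}$ cover $\{x,y\}$, and split into two cases. If some (necessarily unique) $C_i \in \C$ satisfies $C_i \subseteq C^{*}_l$, then $C_i \cup \{x,y\} \subseteq C^{*}_l$ is a clique, so by Proposition \ref{prop_isr2} we have $i \in R_x \cap R_y$; taking that branch of step 3 keeps $\C$ a subcovering. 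Otherwise no clique of $\C$ lies inside $C^{*}_l$, so I take step 4 and open a fresh clique $\{x,y\} \subseteq C^{*}_l$, again preserving the subcovering. The subcovering property forces $|\C| \le |\C^{*}| \le k$, and since the current parameter value equals the initial $k$ minus $|\C|$ while this case requires $|\C| < |\C^{*}|$, the guard $k>0$ at step 4 is met. As every node on this path strictly reduces the uncovered-edge count, the path reaches a leaf where $\C$ covers all edges and step 1 returns it.

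I expect the main obstacle to be justifying the step-4 relaxation inside the completeness argument: even when $R_x \cap R_y \neq \emptyset$ (so the edge could be merged into an existing clique), the witness $\C^{*}$ may force $\{x,y\}$ into a clique corresponding to no current clique of $\C$, and the proof must show that the extra step-4 branch—taken irrespective of whether $R_x \cap R_y$ is empty—is exactly what lets the subcovering be extended in that situation. The companion subtlety is that the algorithm commits to the deterministic ``last uncovered edge'' order of the \emph{DEP}, so the argument must be phrased so that the subcovering grows one clique at a time regardless of the order in which edges or cliques are processed; the subcovering invariant is deliberately order-oblivious, and this is what makes the deterministic edge choice safe.
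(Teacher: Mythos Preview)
Your plan is sound and, interestingly, it formalises exactly the ``subcovering'' intuition that the paper sketches in the paragraphs preceding the lemma; however, the paper's \emph{formal} proof in the appendix takes a different route. There the authors induct on $k$: given a witness $\C^{*}=\{C_1,\dots,C_k\}$ they strip off the edges $E_{\pi(k)}$ exclusive to the last clique, apply the induction hypothesis to $(G\setminus E_{\pi(k)},k-1)$ to locate a \emph{YES} node $u^T$, and then argue that from $u^T$ the remaining edges of $E_{\pi(k)}$ can be absorbed with at most one further clique. Your path-invariant argument is more direct and avoids the somewhat delicate step of embedding the search tree for the smaller instance inside the search tree for $(G,k)$; conversely, the paper's induction reuses the same template verbatim for \emph{ACCS2} and \emph{AWECPS}, which is why they chose it.

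One point in your completeness argument needs tightening. You fix $C^{*}_l$ first and then look for ``some (necessarily unique) $C_i\in\C$ with $C_i\subseteq C^{*}_l$''. Containment alone is not unique and can break the invariant: if $C_1=\{a,b\}$, $C_2=\{a,d\}$, $\phi(C_1)=C^{*}_1=\{a,b,c\}$, $\phi(C_2)=C^{*}_2=\{a,b,d\}$, and the selected edge is $\{b,d\}\subseteq C^{*}_2$, then both $C_1$ and $C_2$ lie inside $C^{*}_2$, but extending $C_1$ destroys any injection. The fix is to carry the injection $\phi$ explicitly as part of the invariant and branch according to whether some $C^{*}_l\ni\{x,y\}$ lies in $\mathrm{range}(\phi)$: if $\phi(C_i)=C^{*}_l$ extend $C_i$; otherwise open a new clique and extend $\phi$. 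With that adjustment your proof goes through, and the guard $k>0$ at step~4 is justified exactly as you say, since $C^{*}_l\notin\mathrm{range}(\phi)$ forces $|\C|<|\C^{*}|\le k$.
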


The space use of \emph{ECCS2} can be bounded as follows (see Appendix \ref{sec:app_algos} for a proof).

\begin{restatable}{lemma}{LemmaECCSTwoSpace}
\label{lemma_eccs2_space}
In a search tree with at most $k$ cliques, \emph{ECCS2} takes $O(m+k\Delta)$ space.
\end{restatable}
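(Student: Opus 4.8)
The plan is to account separately for each persistent data structure and for the recursion stack, using the fact that \emph{ECCS2} keeps only a \emph{single} copy of the edge clique cover $\C$, the implicit set representation $\R=\{R_1,\dots,R_n\}$, and the inverse structure $\D$ with $D_l=\{x:l\in R_x\}$, all modified in place through the undo operations of steps 3d and 4d. Thus the total space is the sum of the space for the input graph $G$ together with its \emph{DEP}, the space for $\C$, $\R$, $\D$, and the space charged to the call stack. Since we work on a reduced instance with no isolated vertices, $n\le 2m$, so storing $G$ and its \emph{DEP} costs $O(m)$.

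Next I would bound the persistent structures. For $\C$, I would argue $\sum_{C_l\in\C}|C_l|\le 2m$ exactly as in Propositions \ref{prop_lmcc9} and \ref{prop_lmcc10}: a vertex $x$ is inserted into a clique (step 3a or step 4a) only in order to cover a previously uncovered edge incident on $x$, and distinct insertions of $x$ cover disjoint incident edges, so $x$ lies in at most $|N(x)|$ cliques and $\sum_{C_l\in\C}|C_l|=\sum_{x\in V}|\{l:x\in C_l\}|\le\sum_{x\in V}|N(x)|=2m$. This uses only the constructive process, so the relaxation at step 4 does not affect it. For $\R$ and $\D$, I would use the identity $\sum_{x\in V}|R_x|=\sum_{l}|D_l|$ (both count the pairs $(x,l)$ with $l\in R_x$): in \emph{ECCS2} each $D_l$ is created as $D_l=N[x]\cap N[y]$, of size at most $\Delta+1$, at the moment its triggering edge $\{x,y\}$ is covered by a new clique (step 4a), and thereafter it is only shrunk (step 3a) or restored during backtracking (step 3d); since at most $k$ cliques are simultaneously present, $\sum_l|D_l|\le k(\Delta+1)=O(k\Delta)$, consistent with the counting in Proposition \ref{prop_isr3}. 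Hence $\R$ and $\D$ together cost $O(k\Delta)$.

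The hard part will be the call stack, where each node allocates a local set $U$ at step 3a. The recursion depth is at most $m$, since every recursive call in step 3 or step 4 covers the selected uncovered edge and so strictly decreases the number of uncovered edges; thus the $O(1)$ bookkeeping per node (the chosen edge, the index $l$, the indicators $x_l,y_l$) contributes only $O(m)$. A naive estimate $|U|\le|D_l|\le\Delta+1$ per node over depth $m$ would give $O(m\Delta)$, so the crux is to show that the $U$-sets along a single root-to-leaf path are essentially disjoint as sets of pairs $(z,l)$. I would argue that once a pair $(z,l)$ is moved into $U$ at a node $u$ (i.e. $z$ is deleted from $D_l$), it cannot re-enter $D_l$ anywhere in the subtree below $u$: the only operations inserting into $D_l$ are the creation of clique $l$ (step 4a, which uses a fresh index) and the matching undo (step 3d of $u$, which executes only after $u$'s subtree is fully processed). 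Consequently each pair $(z,l)$ is charged to at most one $U$ on the current stack, and every such pair was introduced by one of the at most $k$ clique creations on that path, each contributing at most $\Delta+1$ pairs. Hence the total size of all $U$-sets on the stack is $O(k\Delta)$.

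Summing the contributions, $O(m)$ for $G$ and the \emph{DEP}, $O(m)$ for $\C$, $O(k\Delta)$ for $\R$ and $\D$, and $O(m+k\Delta)$ for the stack, yields the claimed $O(m+k\Delta)$ bound. I expect the disjointness/charging argument for the local $U$-sets to be the only genuinely delicate point, since the remaining estimates reduce to the per-structure counting already developed in Section \ref{sec_bblock}.
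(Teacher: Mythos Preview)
Your proof is correct and follows essentially the same approach as the paper: bound $\C$ by $O(m)$ via the per-vertex incidence argument, bound $\R$ and $\D$ by $O(k\Delta)$, and observe that the local $U$-sets are carved out of the $D_l$'s and hence already accounted for. Your direct bound $\sum_l |D_l|\le k(\Delta+1)$ (from the fact that each $D_l$ is initialized to $N[x]\cap N[y]$ and only shrinks) is slightly cleaner than the paper's route through $\sum_x|R_x|$, and your explicit disjointness argument for the $(z,l)$ pairs along a root-to-leaf path spells out what the paper compresses into a single sentence, but the substance is the same.
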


Next, we describe bounds on the depth and the branching factors of a search tree of \emph{ECCS2}.

\begin{lemma}
\label{lemma_eccs2_depth}
The depth of a search tree of \emph{ECCS2} is at most $\beta k$.
\end{lemma}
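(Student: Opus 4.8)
The plan is to track a single nonnegative potential along any root-to-leaf path of the search tree, namely the total number of vertex-to-clique assignments $\Phi(\C) = \sum_{C_l \in \C} |C_l|$ in the partial solution $\C$ maintained by \emph{ECCS2}. I would first show that $\Phi(\C) \leq \beta k$ at every node, then that each recursive call increases $\Phi$ by at least one, and finally combine the two.

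For the upper bound, observe that every clique ever placed in $\C$ is a genuine clique of $G$: covering the uncovered edge $\{x,y\}$ with $C_l$ in step~3a preserves the clique property precisely because $l \in R_x \cap R_y$, which by Proposition~\ref{prop_isr2} means $\{x,y\}$ can be covered by $C_l$ (the relevant endpoint(s) lie in $N(C_l)$), and a clique freshly created in step~4a is just the single edge $\{x,y\}$. Hence $|C_l| \leq \beta$ for every $l$. Moreover, a new clique is created only in step~4, which requires $k > 0$ and recurses with $k-1$; starting from the input value $k$, at most $k$ cliques are created along any path, so $|\C| \leq k$ throughout. Therefore $\Phi(\C) = \sum_{C_l \in \C}|C_l| \leq \beta k$ at every node.

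For the per-level increase, consider descending from a node to one of its children. Through step~4 a brand-new clique $\{x,y\}$ is added, raising $\Phi$ by two. Through step~3 the algorithm performs $C_l \leftarrow C_l \cup \{x,y\}$ for some $l \in R_x \cap R_y$; since $\{x,y\}$ is an \emph{uncovered} edge, no single clique of $\C$ contains both $x$ and $y$, so at least one of $x,y$ lies outside $C_l$ and the update strictly enlarges $C_l$. In either case $\Phi$ grows by at least one, and it never decreases while descending, since steps~3a and~4a only enlarge cliques.

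Combining these, a path of depth $D$ makes $D$ recursive calls, each raising $\Phi$ by at least one from the initial value $\Phi(\emptyset)=0$, so $D \leq \Phi(\C) \leq \beta k$ at the deepest node, giving the claimed bound. I expect the only delicate point to be confirming that a step-3 branch cannot leave $\Phi$ unchanged, i.e.\ that $x$ and $y$ cannot both already be present in $C_l$; this is exactly ruled out by the invariant that the selected edge is uncovered, so the argument is safe.
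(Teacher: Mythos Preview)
Your proof is correct and follows essentially the same approach as the paper: the paper observes that each clique is created once with two vertices and thereafter each step~3 expansion adds at least one new vertex (since the edge is uncovered), so each of the at most $k$ cliques is touched at most $\beta-1$ times along any branch. Your potential $\Phi(\C)=\sum_{C_l\in\C}|C_l|$ is just a clean repackaging of that same count, and your check that a step~3 branch strictly increases $\Phi$ is exactly the paper's ``at least one new vertex'' observation.
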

\begin{proof}
Every new clique starts with two vertices (step 4a), and each expansion of a clique (step 3a) with an uncovered edge $\{x,y\}$ will include at least one new vertex to the clique. Therefore, number of total possible expansions of any clique is at most $\beta-2$. With at most $k$ cliques allowed in any branches, the depth of a search tree is at most $\beta k$.
\end{proof}

Recall by $\langle u_1, u_2, \ldots, u_n \rangle$ we denote degeneracy ordering of $V$. Since for any $x = u_i$, we only consider $y \in N_d(x)$, any vertex $z \in N(x) \backslash N_d(x)$ would not be included in any clique $C_l \in \C$ until we process a node such that $x = z$. Therefore, by our choice of uncovered edge $\{x,y\}$ for each node of the search tree of \emph{ECCS2}, the following invariant holds. 

\begin{proposition}
\label{inv_eccs2}
At any node of a search tree of \emph{ECCS}, if $x=u_i$, then for all $ C_l \in \C, C_l \subseteq \{u_i, u_{i+1}, \ldots, u_n\}$.
\end{proposition}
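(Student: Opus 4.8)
The plan is to reduce the invariant to a bookkeeping fact about which vertices can enter a clique, combined with the reverse-\emph{DEP} selection rule at step 2 of \emph{ECCS2}. The first step is to note that the only places a vertex is ever added to a clique are step 4a (which creates the clique $\{x,y\}$) and step 3a (which replaces $C_l$ by $C_l \cup \{x,y\}$); in both cases the vertices inserted are exactly the two endpoints of the edge $\{x,y\}$ currently selected at step 2. Hence at any node every vertex occurring in a clique of $\C$ is an endpoint of an edge that was selected at step 2 at some ancestor node on the root-to-node path. Note that expanding a clique may passively cover further edges, but it never inserts any vertex other than $x$ and $y$, so this characterization of clique vertices is unaffected.

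The second step is a structural fact about \emph{DEP}: since \emph{DEP} is the concatenation $\langle \pi^{E_d(u_1)}, \ldots, \pi^{E_d(u_{n-1})}\rangle$ and $N_d(u_a) \subseteq \{u_{a+1}, \ldots, u_n\}$, any edge incident on $u_j$ sits in the block $\pi^{E_d(u_a)}$ with $a$ the smaller index of its two endpoints, so $a \le j$. I would then show that along any root-to-node path the block index of the selected edge is non-increasing: when step 2 picks the \emph{last} uncovered edge of some block $g$, every edge lying in a block of index greater than $g$ appears later in \emph{DEP} and is therefore already covered, and since coverage only grows as we descend (the undo operations 3d and 4d act only on backtracking), those edges stay covered at all descendants, forcing every subsequently selected edge into a block of index at most $g$.

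Combining the two steps finishes the argument. Fix a node whose selected edge has $x = u_i$, so the current block index is $i$; by the monotonicity just established, every edge selected on the path lies in a block of index at least $i$. If some $u_j$ with $j < i$ belonged to a clique, then by the first step it would be an endpoint of a selected edge, and by the \emph{DEP} fact that edge would lie in a block of index at most $j < i$, contradicting that all selected edges are in blocks of index at least $i$. Therefore no $u_j$ with $j < i$ can occur in $\C$, i.e. $C_l \subseteq \{u_i, u_{i+1}, \ldots, u_n\}$ for every $C_l \in \C$. The step I expect to require the most care is the non-increasing-block claim, since it is exactly where the \emph{last}-uncovered-edge rule (the mirror of the \emph{first}-edge rule underlying Proposition~\ref{inv_eccs}) is used, and one must be careful that passive edge coverage and the placement of the undo steps do not disturb the monotonicity of coverage along a downward path.
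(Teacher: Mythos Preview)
Your proof is correct and follows essentially the same approach as the paper. The paper's argument is compressed into the single sentence preceding the proposition (``Since for any $x=u_i$, we only consider $y\in N_d(x)$, any vertex $z\in N(x)\setminus N_d(x)$ would not be included in any clique $C_l\in\C$ until we process a node such that $x=z$''), which implicitly relies on exactly the two facts you isolate: only endpoints of the currently selected edge ever enter a clique, and the \emph{last}-uncovered-edge rule forces the block index to be non-increasing along any root-to-node path. Your version simply makes these two ingredients explicit and checks the monotonicity of coverage carefully.
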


\begin{lemma}
\label{lemma_eccs2_bf}
The number of branches at any node of a search tree of \emph{ECCS2} is at most $\min\{k, \binom{d+1}{2}\}$.
\end{lemma}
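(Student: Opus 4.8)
The plan is to observe that the branches created at a node of \emph{ECCS2} are exactly the $|R_x \cap R_y|$ branches of step 3 together with at most one additional branch from step 4 (which fires only when the current budget is positive). So if $b$ denotes the branching factor at a node, then $b = |R_x \cap R_y| + [\text{step 4 fires}]$, and I would bound this quantity by $k$ and by $\binom{d+1}{2}$ through two independent arguments, the minimum of which yields the claim.

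For the bound by $k$: the set $R_x \cap R_y$ consists of indices of cliques of the current partial cover $\C$, so $R_x \cap R_y \subseteq \{1, \ldots, |\C|\}$ and $|R_x \cap R_y| \le |\C|$. Since \emph{ECCS2} starts from $\C = \emptyset$ and the only rule that adds a clique is step 4, which both requires the current budget to be positive and decrements it, after creating $|\C|$ cliques the remaining budget is $k - |\C| \ge 0$; hence $|\C| \le k$. If the budget is still positive then $|\C| \le k-1$ and step 4 contributes one more branch, giving $b \le (k-1)+1 = k$; if the budget is zero then step 4 does not fire and $b = |R_x \cap R_y| \le |\C| = k$. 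Either way $b \le k$.

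For the bound by $\binom{d+1}{2}$, I would first show that every clique $C_l$ with $l \in R_x \cap R_y$ satisfies $C_l \subseteq N_d[x]$. This follows from Proposition \ref{inv_eccs2} (all cliques lie in $\{u_i, \ldots, u_n\}$ when $x = u_i$) together with Definition \ref{def_rs}: if $x \in C_l$ then the clique property forces $C_l \setminus \{x\} \subseteq N_d(x)$, and if $x \notin C_l$ then $C_l \subseteq N(x) \cap \{u_{i+1}, \ldots, u_n\} = N_d(x)$. Next, I would map each such $C_l$ to the edge $e_l$ for which step 4 originally created it; since cliques only grow (step 3a) along a root-to-node path, $e_l \subseteq C_l \subseteq N_d[x]$, so $e_l \in E(G[N_d[x]])$, and $|E(G[N_d[x]])| \le \binom{|N_d[x]|}{2} \le \binom{d+1}{2}$ because $|N_d(x)| \le d$.

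The crux, and the part I expect to be the main obstacle, is arguing that this creating-edge map is injective and, crucially, that its image avoids $\{x,y\}$, so that the step-4 branch can be absorbed without exceeding $\binom{d+1}{2}$. For injectivity: if two distinct cliques shared a creating edge $e$, the one created later would have been created (step 4) for the edge $e$ selected at a node where, by step 2, $e$ was uncovered; but the earlier clique already contains $e$ and persists along the path, so $e$ is covered there, a contradiction. For the avoidance of $\{x,y\}$: since $\{x,y\}$ is the uncovered edge selected at the current node, $\{x,y\} \not\subseteq C_l$, whereas $e_l \subseteq C_l$, so $e_l \ne \{x,y\}$. Hence the map injects $R_x \cap R_y$ into $E(G[N_d[x]]) \setminus \{\{x,y\}\}$, giving $|R_x \cap R_y| \le \binom{d+1}{2} - 1$ and therefore $b \le |R_x \cap R_y| + 1 \le \binom{d+1}{2}$. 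Combining with the previous paragraph, $b \le \min\{k, \binom{d+1}{2}\}$.
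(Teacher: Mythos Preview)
Your proof is correct and follows essentially the same approach as the paper's: both use Proposition~\ref{inv_eccs2} to confine the relevant cliques to $N_d[x]$, bound $|R_x\cap R_y|$ by the edge count of $G[N_d[x]]$ minus one (exploiting that $\{x,y\}$ is uncovered), and handle the $k$-bound by the budget case split. The only difference is that you make the ``creating-edge'' injection explicit and verify its injectivity and avoidance of $\{x,y\}$, whereas the paper asserts $|R_x\cap R_y|<\binom{d+1}{2}$ directly without spelling out that map.
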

\begin{proof}
Consider an uncovered edge $\{x,y\}$ at step 2 of \emph{ECCS2}. For the branches of step 3, by Proposition \ref{inv_eccs2}, we only need to consider cliques created for the subgraph induced by vertices of $\{u_i, u_{i+1}, \ldots, u_n\}$ in $G$.

Since  $|N_d[x]| \leq d+1$, maximum number edges in the subgraph induced by the vertices of $N_d[x]$ is at most $\binom{d+1}{2}$. Since $y \in N_d(x)$ and the edge $\{x,y\}$ is uncovered, number of cliques in $\C$ that are also included in the representative sets of $x$ and $y$ is less than $\binom{d+1}{2}$, i.e., $|R_x \cap R_y| < \binom{d+1}{2}$. Considering the branch for a new clique at step 4, number of branches at any node of a search tree of \emph{ECCS2} is at most $\binom{d+1}{2}$.

At any node of the search tree at most $k$ cliques are allowed in $\C$, i.e., $|R_x \cap R_y| \leq k$. If at any node of the search tree $|R_x \cap R_y| = k$, then there is no branch at step 4 for the corresponding node. The claim follows.
\end{proof}

\begin{lemma}
\label{lemma_eccs2_nn}
The number of nodes in a search tree of \emph{ECCS2} is at most $2^{\beta k \log k}$.
\end{lemma}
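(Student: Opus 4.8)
The plan is to combine the two immediately preceding lemmas via the generic search-tree counting bound recalled in the preliminaries, namely that a tree whose branching factor is bounded by $b(p)$ and whose depth is bounded by $h(p)$ has at most $\{b(p)\}^{h(p)}$ nodes. So the entire argument reduces to plugging in the bounds already proved for \emph{ECCS2}.

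Concretely, I would first invoke Lemma \ref{lemma_eccs2_bf}, which bounds the branching factor at every node of the search tree by $\min\{k,\binom{d+1}{2}\}$; for this counting argument I only need the weaker estimate $b \le k$. Next I would invoke Lemma \ref{lemma_eccs2_depth}, which gives depth $h \le \beta k$. Substituting into $b^{h}$ yields $k^{\beta k}$, and rewriting $k^{\beta k} = 2^{\beta k \log k}$ (logarithm base two, as fixed in the paper) gives exactly the claimed bound. The choice to use the $k$ side of the $\min$ rather than the $\binom{d+1}{2}$ side is deliberate: the target bound is in terms of the clique number $\beta$ and $k$, the depth bound already supplies the $\beta$ dependence, so using $k$ for the width keeps the final exponent in the desired form.

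The only point deserving a word of care is that the generic estimate $b^{h}$ most naturally bounds the number of leaves (or root-to-leaf paths), whereas the exact node count is the geometric sum $\sum_{i=0}^{h} b^{i} = (b^{h+1}-1)/(b-1)$. For any nontrivial instance with $b \ge 2$ this sum is at most $2b^{h}$, and the resulting constant factor is absorbed into the $n^{O(1)}$ factors of every downstream running-time statement; moreover the paper has already adopted the $\{b(p)\}^{h(p)}$ convention for node counts in its bounded-search-tree exposition, so stating the bound simply as $2^{\beta k \log k}$ is consistent. I expect no real obstacle here: the substantive work was done in Lemmas \ref{lemma_eccs2_depth} and \ref{lemma_eccs2_bf}, and this statement is essentially their arithmetic combination.
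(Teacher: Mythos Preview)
Your proposal is correct and matches the paper's own proof essentially line for line: invoke Lemma~\ref{lemma_eccs2_depth} for the depth bound $\beta k$, invoke Lemma~\ref{lemma_eccs2_bf} and take the $k$ side of the $\min$ for the branching factor, and conclude $k^{\beta k}=2^{\beta k\log k}$. The extra remark about the geometric sum versus $b^{h}$ is a harmless aside; the paper simply uses the $b^{h}$ convention without comment.
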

\begin{proof}
From Lemma \ref{lemma_eccs2_depth}, the depth of a search tree of \emph{ECCS2} is at most $\beta k$. From Lemma \ref{lemma_eccs2_bf}, the branching factors of a search tree of \emph{ECCS2} are bounded by $\min\{k, \binom{d+1}{2}\} \leq k$. Therefore, the number of nodes in a search tree of \emph{ECCS} is at most $k^{\beta k} = 2^{\beta k \log k}$.
\end{proof}

Considering time needed for data reduction, Theorem \ref{thm_ecc2} follows from Lemma \ref{lemma_eccs2_nn} and Proposition \ref{prop_eqv1}.

We have shown in Section \ref{sec_ecrs} that the algorithm \emph{ECCS} has consistently better running time than \emph{ECCG} \cite{gramm2009data}. For a large class of graphs, \emph{ECCS2} further beats \emph{ECCS} in the running time. This can be seen as follows. From Lemma \ref{lemma_eccs2_bf}, we have $\min \{k, \binom{d+1}{2}\} \leq \binom{d+1}{2} < (d+1)^2$. Therefore, the number of nodes in a search tree of \emph{ECCS2} is at most $2^{O(\beta k \log d)}$. For $\beta = o(d / \log d)$, \emph{ECCS2} improves the bound on search tree size by a factor of $\frac{2^{O(dk)}}{2^{O(\beta k \log d)}} = 2^{O(k(d -\beta \log d))} = 2^{O(dk)}$, an exponential improvement over the bound on search tree size. For many instances of \emph{ECC}, $\beta = o(d / \log d)$ is a very mild requirement, considering that $d$ can grow linearly with the input size, while $\beta$ remains constant (Remark \ref{rem_beta}).

\subsection{Assignment Clique Cover}
\label{algos_acc}

We will continue to assume that data reduction rules are applied on an instance $(G,t)$ of \emph{ACC}, including the rules described in Section \ref{ecrs_acc}. Figure \ref{fig:acc_algo2} shows our new bounded search tree algorithm for \emph{ACC}. henceforth referred as \emph{ACCS2}. Steps 4c, 4f, 5a, 5d of \emph{ACCS2} are identical to the steps 3a, 3d, 4a, 4d of \emph{ECCS2} described in Figure \ref{fig:ecc_algo2_subs}.

\begin{figure}
    \centering
    \fbox{\parbox{5.5in}{
\underline{Assignment-Clique-Cover-Search2$(G, t, \C, \R)$:}

\textcolor{teal}{\\ // Abbreviated \textbf{ACCS2}$(G,t,\C,\R)$}

    \begin{enumerate}
        \item if $\C$ covers edges of $G$, then return $\C$
        \item if $t \leq 0$ return $\emptyset$
        \item select the \emph{last} uncovered edge $\{x,y\}$ from the \emph{DEP} of $G$ where $y \in N_d(x)$
        
        \textcolor{teal}{// cf. step 3 of \emph{ACCS} in Figure \ref{fig:acc_algo1}}
        \item for each $l \in R_x \cap R_y$ do 
    \begin{enumerate}
        \item $s \leftarrow |\{x,y\} \backslash (C_l \cap \{x,y\})|$
        \item if $t < s $, then go to the next iteration of step 4 
        \item \textcolor{blue}{cover the edge $\{x,y\}$ with the clique $C_l$ and update $\R$}  \textcolor{teal}{// Figure \ref{fig:ecc_algo2_subs} (a)}
        \item $\Q \leftarrow \textbf{ACCS2}(G, t-s, \C, \R)$
        \item if $\Q \ne \emptyset$, then return $\Q$
        \item \textcolor{blue}{undo changes done to $\C$ and $\R$ at step 4c} \textcolor{teal}{// Figure \ref{fig:ecc_algo2_subs} (b)}
    \end{enumerate}
    \item if $t \geq 2$, then
    \begin{enumerate}
        \item \textcolor{blue}{set $\C$ to $\C \cup \{\{x,y\}\}$ and update $\R$} \textcolor{teal}{// Figure \ref{fig:ecc_algo2_subs} (c)}
        \item $\Q \leftarrow \textbf{ACCS2}(G, t-2, \C, \R)$
        \item if $\Q \ne \emptyset$, then return $\Q$
        \item \textcolor{blue}{undo changes done to $\C$ and $\R$ at step 5a} \textcolor{teal}{// Figure \ref{fig:ecc_algo2_subs} (d)}
    \end{enumerate}
    \item return $\emptyset$
    \end{enumerate} 
}}   
    \caption{A bounded search tree algorithm for \emph{ACC}, denoted \emph{ACCS2}.}
    \label{fig:acc_algo2}
\end{figure}

In a search tree of \emph{ACCS2}, there is no depth bound for the number of cliques. \emph{ACCS} will create as many cliques as permitted by the parameter $t$, until it finds a clique cover of $G$ at step 1. At step 4a of \emph{ACCS2}, we count out of two vertices $x$ and $y$ how many are missing in $C_l$: it could only be one or two. Only scenario for which we cannot execute steps 4c-4f would be when $s = 2$ but $t = 1$. We are reusing the subroutines of \emph{ECCS2} for \emph{ACCS2}, and rest of the steps of \emph{ACCS2} are fairly straightforward adaptation of \emph{ECCS2}. We can adapt the proof of Lemma \ref{lemma_eccs2_ok} to show correctness of \emph{ACCS2} (see Appendix \ref{sec:app_algos}). Thus we conclude the following.

\begin{restatable}{lemma}{LemmaACCSTwoOK}
\label{lemma_accs2_ok}
Algorithm \emph{ACCS2} correctly solves the parameterized problem \emph{ACC}.
\end{restatable}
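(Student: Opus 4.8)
The plan is to establish both directions of correctness and to do so by transporting the \emph{subcovering} argument from the proof of Lemma~\ref{lemma_eccs2_ok} to the assignment setting, the only genuinely new ingredient being the accounting of the assignment budget in place of the clique budget. Write $t_0$ for the input budget and, at a node of the search tree with partial cover $\C$, let $t$ denote the remaining budget passed to the recursive call. The first step is to record the invariant $t = t_0 - \sum_{C_l \in \C} |C_l|$, which holds because step~5 adds a two-vertex clique and subtracts $2$, while step~4 adds $s = |\{x,y\} \setminus (C_l \cap \{x,y\})|$ fresh vertices to $C_l$ and subtracts exactly $s$.

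Soundness is the easy direction. \emph{ACCS2} returns a nonempty cover only at step~1, and only when $\C$ covers every edge of $G$; moreover the guards at steps~2, 4b, and 5 guarantee that $t$ never becomes negative along any path that reaches step~1. Combining this with the invariant gives $\sum_{C_l \in \C}|C_l| = t_0 - t \le t_0$, so the returned $\C$ is a valid \emph{ACC} solution.

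For completeness I would fix any valid solution $\C^{*}$ with $\sum_{l}|C^{*}_l| \le t_0$ and reuse the notion of a \emph{subcovering} of $\C^{*}$: a clique cover $\C$ of a subgraph in which each clique sits inside a distinct clique of $\C^{*}$. The pivotal observation is a budget sublemma: if $\C$ is a subcovering of $\C^{*}$, then the injection $C_i \mapsto C^{*}_{\sigma(i)}$ witnessing containment yields $\sum_{C_i \in \C}|C_i| \le \sum_{l}|C^{*}_l| \le t_0$, so at every subcovering node the invariant forces $t \ge 0$; and if in addition some edge is still uncovered, the inequality is strict (equality would force $\C = \C^{*}$, hence all edges covered), so in fact $t \ge 1$. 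I would then prove, by induction on the number of uncovered edges, that from any node whose partial cover is a subcovering of $\C^{*}$ the algorithm returns a valid cover. The base case is a node with no uncovered edges, where step~1 returns the subcovering, itself a valid solution. For the inductive step, let $\{x,y\}$ be the last uncovered edge of the \emph{DEP} selected at step~3 and let $C^{*}_l \in \C^{*}$ contain it; the case analysis is exactly as for \emph{ECCS2}. If some $C_i \in \C$ lies in $C^{*}_l$, then $C_i \cup \{x,y\} \subseteq C^{*}_l$ is a clique, so by Proposition~\ref{prop_isr2} the index of $C_i$ lies in $R_x \cap R_y$ and step~4 examines this branch, which produces a larger subcovering; otherwise $\{x,y\}$ spawns a fresh clique at step~5, again a subcovering. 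In either case the induction hypothesis applies, and since \emph{ACCS2} returns the first successful branch, it returns a cover.

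The crux---and the one place where the adaptation from \emph{ECCS2} is not purely cosmetic---is verifying that the new budget guards do not prune the branch identified in the inductive step. Here the budget sublemma does the work: extending $C_i$ keeps the cover a subcovering, so its new total assignment count is still $\le t_0$, whence the number $s$ of added vertices satisfies $s \le t$ and the guard at step~4b passes; and spawning a new clique for $\{x,y\}$ raises the assignment count by $2$ while still respecting $\sum_{C_i \in \C}|C_i| + |C^{*}_l| \le t_0$ with $|C^{*}_l| \ge 2$, so $t \ge 2$ and the guard at step~5 passes. I expect this budget-guard verification to be the main obstacle; the remaining structural reasoning carries over essentially verbatim from the proof of Lemma~\ref{lemma_eccs2_ok}.
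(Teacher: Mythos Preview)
Your argument is correct. You formalize the informal \emph{subcovering} discussion that precedes Lemma~\ref{lemma_eccs2_ok} into a direct induction on the number of uncovered edges, whereas the paper's formal proof of Lemma~\ref{lemma_accs2_ok} instead inducts on the parameter~$t$: it fixes a witness cover $\C=\{C_1,\dots,C_k\}$, peels off the edge set $E_{\pi(k)}$ and its vertex set $V_{\pi(k)}$, applies the inductive hypothesis to the smaller instance $(G\setminus E_{\pi(k)},\,t-|V_{\pi(k)}|)$, and then argues that the remaining edges of $E_{\pi(k)}$ can be absorbed in descendant nodes using at most $|V_{\pi(k)}|$ further assignments. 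Your route avoids this instance-level decomposition and works entirely inside a single search tree, tracking the invariant $t=t_0-\sum_l|C_l|$ and checking the two guards explicitly; this is arguably cleaner, and it makes the role of the subcovering injection transparent. The paper's route, on the other hand, parallels its proofs of Lemmas~\ref{lemma_eccs2_ok} and~\ref{lemma_awecp_ok} more uniformly (all three induct on the parameter and peel off one clique's worth of exclusive edges), so it fits the surrounding exposition a bit better. Both arguments hinge on Proposition~\ref{prop_isr2} at the same point, namely to certify that the target clique index actually appears in $R_x\cap R_y$.
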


\begin{lemma}
\label{lemma_accs2_nn}
The number of nodes in a search tree of \emph{ACCS2} is at most $4^{t \log t}$.
\end{lemma}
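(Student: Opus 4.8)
The plan is to bound the number of nodes by the standard product $b^{h}$ from the preliminaries, where $h$ is the depth and $b$ the maximum branching factor of the search tree of \emph{ACCS2}, with both quantities ultimately expressed in terms of $t$.

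First I would bound the depth. Every recursive call strictly decreases the parameter: in step 4 we recurse on $t-s$ with $s=|\{x,y\}\backslash(C_l\cap\{x,y\})|$, and since $\{x,y\}$ is uncovered (so $x$ and $y$ are not both already in $C_l$) we have $s\in\{1,2\}$, hence $s\geq 1$; in step 5 we recurse on $t-2$. As the recursion halts at step 2 once $t\leq 0$, the depth of the search tree is at most $t$.

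Next I would bound the branching factor. The edge-selection rule at step 3 of \emph{ACCS2} is identical to that of \emph{ECCS2}, so the invariant of Proposition \ref{inv_eccs2} carries over verbatim: if $x=u_i$, then every clique $C_l\in\C$ satisfies $C_l\subseteq\{u_i,\ldots,u_n\}$. Any $l\in R_x\cap R_y$ indexes a clique that can cover $\{x,y\}$, so $C_l\subseteq N[x]$, and together with the invariant this gives $C_l\subseteq N_d[x]$. Repeating the counting argument of Lemma \ref{lemma_eccs2_bf} — all such cliques live in the subgraph induced by $N_d[x]$, which has at most $d+1$ vertices and hence at most $\binom{d+1}{2}$ edges, while local minimality assigns distinct exclusive edges to distinct cliques — yields $|R_x\cap R_y|<\binom{d+1}{2}$. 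Counting the single extra branch created in step 5, the branching factor at any node is at most $\binom{d+1}{2}$.

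Finally I would convert this degeneracy bound into a $t$-bound and combine. Because the instance is reduced with respect to Rule \ref{rule_acc_4} we have $n\leq t$, and with $d+1\leq n$ from Proposition \ref{prop1} this gives $d+1\leq t$; consequently the branching factor is at most $\binom{d+1}{2}\leq\binom{t}{2}<t^2$. Substituting depth $\leq t$ and branching factor $<t^2$ into $b^{h}$ bounds the number of nodes by $(t^2)^t=t^{2t}=4^{t\log t}$. The one place needing care is the branching-factor step: \emph{ACCS2} carries no cliques-count parameter $k$, so the bound $\min\{k,\binom{d+1}{2}\}$ of Lemma \ref{lemma_eccs2_bf} must be replaced by the pure degeneracy bound $\binom{d+1}{2}$, and I must confirm that local minimality together with the degeneracy-ordering invariant still forces $|R_x\cap R_y|<\binom{d+1}{2}$ in this setting before the clean conversion $d+1\leq t$ can be applied.
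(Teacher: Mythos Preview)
Your proposal is correct and follows essentially the same route as the paper: bound the depth by $t$ via the per-call decrement of the parameter, invoke the first part of Lemma~\ref{lemma_eccs2_bf} (the degeneracy-only half, since no clique-count parameter is available) to cap the branching factor at $\binom{d+1}{2}$, then use Rule~\ref{rule_acc_4} together with Proposition~\ref{prop1} to get $d+1\leq t$ and conclude $(t^2)^t=4^{t\log t}$. Your extra care in justifying $s\geq 1$ and in flagging that the $k$-half of the $\min$ bound is unavailable here is appropriate and does not diverge from the paper's argument.
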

\begin{proof}
The branching factors of \emph{ACCS2} are bounded by $\binom{d+1}{2}$: this follows from the first part of the proof of Lemma \ref{lemma_eccs2_bf}. By Rule \ref{rule_acc_4}, $d+1 \leq n \leq t$. The depth of a search tree of \emph{ACCS2} is at most $t$. Therefore, the number of nodes in a search tree of \emph{ACCS2} is at most $\binom{t}{2}^t < t^{2t} = 4^{t\log t}$.
\end{proof}

Theorem \ref{thm_acc2} follows from Lemma \ref{lemma_accs2_nn} and Proposition \ref{prop_eqv1}, considering time needed for data reduction. The running time of Corollary \ref{cor_acc} follows from similar arguments presented for Corollary \ref{cor_acc2}. The space bound of Corollary \ref{cor_acc} follows from Lemma \ref{lemma_eccs2_space}, since for an \emph{assignment-minimum clique cover} $k = m$ suffices.

\subsection{Weighted Edge Clique Partition}

We describe a bounded search tree algorithm for the \emph{WECP} problem, considering that a given instance of \emph{WECP} would be reduced with respect to the data reduction rules described by \cite{feldmann2020fixed}. For \emph{WECP}, \cite{feldmann2020fixed} have developed a \emph{bi-kernel} of size at most $4^k$. The \emph{bi-kernel} is based on a general problem that considers a subset of vertices annotated with integer weights as input, in addition to the input of \emph{WECP}. The general problem is as follows.

\begin{center}
\fbox{\parbox{6.0in}{
\emph{Annotated weighted edge clique partition} (\emph{AWECP})

\textbf{Input:} A graph $G=(V,E)$, a weight function on edges $w^E: E \rightarrow \mathbb{Z}_{>0}$, a nonnegative integer $k$, a set of vertices $S \subseteq V$, and a weight function $w^S: S \rightarrow \mathbb{Z}_{>0}$.

\textbf{Output:} If one exists, a clique cover $\C$ of $G$ such that (1) $|\C| \leq k$, (2) each edge $e \in E$ appears in exactly $w^E(e)$ cliques of $\C$, and (3) each vertex $x \in S$ appears in exactly $w^S(x)$ cliques of $\C$; otherwise report \emph{NO}.
 }}    
\end{center}

When $S = \emptyset$, \emph{AWECP} reduces to the special case \emph{WECP}. For the data reduction rules that give rise to an instance of \emph{AWECP} from an instance of \emph{WECP}, see Section 2 of \cite{feldmann2020fixed}. We point out that a reduced instance of \emph{AWECP} with respect to the data reduction rules of \cite{feldmann2020fixed} does not contain any isolated vertices.

Note that for any vertex $x \in S$, if the weight on the vertex $x$ is smaller than the weight of all the edges incident on $x$, i.e., $w^S(x) < w^E(e)$ for all $e = \{x,y\} \in E$, then we immediately know that a given instance of \emph{AWECP} is a \emph{NO} instance. Also, for any vertex $x \in S$, if the weight on the vertex $x$ is larger than the sum of the weights of all the edges incident on $x$, i.e., $w^S(x) > \sum_{e= \{x,y\} \in E} w^E(e)$, then we immediately know that a given instance of \emph{AWECP} is a \emph{NO} instance.

\begin{figure}
    \centering
    \fbox{\parbox{6.0in}{
\underline{Annotated-Weighted-Edge-Clique-Partition-Search$(G, k, w^E, S, w^S, \C, \R)$:}

\textcolor{teal}{\\ // Abbreviated \textbf{AWECPS}$(G, k, w^E, S, w^S, \C,\R)$}

    \begin{enumerate}
        \item if $w^E(e) = 0$ for all $e \in E$ and $w^S(z) =0 $ for all $z \in S$, then return $\C$
        
        \textcolor{teal}{// cf. step 1 \emph{ECCS2} Figure \ref{fig:ecc_algo2}}
        
        \item select the \emph{last} edge $e=\{x,y\}$ from the \emph{DEP} of $G$ where $y \in N_d(x)$ and $w^E(e) > 0$
        
        \textcolor{teal}{// cf. step 2 \emph{ECCS2} Figure \ref{fig:ecc_algo2}}
        \item for each $l \in R_x \cap R_y$ such that $\{x,y\} \not\subseteq C_l$ do 
    \begin{enumerate}
        \item \textcolor{blue}{if '$e$ cannot be included' in $C_l$, then go to the next iteration of step 3} \textcolor{teal}{// Figure \ref{fig:awecp_algo_subs} (a)}
        \item \textcolor{blue}{include the edge $\{x,y\}$ in the clique $C_l$, and update $\R$, $w^E$, $w^S$}   \textcolor{teal}{// Figure \ref{fig:awecp_algo_subs} (b)}
        \item $\Q \leftarrow \textbf{AWECPS}(G, k, w^E, S, w^S, \C, \R)$
        \item if $\Q \ne \emptyset$, then return $\Q$
        \item \textcolor{blue}{undo changes done to $\C$, $\R$, $w^E$, and $w^S$ at step 3b} \textcolor{teal}{// Figure \ref{fig:awecp_algo_subs} (c)}
    \end{enumerate}
    \item if $k > 0$, then
    \begin{enumerate}
        \item if ($x \in S$ and $w^S(x) = 0$) or ($y\in S$ and $w^S(y) =0$), then return $\emptyset$
        \item \textcolor{blue}{set $\C$ to $\C \cup \{\{x,y\}\}$, and update $\R$, $w^E$, $w^S$} \textcolor{teal}{// Figure \ref{fig:awecp_algo_subs} (d)}
        \item $\Q \leftarrow \textbf{AWECPS}(G, k-1, w^E, S, w^S, \C, \R)$
        \item if $\Q \ne \emptyset$, then return $\Q$
        \item \textcolor{blue}{undo changes done to $\C$, $\R$, $w^E$, and $w^S$ at step 4b} \textcolor{teal}{// Figure \ref{fig:awecp_algo_subs} (e)}
    \end{enumerate}
    \item return $\emptyset$
    \end{enumerate} 
}}   
    \caption{A bounded search tree algorithm for \emph{AWECP}, denoted \emph{AWECPS}.}
    \label{fig:awecp_algo}
\end{figure}

\begin{figure}[!hbtp]
    \centering
    \begin{minipage}{0.45\textwidth}
    \begin{subfigure}{\linewidth}
        \fbox{\parbox{\linewidth}{
        \textcolor{blue}{\underline{Step 3a:}}
        
        \begin{enumerate}
            \item if $x \in S$ and $x \not\in C_l$ and $w^S(x) = 0$, then report \emph{'$e$ cannot be included'}
            \item if $y \in S$ and $y \not\in C_l$ and $w^S(y) = 0$, then report \emph{'$e$ cannot be included'}
            \item for each $z \in C_l \backslash \{x,y\}$ do
            \begin{itemize}
                \item Let $e_1 = \{x,z\}$ and $e_2 = \{y,z\}$
            \end{itemize}
            \begin{enumerate}
                \item if ($x \not \in C_l$ and $w^E(e_1) = 0$) or ($y \not\in C_l$ and $w^E(e_2) = 0$), then report \emph{'$e$ cannot be included'}
            \end{enumerate}
            
            \item report \emph{'$e$ can be included'}
        \end{enumerate}
        }}  
    \caption{}
    \end{subfigure}
    
    \begin{subfigure}{\linewidth}
        \fbox{\parbox{\linewidth}{
        \textcolor{blue}{\underline{Step 3b:}}
        
        \begin{enumerate}
            \item \textcolor{blue}{execute the subroutine for step 3a of \emph{ECCS2} from Figure \ref{fig:ecc_algo2_subs}}
            \item for each $z \in C_l \backslash \{x,y\}$ do 
            \begin{enumerate}
                \item if $x_l = 0$, then decrement $w^E(e_1)$ by $1$, where $e_1 = \{x,z\}$
                \item if $y_l = 0$, then decrement $w^E(e_2)$ by $1$, where $e_2 = \{y,z\}$
            \end{enumerate}
            \item decrement $w^E(e)$ by $1$
            \item if $x_l = 0$ and $x \in S$, then decrement $w^S(x)$ by $1$
            \item if $y_l = 0$ and $y \in S$, then decrement $w^S(y)$ by $1$
        \end{enumerate}
        }}  
    \caption{}
    \end{subfigure}

    \end{minipage}
    \hspace{\fill}
    \begin{minipage}{0.45\textwidth}
    
        \begin{subfigure}{\linewidth}
            \fbox{\parbox{\linewidth}{
            \textcolor{blue}{\underline{Step 3e:}}
            
            \begin{enumerate}
                \item \textcolor{blue}{execute the subroutine for step 3b of \emph{ECCS2} from Figure \ref{fig:ecc_algo2_subs}}
                \item for each $z \in C_l \backslash \{x,y\}$ do
                \begin{enumerate}
                    \item if $x_l = 0$, then increment $w^E(e_1)$ by $1$, where $e_1 = \{x,z\}$ 
                    \item if $y_l = 0$, then increment $w^E(e_2)$ by $1$, where $e_2 = \{y,z\}$
                \end{enumerate}
                \item increment $w^E(e)$ by $1$
                \item if $x_l = 0$ and $x \in S$, then increment $w^S(x)$ by $1$
                \item if $y_l = 0$ and $y \in S$, then increment $w^S(y)$ by $1$
            \end{enumerate}
            }}  
        \caption{}
        \end{subfigure}
        
         \begin{subfigure}{\linewidth}
        \fbox{\parbox{\linewidth}{
        \textcolor{blue}{\underline{Step 4b:}}
        
        \begin{enumerate}
            \item \textcolor{blue}{execute the subroutine for step 4a of \emph{ECCS2} from Figure \ref{fig:ecc_algo2_subs}}
            \item decrement $w^E(e)$ by $1$
            \item if $x \in S$, then decrement $w^S(x)$ by $1$
            \item if $y \in S$, then decrement $w^S(y)$ by $1$
        \end{enumerate}
        }}  
    \caption{}
    \end{subfigure}
    
     \begin{subfigure}{\linewidth}
        \fbox{\parbox{\linewidth}{
        \textcolor{blue}{\underline{Step 4e:}}
        
        \begin{enumerate}
            \item \textcolor{blue}{execute the subroutine for step 4d of \emph{ECCS2} from Figure \ref{fig:ecc_algo2_subs}}
            \item increment $w^E(e)$ by $1$
            \item if $x \in S$, then increment $w^S(x)$ by $1$
            \item if $y \in S$, then increment $w^S(y)$ by $1$
        \end{enumerate}
        }}  
    \caption{}
    \end{subfigure}
    \end{minipage}
    \caption{Subroutines of \emph{AWECPS} shown in Figure \ref{fig:awecp_algo}.}
    \label{fig:awecp_algo_subs}
\end{figure}

Figure \ref{fig:awecp_algo} shows a bounded search tree algorithm for \emph{AWECP}, henceforth referred as \emph{AWECPS}. The subroutines for steps 3a, 3b, 3e, 4a, 4d of \emph{AWECPS} are described in Figure \ref{fig:awecp_algo_subs}. \emph{AWECPS} extends the mechanism of building clique cover employed by\emph{ECCS2}. At every node of a search tree, \emph{AWECPS} extends a clique cover if and only if the extension is (locally) consistent with the weight functions $w^E$ and $w^S$.

\emph{AWECPS} treats the weight functions $w^E$ and $w^S$ as budgets for the edges of $E$ and vertices of $S$ respectively. At every node of a search tree, every edge $e$ has a budget $w^E(e)$, denoting how many more cliques the edge $e$ can be included. Similarly, at every node of a search tree, every vertex $z \in S$ also has a budget $w^S(z)$, denoting how many more cliques the vertex $z$ can be included. For \emph{AWECPS}, we maintain the notions of including an edge (or a vertex) in a clique (in this context covering an edge $e$ by cliques could be used to denote including the edge $e$ in exactly $w^E(e)$ distinct cliques).

Since both $w^E$ and $w^S$ are positive integer weight functions, if budgets of all edges of $E$ and budgets of all vertices of $S$ can be spent exactly at a node of a search tree of \emph{AWECPS} with at most $k$ cliques, then the given instance of \emph{AWECP} is a \emph{YES} instance, and step 1 would return a corresponding clique cover. Step 2 of \emph{AWECPS} essentially select the same edge $e=\{x,y\}$ as would be selected by step 2 of \emph{ECCS2}, except \emph{AWECPS} selects the same edge $e$ repeatedly until its budget is entirely exhausted, i.e., $w^E(e) = 0$. Steps 3a and 4a enforce that including an edge $e=\{x,y\}$ to a clique must be within the budgets of the corresponding edges of $E$ and within the budgets of corresponding vertices of $S$. Rest of the steps of \emph{AWECPS} are self-explanatory.

Turning to the subroutines of \emph{AWECPS} in Figure \ref{fig:awecp_algo_subs}, we see that the subroutine for step 3a simply makes sure that every new vertex or new edge that would be included in the clique $C_l$ (as a result of including the edge $e=\{x,y\}$ in the clique $C_l$) has the budget to do so. Rest of the subroutines make sure that the budgets of the edges of $E$ and the vertices of $S$ are updated accordingly when we include (resp. exclude) an edge $e$ in (resp. from) a clique. Note that (for brevity) the subroutines of \emph{ECCS2} are invoked from the subroutines of \emph{AWECPS}. 

The proof correctness of \emph{ECCS2} (Lemma \ref{lemma_eccs2_ok}) can be adapted to obtain a proof of correctness for \emph{AWECPS} (see Appendix \ref{sec:app_algos}). Thus we conclude the following.

\begin{restatable}{lemma}{LemmaAWECPSok}
\label{lemma_awecp_ok}
Algorithm \emph{AWECPS} correctly solves the parameterized problem \emph{AWECP}.
\end{restatable}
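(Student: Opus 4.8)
The plan is to adapt the correctness proof of \emph{ECCS2} (Lemma \ref{lemma_eccs2_ok}), augmenting the \emph{subcovering} argument with a budget-tracking invariant that accounts for the exact-multiplicity constraints of \emph{AWECP}. Throughout I treat the weight functions $w^E$ and $w^S$ as residual budgets that are decremented as edges and annotated vertices are placed into cliques.

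For soundness, I would maintain the invariant that at every node of the search tree, for every edge $e \in E$ the quantity $|\{C_l \in \C : e \subseteq C_l\}|$ plus the current value of $w^E(e)$ equals the input weight of $e$, and analogously for every $z \in S$ the count $|\{C_l \in \C : z \in C_l\}|$ plus the current $w^S(z)$ equals its input weight. The decrement/increment operations in the subroutines of Figure \ref{fig:awecp_algo_subs} (steps 3b, 3e, 4b, 4e) are precisely chosen to preserve this invariant, while the feasibility checks of step 3a and step 4a guarantee that no residual budget ever becomes negative, i.e. no exact count is ever overshot. Consequently, when step 1 returns --- that is, when all $w^E(e)=0$ and all $w^S(z)=0$ --- every edge lies in exactly its required number of cliques and every annotated vertex appears in exactly its required number of cliques; the guard $k>0$ at step 4 keeps $|\C| \leq k$. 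Hence every returned $\C$ is a valid \emph{AWECP} solution.

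For completeness, I would adapt the \emph{subcovering} notion. Given a \emph{YES}-instance with a witnessing solution $\C^{*}$, I define a \emph{weighted subcovering} as a partial cover $\C$ together with an injection $\sigma$ from $\C$ into $\C^{*}$ such that $C_i \subseteq C^{*}_{\sigma(i)}$ and such that the number of cliques of $\C$ in which each edge (resp. annotated vertex) currently appears never exceeds its multiplicity in $\C^{*}$. I would then show, by induction on the total residual budget $\sum_{e}w^E(e)+\sum_{z \in S}w^S(z)$, that whenever the partial cover at a node is a weighted subcovering of $\C^{*}$, at least one enumerated branch produces a strictly larger weighted subcovering. The edge $e=\{x,y\}$ selected at step 2 still has positive residual weight, so $\C^{*}$ contains a clique $C^{*}$ that includes $\{x,y\}$ and is not yet used by $\sigma$ for this incidence of $e$; this $C^{*}$ either extends an existing $C_l$, in which case Proposition \ref{prop_isr2} ensures $l \in R_x \cap R_y$ so the correct branch is enumerated at step 3, or it must be opened fresh at step 4.

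The main obstacle will be showing that the local feasibility test of step 3a (Figure \ref{fig:awecp_algo_subs}(a)) never discards a branch lying on a path to $\C^{*}$. Including $e$ in $C_l$ implicitly adds $x$ and/or $y$ to $C_l$, creating new edge-incidences $\{x,z\},\{y,z\}$ for every $z \in C_l$ and new vertex-incidences for $x,y$ when they lie in $S$; step 3a must reject exactly those extensions that would exhaust some residual budget and no others. I would argue that if the target clique $C^{*}_{\sigma(l)}$ of the weighted subcovering contains $C_l \cup \{x,y\}$, then every newly created edge $\{x,z\}$, $\{y,z\}$ and every newly incident vertex already occurs in $C^{*}_{\sigma(l)}$, so by the weighted-subcovering property its residual weight is strictly positive; hence step 3a reports ``$e$ can be included'' and the correct branch survives. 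Combined with Proposition \ref{inv_eccs2}, which (as in the \emph{ECCS2} analysis) guarantees that the degeneracy-ordered edge selection omits only cliques outside $\{u_i,\dots,u_n\}$ that cannot participate in covering $\{x,y\}$, this establishes that a valid solution is always reachable, completing the completeness direction and hence the lemma.
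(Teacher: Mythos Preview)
Your proposal is correct but follows a genuinely different route from the paper's own proof.  The paper does not formalize the \emph{subcovering} idea at all; instead it inducts on $k$ over the family of all algorithms that pick an arbitrary edge at step~2.  For the inductive step it takes a witnessing solution $\C=\{C_1,\dots,C_k\}$, peels off one clique $C_l$, and decomposes the instance $I$ into an instance $I_1$ on $k-1$ cliques (with weights decremented on the edges and annotated vertices of $C_l$) and a single-clique instance $I_2$; the inductive hypothesis gives a \emph{YES} node $u^T$ for $I_1$, and then the remaining clique is assembled in the descendants of $u^T$.  Your argument instead fixes an explicit injection $\sigma$ from the partial cover into $\C^*$ and inducts on the total residual budget, showing that some enumerated branch strictly enlarges the weighted subcovering.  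Your approach buys a cleaner account of why the local feasibility test at step~3a never prunes the ``right'' branch---the injectivity of $\sigma$ gives strict positivity of the relevant residuals---whereas the paper's decomposition hides this inside the inductive hypothesis on $I_1$.  The paper's approach, on the other hand, is shorter and avoids having to maintain $\sigma$ through the argument.

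One small point: your final appeal to Proposition~\ref{inv_eccs2} is unnecessary here.  That invariant is used in the paper only to bound the branching factor (Lemma~\ref{lemma_eccs2_bf}); the correctness proofs (both the paper's and yours) go through for any edge-selection rule, which is why the paper phrases them for the whole family $\A$.  You can simply drop that sentence.
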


\begin{lemma}
\label{lemma_awecp_nn}
The number of nodes in a search tree of \emph{AWECPS} is at most $2^{\beta k \log k}$.
\end{lemma}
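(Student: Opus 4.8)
The plan is to bound the number of nodes by the product of a depth bound and a per-node branching bound, exactly as in the proof of Lemma \ref{lemma_eccs2_nn} for \emph{ECCS2}, since \emph{AWECPS} extends the same clique-cover construction. I would first establish a depth bound of $\beta k$, then a branching-factor bound of $k$, and conclude that the number of nodes is at most $k^{\beta k} = 2^{\beta k \log k}$.

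For the depth, I would argue that the quantity $\sum_{C_l \in \C} |C_l|$ strictly increases along every root-to-leaf path. In the branch of step 4 a new clique with two vertices is created, while in a branch of step 3 the guard $\{x,y\} \not\subseteq C_l$ forces at least one of $x, y$ to be absent from $C_l$, so including the edge $e=\{x,y\}$ in $C_l$ adds at least one new vertex. Hence each recursive descent increases $\sum_{C_l \in \C}|C_l|$ by at least one. Because every $C_l$ is a clique of $G$ we have $|C_l| \leq \beta$, and because at most $k$ cliques are ever created we have $\sum_{C_l \in \C}|C_l| \leq \beta k$ at every node; starting from the empty cover, any path therefore has at most $\beta k$ nodes. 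The subtlety here, and the point I would treat with care, is that \emph{AWECPS} reselects the same edge $e$ at step 2 until $w^E(e) = 0$: I must check that each such repeated inclusion still contributes at least one to $\sum_{C_l \in \C}|C_l|$, which it does, since each inclusion is either a step-4 creation or a step-3 expansion.

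For the branching factor, I would reuse the first part of the argument of Lemma \ref{lemma_eccs2_bf}: the same invariant as in Proposition \ref{inv_eccs2} holds for \emph{AWECPS}, so the step-3 branches range over $l \in R_x \cap R_y$, a subset of the current clique indices, giving at most $|\C|$ branches. Writing the original budget as $k = |\C| + k_{\mathrm{rem}}$, where $k_{\mathrm{rem}}$ is the remaining parameter value at the node, the step-4 branch is present exactly when $k_{\mathrm{rem}} > 0$, i.e.\ when $|\C| < k$. If $|\C| < k$ the total is at most $|\C| + 1 \leq k$, and if $|\C| = k$ there is no step-4 branch so the total is at most $|\C| = k$; in either case the branching factor is at most $k$. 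Combining the depth bound $\beta k$ with the branching bound $k$ yields at most $k^{\beta k} = 2^{\beta k \log k}$ nodes. The main obstacle is precisely the bookkeeping forced by the weights --- ensuring repeated edge selections do not inflate the depth, and that the budget accounting lands the branching factor at $k$ rather than $k+1$.
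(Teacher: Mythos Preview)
Your proposal is correct and follows essentially the same approach as the paper: bound depth by $\beta k$ and branching by $k$, then multiply. The paper's proof is terser---it simply asserts the depth bound is ``straightforward'' (implicitly reusing Lemma~\ref{lemma_eccs2_depth}) and states the branching factor is $\min\{k,\binom{d+1}{2}w\}\le k$ by analogy with Lemma~\ref{lemma_eccs2_bf}---whereas you spell out the potential-function argument for depth and explicitly address the repeated-edge-selection subtlety, which is exactly the right thing to be careful about here.
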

\begin{proof}
It is straightforward to see that the depth of a search tree of \emph{AWECPS} is bounded by $\beta k$. Let $w$ be the maximum edge weight in an instance of \emph{AWECP}, i.e., $w = \max_{e \in E} \{w^E(e)\}$. Then, number of branches at any node of a search tree of \emph{AWECPS} is bounded by $\min\{k, \binom{d+1}{2}w\} \leq k$: this follows from similar arguments presented for \emph{ECCS2} in Lemma \ref{lemma_eccs2_bf}. Therefore, the number of nodes in a search tree of \emph{AWECPS} is at most $k^{\beta k} = 2^{\beta k \log k}$.
\end{proof}

Considering time needed for data reduction, Theorem \ref{thm_wecp} follows from Lemma \ref{lemma_awecp_nn} and Proposition \ref{prop_eqv1}.
 
For an instance of \emph{WECP}, the algorithm of \cite{feldmann2020fixed} takes $2^{O(k^{3/2}w^{1/2}\log (k/w))}n^{O(1)}$ time, where $w$ is the maximum edge weight of the instance. For $\beta = o((\sqrt{kw}\log (k/w)) / \log k)$, \emph{AWECPS} improves the running time by a factor of $\frac{2^{O(k^{3/2}w^{1/2}\log (k/w))}}{2^{\beta k \log k}} = 2^{O(k(\sqrt{kw}\log (k/w) - \beta \log k))} = 2^{O(k^{3/2}w^{1/2}\log (k/w))}$. This is a significant improvement, considering the fact that $\beta$ is bounded by the graph size, while $k$ and $w$ could be arbitrarily large. For nontrivial instances of \emph{WECP}, $k$ could be up to (but not including) $mw$.

On the other hand, if the maximum edge weight $w$ is bounded by some constant, then for $\beta = o(\sqrt{k})$, \emph{AWECPS} is $\frac{2^{O(k^{3/2} \log k)}}{2^{\beta k \log k}} = 2^{O(k \log k(\sqrt{k} - \beta ))} = 2^{O(k^{3/2} \log k)}$ times faster than the algorithm of \cite{feldmann2020fixed}.

\subsection{Exact Weighted Clique Decomposition}

The data reduction rules of \emph{WECP} described by \cite{feldmann2020fixed} are also applicable to any instance of \emph{EWCD}. Consequently, we have a \emph{bi-kernel} for \emph{EWCD} with $4^k$ vertices. The \emph{bi-kernel} is based on a general problem that considers a subset of vertices annotated with positive weights as input, in addition to the input of \emph{EWCD}. The general problem is as follows.

\begin{center}
\fbox{\parbox{6.0in}{
\emph{Annotated exact weighted clique decomposition} (\emph{AEWCD})

\textbf{Input:} A graph $G=(V,E)$, a weight function on edges $w^E: E \rightarrow \mathbb{R}_{>0}$, a nonnegative integer $k$, a set of vertices $S \subseteq V$, and a weight function $w^S: S \rightarrow \mathbb{R}_{>0}$.

\textbf{Output:} If one exists, a clique cover $\C$ of $G$ and positive weight $\gamma_i$ for every clique $C_i \in \C$ such that (1) $|\C| \leq k$, (2) for each edge $e= \{x,y\} \in E$, $\sum_{\{x,y\} \in C_i, C_i \in \C} \gamma_i = w^E(e)$, and (3) for each vertex $x \in S$, $\sum_{x \in C_i, C_i \in \C} \gamma_i = w^S(x)$; otherwise report \emph{NO}.
 }}    
\end{center}

Note that when $S = \emptyset$, \emph{AEWCD} reduces to the special case \emph{EWCD}. If the weight functions are integer valued and weight on each of the cliques are forced to be $1$, then \emph{AEWCD} is equivalent to \emph{AWECP}.

Let $\C$ be a clique cover of $G$. If $\C$ contains at most $k$ cliques, then, for an instance $(G,k,w^E,S,w^S)$ of \emph{AEWCD}, the following linear program (LP) would compute weights of the cliques $\gamma = \{\gamma_1, \gamma_2, \ldots, \gamma_{|\C|} \}$ in $\C$ (if the LP is feasible). Note that the objective function of the LP is a constant, i.e., an LP solver would only need to find a feasible solution $\gamma = \{\gamma_1, \gamma_2, \ldots, \gamma_{|\C|}\}$, if one exists.

\begin{center}
\fbox{\parbox{4.5in}{
    \textcolor{black}{
    \begin{equation}\tag{\textcolor{blue}{A}}
        \begin{array}{lll}
        \text{minimize}  & \displaystyle\sum\limits_{i: C_i \in \C} 0 \times \gamma_i  \\
        \text{subject to}& \displaystyle\sum\limits_{i: \{x,y\} \in C_i, C_i \in \C} \gamma_i  = w^E(e),& \text{for all } e = \{x,y\} \in E\\
        & \displaystyle\sum\limits_{i: x \in C_i, C_i \in \C} \gamma_i  = w^S(x),& \text{for all } x \in S\\
          & \gamma_i \geq 0, &\text{for all } i:  C_i \in \C
        \end{array}
    \label{aewcd_lp}
    \end{equation}
}}}    
\end{center}

\begin{figure}[!hbtp]
    \centering
    \fbox{\parbox{5.0in}{
\underline{Annotated-Exact-Weighted-Clique-Decomposition-Search$(G, k, \C, \R)$:}

\textcolor{teal}{\\ // Abbreviated \textbf{AEWCDS}$(G,k,\C,\R)$}

    \begin{enumerate}
        \item if $\C$ covers edges of $G$, then 
        \begin{enumerate}
            \item for $\C$ solve the LP (\textcolor{blue}{A})
            \item if (\textcolor{blue}{A}) is feasible, then return $(\C, \gamma)$
            \item else return $(\emptyset, \emptyset)$
        \end{enumerate}
        \item select the \emph{last} uncovered edge $\{x,y\}$ from the \emph{DEP} of $G$ where $y \in N_d(x)$
        
        \item for each $l \in R_x \cap R_y$ such that $\{x,y\} \not\subseteq C_l$ do 
    \begin{enumerate}
        \item \textcolor{blue}{include the edge $\{x,y\}$ in the clique $C_l$ and update $\R$}  \textcolor{teal}{// Figure \ref{fig:ecc_algo2_subs} (a)}
        \item mark the edge $\{x,y\}$ as covered 
        \item $(\Q, \gamma) \leftarrow \textbf{AEWCDS}(G, k, \C, \R)$
        \item if $\Q \ne \emptyset$, then return $(\Q, \gamma)$
        \item mark the edge $\{x,y\}$ as uncovered 
        \item $(\Q, \gamma) \leftarrow \textbf{AEWCDS}(G, k, \C, \R)$
        \item if $\Q \ne \emptyset$, then return $(\Q, \gamma)$
        \item \textcolor{blue}{undo changes done to $\C$ and $\R$ at step 3a} \textcolor{teal}{// Figure \ref{fig:ecc_algo2_subs} (b)}
    \end{enumerate}
    \item if $k > 0$, then
    \begin{enumerate}
        \item \textcolor{blue}{set $\C$ to $\C \cup \{\{x,y\}\}$ and update $\R$} \textcolor{teal}{// Figure \ref{fig:ecc_algo2_subs} (c)}
        \item mark the edge $\{x,y\}$ as covered
        \item $(\Q, \gamma) \leftarrow \textbf{AEWCDS}(G, k-1, \C, \R)$
        \item if $\Q \ne \emptyset$, then return $(\Q, \gamma)$
        \item mark the edge $\{x,y\}$ as uncovered 
        \item $(\Q, \gamma) \leftarrow \textbf{AEWCDS}(G, k-1, \C, \R)$
        \item if $\Q \ne \emptyset$, then return $(\Q, \gamma)$
        \item \textcolor{blue}{undo changes done to $\C$ and $\R$ at step 4a} \textcolor{teal}{// Figure \ref{fig:ecc_algo2_subs} (d)}
    \end{enumerate}
    \item return $(\emptyset, \emptyset)$
    \end{enumerate} 
}}   
    \caption{A bounded search tree algorithm for \emph{AEWCD}, denoted \emph{AEWCDS}.}
    \label{fig:aewcd_algo}
\end{figure}

Figure \ref{fig:aewcd_algo} shows a bounded search tree algorithm for \emph{AEWCD}, henceforth referred to as \emph{AEWCDS}. We reuse the subroutines of \emph{ECCS2} for \emph{AEWCDS}: steps 3a, 3h, 4a, 4h of \emph{AEWCDS} are identical to the steps 3a, 3d, 4a, 4d of \emph{ECCS2} described in Figure \ref{fig:ecc_algo2_subs}. We assume \emph{AEWCDS} has access to the weight functions at step 1 of any node of a search tree. Whenever a clique cover $\C$ of $G$ is found at a node, we check the feasibility of $\C$ for the instance $(G,k,w^E,S,w^S)$, by solving the LP (\textcolor{blue}{A}). If (\textcolor{blue}{A}) is feasible, then the clique cover $\C$ and a corresponding solution $\gamma = \{\gamma_1, \gamma_2, \ldots\}$ of (\textcolor{blue}{A}) is returned at step 1.

For \emph{AEWCD}, we explicitly distinguish between covering an edge by a clique and including an edge in a clique. An edge can be included in many cliques, without being identified as covered. If an edge $\{x,y\}$ is marked as covered (steps 3b, 4b) at a node $u^T$ of a search tree $T$, then the corresponding descendent nodes (steps 3c, 4c) of $u^T$ in $T$ would not select the edge $\{x,y\}$ as uncovered at step 2. On the other hand, if an edge $\{x,y\}$ is marked as uncovered (steps 3e, 4e) at a node $u^T$ of a search tree $T$, then the corresponding descendent nodes (steps 3f, 4f) of $u^T$ in $T$ may select the edge $\{x,y\}$ as uncovered at step 2.

\begin{figure}[!hbtp]
    \centering
    \resizebox{6.5cm}{5.0cm}{%
         \tikzset{main node/.style={text=blue, circle,fill=black!20,draw,minimum size=0.8cm,inner sep=0pt},
            }
 \begin{tikzpicture}
    \node[main node] (1) {$a$};
    \node[main node] (2) [right = 3.5cm of 1] {$b$};
    \node[main node] (3) [below = 3.5cm of 2] {$c$};
    \node[main node] (4) [below = 3.5cm of 1] {$d$};

    \path[draw,thick]
    (1) edge node[above] {$101$} (2)
    (1) edge node[above left =0.8cm and 0.3cm] {$2$} (3)
    (1) edge node[left] {$1$} (4)
    (2) edge node[right] {$2$} (3)
    (2) edge node[below left=0.4cm and 0.8cm] {$1$} (4)
    (3) edge node[below] {$1$} (4)
    
    ;
\end{tikzpicture} 
    }
    \caption{An example graph $G$ with positive weights on edges. For $k=3$, $G$ with the weights on the edges (and $S=\emptyset$) is a \emph{YES} instance of \emph{AEWCD}.
    Cliques of a corresponding clique cover $\C$ of $G$, and feasible weights on the cliques are as follows:
    \textcolor{blue}{$C_1 = \{a,b,c,d\}$, $C_2 = \{a,b,c\}$, $C_3 = \{a,b\}$},
    \textcolor{blue}{$\gamma_1 = 1$, $\gamma_2 = 1$, $\gamma_3 = 99$.}
     }
    \label{fig:ewcd_example}
\end{figure}
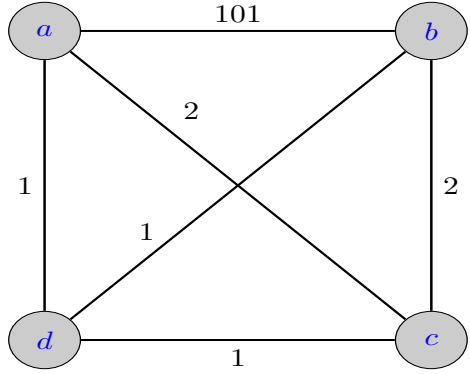

For \emph{AEWCDS}, the branches at steps 3e-3g and 4e-4g are crucial: an edge $e=\{x,y\}$ may need to be included in many cliques so that the weight $w(e)$ can be matched with the weights of the neighbouring edges of $e$. To see this, consider the complete graph $K_4$ and weights on the edges as shown in Figure \ref{fig:ewcd_example}. For $k=3$, the instance of \emph{AEWCD} is a \emph{YES} instance. A corresponding clique cover $\C$ of $G$ consists of following cliques: $C_1 = \{a,b,c,d\}$, $C_2 = \{a,b,c\}$, $C_3 = \{a,b\}$. Corresponding feasible weights on the cliques are $\gamma_1 = 1$, $\gamma_2 = 1$, and $\gamma_3 = 99$. WLOG, we can assume that the edge $e=\{a,b\}$ is the last edge in the \emph{DEP} of $G$. The edge $e$ needs to appear in all three cliques of $\C$. Therefore, we need to make sure that an edge is included in a sufficient number of cliques (at most $k$), so that we have correct combination of cliques for the edges that are chosen later from the edge permutation.

The way we have described \emph{AEWCDS} (in Figure \ref{fig:aewcd_algo}) is merely for ease of reading. An equivalent (in terms of correctness) description would have a subroutine consisting of step 3 and step 4, excluding steps 3b, 3e-3g, 4b, 4e-4g. Now, instead of step 3 and 4, we can simply call the subroutine once, after marking the edge $\{x,y\}$ as covered, and if the call fails, then we can call the subroutine once, after marking the edge $\{x,y\}$ as uncovered. The search trees corresponding to the equivalent description are likely to be smaller for many instances of \emph{AEWCD}.

\cite{cooley2021parameterized} have described an LP based algorithm for \emph{AEWCD}, called \emph{CliqueDecomp-LP}. The LP used by \emph{CliqueDecomp-LP} has $k$ variables and at most $4k^2$ constraints. In contrast, the LP (\textcolor{blue}{A}) has at most $k$ variables and at most $m+|S| \leq m + n$ constraints. At every iteration, \emph{CliqueDecomp-LP} selects a permutation matrix $P \in \{0,1\}^{2k \times k}$. For every such permutation matrix $P$, \emph{CliqueDecomp-LP} needs to solve the corresponding LP, regardless of whether $P$ correspond to a clique cover of the graph or not. For a single permutation matrix $P$, \emph{CliqueDecomp-LP} may need to solve the LP up to $2k$ times. Therefore, \emph{CliqueDecomp-LP} may need to solve the LP up to $2^{2k^2} \times 2k$ times, in the worst case. In contrast, \emph{AEWCDS} solves the LP (\textcolor{blue}{A}) (at step 1) only when $\C$ is a clique cover of the graph. In most cases, for \emph{YES} instances of \emph{AEWCD}, the number of clique covers of a graph with at most $k$ cliques is significantly smaller than $2^{2k^2} \times 2k$.

A correctness proof for \emph{AEWCDS} can be obtained similarly as the correctness proof for \emph{AWECPS} (Lemma \ref{lemma_awecp_ok}). A minor modification would be required for the base case of induction. In the base case, i.e., $k=1$, for some constant $q$, we would have $w^E(e) = q$ for all $e \in E$ and $w^S(x) = q$ for all $x \in S$. Rest of the proof would be analogous. Thus we have the following.

\begin{lemma}
Algorithm \emph{AEWCDS} correctly solves the parameterized problem \emph{AEWCD}. 
\end{lemma}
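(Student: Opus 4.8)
The plan is to establish soundness and completeness separately, taking the correctness argument for \emph{AWECPS} (Lemma \ref{lemma_awecp_ok}) as the template and adjusting it for the real-weighted, LP-driven setting of \emph{AEWCDS}. For soundness I would observe that the only return with nonempty output is at step 1, which fires exactly when $\C$ covers every edge of $G$ and the linear program (\ref{aewcd_lp}) is feasible with solution $\gamma$. Feasibility of (\ref{aewcd_lp}) encodes constraints (2) and (3) of \emph{AEWCD} directly, and the bound $|\C| \le k$ holds as an invariant because the parameter is decremented (step 4) precisely when a new clique is created. The one subtlety is that (\ref{aewcd_lp}) imposes $\gamma_i \ge 0$ while \emph{AEWCD} requires $\gamma_i > 0$; I would close this gap by noting that a clique with $\gamma_i = 0$ contributes to no constraint, so deleting all zero-weight cliques produces a strictly smaller cover (still of size $\le k$) with strictly positive weights. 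Hence a feasible $\ge 0$ solution of (\ref{aewcd_lp}) is equivalent to a genuine \emph{AEWCD} solution.

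For completeness I would show that if $(G,k,w^E,S,w^S)$ is a \emph{YES} instance with witness $(\C^{*},\gamma^{*})$, then some leaf of the search tree returns a solution. The strategy is the \emph{subcovering} viewpoint used for \emph{ECCS2} and reused for \emph{AWECPS}, adapted so that the constructed cover reproduces the clique-to-edge and clique-to-vertex incidence of $\C^{*}$: once a node is reached whose cover has that incidence pattern, $\gamma^{*}$ itself is a feasible point of (\ref{aewcd_lp}) and step 1 succeeds. I would argue this by induction, using the relaxation of the \emph{locally minimal} construction discussed before Lemma \ref{lemma_eccs2_ok}, which lets the new-clique branch at step 4 fire even when $R_x \cap R_y \ne \emptyset$, so that arbitrary witness covers are reachable. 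The base case $k=1$ is the only place the argument diverges from \emph{AWECPS}: a single clique $C_1$ forces every edge weight and every annotated vertex weight to equal the common value $\gamma_1 = q$, and this is exactly the situation in which (\ref{aewcd_lp}) is feasible at step 1 (whereas the \emph{AWECPS} base case instead checks unit multiplicities).

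For the inductive step I would transcribe the branching analysis of \emph{AWECPS}: the edge $\{x,y\}$ chosen at step 2 lies, in the witness, in some collection of cliques, and the branches over $l \in R_x \cap R_y$ (justified by Proposition \ref{prop_isr2}) together with the new-clique branch at step 4 exhaust every admissible placement of $\{x,y\}$. The feature genuinely distinguishing \emph{EWCD} from \emph{WECP}, and thus the part needing the most care, is that the number of cliques an edge occupies is not fixed a priori: the weights are real, so there is no integer budget $w^E(e)$ signalling when to stop including $\{x,y\}$. This is exactly why \emph{AEWCDS} carries the covered/uncovered branching at steps 3b, 3e (and 4b, 4e): marking $\{x,y\}$ \emph{uncovered} after inclusion lets it be reselected at step 2 and placed in further cliques, as forced by the example of Figure \ref{fig:ewcd_example}, where $\{a,b\}$ must occupy all three cliques. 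In the induction I would reproduce the witness multiplicity of each edge by repeatedly taking the \emph{uncovered} branch until the final inclusion, where the \emph{covered} branch is taken; termination follows because there are at most $k$ cliques, so each edge is included at most $k$ times.

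The main obstacle I anticipate is this completeness step, specifically justifying that the relaxed branching still reaches a cover for which (\ref{aewcd_lp}) is feasible. As in the remarks preceding Lemma \ref{lemma_eccs2_ok}, the argument must not insist on any particular ordering of edges within a clique; it must instead show that the incidence pattern of cliques to edges and to annotated vertices produced along some root-to-leaf path coincides with that of a witness $\C^{*}$, so that $\gamma^{*}$ is feasible there. Once this reachability of the witness incidence pattern is established, feasibility of the linear program is immediate, and the remainder of the induction transfers from \emph{AWECPS} with only the base-case modification described above.
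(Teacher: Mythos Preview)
Your proposal is correct and follows essentially the same approach as the paper: the paper's proof is a one-paragraph pointer to the \emph{AWECPS} correctness argument (Lemma \ref{lemma_awecp_ok}) with the sole remark that the base case $k=1$ must be adjusted so that $w^E(e)=q$ and $w^S(x)=q$ for a common constant $q$ rather than unit weights, which is exactly the modification you identify. Your write-up is in fact more thorough than the paper's sketch, in particular the soundness discussion and the observation about $\gamma_i \ge 0$ versus $\gamma_i > 0$ (resolved by discarding zero-weight cliques) are details the paper leaves implicit.
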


\begin{lemma}
\label{lemma_aewcd_nn}
The number of nodes in a search tree of \emph{AEWCDS} is at most $2^{\beta k (1+ \log k)}$.
\end{lemma}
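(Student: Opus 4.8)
The plan is to bound the number of nodes by the product of an upper bound on the depth and an upper bound on the branching factor, exactly as in Lemma \ref{lemma_eccs2_nn} and Lemma \ref{lemma_awecp_nn}, while tracking the extra factor of two introduced by the paired recursive calls of \emph{AEWCDS} (the covered/uncovered split at steps 3c/3f and 4c/4f). Concretely, I would aim to show the depth is at most $\beta k$ and every node has at most $2k$ children, so that the number of nodes is at most $(2k)^{\beta k} = 2^{\beta k(1 + \log k)}$.

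For the depth I would reuse the argument of Lemma \ref{lemma_eccs2_depth}. The only ways to descend along a root-to-leaf path are step 3a, which inserts $\{x,y\}$ into an existing clique $C_l$ with $\{x,y\} \not\subseteq C_l$ and hence grows $C_l$ by at least one vertex, and step 4a, which creates a fresh clique on two vertices. In either case the descent strictly increases the total incidence count $\sum_{C_l \in \C} |C_l|$, and the two paired recursive calls that merely re-mark $\{x,y\}$ as covered or uncovered both occur after the same growth, so they inherit the same incidence count rather than inflating the depth. Since at most $k$ cliques are ever present and each has at most $\beta$ vertices (the \emph{clique number}), the incidence count never exceeds $\beta k$, so no root-to-leaf path is longer than $\beta k$.

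The hard part will be pinning the branching factor at $2k$ rather than the naive $2k+2$, and here I would mirror the reasoning of Lemma \ref{lemma_eccs2_bf}. At a node, step 3 inserts $\{x,y\}$ into one of the existing cliques of $R_x \cap R_y$, and since at most $k$ cliques are allowed we have $|R_x \cap R_y| \le k$. Step 4 offers one additional fresh clique, but it fires only when the remaining budget is positive, i.e.\ when $|\C| < k$ and hence $|R_x \cap R_y| \le |\C| \le k-1$. Thus the number of distinct cliques into which the edge may be placed is at most $k$ in every case: it is $|R_x \cap R_y| \le k$ when no fresh clique is created, and at most $(k-1)+1 = k$ when one is. Because each clique choice triggers exactly two recursive calls, every node has at most $2k$ children. (The degeneracy cap $|R_x \cap R_y| < \binom{d+1}{2}$ from Lemma \ref{lemma_eccs2_bf} gives an alternative branching bound, but the bound in terms of $k$ is what yields the stated estimate.)

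Combining the two bounds, the search tree has at most $(2k)^{\beta k} = 2^{\beta k(\log 2 + \log k)} = 2^{\beta k(1 + \log k)}$ nodes, as claimed. The only genuinely new ingredient relative to Lemma \ref{lemma_eccs2_nn} is the doubling of the branching factor, so I expect the main obstacle to be the careful bookkeeping that (i) the covered/uncovered split does not also double the depth, since both calls follow a single clique growth, and (ii) the fresh-clique option and the existing-clique options together never exceed $k$ distinct cliques, because step 4 is guarded by a positive remaining budget.
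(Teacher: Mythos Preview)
Your proposal is correct and follows essentially the same approach as the paper: bound the depth by $\beta k$ via the incidence count $\sum_{C_l\in\C}|C_l|$, observe that the number of clique choices per node is at most $k$ (using that step~4 fires only when $|\C|<k$, hence $|R_x\cap R_y|\le|\C|\le k-1$), and multiply by~$2$ for the covered/uncovered paired calls to get a branching factor of $2k$, yielding $(2k)^{\beta k}=2^{\beta k(1+\log k)}$. Your treatment is in fact more explicit than the paper's, which compresses the branching-factor argument into two sentences and does not spell out why the paired recursive calls do not inflate the depth.
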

\begin{proof}
The depth of a seach tree of \emph{AEWCDS} is bounded by $\beta k$. At step 3 of any node of the search tree $R_x \cap R_y \leq k$. If $R_x \cap R_y = k$, then there is no branch at step 4 of the node. Therefore, number of branches at any node of a search tree of \emph{AEWCDS} is at most $2k$. It follows that the number of nodes in a search tree of \emph{AEWCDS} is at most $(2k)^{\beta k} = 2^{\beta k (1 + \log k)}$.
\end{proof}

For \emph{AEWCDS}, the LP (\textcolor{blue}{A}) would have at most $k$ variables, and at most $m+n$ constraints. For nontrivial instances of \emph{AEWCD}, we can assume $k < m$. Therefore, (\textcolor{blue}{A}) is solvable in $O(k^3L)$ time, where $L$ is the number of bits needed to encode the LP \cite{cohen2021solving, van2020deterministic}. Since $n \leq 4^k$, running time of \emph{AEWCDS} is $2^{O(\beta k \log k)}L$. Now, considering time needed for data reduction, Theorem \ref{thm_ewcd} follows from Proposition \ref{prop_eqv1}.

For the cases when all the weights are restricted to integers (including the weights on the cliques), \cite{cooley2021parameterized} have described an integer partitioning based algorithm called \emph{CliqueDecomp-IP}. \emph{CliqueDecomp-IP} requires maintaining $(w+1)^k$ weight matrices in the worst case, where $w$ is the maximum edge weight. Consequently, space requirement of \emph{CliqueDecomp-IP} could be prohibitive for many \emph{AEWCD} instances even if the instances are solvable in a reasonable amount of time. For these cases, we point out that instead of solving the LP (\ref{aewcd_lp}), at step 1 of \emph{AEWCDS}, one can search for feasible set of weights $\gamma$, by enumerating at most $w^{k} = 2^{k \log w}$ choices. Therefore, \emph{AEWCD} instances restricted to integer weights are solvable in $2^{O(k (\beta \log k + \log w))}n^{O(1)}$ time, using $O(m+k \Delta)$ space (Lemma \ref{lemma_eccs2_space}).

\emph{CliqueDecomp-LP} and \emph{CliqueDecomp-IP} both require enumerating $2^{2k^2}$ permutation matrices in the worst case. For $\beta = o(k / \log k)$, $\emph{AEWCDS}$ improves the running time by a factor of $\frac{2^{O(k^2)}}{2^{O(\beta k \log k)}} = 2^{O(k(k - \beta \log k))} = 2^{O(k^2)}$.

\subsection{Generalized \emph{Vertex Clique Cover} and \emph{Colorability}}

We conclude our demonstration of applicability of our new framework with description of a bounded search tree algorithm for \emph{LRCC}. This serves several purposes. First, this highlights the natural bridges that exist between clique cover problems and corresponding graph coloring problems in the complement graph. Second, even though many variants of graph coloring problems had been studied in parameterized complexity (see \cite{fiala2011parameterized, jansen2019computing}), the results are not useful for obtaining FPT algorithms for the corresponding vertex clique cover problems (since no \emph{parameterized reductions} are known from parameterization of the vertex clique cover problems to the parameterizations of graph coloring problems that are known to be FPT). Third, we are able to show that instances of \emph{PMC} can be solved by a \emph{parameterized reduction} from \emph{PMC} to \emph{LRCC}.

The equivalence of \emph{VCC} and \emph{Colorability} stems from the fact that a clique in a graph corresponds to an independent set in the complement graph, therefore, a partition of vertices of a graph into cliques corresponds to a partition of vertices into independent sets in the complement graph. Since \emph{Colorability} is \emph{NP-complete} for $k \geq 3$ \cite{stockmeyer1973planar}, the existence of an $f(k)n^{O(1)}$ time algorithm for \emph{VCC} would imply $P=NP$. Therefore, \emph{LRCC} is not FPT with respect to $k$ unless $P=NP$. We show that \emph{LRCC} is FPT with respect to $\beta$ and $k$.

\begin{figure}[!hbtp]
    \centering
    \resizebox{8.5cm}{4.0cm}{%
         \tikzset{main node/.style={text=blue, circle,fill=black!20,draw,minimum size=0.8cm,inner sep=0pt},
            }
 \begin{tikzpicture}
    \node[main node] (1) {$a$};
    \node[main node] (2) [above right = 2.0cm and 2.0cm of 1] {$b$};
    \node[main node] (3) [below right = 2.0cm and 2.0cm of 1] {$c$};
    \node[main node] (4) [right = 4.0cm of 1] {$d$};
    \node[main node] (5) [above right = 2.0cm and 2.0cm of 4] {$e$};
    \node[main node] (6) [below right = 2.0cm and 2.0cm of 4] {$f$};
    \node[main node] (7) [right = 4.0cm of 4] {$g$};

    \path[draw,thick]
    (1) edge node {} (2)
    (1) edge node {} (3)
    (2) edge node {} (3)
    (2) edge node {} (4)
    (3) edge node {} (4)
    (4) edge node {} (5)
    (4) edge node {} (6)
    (5) edge node {} (7)
    (5) edge node {} (6)
    (6) edge node {} (7)    
    ;
\end{tikzpicture} 
    }
    \caption{An example graph $G$. Let $E^*= \{\{b,d\}\}$ and $k=3$. The instances $(G,k,E^*)$ and $(G,k)$ are \emph{YES} instances of \emph{LRCC} and \emph{VCC} respectively. A corresponding solution $\C$ of $(G,k)$ has following cliques:
    \textcolor{blue}{$C_1 = \{a,b,c\}$, $C_2 = \{d,e,f\}$, $C_3 = \{g\}$}, missing the edge $\{b,d\}$ in all the cliques. A corresponding solution $\C^*$ of $(G,k,E^*)$ has following cliques: \textcolor{blue}{$C_1^* = \{a\}$, $C_2^* = \{b,c,d\}$, $C_3^* = \{e,f,g\}$.}
     }
    \label{fig:LRCC_example}
\end{figure}
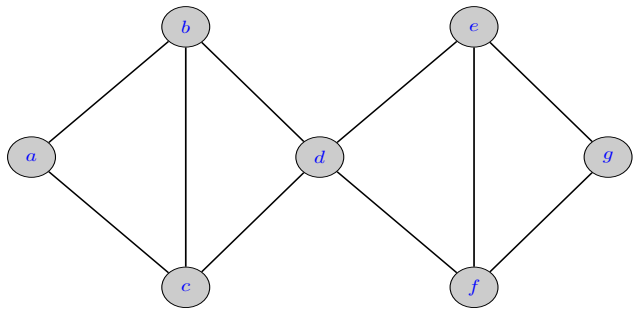

To solve an instance $I^*=(G,k,E^*)$ of \emph{LRCC} such that the number of edges in $E^*$ is small (far from being an instance of \emph{ECC}), one may think of solving an instance $I=(G,k)$ of \emph{VCC}, and then somehow convert the solution of $I$ to a solution of $I^*$. It is unlikely that there is any feasible approach to do that. To see this, consider the example shown in Figure \ref{fig:LRCC_example}. A solution of $(G,k)$ may miss many edges in $E^*$, even when the solutions of $I$ and $I^*$ need same number of cliques. Therefore, we turn to other viable approaches.

\begin{figure}[!hbtp]
    \centering
    \fbox{\parbox{5.8in}{
\underline{Link-Respected-Clique-Cover-Search$(G, k, E^*, \C, \R)$:}

\textcolor{teal}{\\ // Abbreviated \textbf{LRCCS}$(G,k, E^*, \C,\R)$}

    \begin{enumerate}
        \item if $\C$ covers vertices of $G$ and edges of $E^*$, then return $\C$
        
        \item select the \emph{last} uncovered edge $e = \{x,y\} \in E^*$ from the \emph{DEP} of $G$ where $y \in N_d(x)$

        \item select the \emph{last} uncovered vertex $u$ from the degeneracy ordering of $V$.
        
        \item if $e \ne \emptyset$ and $u \ne \emptyset$ and $x$ precedes $u$ in the degeneracy ordering of $V$, then set $e$ to be $\emptyset$

        \item if $e \ne \emptyset$ then set $R_x^*$ to be $R_x \cap R_y$
        \item if $e = \emptyset$ then set $x$ to be $u$ and $R_x^*$ to be $R_u$
        
        \item for each $l \in R_x^*$ do 
    \begin{enumerate}
        \item if $e = \emptyset$ then let $y$ be an arbitrary vertex of $C_l$
        \item \textcolor{blue}{cover the edge $\{x,y\}$ with the clique $C_l$ and update $\R$}  \textcolor{teal}{// Figure \ref{fig:ecc_algo2_subs} (a)}
        \item $\Q \leftarrow \textbf{LRCCS}(G, k, E^*, \C, \R)$
        \item if $\Q \ne \emptyset$, then return $\Q$
        \item \textcolor{blue}{undo changes done to $\C$ and $\R$ at step 3b} \textcolor{teal}{// Figure \ref{fig:ecc_algo2_subs} (b)}
    \end{enumerate}
    \item if $k > 0$, then
    \begin{enumerate}
        \item if $e = \emptyset$ then let $y$ to be $x$
        \item \textcolor{blue}{set $\C$ to $\C \cup \{\{x,y\}\}$ and update $\R$} \textcolor{teal}{// Figure \ref{fig:ecc_algo2_subs} (c)}
        \item $\Q \leftarrow \textbf{LRCCS}(G, k-1, E^*, \C, \R)$
        \item if $\Q \ne \emptyset$, then return $\Q$
        \item \textcolor{blue}{undo changes done to $\C$ and $\R$ at step 4b} \textcolor{teal}{// Figure \ref{fig:ecc_algo2_subs} (d)}
    \end{enumerate}
    \item return $\emptyset$
    \end{enumerate} 
}}   
    \caption{A bounded search tree algorithm for \emph{LRCC}, denoted \emph{LRCCS}.}
    \label{fig:LRCC_algo}
\end{figure}

One approach to solve \emph{LRCC} is to construct an \emph{edge clique cover} of a subgraph required to cover the edges of $E^*$, and then extend the \emph{edge clique cover} to a solution of $I^*$. This would only allow us to bound the branching factors of a search tree with $k$. We use a different approach so that we can preserve the bound on the branching factors of a search tree that we have obtained for \emph{ECCS2} (Lemma \ref{lemma_eccs2_bf}). Our approach is to interleave the tasks of covering the edges of $E^*$ and covering the vertices of $G$. Next, we elaborate on this approach.

Figure \ref{fig:LRCC_algo} shows our bounded search tree algorithm for \emph{LRCC}, henceforth referred as \emph{LRCCS}. \emph{LRCCS} achieves our goal of bounding the branching factors by selecting the first vertex $x$ in the reverse degeneracy ordering of $V$ such that either the vertex $x$ is uncovered or an edge $\{x,y\} \in E^*$ is uncovered where $y \in N_d(x)$. At any node of a search tree, \emph{LRCC} have two possible choices to consider: an uncovered edge $e= \{x,y\} \in E^*$ (step 2) or an uncovered vertex $u$ (step 3). The two possible choices give rise to four different cases: (1) $e=\emptyset$, (2) $u = \emptyset$, (3) $e \ne \emptyset$, $u \ne \emptyset$, and $x$ precedes $u$ in the degeneracy ordering of $V$, (4) $e \ne \emptyset$, $u \ne \emptyset$, and $x$ does not precede $u$ in the degeneracy ordering of $V$. Note that step 4 makes the recognition of cases (1) and (3) identical for the later steps.

For cases (1) and (3), the uncovered vertex $u$ needs to be covered (to be consistent with our goal). In these cases \emph{LRCC} chooses the \emph{representatives} in $R_u$ (step 6), to enumerate at step 7. For cases (2) and (4), the uncovered edge $e$ needs to be covered (to be consistent with our goal). In these cases \emph{LRCC} chooses the \emph{representatives} in $R_x \cap R_y$ (step 5), to enumerate at step 7. For cases (1) and (3), step 6a (resp. step 7a) converts the task of covering the vertex $x$ to covering an edge $\{x,y\}$ (resp. covering $ \{x\}=\{x,x\}$): this allows \emph{LRCCS} to reuse the subroutines of \emph{ECCS2} from Figure \ref{fig:ecc_algo2_subs}.

A proof of correctness of \emph{LRCC} can be obtained analogously as the proof of correctness shown for \emph{ECCS2} (Lemma \ref{lemma_eccs2_ok}). Thus we have the following.

\begin{lemma}
\label{lemma_lrcc_ok}
Algorithm \emph{LRCCS} correctly solves the parameterized problem \emph{LRCC}.
\end{lemma}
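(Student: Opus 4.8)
The plan is to mirror the correctness argument for \emph{ECCS2} (Lemma \ref{lemma_eccs2_ok}), splitting the proof into \emph{soundness} (every clique cover returned is a valid solution of \emph{LRCC}) and \emph{completeness} (if the instance is a \emph{YES} instance, then some node of the search tree returns a solution). Soundness is the easy direction: the subroutines of Figure \ref{fig:ecc_algo2_subs} keep the implicit set representation $\R$ consistent with Definition \ref{def_rs}, so every clique that \emph{LRCCS} maintains in $\C$ is a genuine clique of $G$; each new clique decrements the budget, so any surviving branch has $|\C| \le k$; and step 1 returns only after explicitly verifying that $\C$ covers all vertices of $G$ and all edges of $E^*$. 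Hence whatever is returned satisfies conditions (1)--(3) of \emph{LRCC}.

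For completeness I would first record the structural invariant analogous to Proposition \ref{inv_eccs2}: because steps 2--6 always fix attention on the \emph{latest} vertex $x=u_i$ in the degeneracy ordering that is either uncovered or incident (via $N_d$) to an uncovered edge of $E^*$, every clique $C_l\in\C$ at such a node satisfies $C_l\subseteq\{u_i,u_{i+1},\ldots,u_n\}$. This holds because processing from the back guarantees that all vertices and all $E^*$-edges associated with indices larger than $i$ are already covered, and the cliques that covered them involve only vertices of index $\ge i$. With this invariant in hand I would use the \emph{subcovering} device from the discussion preceding Lemma \ref{lemma_eccs2_ok}: fix a solution $\C^{*}=\{C_1^{*},\ldots,C_k^{*}\}$ of the \emph{YES} instance, call a partial cover $\C$ a \emph{subcovering} of $\C^{*}$ if each $C_i\in\C$ is contained in a distinct $C^{*}_l\in\C^{*}$, and argue by induction along the search tree that \emph{LRCCS} explores a path on which $\C$ is always a subcovering of $\C^{*}$ and monotonically realizes more of it. At a node on this path the selected $x$ is covered by some $C^{*}_l$; if $\C$ already contains a clique $C_i\subseteq C^{*}_l$ that can absorb $x$ (and $y$), then by Proposition \ref{prop_isr2} its index lies in $R_x\cap R_y$ in the edge case, while when we are covering a bare vertex the inclusion $C_i\subseteq N(u)$ forces $i\in R_u$, so the required branch is among those enumerated at step 7; otherwise $|\C|<k$ (since a subcovering has at most $|\C^{*}|\le k$ cliques, one per used $C^{*}_l$) and step 8 starts a fresh clique $\{x,y\}$ or $\{x\}$ still contained in $C^{*}_l$. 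Either way the subcovering property is preserved and progress is made, so the path eventually reaches a node where $\C$ covers all vertices and all of $E^*$ and step 1 returns.

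The two reductions that let \emph{LRCCS} reuse the edge-covering subroutines for vertex covering require verification: in cases (1) and (3) covering the uncovered vertex $u$ is implemented (step 7a) by choosing an arbitrary $y\in C_l$ and covering $\{u,y\}$, which merely inserts $u$ into $C_l$; here I would check that $C_l\subseteq N(u)$ (which holds because $i\in R_u$) makes $C_l\cup\{u\}$ a genuine clique and that inserting $u$ into $C_l\subseteq C^{*}_l$ keeps the subcovering valid, while the new-clique case (step 8a) uses the singleton $\{x\}$, itself a valid clique covering $x$. Termination follows because every recursive call either enlarges a clique (at most $\beta$ times per clique) or adds a clique (at most $k$ of them).

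The main obstacle I anticipate is not any single branch but the \emph{interleaving} introduced by step 4: I must argue that deferring an uncovered $E^{*}$-edge $e=\{x,y\}$ whenever its endpoint $x$ precedes the latest uncovered vertex $u$ never permanently strands $e$. Since coverage is monotone and step 2 always re-selects the last uncovered $E^{*}$-edge, the deferred edge remains the unique latest such edge and is necessarily chosen once every vertex and every $E^{*}$-edge after $x$ has been handled; establishing that this scheduling is consistent with realizing a fixed $\C^{*}$ — i.e. that the subcovering path can always respect the order imposed by steps 2--6 — is the delicate point, and it is exactly where the invariant above, together with the fact that a subcovering of $\C^{*}$ can be built irrespective of the order in which its cliques and their edges are formed, does the work.
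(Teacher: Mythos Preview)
Your proposal is correct and follows essentially the same approach the paper intends: the paper states only that the proof ``can be obtained analogously as the proof of correctness shown for \emph{ECCS2} (Lemma \ref{lemma_eccs2_ok}),'' and you carry this out faithfully, using the \emph{subcovering} device the paper introduced in the discussion preceding that lemma, together with the analogue of Proposition \ref{inv_eccs2} and Proposition \ref{prop_isr2}. The one cosmetic difference is that the paper's formal appendix proof for \emph{ECCS2} is organized as induction on $k$ (peeling off one clique $C_{\pi(k)}$ at a time), whereas you trace a single subcovering path through the search tree; both arguments appear in the paper (the latter as motivation, the former as the written proof), and your handling of the \emph{LRCC}-specific issues---the vertex branch via $R_u$, the reuse of the edge subroutines through steps 7a/8a, and the interleaving rule of step 4---is sound.
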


\begin{lemma}
\label{lemma_lrcc_nn}
The number of nodes in a search tree of \emph{LRCC} is at most $2^{\beta k \log k}$.
\end{lemma}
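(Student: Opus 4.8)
The plan is to bound the number of nodes of the search tree of \emph{LRCCS} as (branching factor)$^{\text{depth}}$, mirroring the argument used for \emph{ECCS2} in Lemmas \ref{lemma_eccs2_depth}, \ref{lemma_eccs2_bf}, and \ref{lemma_eccs2_nn}. First I would establish a depth bound of $\beta k$. The key observation is that every recursive call increases the total number of vertex–clique incidences $\sum_{C_l \in \C} |C_l|$ by at least one. Indeed, covering an uncovered edge $\{x,y\} \in E^*$ with an existing clique (step 7) must add at least one endpoint, since an uncovered edge has at least one endpoint outside the chosen representative clique; creating a new clique for such an edge (step 8) starts it with two vertices; covering an uncovered vertex $u$ with an existing clique adds $u$, which lies in no clique because it is uncovered; and creating a singleton clique for $u$ adds one incidence. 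Since each clique has at most $\beta$ vertices and at most $k$ cliques are ever created, we have $\sum_{C_l \in \C} |C_l| \leq \beta k$ throughout; as this quantity starts at zero and rises by at least one per level, the depth is at most $\beta k$.

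Next I would bound the branching factor at every node by $k$. At step 7, \emph{LRCCS} iterates over $l \in R_x^*$, where $R_x^*$ is either $R_x \cap R_y$ (edge case) or $R_u$ (vertex case); in both cases $R_x^* \subseteq [|\C|]$, so $|R_x^*| \leq |\C|$. Writing $k$ for the original parameter and $k'$ for the remaining budget at the current node, each created clique decrements the budget, so $|\C| = k - k'$. If $k' > 0$ there is one extra branch at step 8 for a new clique, giving $|R_x^*| + 1 \leq |\C| + 1 = (k - k') + 1 \leq k$ branches; if $k' = 0$ there is no step-8 branch and $|R_x^*| \leq |\C| = k$. Hence the branching factor is at most $k$ at every node. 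As in Lemma \ref{lemma_eccs2_bf}, one could sharpen this to $\min\{k, \binom{d+1}{2}\}$ by invoking the invariant of Proposition \ref{inv_eccs2} and the fact that the cliques relevant to $x$ live in the subgraph induced by $N_d[x]$, but the cruder bound $k$ already suffices. Combining the two bounds yields at most $k^{\beta k} = 2^{\beta k \log k}$ nodes, which is the claim.

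I expect the main obstacle to be the depth bound in the vertex-covering cases, specifically checking that the case analysis in steps 4--6 (which collapses the four situations into either an edge task or a vertex task via the conversions in steps 6a and 7a) always makes genuine progress, i.e., never adds to a clique a vertex it already contains. This reduces to the two invariants that an uncovered vertex $u$ belongs to no clique of $\C$, and that an uncovered edge $\{x,y\}$ with $l \in R_x \cap R_y$ has at least one endpoint missing from $C_l$. Both follow directly from the definition of \emph{representative set} (Definition \ref{def_rs}) together with Proposition \ref{prop_isr2}, so once those invariants are stated carefully the obstacle becomes routine and the incidence-counting argument goes through.
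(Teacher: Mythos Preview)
Your proposal is correct and follows essentially the same approach as the paper: bound the branching factor by $k$ and the depth by $\beta k$, then take the product $k^{\beta k}=2^{\beta k\log k}$. The paper merely asserts the depth bound as ``straightforward'' and spends its words on the sharper $\min\{k,\binom{d+1}{2}\}$ branching bound, whereas you make the depth bound explicit via the incidence-counting argument; either emphasis is fine, and your treatment of the vertex-covering case is exactly the kind of check the paper glosses over.
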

\begin{proof}
For an edge $e \ne \emptyset$ selected at step 2, the branching factors of a search tree of \emph{LRCC} is bounded by $\min\{k, \binom{d+1}{2}\}$: this follows from the same argument as presented for \emph{ECCS2} in Lemma \ref{lemma_eccs2_bf}. For a vertex $u \ne \emptyset$ selected at step 3, $N_d(u) \leq d$. Therefore, $|R_u| \leq \min \{k, \binom{d}{2}\}$. Combining preceding two bounds, the branching factors of a search tree of \emph{LRCCS} are bounded by $\min\{k, \binom{d+1}{2}\} \leq k$. It is straightforward to see that the depth of a search tree of \emph{LRCCS} is bounded by $\beta k$. Therefore, the number of nodes in a search tree of \emph{LRCC} is at most $k^{\beta k} = 2^{\beta k \log k}$.
\end{proof}

Since time spent at any node of a search tree of \emph{LRCCS} is bounded by a polynomial in $n$, Theorem \ref{thm_lrcc} follows from Lemma \ref{lemma_lrcc_nn}. We obtain a proof of Corollary \ref{cor_pmc} using a \emph{parameterized reduction} from \emph{PMC} to \emph{LRCC}.

\begin{proof}[Proof of Corollary \ref{cor_pmc}]
For an instance $(G,k,\F)$ of \emph{PMC}, in polynomial time we can construct the complement graph $\Bar{G}$. Now, the non-edges of $G$ in $\F$ becomes edges of $\Bar{G}$. Let $\alpha$ be the \emph{independence number} of $G$, and $\beta$ be the \emph{clique number} of $\Bar{G}$. Note that $\alpha = \beta$. By the equivalence of \emph{PMC} and \emph{LRCC} in the complement graph, $(G,k,\F)$ is a \emph{YES} instance of \emph{PMC} if and only if $(\Bar{G}, k, \F)$ is a \emph{YES} instance of \emph{LRCC}. The instance $(\Bar{G},k, \F)$ can be solved with the algorithm \emph{LRCCS} in $2^{\beta k \log k}n^{O(1)}$ time. Thus we can obtain a solution of $(G,k,\F)$ in $n^{O(1)} + 2^{\beta k \log k}n^{O(1)} = 2^{\alpha k \log k}n^{O(1)}$ time.
\end{proof}

\section{Implementations and Open Problems}
\label{sec_open}

To solve large real-world instances of a clique cover problem exactly, one may need to employ two things: effective data reduction rules and efficient search tree algorithm. In this article, utilizing a few data reduction rules, our focus has been on efficient search tree algorithms. Although, lower bounds on \emph{kernelization} (such as non-existence of polynomial kernel for \emph{ECC} with respect to clique cover size \cite{cygan2014clique}) restrict design of provably smaller kernels, such restrictions are barely impediment for designing data reduction rules that are effective in practice. An example of this is the work of \cite{strash2022effective} on \emph{vertex clique cover}. \emph{Vertex clique cover} does not admit any \emph{kernel} with respect to clique cover size (assuming $P \ne NP)$, but \cite{strash2022effective} have designed effective data reduction rules for \emph{vertex clique cover} that can solve large real-world instances. Therefore, designing effective data reduction rules for the clique cover problems we have studied is a direction that needs attention from future research.

To see efficacy of our proposed search tree algorithms in practice, we have implemented \emph{ECCG}, \emph{ECCS}, and \emph{ECCS2}, and have compared performance of the algorithms. Given the lack of effective data reduction rules for large real-world graphs, we have limited comparisons of the algorithms on random instances that \emph{ECCG} can solve within a few hours. The results are remarkable and demonstrate that improvements in our analyses do carry over into the implementations (code and details of the comparisons are available here: \url{https://drive.google.com/drive/folders/1ISa9PX0n5TeXFSwoNQnXzyoq63CZLMoZ}). On our test instances, \emph{ECCG} took approximately 2000-5200 seconds, whereas \emph{ECCS} and \emph{ECCS2} took approximately 1-26 seconds only. Within a span of few hours, \emph{ECCS} has been able to produce solutions using search trees that are orders of magnitude smaller than the corresponding search trees of \emph{ECCG}. On our test instances, the search trees of \emph{ECCG} have approximately 781-6200 million nodes, whereas the search trees of \emph{ECCS} have approximately 14-111 million nodes only.

We conclude by highlighting a few problems that we think are important to resolve.

Using Corollary \ref{cor_eccs2}, Corollary \ref{cor_eccs3}, and Corollary \ref{cor_eccs4}, we have shown that for sparse graphs running time of \emph{ECC} is better captured by degeneracy, instead of clique cover size. Thus we ask the following.

\begin{problem}
Does \emph{ECC} parameterized by $d$ has an FPT algorithm running in $f(d)n^{O(1)}$ time?
\end{problem}

For planar graphs, we have shown that \emph{ECC} is solvable in $2^{O(k)}n^{O(1)}$ time (Corollary \ref{cor_eccs5}). For many \emph{NP-complete} problems on planar graphs, the bidimensionality theory \cite{demaine2005subexponential} has led to FPT algorithms with sub-exponential dependence on parameter. Thus a natural question to ask is the following.

\begin{problem}
\label{prob_ecc}
Does \emph{ECC} on planar graphs has an FPT algorithm running in $2^{o(k)}n^{O(1)}$ time?
\end{problem}

From the proof of Lemma \ref{lemma_accs2_nn} (or Theorem \ref{thm_acc3}), it is obvious that on planar graphs \emph{ACC} is solvable in $2^{O(t)}n^{O(1)}$ time. Although, we have not discussed the algorithms for \emph{WECP}, \emph{EWCD}, and \emph{LRVCC} using the framework of \emph{enumerating cliques of restricted subgraphs}, each of these problems are solvable in $2^{O(dk)}n^{O(1)}$ time. Therefore, assuming the problems remain \emph{NP-complete} on planar graphs (complexity of these problems on planar graphs are unresolved), questions similar to Problem \ref{prob_ecc} exist for these problems.

For \emph{ACC}, we have shown an FPT algorithm whose exponent of the running time has quadratic dependency on parameter $t$ (Section \ref{ecrs_acc}). With our new framework, we have shown an FPT algorithm whose exponent of the running time has quasilinear dependency on parameter $t$ (Section \ref{algos_acc}). Thus an important question in this regard is the following.

\begin{problem}
Does \emph{ACC} has an FPT algorithm running in $2^{O(t)}n^{O(1)}$ time?
\end{problem}

\bibliographystyle{plain}

\typeout{}
\bibliography{7_biblio.bib}

\appendix

\section{Proofs omitted in Section \ref{sec_prelim}}
\label{sec:app_prelim}

Using relationship between \emph{degeneracy} and \emph{arboricity}, we can obtain a proof for Lemma \ref{lemma_db} as follows.

\begin{definition}[Arboricity]
\emph{Arboricity} of a graph is the minimum number of spanning forests that cover all the edges of the graph.
\end{definition}

\begin{lemma}
\label{lemma_arbo1}
If $a$ is the arboricity of $G$, then $d \leq 2a -1$.
\end{lemma}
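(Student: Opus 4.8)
The plan is to bound the number of edges in every subgraph using the \emph{arboricity}, and then use a degree-averaging argument to exhibit a low-degree vertex, which by Definition \ref{def1} is exactly what controls the \emph{degeneracy}.

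First I would extract the key edge-counting consequence directly from the definition of \emph{arboricity}. Suppose the edges of $G$ are covered by $a$ spanning forests $F_1, \ldots, F_a$. Let $H=(V_H, E_H)$ be any nonempty subgraph with $n_H = |V_H|$ vertices. The restriction of each $F_j$ to $V_H$ is itself a forest on at most $n_H$ vertices, so it contains at most $n_H - 1$ edges. Since every edge of $H$ lies in at least one $F_j$, summing over the $a$ forests gives $|E_H| \leq a(n_H - 1)$. Note this uses only the definition of a forest cover, so no appeal to Nash--Williams is needed.

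Next I would apply the averaging argument. Summing degrees inside $H$ yields $\sum_{v \in V_H} \deg_H(v) = 2|E_H| \leq 2a(n_H - 1) < 2a\, n_H$, where the strict inequality uses $a \geq 1$. Hence the average degree in $H$ is strictly less than $2a$, so there exists $v \in V_H$ with $\deg_H(v) < 2a$; since degrees are integers, $\deg_H(v) \leq 2a - 1$. As $H$ was an arbitrary nonempty subgraph, every nonempty subgraph of $G$ contains a vertex with at most $2a-1$ neighbours in the subgraph, which by Definition \ref{def1} is precisely the assertion $d \leq 2a-1$.

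I do not expect any genuinely hard step: the proof is a clean composition of the forest-cover edge bound and degree averaging. The only subtlety worth flagging is the strict inequality $|E_H| < a\, n_H$, which needs $a \geq 1$; this holds whenever $G$ has at least one edge (the case relevant to Lemma \ref{lemma_db}), and the edgeless case is trivial since then $d = 0$.
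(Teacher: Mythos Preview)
Your proposal is correct and follows essentially the same route as the paper: bound $|E_H|\le a(n_H-1)$ via the forest cover, average to get a vertex of degree at most $2a-1$ in every nonempty subgraph, and conclude via Definition~\ref{def1}. Your write-up is in fact a bit more careful than the paper's, explicitly restricting each spanning forest to $V_H$ and flagging the $a\ge 1$ assumption behind the strict inequality (the paper leaves both implicit).
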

\begin{proof}
Let $H=(V_H, E_H)$ be a subgraph of $G$ with $|V_H| = n_H$ and $|E_H| = m_H$. From Definition \ref{def1}, we have, $$d = \max_{V_H \subseteq V} \min_{x \in V_H} |\{x,y\} \in E_H|.$$

Since the number of spanning forests needed to cover $G$ is $a$, the number of spanning forests needed to cover $H$ is at most $a$, i.e., $m_H \leq a (n_H -1)$. For the average degree of vertices in $H$, we have $\frac{2m_H}{n_H} \leq \frac{2a(n_H -1)}{n_H} < 2a$. Therefore, the minimum degree of vertices in $H$ is at most $2a -1$, i.e., $$\min_{x \in V_H} |\{x,y\} \in E_H| \leq 2a -1.$$

The claim follows, since $H$ is an arbitrary subgraph of $G$.
\end{proof}

Using an equivalent definition of \emph{degeneracy} in terms of acyclic orientation of graph, an exposition of the preceding proof can be found in \cite{chrobak1991planar}.

\begin{lemma}[Lemma 1 \cite{chiba1985arboricity}]
\label{lemma_arbo2}
If $a$ is the arboricity of $G$, then $a \leq \lceil (\sqrt{2m+n})/2 \rceil$.
\end{lemma}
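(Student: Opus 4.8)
The plan is to reduce the claim to a statement about a single, maximally dense subgraph via the Nash--Williams arboricity formula, which states that $a = \max_{H} \lceil m_H/(p_H - 1)\rceil$, where the maximum ranges over all subgraphs $H$ of $G$ with $p_H \ge 2$ vertices and $m_H$ edges. First I would fix a subgraph $H$ attaining this maximum, write $p = p_H$, and record the two inequalities that the ceiling forces: from $a = \lceil m_H/(p-1)\rceil$ we get $m_H \ge (a-1)(p-1) + 1$, and from $m_H \le \binom{p}{2} = p(p-1)/2$ together with $a - 1 < m_H/(p-1) \le p/2$ we get $p \ge 2a - 1$. Since $H$ is a subgraph of $G$, the bounds $m \ge m_H$ and $n \ge p$ hold trivially, so all the remaining work lies in combining these three facts.

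The arithmetic core is to show $2m + n > 4(a-1)^2$, which I claim is exactly the inequality equivalent to the stated bound. Indeed, for the integer $a$ one has $\lceil \sqrt{2m+n}/2\rceil \ge a$ if and only if $\sqrt{2m+n}/2 > a - 1$, i.e. $2m+n > 4(a-1)^2$. To establish the latter I would bound $2m + n \ge 2m_H + p \ge 2(a-1)(p-1) + 2 + p$, treat the right-hand side as a linear function of $p$ with positive slope $2a-1$, and evaluate it at its smallest admissible value $p = 2a-1$. This yields $2m + n \ge 4a^2 - 6a + 5$, and since $4a^2 - 6a + 5 - 4(a-1)^2 = 2a + 1 > 0$, the strict inequality $2m+n > 4(a-1)^2$ follows, completing the argument (the degenerate cases $a \le 1$ being immediate).

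The step I expect to require the most care is selecting the correct intermediate inequality. The superficially natural target $(2a-1)^2 \le 2m+n$ is in fact \emph{false}: the graph obtained from $K_5$ by deleting one edge has $5$ vertices, $9$ edges, and arboricity $a = 3$, yet $2m+n = 23 < 25 = (2a-1)^2$. One must therefore aim instead for the weaker bound $2m+n > 4(a-1)^2$, which precisely matches the rounding inside $\lceil \sqrt{2m+n}/2\rceil$. The only external ingredient is the Nash--Williams formula, and here it is worth noting that its easy direction (namely $a \ge \lceil m_H/(p-1)\rceil$ for every subgraph $H$) does not suffice, since we are \emph{upper}-bounding $a$; I would invoke the full theorem to replace $a$ by the maximum over subgraphs before bounding each individual term.
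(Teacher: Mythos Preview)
Your argument is correct. The Nash--Williams formula gives you a maximizing subgraph $H$ with $p$ vertices and $m_H$ edges; from $a=\lceil m_H/(p-1)\rceil$ you extract $m_H\ge (a-1)(p-1)+1$ and, via $m_H\le\binom{p}{2}$, the integer bound $p\ge 2a-1$. Plugging $p=2a-1$ into $2m_H+p\ge 2(a-1)(p-1)+2+p$ yields $4a^2-6a+5>4(a-1)^2$, and your observation that $\lceil\sqrt{2m+n}/2\rceil\ge a$ is equivalent (for integer $a$) to $2m+n>4(a-1)^2$ finishes it. The $K_5$-minus-an-edge counterexample is a nice sanity check that the naive target $(2a-1)^2\le 2m+n$ would fail.

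As for comparison: the paper does not prove this lemma at all. It is quoted verbatim as Lemma~1 of Chiba and Nishizeki and used as a black box in the derivation of Lemma~\ref{lemma_db}. So you have supplied strictly more than the paper does here. Your route is essentially the standard one (and is in spirit the Chiba--Nishizeki argument): invoke the full Nash--Williams theorem to localize to a single dense subgraph, then do the arithmetic. You are right that the easy direction of Nash--Williams is insufficient, since one needs equality (or at least the upper bound on $a$) to proceed.
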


\LemmaDB*
\begin{proof}
Let $G$ has $q$ connected components with $n_1, n_2, \ldots, n_q$ vertices respectively. $m \geq \sum_{i=1}^q n_i - 1 = n -q$. Since $G$ does not contain any isolated vertices $q \leq \lfloor\frac{n}{2} \rfloor$. Therefore, $m \geq n - \lfloor\frac{n}{2} \rfloor \geq \frac{n}{2}$.

Combining Lemma \ref{lemma_arbo1} and Lemma \ref{lemma_arbo2}, we have, 
$$d \leq 2a -1 \leq 2 \lceil (\sqrt{2m+n})/2 \rceil -1 \leq 2 \lceil (\sqrt{2m+2m})/2 \rceil - 1 = 2 \sqrt{m} -1.$$
\end{proof}

\PropEqvOne*
\begin{proof}
\textbf{Only if.} Let $|(x,p)|=q$ and $c>0$ be a constant. For $q \leq f(p)$, we have $f(p)q^{c} \leq \{f(p)\}^{c+1}$. For $q \geq f(p)$, we have $f(p)q^{c} \leq q^{c+1}$. Therefore, $f(p)q^c \leq \max \{\{f(p)\}^{c+1}, q^{c+1}\} \leq \{f(p)\}^{c+1} + q^{c+1} = f^*(p) + |(x,p)|^{O(1)}$.

\textbf{If.} We can assume $f^*(p) \geq 1$ and $|(x,p)|^{O(1)} \geq 1$. We have, $f^*(p) + |(x,p)|^{O(1)} \leq f^*(p)|(x,p)|^{O(1)}$.
\end{proof}

\section{Proofs omitted in Section \ref{sec_algos}}
\label{sec:app_algos}

\LemmaECCSTwoOK*
\begin{proof}
Let $\A$ denote the family of algorithms that consider an arbitrary uncovered edge $\{x,y\}$ at step 2 of \emph{ECCS2}. Clearly, \emph{ECCS2} $\in \A$.

Consider any algorithm $A_i \in \A$. For a fixed value of $k$, consider the family of instances of \emph{ECC}, $\G_k = \{(G,k)\}$. For a search tree $T$ of $A_i$, let us define a node $u^T$ of $T$ to be a \emph{YES} node if $A_i$ returns at step 1 of $u^T$. We claim that for any $(G,k) \in \G_k$, $(G,k)$ is a \emph{YES} instance of \emph{ECC} if and only if $A_i$ would return from a \emph{YES} node. If $A_i$ returns from a \emph{YES} node, then clearly $(G,k)$ is a \emph{YES} instance of \emph{ECC}. It remains to show that for any $(G,k) \in \G_k$, if $(G,k)$ is a \emph{YES} instance of \emph{ECC}, then $A_i$ would return from a \emph{YES} node.

We induct on $k$. For the base case, we have $k=1$, i.e., $G$ is a complete graph. In this case, $A_i$ would create a clique $C_1$ at step 4, only for the very first edge selected at step 2. Any subsequent edges, chosen at step 2 by $A_i$, would only be covered at step 3: this follows from Proposition \ref{prop_isr2}, since $G$ is a complete graph. Therefore, the claim holds for all instances of \emph{ECC} in $\G_1$.

For $k > 1$, assume the claim holds for all instances of \emph{ECC} in $\G_{k^{*}}$ such that $k^{*} < k$. Since $(G,k)$ is a \emph{YES} instance of \emph{ECC}, let $\C = \{C_1, C_2, \ldots, C_k\}$ be a corresponding clique cover. Fix a permutation $\pi$ of $\{1,2, \ldots ,k\}$. From Definition \ref{def_epii}, $E_{\pi(i)} = \{\{x,y\} \in C_{\pi(i)}| \{x,y\} \not\in C_{\pi(j)}, j <i \}$, i.e., the set of edges exclusive to $C_{\pi(i)}$ with respect to the cliques $\{C_{\pi(1)}, C_{\pi(2)}, \ldots, C_{\pi(i-1)}\}$. Let $G' = G \backslash E_{\pi(k)}$. Since $(G,k)$ is a \emph{YES} instance of \emph{ECC}, clearly, $(G^{*}, k-1)$ is a \emph{YES} instance of \emph{ECC}. By inductive hypothesis, for $(G^{*}, k-1)$, $A_i$ would return from a \emph{YES} node. Let $u^T$ be such a \emph{YES} node.

Now, in a search tree $T$ of $A_i$ for $(G,k)$, $A_i$ can cover the edges of $E_{\pi(k)}$ in the descendent nodes of $u^T$. Note that at node $u^T$ in $T$, $A_i$ would use at most $k-1$ cliques. In the descendent nodes of $u^T$ of $T$, the edges of $E_{\pi(k)}$ may be covered at step 3 without creating any new clique. In the worst case, at most one additional clique would be needed to cover the edges of $E_{\pi(k)}$ in the descendent nodes of $u^T$: this follows from Proposition \ref{prop_isr2}, since the end vertices of the edges in $E_{\pi(k)}$ induce a clique in $G$. Since preceding arguments hold for any permutation $\pi$ of $\{1,2, \ldots, k\}$, it follows that for $(G,k)$, $A_i$ would return from a \emph{YES} node. This concludes our inductive step.
\end{proof}

\LemmaECCSTwoSpace*
\begin{proof}
For a search tree $T$ of \emph{ECCS2}, let $u^T$ be a node corresponding to an uncovered edge $\{x,y\}$, selected at step 2. If at node $u^T$ we need to create a new clique to cover the edge $\{x,y\}$, then it must be the case that all branches of step 3 have failed. Any node $v^T$ such that $u^T$ is an ancestor of $v^T$ in the search tree $T$ would not select the edge $\{x,y\}$ as uncovered. Therefore any $x \in V$ would appear at most $|N(x)|$ times in $\C$ and $\sum_{x\in V} |\{l | x \in C_l\}| = O(m)$. Also, using the same argument as in the proof of Proposition \ref{prop_isr3}, for a search tree with at most $k$ cliques, $\sum_{x \in V} |\{l | x \not\in C_l, C_l \subseteq N(x)\}| = O(k\Delta)$. Therefore, $\sum_{x\in V} |R_x| = O(m+k\Delta)$ holds for \emph{ECCS2}.

In aggregate, the sets in $\D$ and the sets in $\R$ hold same information of the relationship among vertices and cliques. To see this, consider the bipartite graph $B=(V,\C,E_B)$, where $E_B= \{\{x,C_l\} | x \in C_l \text{ or } (x \not\in C_l, C_l \subseteq N(x))\}$. In $B$, the sum of the degrees of vertices of the $V$ side must be equal to the sum of the degrees of vertices of the $\C$ side. Therefore, $\sum_{D_l \in \D} |D_l| =\sum_{x\in V} |R_x| = O(m +k \Delta)$.

At each node in the subroutine for step 3a, a vertex of $D_l$ is either kept in $D_l$ or moved to $U$. Therefore, space usage of $U$ data structures in total is already accounted for.
\end{proof}

\LemmaACCSTwoOK*
\begin{proof}
Let $\A$ denote the family of algorithms that consider an arbitrary uncovered edge $\{x,y\}$ at step 2 of \emph{ACCS2}. Clearly, \emph{ACCS2} $\in \A$.

Consider any algorithm $A_i \in \A$. For a fixed value of $t$, consider the family of instances of \emph{ACC}, $\G_t = \{(G,t)\}$. For a search tree $T$ of $A_i$, let us define a node $u^T$ of $T$ to be a \emph{YES} node if $A_i$ returns at step 1 of $u^T$. We claim that for any $(G,t) \in \G_t$, $(G,t)$ is a \emph{YES} instance of \emph{ACC} if and only if $A_i$ would return from a \emph{YES} node. If $A_i$ returns from a \emph{YES} node, then clearly $(G,t)$ is a \emph{YES} instance of \emph{ACC}. It remains to show that for any $(G,t) \in \G_t$, if $(G,t)$ is a \emph{YES} instance of \emph{ACC}, then $A_i$ would return from a \emph{YES} node.

We induct on $t$. For the (non-trivial) base case, we have $t=2$, i.e., $G$ is the graph containing a single edge. In this case, $A_i$ would simply create a clique $C_1$, at step 4 for the only edge of $G$. Therefore, the claim holds for all instances of \emph{ACC} in $\G_t$ such that $t \leq 2$.

For $t > 2$, assume the claim holds for all instances of \emph{ACC} in $\G_{t^{*}}$ such that $t^{*} < t$. Since $(G,t)$ is a \emph{YES} instance of \emph{ACC}, let $\C= \{C_1, C_2, \ldots, C_k\}$ be a corresponding clique cover. If $k=1$, i.e., $G$ is a complete graph, then $(G,t)$ is a \emph{YES} instance of \emph{ACC}. In this case, $A_i$ would create a clique $C_1$ at step 4, only for the very first edge selected at step 2. Any subsequent edges chosen at step 2 by $A_i$ would only be covered at step 3: this follows from Proposition \ref{prop_isr2}, since $G$ is a complete graph. Therefore, for following argument we can assume $k > 1$.

Fix a permutation $\pi$ of $\{1,2, \ldots ,k\}$. From Definition \ref{def_epii}, $E_{\pi(i)} = \{\{x,y\} \in C_{\pi(i)}| \{x,y\} \not\in C_{\pi(j)}, j <i \}$, i.e., the set of edges exclusive to $C_{\pi(i)}$ with respect to the cliques $\{C_{\pi(1)}, C_{\pi(2)}, \ldots, C_{\pi(i-1)} \}$. Let $G' = G \backslash E_{\pi(k)}$ and $t^{*} = t - |V_{\pi(k)}|$ where $V_{\pi(k)} = \{x | \{x,y\} \in E_{\pi(k)}\}$. (Note that if $E_{\pi(k)} = \emptyset$, then, starting from $l=k$, we can keep decreasing $l$ until we find an index $l$ such that $E_{\pi(l)} \ne \emptyset$. Then the following argument would hold verbatim if we replace $E_{\pi(k)}$ with $E_{\pi(l)}$). Since $(G,t)$ is a \emph{YES} instance of \emph{ACC}, $(G^{*}, t^{*})$ is a \emph{YES} instance of \emph{ACC}. By inductive hypothesis, for $(G^{*}, t^{*})$, $A_i$ would return from a \emph{YES} node. Let $u^T$ be such a \emph{YES} node.

Now, in a search tree $T$ of $A_i$ for $(G,t)$, $A_i$ can cover the edges of $E_{\pi(k)}$ in the descendent nodes of $u^T$. Note that at node $u^T$ in $T$, $A_i$ would use at most $t^{*}$ assignments of vertices to the cliques. In the descendent nodes of $u^T$ of $T$, at most $|V_{\pi(k)}|$ additional assignments of vertices to cliques would be needed to cover the edges of $E_{\pi(k)}$: this follows from Proposition \ref{prop_isr2}, since the vertices in $V_{\pi(k)}$ induce a clique in $G$. Since preceding arguments hold for any permutation $\pi$ of $\{1,2, \ldots, k\}$, it follows that for $(G,t)$, $A_i$ would return from a \emph{YES} node. This concludes our inductive step.
\end{proof}

\LemmaAWECPSok*
\begin{proof}
Let $\A$ denote the family of algorithms that consider an arbitrary edge $e = \{x,y\}$ at step 2 of \emph{AWECPS} such that $w^E(e) > 0$. Clearly, \emph{AWECPS} $\in \A$.

Consider any algorithm $A_i \in \A$. For a fixed value of $k$, consider the family of instances of \emph{AWECP}, $\G_k = \{(G,k, w^E, S, w^S)\}$. For a search tree $T$ of $A_i$, let us define a node $u^T$ of $T$ to be a \emph{YES} node if $A_i$ returns at step 1 of $u^T$. We claim that for any $I \in \G_k$, $(G,k)$ is a \emph{YES} instance of \emph{AWECP} if and only if $A_i$ would return from a \emph{YES} node. If $A_i$ returns from a \emph{YES} node, then clearly $I$ is a \emph{YES} instance of \emph{AWECP}. It remains to show that for any $I \in \G_k$, if $I$ is a \emph{YES} instance of \emph{AWECP}, then $A_i$ would return from a \emph{YES} node.

We induct on $k$. For the base case, we have $k=1$, i.e., $G=(V,E)$ is a complete graph, $w^E(e) =1$ for all $e \in E$, and $w^S(x) = 1$ for all $x \in S$. In this case, $A_i$ would create a clique $C_1$ at step 4, only for the very first edge selected at step 2. It is straightforward to see that any subsequent edges chosen at step 2 by $A_i$ would only be included in $C_1$ at step 3. Therefore, the claim holds for all instances of \emph{AWECP} in $\G_1$.

For $k > 1$, assume the claim holds for all instances of \emph{AWECP} in $\G_{k^{*}}$ such that $k^{*} < k$. Since $I = (G,k, w^E, S, w^S)$ is a \emph{YES} instance of \emph{AWECP}, let $\C = \{C_1, C_2, \ldots, C_k\}$ be a corresponding clique cover. We decompose the instance $I = (G,k, w^E, S, w^S)$ into two instances $I_1 = (G^{*},k-1, w^{*}, S^{*}, w^{S^{*}})$ and $I_2 = (H,1,w^{H}, S_H, w^{S_H})$ such that $I_1 \in $ \emph{AWECP} and $I_2 \in $ \emph{AWECP}.

\textbf{Construction of $I_2$}: Let $H=(V_H, E_H)$ be the subgraph induced by $C_l \in \C$ in $G$, i.e., $H=G[C_l]$. Let $w^{H}(e) = 1$ for all $e \in E_H$. Let $S_H = S \cap V_H$, and $w^{S_H}(x) = 1$ for all $x \in S_H$.

\textbf{Construction of $I_1$}: Let $G^{*} = (V^{*}, E^{*})$ where $E^{*} = E \backslash \{e \in E_H | w^E(e) = 1\}$ and $V^{*} = \{x | \{x,y\} \in E^{*}\}$. For all $e \in E^{*} \cap E_H$, let $w^{*}(e) = w^E(e) -1$, and for all $e \in E^{*} \backslash E_H$, let $w^{*}(e) = w^E(e)$. Let $S^{*} = S \backslash \{x \in S \cap V_H| w^S(x) = 1\}$. For all $x \in S^{*} \cap V_H$, let $w^{S^{*}}(x) = w^S(x) -1$, and for all $s \in S^{*} \backslash V_H$, let $w^{S^{*}}(x) = w^S(x)$.

It is straightforward to verify that the composition (union of the graphs and addition of the parameters following the decomposition) of $I_1$ and $I_2$ is exactly the instance $I = (G,k,w^E,S,w^S)$. Since $I$ is a \emph{YES} instance of \emph{AWECP}, both $I_1$ and $I_2$ are \emph{YES} instances of \emph{AWECP}. By inductive hypothesis, for both $I_1$ and $I_2$, $A_i$ would return from a \emph{YES} node. Let $u^T$ be a \emph{YES} node for $I_1$.

Now, in a search tree $T$ of $A_i$ for $I$, $A_i$ can construct the solution for $I_2$ in the descendent nodes of $u^T$, and append the solution of $I_2$ to the solution of $I_1$. Note that at node $u^T$ in $T$, $A_i$ would use at most $k-1$ cliques. In the descendent nodes of $u^T$ of $T$, the solution for $I_2$ may be exclusively constructed at step 3 without creating any new clique. In the worst case, at most one additional clique would be needed to construct a solution for $I_2 = (H,1,w^H,S_H,w^{S_H}$), since $I_2$ is a \emph{YES} instance of \emph{AWECP}.

Note that the permutation of cliques in $\C$ does not affect the preceding arguments, and the arguments hold for any clique $C_l \in \C$. Therefore, it follows that for the instance $(G,k, w^E, S, w^S)$, $A_i$ would return from a \emph{YES} node. This concludes our inductive step.
\end{proof}

\end{document}